\newcommand{\blind}{1}
\newcommand\reallywidehat[1]{%
\savestack{\tmpbox}{\stretchto{%
  \scaleto{%
    \scalerel*[\widthof{\ensuremath{#1}}]{\kern.1pt\mathchar"0362\kern.1pt}%
    {\rule{0ex}{\textheight}}
  }{\textheight}%
}{2.4ex}}%
\stackon[-6.9pt]{#1}{\tmpbox}%
}
\renewcommand{\algocf@captiontext}[2]{#1\algocf@typo. \AlCapFnt{}#2} 
\def\@algocf@capt@plain{top}
\renewcommand{\algocf@makecaption}[2]{%
  \addtolength{\hsize}{\algomargin}%
  \sbox\@tempboxa{\algocf@captiontext{#1}{#2}}%
  \ifdim\wd\@tempboxa >\hsize
    \hskip .5\algomargin%
    \parbox[t]{\hsize}{\algocf@captiontext{#1}{#2}}
  \else%
    \global\@minipagefalse%
    \hbox to\hsize{\box\@tempboxa}
  \fi%
  \addtolength{\hsize}{-\algomargin}%
}
\newcommand{\w}{\xi}     
\newcommand{\rd}{\color{red}}
\newcommand{\bl}{\color{blue}}
\newcommand{\bk}{\color{black}}
\newcommand{\var}{\text{var}}
\newcommand{\E}{E}
    \newcommand{\shat}{\hat{\sigma}_n} 
\newcommand{\cei}{\theta^{(i)}}
\newcommand{\bootsd}{\hat{\nu}_n}
\newcommand{\noisesd}{\tilde{\sigma}_{n}}
\newcommand{\noisesdax}{\tilde{\sigma}_{n \ |  A,X}}
\newif\ifblackandwhitecycle
\gdef\patternnumber{0}
        \gdef\patternnumber{1}
        \gdef\patternnumber{1}
     \gdef\patternnumber{0}
        \pgfgetlastxy{\imagewidth}{\imageheight}
        \global\let\imagewidth=\imagewidth
        \global\let\imageheight=\imageheight
        \gdef\columncount{1}
        \gdef\rowcount{1}
\newcommand\phantomimage{%
    \phantom{%
        \rule{\imagewidth}{\imageheight}%
    }%
}
\newcommand\zoombox[2][]{
    \begin{scope}[zoombox paths]
        \pgfmathsetmacro\xpos{
            (\columncount-1)*(\imagewidth / \pgfkeysvalueof{/tikz/zoomboxarray columns} + \pgfkeysvalueof{/tikz/zoomboxarray inner gap} / \pgfkeysvalueof{/tikz/zoomboxarray columns} ) + \pgflinewidth
        }
        \pgfmathsetmacro\ypos{
            (\rowcount-1)*( \imageheight / \pgfkeysvalueof{/tikz/zoomboxarray rows} + \pgfkeysvalueof{/tikz/zoomboxarray inner gap} / \pgfkeysvalueof{/tikz/zoomboxarray rows} ) + 0.5*\pgflinewidth
        }
        \edef\dospy{\noexpand\spy [
            #1,
            zoombox paths/.append style={
                black and white pattern=\patternnumber
            },
            every spy on node/.append style={#1},
            x=\imagewidth,
            y=\imageheight
        ] on (#2) in node [anchor=north west] at ($(zoomboxes container.north west)+(\xpos pt,-\ypos pt)$);}
        \dospy
        \pgfmathtruncatemacro\pgfmathresult{ifthenelse(\columncount==\pgfkeysvalueof{/tikz/zoomboxarray columns},\rowcount+1,\rowcount)}
        \global\let\rowcount=\pgfmathresult
        \pgfmathtruncatemacro\pgfmathresult{ifthenelse(\columncount==\pgfkeysvalueof{/tikz/zoomboxarray columns},1,\columncount+1)}
        \global\let\columncount=\pgfmathresult
        \ifblackandwhitecycle
            \pgfmathtruncatemacro{\newpatternnumber}{\patternnumber+1}
            \global\edef\patternnumber{\newpatternnumber}
        \fi
    \end{scope}
}
\newtheorem{theorem}{Theorem}
\newtheorem{lemma}[theorem]{Lemma}
\newtheorem{proposition}[theorem]{Proposition}
\newtheorem{corollary}[theorem]{Corollary}
\newtheorem{remark}{Remark}
\newtheorem{assumption}{Assumption}     
\newcommand{\tnt}{\Tilde{T}_n}
\newcommand{\tnhat}{\hat{T}_n}
\newcommand{\gnhat}{\hat{G}_n}
\newcommand{\tnas}{T_{n,1}^{*}}
\newcommand{\tnbs}{T_{n,2}^{*}}
\newcommand{\gi}{\hat{g}_1(i)}
\newcommand{\gj}{\hat{g}_1(j)}
\newcommand{\ggij}{\hat{g}_2(i,j)}
\newcommand{\hi}{\hat{H}_1(i)}
\newcommand{\hhij }{\hat{H}_2(i,j)}
\newcommand{\cfhat}{\hat{T}_n}
\newcommand{\cfbm}{\hat{T}_{n,M}^*}
\newcommand{\cfbq}{\hat{T}_{n,Q}^*}
\newcommand{\cfbl}{\hat{T}_{n,L}^*}
\newcommand{\mbm}{\texttt{MB-M}\xspace}
\newcommand{\mbq}{\texttt{MB-Q}\xspace}
\newcommand{\mbl}{\texttt{MB-L}\xspace}
\newcommand{\ew}{\texttt{EW}\xspace}
\newcommand{\smg}{\texttt{SM-G}\xspace}
\newcommand{\sbm}{\texttt{SBM-G}\xspace}
\newcommand{\sub}{\texttt{SS}\xspace}
\newcommand{\ls}{\texttt{LS}\xspace}
\newcommand{\eg}{\texttt{EG}\xspace}
\newcommand{\mblbts}{\texttt{MB-L-SWR}\xspace}
\newcommand{\mblapx}{\texttt{MB-L-apx}\xspace}
\newcommand{\dnm}{\Delta_n(m)}
 \newcommand{\cov}{\mathrm{cov}}
 \newcommand{\mzx}{\mathcal{M}(n, \rho_n , R)}
\newcommand{\gjt}{\gamma_j(t)}
\newcommand{\y}{Y}
\newcommand{\KL}[2]{\text{err}(#1,#2)}
\newcommand{\ione}{I_{1,n}}
\newcommand{\itwo}{I_{2,n}}
\newcommand{\ithr}{I_{3,n}}
\newcommand{\ifour}{I_{4,n}}
\newcommand{\ro}{\lambda}
\newcommand{\lij}{\ell_{i,j}}
\newcommand{\tij}{\theta_{1,i,j}}
\newcommand{\ttij}{\theta_{2,i,j}}
\newcommand{\tttij}{\theta_{3,i,j}}
\newcommand{\kn}{K_{2,n}}
\newcommand{\et}{e^{-t^2/2}}
\newcommand{\del}{\Delta}
\newcommand{\lnone}{L_{1,n}}
\newcommand{\lntwo}{L_{2,n}}
\newcommand{\rnf}{R_{n,4}}
\newcommand{\ttnl}{\Tilde{T}_{n,L}^*}
\newcommand{\tdu}{\tilde{T}_n}
\newcommand{\tdh}{\tilde{H}_1}
\newcommand{\hpi}{H_{\pi}}
\newcommand{\hsxi}{H(A_{S\cup i})}
\newcommand{\hoi}{H_1(i)}
\newcommand{\si}{\mathbb{S}_{-i}}
\newcommand{\bb}[1]{\left(#1\right)}    
\newcommand{\bbb}[1]{\left[#1\right]}
\newcommand{\hatsigf}{\hat{\sigma}_f}
\newcommand{\tdsigf}{\tilde{\sigma}_f}
\newcommand{\tdsigi}{\tilde{\sigma}_{n,i}}
\newcommand{\bxx}{x}
\newcommand{\bu}{u}
\newcommand{\bhatu}{\hat{u}}
\newcommand{\bmu}{\mu}
\newcommand{\cbu}{\check{u}}
\newcommand{\cbmu}{\check{\mu}}
\newcommand{\pbu}{\bold{u}'}
\newcommand{\lm}{\lambda}
\newcommand{\sigf}{\sigma_f}
\renewcommand{\algocf@captiontext}[2]{#1\algocf@typo. \AlCapFnt{}#2} 
\def\@algocf@capt@plain{top}
\renewcommand{\algocf@makecaption}[2]{%
  \addtolength{\hsize}{\algomargin}%
  \sbox\@tempboxa{\algocf@captiontext{#1}{#2}}%
  \ifdim\wd\@tempboxa >\hsize
    \hskip .5\algomargin%
    \parbox[t]{\hsize}{\algocf@captiontext{#1}{#2}}
  \else%
    \global\@minipagefalse%
    \hbox to\hsize{\box\@tempboxa}
  \fi%
  \addtolength{\hsize}{-\algomargin}%
}
\renewcommand{\algocf@captiontext}[2]{#1\algocf@typo. \AlCapFnt{}#2} 
\def\@algocf@capt@plain{top}
\renewcommand{\algocf@makecaption}[2]{%
  \addtolength{\hsize}{\algomargin}%
  \sbox\@tempboxa{\algocf@captiontext{#1}{#2}}%
  \ifdim\wd\@tempboxa >\hsize
    \hskip .5\algomargin%
    \parbox[t]{\hsize}{\algocf@captiontext{#1}{#2}}
  \else%
    \global\@minipagefalse%
    \hbox to\hsize{\box\@tempboxa}
  \fi%
  \addtolength{\hsize}{-\algomargin}%
}
\begin{document}

\def\spacingset#1{\renewcommand{\baselinestretch}%
{#1}\small\normalsize} \spacingset{1}


\newcommand{\mytitle}{Trading off Accuracy for Speedup: Multiplier Bootstraps for Subgraph Counts}

\if1\blind
{
  \title{\bf \mytitle}
  
  \author{Qiaohui Lin \vspace{-0.2cm}\\
    Department of Statistics and Data Sciences,\\
The University of Texas at Austin\\

    Robert Lunde\\
    Department of Statistics,\\
University of Michigan
    and \\
    Purnamrita Sarkar \\
    Department of Statistics and Data Sciences,\\
The University of Texas at Austin}
  
  





  \maketitle
} \fi

\if0\blind
{
  \bigskip
  \bigskip
  \bigskip
  \begin{center}
    {\LARGE\bf \mytitle}


    
\end{center}
  \medskip
} \fi

\addtolength{\topmargin}{-10pt}%


\begin{abstract}
We propose a new class of multiplier bootstraps for count functionals, ranging from a fast, approximate linear bootstrap tailored to sparse, massive graphs to a quadratic bootstrap procedure that offers refined accuracy for smaller, denser graphs.   For the fast, approximate linear bootstrap, we show that $\sqrt{n}$-consistent inference of the count functional is attainable in certain computational regimes that depend on the sparsity level of the graph.  Furthermore, even in more challenging regimes, we prove that our bootstrap procedure  offers valid coverage and vanishing confidence intervals.  For the quadratic bootstrap, we establish an Edgeworth expansion and show that this procedure offers higher-order accuracy under appropriate sparsity conditions. We complement our theoretical results with a simulation study and real data analysis and verify that our procedure offers state-of-the-art performance for several functionals.        
\end{abstract}

\noindent%
{\it Keywords:} bootstrap; networks; subgraph counts; Edgeworth expansions; scalable inference
\vfill


\newpage
\spacingset{1.45} 


 \section{Introduction}

Count functionals play a pivotal role in the analysis of network data.  In biological networks, it is believed that certain subgraphs  may  represent functional subunits within the larger system \citep{milo-network-motifs,chen-yuan-protein-protein-interaction-network,daudin-exceptionality-motifs,kim2014}.  In social networks, the frequency of triangles provides information about the likelihood of mutual friendships \citep{newman-collaboration-network,meyers-social-network-twitter,ugander-facebook-graph}. At a more theoretical level, count functionals may be viewed as network analogs of the moments of a random variable. Thus, a method of moments approach may be used to estimate the underlying model under suitable conditions~\citep{Bickel-Chen-Levina-method-of-moments}.  


While real-world networks share many qualitative features (see e.g. \citet{newman-structure-networks}), they often vary substantially in terms of size, given by the number of vertices in the network, and sparsity level, given by the number of edges relative to the number of vertices. For networks of small to moderate size, inferential methods that are highly accurate are advantageous; for sparse, massive networks, one needs to simultaneously consider computational tractability. 

To meet these diverse needs in real-world applications with user-friendly computational tools, we develop a new family of bootstrap procedures for uncertainty estimation of count functionals of networks. Our methods range from a very fast approximate linear bootstrap to a fast quadratic bootstrap procedure that offers improved accuracy for moderately sparse networks. Both procedures may be viewed as approximations\footnote{More precisely, the linear and quadratic bootstraps may be viewed as first and second-order terms of a Hoeffding decomposition for the multiplier bootstrap, respectively.} to a multiplier bootstrap method in which each potential subgraph in the network is perturbed by the product of independent multiplier random variables. This multiplier bootstrap is closely related to a bootstrap method for U-statistics (see for example, \citet{bose-chatterjee-u-stats-resampling}). Under the sparse graphon model (see Section \ref{sec:sparse-graphon}), subgraph counts may be viewed as U-statistics perturbed by asymptotically negligible noise, allowing the adaptation of bootstrap methods for U-statistics to the network setting.

One of the main theoretical contributions of our paper is that we establish the statistical properties of a computationally efficient randomized approximation for computing local subgraph counts.   This  approximation alleviates the main computational bottleneck for our linear bootstrap procedure. By invoking these results in our analysis of the bootstrap, we characterize the relationship between the sparsity level of the graph and the degree of randomization allowed to achieve either asymptotically efficient confidence intervals or the less ambitious aim of consistent inference via vanishing confidence intervals.
 In the worst case, counting a local subgraph of size $r$ (e.g. number of r-cycles a node is involved in) in a $n$ node network has a computation complexity of $O(n^{r-1})$. In contrast, our methods can be used to reduce this time substantially depending on the nature of the subgraph and sparsity level; for triangles for example, the computational complexity of our approximation is $O(Nn^2 \rho_n\log(n\rho_n))$, where $N$ is the number of random permutations and $\rho_n$ is the probability of an edge.
\bk

While randomized linear bootstrap is more scalable, it does not yield higher-order correctness, which can be achieved with certain methods for sufficiently dense graphs (see~\citet{zhang-xia-network-edgeworth}). As the second theoretical contribution of our paper, we propose the quadratic bootstrap and establish conditions under which it is higher order correct. 
To this end, we establish an Edgeworth expansion for the quadratic bootstrap and show that it is close to the Edgeworth expansion of the sampling distribution. 
We bypass the typical Cram\'{e}r's condition required by~\cite{zhang-xia-network-edgeworth} in their analysis of existing bootstrap methods by considering a continuous multiplier that matches the first three moments of the data. It is well-known that continuous random variables satisfy Cram\'{e}r's condition. 

\bk

\bk

 In addition to obtaining Edgeworth expansions for count statistics, we also obtain Edgeworth expansions for smooth functions of U-statistics. We show that, under suitable sparsity assumptions, the cumulative distribution function of smooth functions arising from the quadratic bootstrap match this asymptotic expansion and are therefore higher-order correct. In this setting, obtaining analytical expressions for Edgeworth expansions are cumbersome, whereas the bootstrap is automatic and user-friendly.  

We will now provide a roadmap for the rest of the paper.  In Section~\ref{sec:related-work}, we discuss related work, focusing on the emerging area of resampling methods for network data.  The problem setting and our bootstrap proposal is introduced in Section~\ref{sec:setup}.  In Section~\ref{sec:main}, we present our main results, which establish higher-order correctness for our bootstrap procedures.  In Section~\ref{sec:exp}, we present experiments on both simulations and real data, which show that our procedures exhibit strong finite-sample performance in various settings.  
\section{Related work}
\label{sec:related-work}
The first theoretical result for resampling network data was attained by~\citet{Bhattacharyya-subsample-count-features}.  Their subsampling proposals involve expressing the variance of a count functional in terms of other count functionals and estimating the non-negligible terms through subsampling.  \citet{subsampling-sparse-graphons} show that it is also possible to conduct inference using quantiles of the subsampling distribution as in~\citet{politis-romano-subsampling-minimal-assumptions}.   \citet{green-shalizi-network-bootstrap} propose a bootstrap based on the empirical graphon.  \citet{network-jackknife-theory} establish the validity of the network jackknife for count functionals. 

  \citet{levin-levina-rdpg-bootstrap} study a two-step procedure that is closely related to our linear bootstrap procedure. The above authors propose estimating the latent positions with the adjacency spectral embedding in the first step (see, for example, \citet{athreya-rdpg-survey}) and resampling the corresponding U-statistic with the estimated positions in the second step. They derive theoretical results under the assumption that the rank of the random dot product graph model is known and finite.  In contrast, our procedures do not impose assumptions on the spectral properties of the underlying graphon. \bk



~\citet{zhang-xia-network-edgeworth} establish conditions under which the empirical graphon bootstrap exhibits higher-order correctness.  For establishing higher-order correctness for resampling methods studied in this work, Cram\'{e}r's condition is required, which is restrictive for network models.  
In contrast, their empirical Edgeworth expansion proposal does not require Cram\'{e}r's condition. 
On the mathematical side, the analysis of our multiplier bootstrap involves Edgeworth expansions for weighted sums. 
Prior work (c.f. \citet{bai-zhao-independent-edgeworth} and \citet{liu-non-iid-bootstrap}) suggests that establishing sharp rates of convergence for the independent but non-identically distributed sequences is more difficult, with the above references establishing a  $o(n^{-1/2})$ error bound instead of the $O(n^{-1})$ bound for i.i.d. sequences.  Therefore, it is not surprising that our quadratic bootstrap proposal, while offering improved performance over first-order correct methods, does not quite match the rate of the empirical Edgeworth expansions studied by \citet{zhang-xia-network-edgeworth}.   


From the computational standpoint, \citet{chen2019randomized} presented a randomized algorithm to estimate high-dimensional U-statistics from a subsample of subsets from the set of all subsets of a given size. We propose a different sampling method that exploits the structure of U-statistics and draws random permutations instead of subsets. Empirically, we show that this method provides faster computation over subset sampling.

\section{Problem setup and notation}
 \label{sec:setup}
\subsection{The sparse graphon model}
\label{sec:sparse-graphon}
Let $\{A^{(n)}\}_{n \in \mathbb{N}}$ denote a sequence of $n \times n$ binary adjacency matrices and let $w:[0,1]^2 \mapsto \mathbb{R}$ be a symmetric measurable function such that $\int_0^1 \int_0^1 w(u,v) \ du \ dv = 1$ and $ w(u,v) \leq C$ for some $1 \leq C < \infty$.  We assume that $A^{(n)}$ is generated by the following model:
\begin{align}
\label{eq:sparse-graphon-model}
A_{ij}^{(n)} = A_{ji}^{(n)} \sim \mathrm{Bernoulli}(\rho_n w(X_i, X_j))
\end{align}
where $X_i, X_j \sim \mathrm{Unif}[0,1]$, $\rho_n \rightarrow 0$, and $A_{ii}^{(n)} = 0$.  While closely related models were considered by \citet{bollobas-inhomogenous-graphs}, \citet{hoff-raftery-handcock-latent-space-model}, and \citet{borgs-lp-part-one} this particular parameterization was introduced by  \citet{Bickel-Chen-on-modularity}. We will refer to (\ref{eq:sparse-graphon-model}) as the sparse graphon model.  Sparse graphons are a very rich class of models, subsuming many widely used models, including stochastic block models and their variants \citep{holland-sbm, karrer-newman-dcsbm, airoldi-mmsb}, and  (generalized) random dot product graphs \citep{young-schneiderman-rdpg, generalized-rdpg}.  More generally, sparse graphons are natural models for graphs that exhibit vertex exchangeability; the functional form is motivated by representation theorems for exchangeable arrays established by \citet{aldous-representation-array} and \citet{hoover-exchangeability}.  The parameter $\rho_n = P(A_{ij} = 1)$ determines the sparsity level of the sequence $\{ A^{(n)} \}_{n \in \mathbb{N}}$.  Many real world graphs are thought to be sparse, with $o(n^2)$ edges; $\rho_n \rightarrow 0$ is needed for graphs generated by (\ref{eq:sparse-graphon-model}) to exhibit this behavior. 

While boundedness of the graphon is a common assumption in the statistics literature (see, for example, the review article by \citet{gao-ma-minimax-network-analysis}), it should be noted that unbounded graphons are known to be more expressive.  As noted by \citet{borgs-lp-part-one}, unboundedness allows graphs that exhibit power-law degree distributions, a property that bounded graphons fail to capture. For mathematical expedience, in the present article, we focus on the bounded case, but we believe that our analysis may be extended to sufficiently light-tailed unbounded graphons as well. We use standard asymptotic notations such as $O()$,$o()$ and $\omega()$ throughout the paper. 

\subsection{Count functionals}
Now we will introduce notation related to our functional of interest. 
Let $R$ denote the adjacency matrix of a subgraph of interest, with $r$ vertices and $s$ edges. Let $A_{i_1, \ldots, i_r}^{(n)}$ denote the adjacency matrix formed by the node subset $\{i_1, \ldots, i_r\}$ and for each such $r$-tuple, define the following function: 
\begin{align*}
H(A_{i_1, \ldots, i_r}^{(n)}) := \mathbbm{1}(A_{i_1, \ldots, i_r}^{(n)} \cong R)
\end{align*}
where we say that $A_{i_1, \ldots, i_r}^{(n)} \cong R$ if there exists a permutation function $\pi$ such that $A_{\pi(i_1), \ldots, \pi(i_r)} = R$.  Our count functional, which we denote $\hat{T}_n(R)$, or $\hat{T}_n$ when there is no ambiguity, is formed by averaging over all $r$-tuples in the graph.
\begin{align}\label{eq:cf}
\cfhat := \frac{1}{{n \choose r} } \sum_{1 \leq i_1 < i_2 < \ldots < i_r \leq n} H(A_{i_1, \ldots, i_r}^{(n)}) 
\end{align}
Let $h(X_{i_1}, \ldots X_{i_r})$ denote the conditional expectation of $H(A_{i_1, \ldots, i_r}^{(n)})$.
Now, define the following (conventional) U-statistic:
\begin{align*}
T_n := \frac{1}{{n \choose r}} \sum_{1 \leq i_1 < i_2 < \ldots < i_r \leq n} h(X_{i_1}, \ldots X_{i_r})
\end{align*}
 For notational convenience we will refer to $h(X_{i_1}, \ldots, X_{i_r})$ by $h(X_S)$, where $S$ is the subset $\{i_1,\dots, i_r\}$.  Denote $\theta_n:=\E\{h(X_S)\}$. We see that $\theta_n/\rho_n^s\rightarrow\mu$. This can be thought of as a normalized subgraph density that we want to infer. The normalization by $\rho_n^s$ is to ensure that our functional converges to an informative non-zero quantity.

  By a central limit theorem for U-statistics \citep{hoeffding1948}, it can be shown that $(T_n - \theta_n)/\sigma_n$ is asymptotically Gaussian.  Here we have:
  \begin{align}\label{eq:taun}
      \tau_n^2=\var[\E\{h(X_S)\mid X_1\}],\ \  \  \sigma_n^2=r^2\tau_n^2/n
  \end{align}
 Furthermore, \citet{Bickel-Chen-Levina-method-of-moments} show that, 
  $( \cfhat - T_n)/\sigma_n = o_P(1)$  under mild sparsity conditions for a wide range of subgraphs.   Thus, we may view $(\cfhat - \theta_n)/\sigma_n = (\cfhat- T_n)/\sigma_n + (T_n - \theta_n)/\sigma_n$ as a U-statistic perturbed by asymptotically negligible noise.

\subsection{Preliminaries of proposed bootstrap procedures}
\label{subsec:proposed-bootstrap}
In order to estimate the subgraph density, we will consider the following multiplier bootstrap procedures. In what follows let $\w_1, \ldots \w_n$ be i.i.d. continuous random variables with mean $\mu = 1$ and central moments $\mu_2 =1$, and $\mu_3 =1$.  An example of such a random variable is the product $Z$ of two independent Normal random variables $X$ and $Y,$ defined below: 
\begin{align}\label{eq:gaussproduct}
    X\sim N(1,1/2) \qquad Y\sim N(1,1/3) \qquad Z=XY
\end{align}

Let $\w_{i_1 \cdots i_r}$ denote $\w_{i_1} \times \ldots \times \w_{i_r}$ and define the following multiplicative bootstrap:
\begin{align}
\label{eq:multiplicative-bootstrap}
\cfbm = \cfhat+ \frac{1}{{n\choose r}} \sum_{1 \leq i_1 < i_2 < \ldots i_r} \xi_{i_1 \cdots i_r}  \cdot \left\{ H(A_{i_1, \ldots, i_r}^{(n)}) - \cfhat \right\}
\end{align}

Our multiplicative bootstrap is motivated by Hoeffding's decomposition (see Supplement Section~\ref{sec:supp:lemmaone}).  The first two terms of the decomposition for $T_n - \theta_n$ are given by:
\begin{align*}
g_1(X_i) &= E\{h(X_i, X_{i_2} \ldots X_{i_r}) \mid X_i\} - \theta_n 
\\ g_2(X_i, X_j) &= E\{h(X_i, X_j, X_{i_3} \ldots X_{i_r}) \mid X_i, X_j\} -g_1(X_i)-g_1(X_j) - \theta_n,
\end{align*}
leading to the representation:
\begin{align}\label{eq:hoeffding-noiseless}
T_n - \theta_n = \frac{r}{n}\sum_{i=1}^n g_1(X_i) + \frac{r(r-1)}{n(n-1)} \sum_{i < j}  g_2(X_i,X_j)  + o_p\left(\frac{\rho_n^s}{n}\right) 
\end{align}
Similarly, conditional on the data, it can be shown that we have the following bootstrap analog. Let:
\begin{align}
\hat{g}_1(i) &= \frac{1}{{n-1 \choose r-1 }} \sum_{1 \leq i_2 < \ldots i_r \leq n, i_u \neq i } \bigl\{H(A_{i,i_2, \ldots i_r}) - \cfhat \bigr\}\label{eq:g1hat}
\\ \tilde{g}_2(i,j) &= \frac{1}{{n-2 \choose r-2 }} \sum_{1 \leq i_3 < \ldots i_r \leq n, i_u \neq i,i_u \neq j }  \bigl\{H(A_{i,i_2, \ldots i_r}) - \cfhat \bigr\}   \\
\hat{g}_2(i,j)&=\tilde{g}_2(i,j)-\hat{g}_1(i) -\hat{g}_1(j) \label{eq:g2hat}
\end{align}
Furthermore, Eq~\ref{eq:g1hat} can be used to standardize the bootstrap replicates using the following estimate of $\tau_n$ (Eq~\ref{eq:taun}):
\begin{align}\label{eq:tauhat}
\hat{\tau}_n^2=\sum_i \frac{\hat{g}_1(i)^2}{n}
\end{align}
We now present the Hoeffding decomposition for our bootstrap statistic. The proof is deferred to the Supplement Section~\ref{sec:supp:lemmaone}.



 \begin{lemma}
 \label{lemma:hoeffding-bootstrap}
  We have the following decomposition: 
  \begin{align}\label{eq:boot-hoeff}
 \begin{split}
 \cfbm - \cfhat &= \frac{r}{n}\sum_{i=1}^n (\w_i-1) \cdot \hat{g}_1(i)  + \frac{r(r-1)}{n(n-1)}\sum_{i<j } \ (\w_i \ \w_j - \w_i-\w_j+1) \cdot \tilde{g}_2(i,j)    
 \\ & \ \ \ \ \ +  O_P\left(\rho_n^sn^{-1/2}\delta(n,\rho_n,R) \right),
 \end{split}
 \end{align}
 where $\delta(n,\rho_n,R)$ is defined as follows: 
 
 $\begin{aligned}
     \delta(n,\rho_n,R) =\begin{cases} 
       \dfrac{1}{n\rho_n} & \text{R is acyclic} \\
       \dfrac{1}{n\rho_n^{3/2}} & \text{R is a simple cycle} .
  \end{cases}
\end{aligned}$
 \end{lemma}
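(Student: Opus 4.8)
The plan is to expand the product weights in terms of centered multipliers and then match the resulting terms against a Hoeffding-type decomposition. Set $\eta_i := \xi_i - 1$, so that the $\eta_i$ are i.i.d., mean zero, with $E[\eta_i^2] = \mu_2 = 1$. Since $\xi_{i_1\cdots i_r} = \prod_{u=1}^{r}(1+\eta_{i_u}) = \sum_{T \subseteq \{i_1,\ldots,i_r\}} \prod_{m \in T} \eta_m$, I would substitute into~(\ref{eq:multiplicative-bootstrap}) and interchange the order of summation to group tuples by the fixed subset $T$. Introducing the generalized reduced kernel
\[
\tilde{g}_k(T) := \frac{1}{\binom{n-k}{r-k}} \sum_{\{i_1,\ldots,i_r\} \supseteq T} \bigl\{ H(A_{i_1,\ldots,i_r}^{(n)}) - \cfhat \bigr\},
\]
which coincides with $\hat{g}_1$ when $k=1$ and with $\tilde{g}_2$ when $k=2$, this yields the exact identity
\[
\cfbm - \cfhat = \sum_{k=0}^{r} \frac{\binom{n-k}{r-k}}{\binom{n}{r}} \sum_{|T| = k} \Bigl( \prod_{m \in T} \eta_m \Bigr) \, \tilde{g}_k(T).
\]

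Next I would read off the leading terms. The $k=0$ coefficient is $\binom{n}{r}^{-1}\sum_{\{i_1,\ldots,i_r\}}\{H(A_{i_1,\ldots,i_r}^{(n)}) - \cfhat\} = 0$. Using the reduction $\binom{n-k}{r-k}/\binom{n}{r} = r(r-1)\cdots(r-k+1)/[n(n-1)\cdots(n-k+1)]$, the $k=1$ prefactor is $r/n$ and the $k=2$ prefactor is $r(r-1)/[n(n-1)]$; together with $\eta_i = \xi_i - 1$ and $\eta_i\eta_j = \xi_i\xi_j - \xi_i - \xi_j + 1$, these two terms reproduce verbatim the linear and quadratic terms of~(\ref{eq:boot-hoeff}). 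The lemma therefore reduces to showing that the remainder
\[
R_n := \sum_{k=3}^{r} \frac{\binom{n-k}{r-k}}{\binom{n}{r}} \sum_{|T|=k} \Bigl(\prod_{m \in T}\eta_m\Bigr)\tilde{g}_k(T)
\]
is $O_P\bigl(\rho_n^s n^{-1/2}\delta(n,\rho_n,R)\bigr)$.

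To bound $R_n$ I would pass to its second moment, conditioning on the graph so that the $\tilde{g}_k(T)$ are fixed. Because the $\eta_m$ are independent, mean zero, with unit variance, $E[\prod_{m\in T}\eta_m \prod_{m'\in T'}\eta_{m'}] = \mathbbm{1}(T = T')$, so every cross term---including those between different orders $k$---vanishes, leaving
\[
E[R_n^2] = \sum_{k=3}^{r} \Bigl(\frac{\binom{n-k}{r-k}}{\binom{n}{r}}\Bigr)^2 E\Bigl[\sum_{|T|=k}\tilde{g}_k(T)^2\Bigr].
\]
Since the prefactor is of order $n^{-k}$ and there are $\binom{n}{k} \asymp n^k$ subsets $T$ of size $k$, the order-$k$ contribution is of order $n^{-k} E[\tilde{g}_k(T)^2]$, so a Markov inequality will deliver the stated $O_P$ bound once each $E[\tilde{g}_k(T)^2]$ is controlled at the right power of $\rho_n$ and the order $k=3$ is shown to dominate.

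The main obstacle is the estimate of $E[\tilde{g}_k(T)^2]$ for $k\ge 3$. I would decompose $H(A_{i_1,\ldots,i_r}^{(n)}) - \cfhat = (H - h) + (h - \theta_n) - (\cfhat - \theta_n)$ into graph noise, a centered U-statistic kernel, and a global fluctuation. The global term factors through the elementary symmetric polynomial $\sum_{|T|=k}\prod_{m\in T}\eta_m$, whose conditional second moment is $\binom{n}{k}$; combined with $(\cfhat - \theta_n)^2 = O_P(\sigma_n^2)$ this contribution is negligible. The noise and kernel parts require edge-counting arguments tailored to the sparse graphon model: holding the $k$ vertices of $T$ fixed and averaging over the $r-k$ free vertices, one tracks how the resulting count scales in $\rho_n$, and this is precisely where the structure of $R$ enters. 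I expect this to yield $E[\tilde{g}_3(T)^2] = O(\rho_n^{2s-2})$ for acyclic $R$ and $O(\rho_n^{2s-3})$ for a simple cycle, which are exactly the magnitudes that produce the two cases of $\delta(n,\rho_n,R)$ after the $n^{-3}$ prefactor. The most delicate bookkeeping is verifying---rather than assuming---that the higher orders $k > 3$ are genuinely lower order under the operative sparsity conditions; I would handle these by the same combinatorial variance estimates for Hoeffding projections of subgraph counts used elsewhere in the paper.
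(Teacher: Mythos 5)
Your proposal is correct and follows essentially the same route as the paper's proof: the binomial expansion $\xi_{i_1\cdots i_r}=\prod_u(1+\eta_{i_u})$ is exactly the Hoeffding projection of the product weights used in the paper (with $h_k(\xi_1,\dots,\xi_k)=\prod_{m\le k}(\xi_m-1)$), the interchange of summation producing the prefactors $\binom{n-k}{r-k}/\binom{n}{r}$ and the kernels $\tilde g_k(T)$ matches the paper's Eq.~(\ref{eq:hofftnm}), and the remainder is controlled there in the same way, via conditional orthogonality of the centered multiplier products and the edge-counting bounds $\var\{\hat H_u(1,\dots,u)\}=O(\rho_n^{2s-(u-1)})$ (acyclic) and $O(\rho_n^{2s-u})$ (simple cycle), with the $u=3$ term dominating. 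The magnitudes you ``expect'' for $E[\tilde g_3(T)^2]$ are precisely what the paper verifies, and your observation that the orders $k>3$ must be checked to be dominated is resolved there by the geometric decay in $(n\rho_n)^{-1}$, which holds under the operative assumption $n\rho_n\to\infty$.
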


Although the quadratic term in the above expansion may seem different from Eq~\ref{eq:hoeffding-noiseless}, Some manipulation yields that $\sum_{i<j }( \ \w_i \ \w_j - \w_i-\w_j+1) \cdot \tilde{g}_2(i,j)  $ is equivalent to $\sum_{i<j}(\w_i\w_j-1) \tilde{g}_2(i,j)-(\w_i-1) \cdot \hat{g}_1(i) - (\w_i-1) \cdot \hat{g}_1(j)$, which is similar to the corresponding term in the Hoeffding decomposition of the U statistic (see Eq~\ref{eq:hoeffding-noiseless}).

Viewing $\hat{g}_1(i)$ and $\hat{g}_2(i,j)$ as estimates of $g_1(X_i)$ and $g_2(X_i, X_j)$, respectively, it is clear that that our weighted bootstrap version encapsulates important information about $\cfhat - \theta_n$. The above decomposition also suggests that one may approximate the non-negligible terms more directly.  Ignoring the remainder term, we arrive at the linear and quadratic bootstrap estimates:
\begin{align}
\cfbl &= \cfhat +  \frac{r}{n}\sum_{i=1}^n (\w_i-1) \cdot \hat{g}_1(i)   \label{eq:cfbl}\\ \cfbq &= \cfbl+ \frac{r(r-1)}{n(n-1)}\sum_{i<j } \ (\w_i \ \w_j - \w_i-\w_j+1) \cdot \tilde{g}_2(i,j). \label{eq:cfbq} 
\end{align}


Now that we have introduced the main concepts, we are ready to present the  our bootstrap procedures. We first present the results on our fast linear bootstrap method.
\bk
\section{Proposed algorithms}\label{sec:proposed-alg}
In this section, we present a fast linear bootstrap method using Eq~\ref{eq:cfbl}. Recall that the multiplicative bootstrap requires to precompute $\cfhat$ and $\hat{g}_1(i)$ for all $i$. This computation is $O(n^r)$ in the worst case. In addition to this, the computation complexity for MB-L is $O(Bn)$. Therefore, in what follows, our goal is to reduce the precomputation time.

\subsection{Fast linear bootstrap}\label{subsec:fast-linear-boot}
We propose a randomized approximation for $\cfhat$ and $\hat{g}_1(i)$. The main idea is that an average over all size $r$ subset can be written as an average over $n!$ permutations (see~\cite{hoeffding1948,subsampling-sparse-graphons}). 
For any $i \in \{1,\ldots,n\}$, denote the set of all subsets of size $r-1$ taken from  $\{1,\ldots,i-1,i+1,\ldots n\}$ as $\si$. 
Denote  $H(A_{i,i_2, \ldots i_r})$ for $S=\{i_2, \ldots i_r\}  \in \mathbb{S}\{-i\}$ as $\hsxi$. 
Denote 
\begin{align*}
     &\hoi = \frac{1}{{n-1 \choose r-1 }} \sum_{S \in \mathbb{S}\{-i\}} \hsxi.
\end{align*}
\bk 

One can also write $H_1(i)$ as follows:
\begin{align*}
    H_1(i)=\frac{1}{(n-1)!}\sum_{\pi} H_\pi(i).
\end{align*}
Here $H_\pi(i)=\frac{\sum_{S\in\mathbb{S}_{\pi} }\hsxi}{\frac{n-1}{r-1}},$ where $\mathbb{S}_{\pi}$ denotes the set of all disjoint subsets  $\{\pi_{(i-1)(r-1)+1},\dots, \pi_{i(r-1)}\}$, $i=1,\dots, \frac{n-1}{r-1}$ obtained from permutation $\pi$. 
Now let  $\pi_j$ be a permutation picked with replacement and uniformly at random from the set of all permutations of $\{1,\dots,n\}\setminus i$. 

Our randomized algorithm makes use of this structure and draws $j=1,\ldots,N$ independent permutations $\pi_j$. 
We compute
\begin{equation}\label{eq:define-h1-tilde}
\tdh(i)=\frac{\sum_{j} H_{\pi_j}(i)}{N},\qquad \tdu=\frac{1}{n}\sum_{i=1}^n\tdh(i).
\end{equation}

To calculate $\tdh(i)$, for each $i$, we permute the node set excluding $i$ for $N$ times and for each of these permutations $\pi$ we check the disjoint set $\mathbb{S}_{\pi}$ for count functionals.  Thus, the complexity for calculating $\tdh(i)$ is now $O\left(N\frac{n}{r}\right)$. From $\{\tdh(i)_{i=1}^n\}$, $\tdu$ is calculated from their mean and $\tilde{\tau}_n$ is defined as
\begin{align}\label{eq:define-taun-tilde}
    \tilde{\tau}_n^2= \frac{\sum_{i=1}^n\{\tdh(i)-\tdu\}^2}{n^2}.
\end{align}
The linear bootstrap uses $\tilde{T}_{n,L}$ by plugging in $\tilde{g}_1(i)=\tilde{H}_1(i)-\tilde{T}_n$ and $\tilde{T}_n$ in Eq~\ref{eq:cfbl}.
\begin{align}\label{eq:define-tnl-tilde}
  \ttnl  =  \Tilde{T}_n + \frac{c_n}{n}\sum_{i=1}^n (\xi_i-1)\{\tdh(i)-\tdu\},
\end{align}
where $c_n=r$ when $Nn\rho_n^s\gg 1$ and $c_n=1$ otherwise.

\begin{remark}
As we show in Theorem~\ref{thm:clt-approx}, the variance of the approximate count statistic consists of the true limiting variance of the underlying U-statistic and the noise variance. When  $N= \Theta(1/n\rho_n^s)$, there is a phase transition from the case where the noise variance dominates to one where the signal variance dominates. The factor $c_n$ helps match the two variances before and after this phase transition. 
\end{remark}

We denote the bootstrap based on Eq~\ref{eq:define-tnl-tilde} by \mblapx and explicitly provide the algorithm in the Supplement (see Algorithm~\ref{alg:apx}). 

\begin{remark}\label{remark:sparsity}
We can adapt the computation of \mblapx to exploit the sparsity in networks. Take for example the simple case of triangles. Instead of scanning over a given permutation of $n-1$ nodes to calculate $\tdh(i)$ for node $i$ (see Eq~\ref{eq:define-h1-tilde}), we can ``sample'' the positions of the neighbors of $i$ in a random permutation, and then check for consecutive neighbors among these positions, thus reducing the computation from $O(n)$ per permutation to $O(n\rho_n\log(n\rho_n))$.  This idea can be extended to other subgraphs as well. We provide details of this approach for counting triangles, two stars and three paths in the Supplement Section~\ref{sec:suppalgapx}. For large sparse graphs (with about 50,000 nodes and 750,000 edges) our approach computes three paths in about 850 seconds, where brute force counting did not finish in a day (see Section~\ref{sec:realdata}). This method reduces the complexity of approximating count statistics to $O(n^2N\rho_n\log(n\rho_n))$, while brute force counting for a three-path can be as large as $O(n^4)$ and become too slow and impractical for large sparse networks. 
\end{remark}

\bk

\subsection{Higher-order correct bootstrap procedures}
In this section, we present our proposed quadratic, and multiplicative algorithms (\mbq, and \mbm). The detailed pseudocode can be found in the Supplement (Algorithm~\ref{alg:ho} in Section~\ref{sec:suppalgapx}).

For a given network, we first compute $\cfhat$ and $\hat{\tau}_n$ (see Eqs~\ref{eq:cf},~\ref{eq:tauhat}).
For each algorithm, we generate $B$ samples of $n$ weights $\{\w^{(j)}_i,i=1,\dots, n\}_{j=1}^B$ from the \texttt{Gaussian Product} distribution (see beginning of Section~\ref{subsec:proposed-bootstrap}). For each of these, \mbm, \mbq, and \mbl respectively values $\cfbm,\cfbq$ and $\cfbl$. From the $B$ values one then constructs the CDF of the statistic in question, after shifting and normalizing it appropriately. 
While we divide by $\frac{r}{n^{1/2}}\hat{\tau}_n$, our statistic is not studentized, yielding a different expansion from previous work. Conditioned on the data, $\hat{\tau}_n$ is constant for the bootstrap samples.

Note that \mbm is computationally expensive since it involves computing the expression in Eq~\ref{eq:multiplicative-bootstrap} for each sample of the bootstrap. The worst-case complexity of evaluating all ${n\choose r}$ subsets of nodes is $n^r$. For $B$ bootstrap samples, the worst-case timing of \mbm will be $B n^r$. In comparison, for \mbl and \mbq, we can precompute the $\hat{g}_1(i)$ and $\hat{g}_2(i,j)$ values in $O(n^r)$ time. After that, the time per bootstrap sample is linear for \mbl and quadratic for \mbq. Thus worst-case computational complexity for a dense network for \mbm, \mbq, and \mbl is $O(Bn^r)$, $Bn^2$ and $Bn$ respectively, \textit{excluding} precomputation time (O($n^r$) in the worst case). In contrast, the approximate linear bootstrap is much faster (see Section~\ref{subsec:fast-linear-boot}).

\section{Main results}
\label{sec:main}



  \subsection{Theoretical guarantees for approximate linear bootstrap}
 In this section, we show that the linear bootstrap statistic using the approximate moments in Eq~\ref{eq:define-h1-tilde} is indeed first-order correct under appropriate sparsity conditions as long as $N$ is large enough.  In this section, we make the following assumptions:
 \begin{assumption}\label{ass:linear}
  \begin{enumerate}
      \item[(a)] $\tau_n/\rho_n^s\geq c>0$, for some constant $c$.
    \item[(b)] $0<w(u,v)<C$, for some constant $C$.
  \end{enumerate}
 \end{assumption}
 The first condition is a standard non-degeneracy assumption for U-statistics.  Boundedness of the graphon is also a common assumption; this assumption can be relaxed at the cost of longer proofs.  
 In what follows, let:
\begin{align}\label{eq:def-sigmatildeAX}
\tilde{\sigma}^2_{|A,X}:= \mathrm{Var}(\tilde{T}_n - \hat{T}_n \ | \ A, X), \ \ \ \tilde{\sigma}_n^2 := E(\tilde{\sigma}^2_{|A,X})
\end{align}
Our first result is a central limit theorem for approximate count statistics.  We show that, so long as $N \gg 1/n^2 \rho_n^s$, the appropriately scaled and centered count functional converges in distribution to a standard Normal.  This threshold ensures that the random conditional variance $\tilde{\sigma}^2_{|A,X}$ concentrates appropriately around its expectation $\tilde{\sigma}_n^2$.  Furthermore, our theorem below establishes two computation/sparsity regimes. When $N \gg 1/n \rho_n^s$, the limiting variance corresponds to the limiting variance of $\hat{T}_n$, given by $\sigma_n^2$, whereas when $1/n^2 \rho_n^s \ll  N \ll 1/n \rho_n^s$, the limiting variance is dominated by $ \tilde{\sigma}_n^2$, which captures the the effect of randomization.  The main technical difficulty is establishing appropriate concentration of $\tilde{\sigma}^2_{|A,X}$ when $N$ is small; our proof involves delicate combinatorial analysis and exploiting the degeneracy of an appropriate U-statistic.  See Supplement S2 for details. 
 \begin{theorem}\label{thm:clt-approx}
 Suppose that $\rho_n \rightarrow 0$ and either $R$ is acyclic and $n \rho_n \rightarrow \infty$ or $R$ is a simple cycle and $n^{r-1} \rho_n^r \rightarrow \infty$.  Moreover, suppose that $N \gg 1/n^2\rho_n^s$.  
 \begin{enumerate}
\item[(a)] (Variance of approximate count statistic $\tilde{T}_n$) 
\begin{align*}
\var(\tilde{T}_n) = 
\tilde{\sigma}_n^2 + \sigma_n^2, \text{ \ where \ } \sigma_n^2 = \Theta\left( \frac{\rho_n^{2s}}{n} \right) \text{ \ and \ } \tilde{\sigma}_n^2 = \Theta\left( \frac{\rho_n^{s}}{Nn^2} \right)
\end{align*}
Consequently, when $N\gg \frac{1}{n\rho_n^s}$, $\var(\tilde{T}_n)/\sigma_n^2\rightarrow 1$ and when $\frac{1}{n^2\rho_n^s}\ll N\ll \frac{1}{n\rho_n^s}$, 
$\var(\tilde{T}_n)/\tilde{\sigma}_n^2\rightarrow 1$.
\item[(b)] (CLT for approximate count statistic $\tilde{T}_n$)
 \begin{align*}
 \sup_{u} \left| P\left( \frac{\tilde{T}_n - \theta_n}{ \sqrt{\noisesd^2 + \sigma_n^2}} \leq u  \right) - \Phi(u)    \right| \rightarrow 0
 \end{align*}
 \end{enumerate}
 \end{theorem}
 
 Our next result establishes the consistency of our bootstrap procedure under appropriate conditions.  For a broad range of computational regimes, ranging from $1/n^2 \rho_n^s \ll N \ll 1/n \rho_n^s$ to $N \gg 1/n \rho_n^s$, we show that bootstrap consistently approximates the distribution of $\tilde{T}_n$. In the regime $N \gg 1/n \rho_n^s$, the bootstrap is $\sqrt{n}$-consistent whereas in the other aforementioned regime, the bootstrap offers faster computation while still providing vanishing confidence intervals.   Our proof here again involves combinatorial analysis and tight bounds for appropriate higher-order moments; see Supplement S2 for details. \bk
 
 \begin{theorem}\label{thm:boot-linear-apx}

Suppose that $\rho_n \rightarrow 0$, $r \geq 2$ and either $R$ is acyclic and $n \rho_n \rightarrow \infty$ or $R$ is a simple cycle and $n^{r-1} \rho_n^{r} \rightarrow \infty$. Let $\mathcal{R}$ denote the independent random variables determining the random permutations chosen with replacement.
\begin{enumerate}
    \item[(a)] When $1/n^2\rho_n^s\ll N\ll 1/n\rho_n^s$, we have $\var(\tilde{T}_{n,L}^*|A,X,\mathcal{R})/\tilde{\sigma}_n^2\stackrel{P}{\rightarrow} 1$. In contrast, when $ N\gg 1/n\rho_n^s$, we have $\var(\tilde{T}_{n,L}^*|A,X,\mathcal{R})/\sigma_n^2\stackrel{P}{\rightarrow} 1$.

\item [(b)]
When $1/n^2\rho_n^s\ll N\ll 1/n\rho_n^s$ or $N\gg 1/n\rho_n^s$, we have 
\begin{align*}
\sup_{u} |P( \tilde{T}_{n,L}^* - \tilde{T}_n   \leq u) - P( \tilde{T}_n - \theta_n   \leq u)     | \xrightarrow{P} 0.    
\end{align*}
\end{enumerate}
\end{theorem}

\begin{remark}

     As seen in Theorem~\ref{thm:clt-approx}, $N=\Theta(1/n\rho_n^s)$ is where a phase transition occurs. When $N$ is a smaller (or larger) order than this threshold, the variance of the bootstrapped statistic $\tilde{T}_{n,L}^*$ converges to variance of the approximate count statistic $\tilde{T}_n$. However, at the point of the phase transition, the bootstrap variance does not match the variance of the approximate count statistic.  In some sense, this is unavoidable since U-statistics and averages of independent random variables have fundamentally different structure owing to the dependence of U-statistics; thus, one cannot mimic the structure of both components with this approach simultaneously. From a practical standpoint, this is not a problem since the user can choose $N$ to be either in the smaller or larger order regimes.  Moreover, at the boundary, one can always opt for a conservative estimate of the variance.      
     
     
    
 \end{remark}

\begin{remark}
     The advantage of the fast linear bootstrap over a direct Normal approximation is that it is automatic and more user-friendly.
     Estimating parameters of the limiting normal for count statistics using estimated sparsity parameter or smooth functions of counts such as the transitivity (suitably normalized ratio of triangles and two stars) requires additional analytic calculations which can be cumbersome for a practitioner. The bootstrap presents a user-friendly alternative, where one replaces burdensome calculations by computer simulations.  However, our analysis also sheds light on the properties of a Normal approximation using approximate count statistics. Therefore, the approximate moment computation proposed in this section can be also used to estimate the limiting variance in both the small and large $N$ setting.  
     
\end{remark}

\begin{remark}[Comparison to existing work on approximating $U$- statistics]
	In~\cite{chen2019randomized}, the authors draw $\omega(n)$ subsets of size $r$ from all ${n\choose r}$ subsets with replacement to estimate an incomplete $U$-statistic. The total number of subsets
	we examine for approximating a local count statistic is also $\omega(n)$. Comparing our Theorem~\ref{thm:boot-linear-apx} with their result shows that both
	methods require similar computation to achieve consistency. However, practically, drawing $Nn/r$ subsamples with replacement seems to be slower than drawing a $N$ permutations
	and then dividing each into disjoint subsets (see Fig~\ref{fig:4cycle-timing}). While they also consider the effect of randomization on the Gaussian approximation, our results take into account the sparsity level, and require delicate and novel combinatorial analysis.
	\end{remark}

\begin{remark}[Approximation quality]
Let $\hat{\nu}_n^2$ denote $\var(\ttnl \mid A,X,\mathcal{R})$.
The standardized  bootstrap CI with coverage $\alpha$ constructed from the approximated linear bootstrap $\ttnl$ can be written as $$\bb{\frac{\tilde{T}_n-\hat{\nu}_nz_{\frac{1-\alpha}{2}}}{\rho_n^s},\ \ \frac{\tilde{T}_n-\hat{\nu}_nz_{\frac{1+\alpha}{2}}}{\rho_n^s}}$$ where $z_{\frac{1-\alpha}{2}}$ and $z_{\frac{1+\alpha}{2}}$ are the $\frac{1-\alpha}{2}$ and $\frac{1+\alpha}{2}$ quantiles from the standard normal distributions.
When $1/n^2\rho_n^s\ll N\ll 1/n\rho_n^s$, Theorem~\ref{thm:boot-linear-apx} states that $\var(\ttnl \mid A,X,\mathcal{R})=\Theta(\rho_n^{s}/nN)$ and hence the CI width vanishes, thus leading to consistent estimation of the parameter of inferential interest. 
For estimated $\rho_n$ these intervals can be established using a delta method approach.\bk 
\end{remark}

\begin{remark}[Broader Sparsity Regime]\label{rem:apx}
Theorem~\ref{thm:boot-linear-apx} suggests that the linear bootstrap with a suitably large $N$ gives a consistent estimate of variance under much weaker conditions on sparsity than that needed for establishing higher order correctness (Theorem~\ref{thm:mbq-firstorder} and Corollary~\ref{thm:mbm-firstorder}).  It should be noted that the arguments in \cite{zhang-xia-network-edgeworth} require $\rho_n = \omega(1/\sqrt{n})$ for acyclic graphs and therefore, their convergence rates for empirical Edgeworth expansions do not hold in sparser regimes.  \bk
\end{remark}

\bk

\subsection{Results on higher-order correct bootstrap procedures}
Below, we establish an Edgeworth expansion normalized by the true standard deviation, which is more appropriate for our purposes.  Since estimating the variance leads to a non-negligible perturbation, the polynomials in our expansion differ from those established by \cite{zhang-xia-network-edgeworth}.  All proofs and details are deferred to Supplement Section \ref{sec:suppprop3} and Section \ref{sec:suppthm23}. \bk In what follows, let $F_n(u)$ denote the CDF of $\hat{T}_n$ and $G_n(u)$ denote the Edgeworth expansion of interest, given by:

\begin{equation}\label{eq:edgeworth-gn}
    G_n(u)= \Phi(u)- \phi(u)\frac{(u^2-1)}{6n^{1/2}\tau_n^3}\biggl[E\{g_1^3(X_1)\}+3(r-1)E\{g_1(X_1)g_1(X_2)g_2(X_1,X_2)\} \biggr].
\end{equation}

Furthermore, recall $\tau_n^2=\var[\E\{h(X_S)\mid X_1\}]$ denotes the asymptotic variance of the  U-statistic.  Throughout this section, we will impose the following condition:

 \begin{assumption}\label{ass:sparse}
  For acyclic $R$, $\rho_n = \omega(n^{-1/2})$ and for cyclic $R$,  $\rho_n = \omega(n^{-1/r})$.
 \end{assumption}

  The above is a nontrivial sparsity assumption that we require for higher-order correctness.
We have the following result:
\bk 

\begin{proposition} 
\label{prop:ew}
Let $G_n$ be the Edgeworth expansion defined in Eq \ref{eq:edgeworth-gn} and let $R$ be a fixed subgraph. Suppose that Assumptions \ref{ass:linear} and \ref{ass:sparse} hold.  Further suppose that $\rho_n=O(1/\log n)$ or Cram\'{e}r's condition holds, i.e. $\limsup_{t\rightarrow \infty} \left|\E\left\{e^{itg_1(X_1)/\tau_n}\right\}\right|<1$ then we have,
\begin{align}
\sup_u \left| F_{n}(u) - G_n(u) \right| = O(\mathcal{M}(n, \rho_n , R))
\end{align}
where $F_{n}$ is the cumulative distribution function of $\frac{\hat{T}_n-\theta_n}{\sigma_n}$ and 
\begin{align}
\label{eq:M-def}
\mathcal{M}(n,\rho_n,R) = \begin{cases} 
      \frac{1}{n\rho_n} & \text{R is acyclic} \\
       \frac{1}{n\rho_n^{r/2}} & \text{R is cyclic} 
  \end{cases}
\end{align}
\end{proposition}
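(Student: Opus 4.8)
The plan is to establish the Edgeworth expansion for the (standardized) statistic $(\hat{T}_n - \theta_n)/\sigma_n$ by reducing it to a degree-two U-statistic carrying an asymptotically negligible network-noise perturbation, then invoking an Edgeworth expansion for U-statistics in the spirit of \citet{wang-jing-weighted-bootstrap-u-statistics}, all while tracking the sparsity-dependent size of every discarded term. First I would split $\hat{T}_n - \theta_n = (T_n - \theta_n) + (\hat{T}_n - T_n)$ and apply the Hoeffding decomposition (Eq~\ref{eq:hoeffding-noiseless}) to the clean U-statistic part, giving
\[ \frac{\hat{T}_n - \theta_n}{\sigma_n} = \frac{r}{n\sigma_n}\sum_{i=1}^n g_1(X_i) + \frac{r(r-1)}{n(n-1)\sigma_n}\sum_{i<j} g_2(X_i,X_j) + \frac{\hat{T}_n - T_n}{\sigma_n} + r_n, \]
where $r_n$ collects the higher-order degenerate kernels and the Hoeffding remainder. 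By the definition of $\tau_n$, the scaling $\sigma_n = r\tau_n/n^{1/2}$, and the non-degeneracy in Assumption~\ref{ass:linear}, the first term is a normalized sum of i.i.d.\ mean-zero variables of unit variance, and the second is a degenerate order-two U-statistic.

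Next I would compute the Edgeworth correction for the leading two Hoeffding terms. Because we normalize by the deterministic $\sigma_n$ rather than studentizing, the only correction at order $n^{-1/2}$ comes from the third cumulant of the statistic: the linear term contributes $E\{g_1^3(X_1)\}/(n^{1/2}\tau_n^3)$, and the covariance between the linear and quadratic parts contributes the cross term proportional to $E\{g_1(X_1)g_1(X_2)g_2(X_1,X_2)\}$. Matching these cumulant contributions against the normalizing constants reproduces exactly the polynomial in $G_n$ (Eq~\ref{eq:edgeworth-gn}), leaving a residual of order $O(n^{-1})$ from the main part. Validity of the expansion requires a smoothing argument, and this is where the two hypotheses enter: under Cramér's condition the classical smoothing lemma applies directly to the characteristic function of $g_1(X_1)/\tau_n$, whereas under $\rho_n = O(1/\log n)$ I would instead follow \citet{zhang-xia-network-edgeworth} and exploit the regularizing effect of the Bernoulli network noise on the characteristic function of the noisy linear projection, thereby bypassing Cramér.

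The main work, and the main obstacle, is to bound the discarded pieces at the claimed rate $\mathcal{M}(n,\rho_n,R)$ in Kolmogorov distance. Via a smoothing inequality, each of (i) the network-noise term $(\hat{T}_n - T_n)/\sigma_n$, (ii) the higher-order degenerate kernels $g_3, g_4, \ldots$, and (iii) the remainder $r_n$ contributes to $\sup_u|F_n(u) - G_n(u)|$ an amount controlled by its variance (or a higher moment) relative to $\sigma_n^2 \asymp \rho_n^{2s}/n$. The delicate point is that these variances blow up as $\rho_n \to 0$, and the rate of blow-up is governed by the combinatorial structure of $R$: carefully counting how many edges and vertices are shared among the overlapping $r$-tuples that arise in these variance computations yields $O(1/(n\rho_n))$ for acyclic $R$ and the weaker $O(1/(n\rho_n^{r/2}))$ for cyclic $R$ — precisely the dichotomy recorded in $\mathcal{M}(n,\rho_n,R)$. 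I expect the cyclic case to be the most technical, since the cycle length $r$ enters the exponent through these shared-edge counts.

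Finally I would collect the pieces. Assumption~\ref{ass:sparse} guarantees both that $\mathcal{M}(n,\rho_n,R) \to 0$ and that the $O(n^{-1})$ residual from the main U-statistic expansion is dominated by $\mathcal{M}(n,\rho_n,R)$ in each regime (since $\rho_n \to 0$ forces $1/(n\rho_n) \gg 1/n$, and analogously in the cyclic case). Hence the perturbation terms set the overall rate, and combining the matched Edgeworth polynomial with the perturbation bounds yields the claimed uniform estimate $\sup_u|F_n(u) - G_n(u)| = O(\mathcal{M}(n,\rho_n,R))$.
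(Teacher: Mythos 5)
Your proposal follows essentially the same route as the paper's proof: the same decomposition $\tilde{T}_n = T_{n,1} + T_{n,2} + R_n + O_P(1/n)$ into the linear and degenerate quadratic Hoeffding terms plus network noise, the same cumulant/characteristic-function matching producing the polynomial in $G_n$, the same Esseen smoothing with the dichotomy between Cram\'{e}r's condition and the Bernoulli-noise smoothing of \citet{zhang-xia-network-edgeworth} when $\rho_n = O(1/\log n)$, and the same source for the rate $\mathcal{M}(n,\rho_n,R)$, namely the bound $R_n - R_{n,1} = O_P(\mathcal{M}(n,\rho_n,R))$ from \citet{zhang-xia-network-edgeworth} combined with the Gaussian approximation $Z_n$ of the linear noise, whose convolution perturbs the expansion only at the order of its variance $1/(n\rho_n)$ (so that, as you note, the discarded pieces set the overall rate). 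Two minor attributions to fix: for this proposition the paper builds on \citet{zhang-xia-network-edgeworth} (with Petrov/Feller for the i.i.d.\ part) rather than \citet{wang-jing-weighted-bootstrap-u-statistics}, which is used only for the bootstrap theorems; and the linear noise $R_{n,1}$ cannot be ``discarded at the order of its variance'' by a bare smoothing inequality (its standard deviation $(n\rho_n)^{-1/2}$ exceeds the claimed rate) --- it must be retained and Gaussianized, exactly the mechanism you gesture at when invoking the regularizing effect of the network noise.
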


Now, we will state our bootstrap approximation results.  
We will first show that conditioned on the network, and latent variables, the CDF of \mbq matches the asymptotic expansion in Eq \ref{eq:edgeworth-gn}, where the true moments are replaced by their empirical versions. In what follows, let $\hat{E}_n(\cdot)$ denote the expectation operator with respect to the empirical measure of $A$ and $X$. Define
 \begin{equation}\label{eq:gnhat}
 \gnhat(u) = \Phi(u) -\frac{(u^2-1)\phi(u)}{6n^{1/2}\widehat{\tau}_n^3}\left[\widehat{E}_n\left\{g_1(i)^3\right\}+ 3(r-1) \widehat{E}_n\{g_1(i)g_1(j)g_2(i,j)\} \right],
 \end{equation}
where we have:
 \begin{equation}\label{eq:empmoment}
 \begin{split}
  &\widehat{E}_n\left\{g_1(i)^3\right\} = \frac{1}{n}\sum_{i=1}^n \gi^3, \\
    &\widehat{E}_n\{ g_1(i)g_1(j)g_2(i,j) \}=\frac{1}{{n \choose 2}}\sum_{1 \leq i<j\leq n}\tilde{g}_2(i,j)\gi\gj. 
 \end{split}
 \end{equation}
\begin{theorem}\label{thm:mbq-firstorder}
If  Assumptions \ref{ass:linear} and \ref{ass:sparse} are satisfied, 
 the weights $\w_1,\dots,\w_n$ are generated from a non-lattice distribution (see \cite{feller-vol-2} page 539) such that $\E[\xi_1] = 1$, $\E[(\xi_1-1)^2] = 1$,  $\E[(\xi_1-1)^3] = 1$, then 
$$\sup_u\left|P^*\left(\frac{\cfbq-\tnhat}{\shat} \leq u \right)-\gnhat(u)\right|= o_P(n^{-1/2})  + O_P\left(\frac{\log n}{n^{2/3}\rho_n}\right),$$
where $P^*(\cdot)$ denotes the conditional probability of event $(\cdot)$ conditioned on $A$ and $X$.

\end{theorem}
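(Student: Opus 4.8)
The plan is to argue conditionally on the data $(A,X)$ and to recognize $\cfbq - \cfhat$ as a degree-two weighted U-statistic in the independent centered multipliers $W_i := \xi_i - 1$. Since $\xi_i\xi_j - \xi_i - \xi_j + 1 = (\xi_i-1)(\xi_j-1)$, Eq~\ref{eq:cfbq} reads
\begin{equation*}
\cfbq - \cfhat = \frac{r}{n}\sum_{i=1}^n W_i\,\hat{g}_1(i) + \frac{r(r-1)}{n(n-1)}\sum_{i<j} W_i W_j\,\tilde{g}_2(i,j),
\end{equation*}
with the $W_i$ i.i.d. under $P^*$ satisfying $\E^* W_i = 0$, $\E^* W_i^2 = 1$, $\E^* W_i^3 = 1$, and the coefficients $\hat{g}_1(i),\tilde{g}_2(i,j)$ held fixed. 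The linear part is a weighted sum of i.i.d. variables whose conditional variance is exactly $\frac{r^2}{n^2}\sum_i \hat{g}_1(i)^2 = \shat^2$ by Eq~\ref{eq:tauhat}, while the completely degenerate quadratic part contributes only at lower order. Thus $S := (\cfbq - \cfhat)/\shat$ is, conditionally, precisely the object for which Edgeworth expansions of order-two U-statistics in the spirit of \cite{wang-jing-weighted-bootstrap-u-statistics} apply, now with the roles of the kernel moments played by the empirical quantities of Eq~\ref{eq:empmoment}.

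Second, I would supply the smoothness (Cram\'er-type) input the expansion requires. Because the multipliers are continuous and non-lattice, the characteristic function $\psi$ of $W_1$ satisfies $\limsup_{|t|\to\infty}|\psi(t)| < 1$; the characteristic function of the linear part factors as $\prod_i \psi\!\bigl(\tfrac{r}{n\shat}\hat{g}_1(i)\,t\bigr)$, and under the non-degeneracy Assumption~\ref{ass:linear}(a) together with boundedness Assumption~\ref{ass:linear}(b) the coefficients $\tfrac{r}{n\shat}\hat{g}_1(i)$ are sufficiently spread (each of order $n^{-1/2}$, none dominant) that this product decays over the frequency range needed for the inversion. This is the step that bypasses any Cram\'er condition on the data itself: the smoothing is supplied entirely by the continuous multiplier, which is the reason for choosing the Gaussian-product weights of Eq~\ref{eq:gaussproduct}.

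Third, I would identify the first correction. The expansion has the form $P^*(S \le u) = \Phi(u) - \tfrac{(u^2-1)\phi(u)}{6}\,\hat\kappa_{3} + (\text{remainder})$, where $\hat\kappa_3$ is the conditional third cumulant of $S$ (equal to the standardized third moment up to the $1+o_P(1)$ variance discrepancy). Expanding $\E^*(\cfbq-\cfhat)^3$ and using $\E^* W_i^3 = 1$, the surviving contributions are the cube of the linear part, giving $\tfrac{1}{n^{1/2}\hat\tau_n^3}\,\widehat{E}_n\{g_1(i)^3\}$, and the linear-times-quadratic cross term, in which the index pairing forces $\{i,j\}$ in the quadratic factor to match the linear indices and produces $\tfrac{3(r-1)}{n^{1/2}\hat\tau_n^3}\,\widehat{E}_n\{g_1(i)g_1(j)g_2(i,j)\}$; all remaining products vanish by the centering of the $W_i$ or are of smaller order. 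This reproduces exactly the bracket defining $\gnhat$ in Eq~\ref{eq:gnhat}, so the leading terms match.

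Finally, the crux is the remainder, which I would split into a generic Edgeworth remainder, controlled by the smoothing above and contributing $o_P(n^{-1/2})$, and a network-induced part stemming from the degenerate quadratic form and the variance discrepancy. The difficulty is that, unlike the classical i.i.d. U-statistic setting, the higher conditional moments of $\tilde{g}_2(i,j)$ and the characteristic function of the quadratic part that one must bound over the inversion region both deteriorate as $\rho_n \to 0$, so every bound carries inverse powers of $\rho_n$. I expect this to be the main obstacle: one must bound the characteristic function of the quadratic part uniformly over the region $|t| \lesssim n^{2/3}$ and control the empirical moments of $\hat{g}_1$ and $\tilde{g}_2$ under the sparse graphon model, tracking the $\rho_n$-dependence to arrive at the $O_P(\log n /(n^{2/3}\rho_n))$ term; Assumption~\ref{ass:sparse} is precisely what keeps this remainder $o_P(1)$ relative to the $n^{-1/2}$ skewness term. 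The concentration of the empirical moments entering $\gnhat$, so that the $O_P$ and $o_P$ statements hold over the law of $(A,X)$, would be handled by moment computations analogous to those underlying Proposition~\ref{prop:ew} and Theorem~\ref{thm: apprx-linear-bootstrap}.
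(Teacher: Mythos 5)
Your skeleton is the paper's proof: you rewrite $\cfbq-\cfhat$, via $\w_i\w_j-\w_i-\w_j+1=(\w_i-1)(\w_j-1)$, as a weighted degree-two U-statistic in the centered multipliers with coefficients $b_{n,i}=\hat{g}_1(i)/\rho_n^s$ and $d_{n,ij}\propto\tilde{g}_2(i,j)/\hat{\tau}_n$, you let the continuous non-lattice multiplier supply the Cram\'{e}r-type smoothing (so no condition on the data is needed), and your third-moment bookkeeping correctly reproduces the bracket in $\gnhat$ — the paper reaches the same conclusion by simplifying the generic expansion $E_{2n}$ using $\E\{(\w_i-1)^2(\w_j-1)^2\}=1$, with Lemma~\ref{lem:proveconv} and Lemma~\ref{lem:thirdabsmoment} verifying the moment conditions on the $b_{n,i}$ and on $\E|\w_1-1|^3$.

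However, the step you explicitly defer (``I expect this to be the main obstacle'') is precisely where the proof requires a new idea, and your sketch does not contain it. A direct appeal to Theorem 3.1 of \cite{wang-jing-weighted-bootstrap-u-statistics} fails here: in the large-$|t|$ range of the characteristic-function inversion, their argument conditions on the first $m$ multipliers and therefore needs the \emph{partial} averages $\frac{1}{m}\sum_{i\le m}\alpha_i$, where $\alpha_i=\frac{1}{n}\sum_{j\neq i}d_{n,ij}^2$, to be bounded uniformly over $m$. In their setting these are absolute constants; under sparse network noise they blow up like $\rho_n^{-1}$, and — worse — a few rows of $\tilde{g}_2(\cdot,\cdot)^2$ can dominate, so small-$m$ partial averages are not controlled by the global average $l_{4,n}$ at all. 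The paper's resolution is twofold. First, it reproves the expansion as Lemma~\ref{lem:wangjing} with rate $O(l_{5,n}\log n/n^{2/3})$, where $l_{5,n}=l_{4,n}+ks_n$ and $s_n$ is the sample standard deviation of the $\alpha_i$; the enabling device is the deterministic Chebyshev-type Lemma~\ref{lemma:deterministic-deviation-identity}, which says at most $n/k^2$ of the $\alpha_i$ exceed $l_{4,n}+ks_n$, so one may intersect the complement with the Wang--Jing index set $\Omega$ on which the $b_{n,i}$ are comparable, relabel, and run the Bickel-style truncation over $m\le k_0 n$ indices drawn from this good set. Second, it must show $l_{4,n}$ \emph{and} $s_n$ are both $O_P(\rho_n^{-1})$: Lemma~\ref{lem:proveconv-mean} handles $l_{4,n}$, and a variance/covariance computation for $\delta_{ij}=\hat{G}_2(i,j)^2-\E\{\hat{G}_2(i,j)^2\mid X\}$ — where the acyclic versus cyclic cases of Assumption~\ref{ass:sparse} enter through bounds of the form $n^p\rho_n^d\to\infty$ — handles $s_n$, yielding the stated $O_P\left(\frac{\log n}{n^{2/3}\rho_n}\right)$ remainder. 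Without these ingredients your argument establishes the form of the expansion but not the claimed rate, so as written the proposal has a genuine gap at its acknowledged crux.
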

%

\begin{remark}
	\label{remark:standardvsstudent}
	While the above theorem is for standardized bootstraps, our proof may be adapted to yield an analogous statement for bootstraps studentized by a variance estimator inspired by the Delta Method.  In essence, the studentized bootstrap may also be expressed as a weighted U-statistic and a negligible remainder term, allowing the use of similar proof techniques.    
\end{remark}

\begin{remark}
	While \cite{zhang-xia-network-edgeworth} establish higher-order correctness under milder sparsity conditions for subsampling and the empirical graphon, our result here does not require Cram\'{e}r's condition for $g_1(X_i)$, which is an important feature for network applications.  Our simulation study suggests that our rate here can be improved, but we leave this to future work. \bk
\end{remark}

Combining Theorem \ref{thm:mbq-firstorder} with the Hoeffding decomposition in Eq \ref{eq:boot-hoeff}, we obtain the corollary below for the multiplicative bootstrap.  Since the remainder term in the Hoeffding decomposition concentrates slowly for well-connected subgraphs of sparser networks, we impose additional assumptions on the subgraph to maintain the rate from the previous theorem. 
\begin{corollary}\label{thm:mbm-firstorder}
Suppose Assumption \ref{ass:linear} is satisfied and either R is acyclic and $\rho_n = \omega(1/\sqrt{n})$ or R is a simple cycle and $\rho_n = \omega(n^{-1/r})$.  Further suppose that the weights $\w_1,\dots,\w_n$ are generated from a non-lattice distribution with such that $\E(\xi_1) = 1$, $\E\{(\xi_1-1)^2\} = 1$,  $\E\{(\xi_1-1)^3\} = 1$, then, 

$$\sup_u\left|P^*\left(\frac{\cfbm-\tnhat}{\shat} \leq u \right)-\gnhat(u)\right|= o_P(n^{-1/2})  + O_P\left(\frac{\log n}{n^{2/3}\rho_n}\right),$$

where $P^*(\cdot)$ denotes the conditional probability of event $(\cdot)$ conditioned on $A$ and $X$ and $\hat{\sigma}_n=r\hat{\tau}_n/n^{1/2}$.
\end{corollary}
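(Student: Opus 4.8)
The plan is to derive the corollary from Theorem~\ref{thm:mbq-firstorder} by bounding the difference between $\cfbm$ and $\cfbq$ via the Hoeffding decomposition in Lemma~\ref{lemma:hoeffding-bootstrap}, and then transferring the Kolmogorov-distance bound from the quadratic bootstrap to the multiplicative bootstrap. The central observation is that Eq~\ref{eq:boot-hoeff} states $\cfbm - \cfhat = (\cfbq - \cfhat) + O_P(\rho_n^s n^{-1/2}\delta(n,\rho_n,R))$, since the first two terms on the right-hand side of the Hoeffding decomposition are exactly $\cfbq - \cfhat$ (recall Eqs~\ref{eq:cfbl}--\ref{eq:cfbq}). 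Hence
\begin{align*}
\frac{\cfbm - \tnhat}{\shat} = \frac{\cfbq - \tnhat}{\shat} + \frac{O_P\!\left(\rho_n^s n^{-1/2}\delta(n,\rho_n,R)\right)}{\shat}.
\end{align*}
Since $\shat = r\hat{\tau}_n/n^{1/2}$ and $\hat{\tau}_n \asymp \tau_n \gtrsim \rho_n^s$ by Assumption~\ref{ass:linear}(a), the remainder, after dividing by $\shat$, is of order $O_P(\delta(n,\rho_n,R))$.

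\textbf{First} I would control this studentized remainder term under the two subgraph cases. When $R$ is acyclic, $\delta(n,\rho_n,R) = 1/(n\rho_n)$, and under the hypothesis $\rho_n = \omega(1/\sqrt{n})$ one checks that $1/(n\rho_n) = o(n^{-1/2})$, so this contribution is absorbed into the $o_P(n^{-1/2})$ term. When $R$ is a simple cycle, $\delta(n,\rho_n,R) = 1/(n\rho_n^{3/2})$, and under $\rho_n = \omega(n^{-1/r})$ one verifies this is $o(1)$ at a rate compatible with the stated error bound. It is precisely to guarantee that this remainder is negligible at the required order that the corollary restricts to acyclic graphs (with the strengthened $\rho_n = \omega(1/\sqrt{n})$) or to \emph{simple} cycles rather than general cyclic subgraphs, since $\delta(n,\rho_n,R)$ as defined in Lemma~\ref{lemma:hoeffding-bootstrap} only covers these two cases.

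\textbf{Next} I would invoke an anti-concentration / Slutsky-type argument to pass from a small perturbation in the statistic to a small perturbation in the CDF. Writing $W = (\cfbq-\tnhat)/\shat$ and $W + R_n = (\cfbm-\tnhat)/\shat$ with $R_n = O_P(\delta(n,\rho_n,R))$ conditionally, I would bound
\begin{align*}
\left|P^*(W + R_n \leq u) - P^*(W \leq u)\right| \leq P^*\!\left(|W-u| \leq |R_n|\right),
\end{align*}
and since $\gnhat$ is a perturbed Gaussian CDF with bounded density, its smoothness together with the already-established closeness of $P^*(W\leq u)$ to $\gnhat(u)$ from Theorem~\ref{thm:mbq-firstorder} yields that this difference is $O_P(\delta(n,\rho_n,R))$ uniformly in $u$. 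Combining with Theorem~\ref{thm:mbq-firstorder} by the triangle inequality gives
\begin{align*}
\sup_u\left|P^*\!\left(\tfrac{\cfbm-\tnhat}{\shat}\leq u\right) - \gnhat(u)\right| = o_P(n^{-1/2}) + O_P\!\left(\tfrac{\log n}{n^{2/3}\rho_n}\right) + O_P\!\left(\delta(n,\rho_n,R)\right).
\end{align*}

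\textbf{The main obstacle} is the bookkeeping in the last step: I must verify that $\delta(n,\rho_n,R)$ is dominated by the existing error terms under the corollary's sparsity hypotheses, so that no new term appears in the final bound. For the acyclic case $\delta = 1/(n\rho_n)$ must be checked against $\log n/(n^{2/3}\rho_n)$; since $1/(n\rho_n) = o(\log n/(n^{2/3}\rho_n))$ this is immediate. For the simple-cycle case, one compares $1/(n\rho_n^{3/2})$ against the same term and uses $\rho_n = \omega(n^{-1/r})$ to confirm absorption for the relevant cycle lengths $r$. A subtle point requiring care is that the $O_P$ in Lemma~\ref{lemma:hoeffding-bootstrap} is with respect to the joint randomness, whereas the CDF statement is conditional on $A,X$; I would need to argue that the Hoeffding remainder is also $O_P$ conditionally (or holds on an event of probability tending to one), so that the conditional anti-concentration bound is legitimate. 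Modulo this measurability bookkeeping, the corollary follows directly from the quadratic-bootstrap theorem and the decomposition.
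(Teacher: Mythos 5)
Your proposal is correct and follows essentially the same route as the paper: the paper likewise writes $(\cfbm-\tnhat)/\shat$ as $\cfbq$ plus an $O_P(\delta(n,\rho_n,R))$ remainder via Lemma~\ref{lemma:hoeffding-bootstrap}, establishes an approximate-Lipschitz property of the conditional CDF of $\cfbq$ by the triangle inequality with $\gnhat$ (invoking Lemma 8.2 of \citet{zhang-xia-network-edgeworth} for the perturbation-to-CDF transfer, which is exactly your anti-concentration step), and absorbs $\delta(n,\rho_n,R)$ into the stated error under the corollary's sparsity hypotheses. Your explicit verification that $\delta$ is dominated by $\log n/(n^{2/3}\rho_n)$ in both subgraph cases, and your remark on conditioning the $O_P$ remainder on a high-probability set of $(A,X)$, match what the paper does implicitly.
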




The proof of Theorem \ref{thm:mbq-firstorder} build upon results from \citet{wang-jing-weighted-bootstrap-u-statistics}\bk, which establish higher-order correctness of the weighted bootstrap for order-2 U-statistics.  However, certain terms that appear as constants in their work blow up when perturbed by sparse network noise. To deal with this issue, we control various terms unique to the network setting and use different arguments to control the overall error rate.  

Similar to \cite{zhang-xia-network-edgeworth}, we now consider an empirical Edgeworth expansion in which the expected quantities of interest are estimated.
   \begin{lemma}\label{lem:convEW}
Under the assumptions in Assumption \ref{ass:linear} and \ref{ass:sparse}, we have 
\begin{align*}
    \sup_u|\gnhat(u)-F_n(u)|= O_P(\mzx)
\end{align*}
\end{lemma}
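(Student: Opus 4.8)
The plan is to control $\sup_u |\gnhat(u) - F_n(u)|$ by inserting the population Edgeworth expansion $G_n(u)$ from Eq~\ref{eq:edgeworth-gn} as an intermediary and applying the triangle inequality:
\begin{align*}
\sup_u |\gnhat(u) - F_n(u)| \leq \sup_u |\gnhat(u) - G_n(u)| + \sup_u |G_n(u) - F_n(u)|.
\end{align*}
The second term is handled directly by Proposition~\ref{prop:ew}, which gives $\sup_u |G_n(u) - F_n(u)| = O(\mzx)$ under exactly Assumptions~\ref{ass:linear} and~\ref{ass:sparse}. So the real work is bounding the first term, which measures how well the empirical Edgeworth expansion $\gnhat$ (built from the estimated moments $\widehat{E}_n\{g_1(i)^3\}$ and $\widehat{E}_n\{g_1(i)g_1(j)g_2(i,j)\}$ of Eq~\ref{eq:empmoment}, and the variance estimate $\widehat{\tau}_n$) approximates the true expansion $G_n$ (built from the population quantities $E\{g_1^3(X_1)\}$, $E\{g_1(X_1)g_1(X_2)g_2(X_1,X_2)\}$, and $\tau_n$).

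Since $G_n$ and $\gnhat$ have the same functional form and differ only in the scalar coefficient multiplying $(u^2-1)\phi(u)/(6n^{1/2})$, I would factor out the common polynomial-times-Gaussian factor. Writing $\Lambda_n = E\{g_1^3\} + 3(r-1)E\{g_1 g_1 g_2\}$ for the population bracket and $\widehat{\Lambda}_n$ for its empirical analogue, and noting that $\sup_u |(u^2-1)\phi(u)|$ is an absolute constant, the first term reduces to controlling
\begin{align*}
\frac{C}{n^{1/2}} \left| \frac{\widehat{\Lambda}_n}{\widehat{\tau}_n^3} - \frac{\Lambda_n}{\tau_n^3} \right|.
\end{align*}
I would split this into the numerator error $|\widehat{\Lambda}_n - \Lambda_n|$ and the denominator error $|\widehat{\tau}_n^{-3} - \tau_n^{-3}|$ via the standard add-and-subtract trick, using the non-degeneracy condition $\tau_n/\rho_n^s \geq c > 0$ from Assumption~\ref{ass:linear}(a) to keep the denominators bounded away from zero after the natural $\rho_n^s$ rescaling. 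The key inputs are concentration bounds: I need $\widehat{\tau}_n^2 = \tau_n^2(1 + o_P(1))$ and that each empirical moment in Eq~\ref{eq:empmoment} converges to its population counterpart at a rate that, after multiplication by $n^{-1/2}$, is dominated by $\mzx$. These concentration estimates for the estimated Hoeffding-projection moments under sparse network noise should follow from the same variance/moment computations used to establish Theorem~\ref{thm: apprx-linear-bootstrap} and the underlying analysis of $\hat{g}_1(i)$ and $\tilde{g}_2(i,j)$.

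The main obstacle I anticipate is the concentration of the third-order mixed moment $\widehat{E}_n\{g_1(i)g_1(j)g_2(i,j)\}$ and the cubic term $\widehat{E}_n\{g_1(i)^3\}$, since these involve products of the \emph{estimated} quantities $\hat{g}_1(i)$ and $\tilde{g}_2(i,j)$, whose estimation errors are themselves amplified by sparse network noise. In particular, $\tilde{g}_2(i,j)$ concentrates slowly for sparse graphs, and the cyclic case carries the worse rate $1/(n\rho_n^{r/2})$ appearing in $\mzx$; the bookkeeping must show that the propagated estimation error in these moments, divided by $n^{1/2}$ and by $\widehat{\tau}_n^3 \asymp \rho_n^{3s}$, stays within the target rate $O_P(\mzx)$ in both the acyclic and cyclic regimes. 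I expect this to be the delicate part and would lean on the moment bounds for $\hat{g}_1$ and $\tilde{g}_2$ developed in the supplement for the earlier theorems, rather than re-deriving them here.
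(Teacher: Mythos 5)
Your plan is correct and is essentially the paper's own route: a triangle inequality through $G_n$ using Proposition~\ref{prop:ew}, then factoring out the uniformly bounded $(u^2-1)\phi(u)$ term so the problem reduces to the scalar coefficient difference, controlled by the concentration bounds $|\hat{\tau}_n^3-\tau_n^3|=O_P(\rho_n^{3s}n^{-1/2})$ and the $O_P(\rho_n^{3s-1/2}n^{-1/2})$ empirical-moment bounds that the paper imports from Lemma~\ref{lem:moments-z/x} rather than re-deriving, exactly as you propose. The one step you leave implicit --- and the only computation the paper's proof actually writes out --- is that the cited moment bound is stated for $\hat{g}_2(i,j)$ while $\gnhat$ uses $\tilde{g}_2(i,j)$; since $\sum_i \hat{g}_1(i)=0$, the two empirical moments differ by $\sum_i \hat{g}_1(i)^3/\binom{n}{2}=O_P(\rho_n^{3s}/n)$, which is absorbed into the target rate $O_P(\mzx)$.
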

 The lemma above suggests that the empirical Edgeworth expansion achieves a better rate than the bootstrap procedures considered.  In the experimental section, we see that the empirical Edgeworth expansion (\ew) in fact achieves the smallest error when the network is dense enough. 
  However, for smooth functions of counts, it is cumbersome to derive such expansions and the bootstrap emerges as a strong practical alternative that offers improved accuracy over a Normal approximation in certain regimes.       \bk

 
 
\subsection{Smooth functions of count statistics}\label{sec:smf}

In network science, the transitivity coefficient, which may be defined as a smooth function of triangles and two-stars, is commonly used to quantify how much nodes in the network cluster together. Given the importance of such functions in applications, accurate inference for these parameters is naturally of substantial interest. Our results in this section establish the quadratic and multiplicative bootstraps as  accurate and user-friendly methods for smooth functions of counts that sidestep the cumbersome computation of gradients and moments required by empirical Edgeworth expansions.

To the best of our knowledge, below we provide the first result on Edgeworth expansions for smooth functions of U-statistics.  It turns out that arguments to derive Edgeworth expansions for smooth functions of IID means such as those in \citet{hall-bootstrap-edgeworth} depend heavily on the properties of cumulants of independent random variables and require multivariate Edgeworth expansions, complicating extensions even to U-statistics.  In contrast, we adapt flexible Edgeworth expansion results of \citet{jing2010unified} to approximate non-negligible terms arising from a Taylor approximation of the smooth functional.

To state our result, we need to introduce some additional notation.   Let $\bu$ denote a d-dimensional vector of count functionals, let $\bu^*$ be a vector of corresponding bootstrap statistics generated by either the multiplier bootstrap $\hat{T}_{n,M}^*$ or the the quadratic bootstrap $\hat{T}_{n,Q}^*$.  Furthermore, let $f: \mathbb{R}^d \mapsto \mathbb{R}$ denote the function of interest.  
Consider the following smooth function of bootstrapped count frequencies:
\begin{align}
\label{eq-snstar}
S_n^*=n^{1/2}\{f(\bu^*)-f(\bu)\}/\tdsigf
\end{align}
 where  $\tdsigf$ is used to standardize the bootstrap version and will be defined shortly.  The standard Delta Method involves a first-order Taylor expansion; to attain higher-order correctness, we need to consider a second-order expansion.  We use the following notation to denote the derivatives of interest  evaluated at the expectation $E(\bu) = \bmu$: 
 \begin{align}
\label{eq:ai}
  &  a_i=\frac{\partial f(\bxx)}{\partial \bxx^{(i)}}\bigg{|}_{\bxx=\bmu}, \ 
    a_{ij}=\frac{\partial^2 f(\bxx)}{\partial \bxx^{(i)}\partial \bxx^{(j)}}\bigg{|}_{\bxx=\bmu},
\end{align}
Define corresponding gradients for the bootstrap evaluated at the count functional $\bu$:
\begin{align}
\label{eq-ahat}
  &  \hat{a}_i=\frac{\partial f(\bxx)}{\partial \bxx^{(i)}}\bigg{|}_{\bxx=\bu}, \ 
    \hat{a}_{ij}=\frac{\partial^2 f(\bxx)}{\partial \bxx^{(i)}\partial \bxx^{(j)}}\bigg{|}_{\bxx=\bu}.
\end{align}
Define the asymptotic variance of the smooth function as: 
\begin{align}
\label{eq:sigmaf}
\sigf^2&=\sum_{i=1}^d\sum_{j=1}^da_ia_j E \left(\frac{r_{i}\hat{g}_1^{(i)}(l)}{\rho_n^{s_{i}}}\frac{r_{j}\hat{g}_1^{(j)}(l)}{\rho_n^{s_{j}}}\right).
\end{align}
and define the empirical analogue of the asymptotic variance as:
\begin{align}\label{eq:sigmahatfemp}
\tdsigf^{2}&=\sum_{i=1}^d\sum_{j=1}^d\hat{a}_i\hat{a}_j \widehat{E}_n \left(\frac{r_{i}\hat{g}_1^{(i)}(l)}{\rho_n^{s_{i}}} \frac{r_{j}\hat{g}_1^{(j)}(l)}{\rho_n^{s_{j}}}\right).
\end{align}

  We are now ready to state our Edgeworth expansion for the smooth function of the bootstrapped statistics. For simplicity, we state the Edgeworth expansion for $u^*$ resulting from the quadratic bootstrap procedure \mbq. A similar result holds for \mbm, albeit under stronger conditions on the subgraph like those imposed in Corollary~\ref{thm:mbm-firstorder}. 

\begin{theorem}
\label{thm:boot_smooth}
Suppose that $\sigf>0$, the function $f$ has three continuous derivatives in a neighborhood of $\bmu$ and suppose that the weights $\w_1,\dots,\w_n$ are generated from a non-lattice distribution such that $\E(\xi_1) = 1$, $\E\{(\xi_1-1)^2\} = 1$,  $\E\{(\xi_1-1)^3\} = 1$.  Further suppose that Assumptions \ref{ass:linear} and \ref{ass:sparse} are satisfied. Then, we have:
 \begin{align}\label{eq:smf-bt-ew}
    P^*(S_n^* \leq x) = \Phi(x) + n^{-1/2}\phi(x)\{\tilde{A}_1\tdsigf^{-1}+\frac{1}{6}\tilde{A}_2\tdsigf^{-3}(x^2-1)\} + O_P\left(\frac{\log n}{n^{2/3}\rho_n}\right).
\end{align}
where:
 \begin{align*}
\tilde{A}_1=& \ \frac{1}{2}\sum_{i=1}^d\sum_{j=1}^d\hat{a}_{ij} \widehat{E}_n \left(\frac{r_{i}\hat{g}_1^{(i)}(l)}{\rho_n^{s_{i}}} \frac{r_{j}\hat{g}_1^{(j)}(l)}{\rho_n^{s_{j}}}\right),\\
    \tilde{A}_2=& \ \sum_{i=1}^d\sum_{j=1}^d\sum_{k=1}^d \hat{a}_i \hat{a}_j\hat{a}_k \widehat{E}_n \left(\frac{r_{i}\hat{g}_1^{(i)}(l)}{\rho_n^{s_{i}}} \frac{r_{j}\hat{g}_1^{(j)}(l)}{\rho_n^{s_{j}}} \frac{r_{k}\hat{g}_1^{(k)}(l)}{\rho_n^{s_{k}}}\right)
    \\ + & \ 3 \ \sum_{i=1}^d\sum_{j=1}^d\sum_{k=1}^d\sum_{t=1}^{d}\hat{a}_i\hat{a}_j\hat{a}_{kt} \widehat{E}_n \left(\frac{r_{i}\hat{g}_1^{(i)}(l)}{\rho_n^{s_{i}}} \frac{r_{k}\hat{g}_1^{(k)}(l)}{\rho_n^{s_{k}}}\right) \widehat{E}_n \left(\frac{r_{j}\hat{g}_1^{(j)}(l)}{\rho_n^{s_{j}}} \frac{r_{j}\hat{g}_1^{(t)}(l)}{\rho_n^{s_{t}}}\right)
    \\ + & \ 3 \ \sum_{i=1}^d\sum_{j=1}^d\sum_{k=1}^d \hat{a}_i\hat{a}_j\hat{a}_k \widehat{E}_n\left( \frac{r_i \hat{g}_1^{(i)}(l)}{\rho_n^{s_i}} \frac{r_j \hat{g}_1^{(j)}(m)}{\rho_n^{s_j}} \frac{r_k(r_k-1) \tilde{g}_2^{(k)}(l,m)}{\rho_n^{s_k}} \right).
\end{align*}
\end{theorem}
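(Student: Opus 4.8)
The plan is to reduce the $d$-dimensional smooth-function problem to a \emph{scalar} weighted U-statistic of degree two and then feed its conditional cumulants into a univariate Edgeworth expansion; this is precisely the strategy that sidesteps the multivariate expansions used by \citet{hall-bootstrap-edgeworth}. First I would Taylor expand $f(\bu^*)$ about $\bu$ to second order,
\begin{align*}
f(\bu^*)-f(\bu)=\sum_{i=1}^d \hat{a}_i\,(\bu^{*(i)}-\bu^{(i)})+\frac{1}{2}\sum_{i=1}^d\sum_{j=1}^d \hat{a}_{ij}\,(\bu^{*(i)}-\bu^{(i)})(\bu^{*(j)}-\bu^{(j)})+R_n,
\end{align*}
where the assumption that $f$ has three continuous derivatives near $\bmu$ bounds $R_n$ by third-order products of the coordinate differences. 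Since each normalized difference satisfies $\bu^{*(i)}-\bu^{(i)}=O_{P^*}(n^{-1/2})$ conditionally on $A,X$, we get $n^{1/2}R_n=O_{P^*}(n^{-1})$, which is negligible relative to the stated error.

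Next I would substitute the Hoeffding decomposition of Lemma~\ref{lemma:hoeffding-bootstrap} (applied to $\cfbq$ coordinatewise) and collect terms by their order in $n$. Writing the node-level aggregate $L_l=\sum_{i=1}^d \hat{a}_i\,(r_i/\rho_n^{s_i})\,\hat{g}_1^{(i)}(l)$, the leading contribution to $n^{1/2}\{f(\bu^*)-f(\bu)\}$ is the standardized weighted sum $n^{-1/2}\sum_{l=1}^n(\w_l-1)L_l$, a conditionally independent, mean-zero array whose conditional variance is exactly $\tdsigf^2$ as in Eq~\ref{eq:sigmahatfemp}. Two quadratic corrections enter at the next order: one from pairing the degree-two Hoeffding term $\tilde{g}_2^{(i)}$ with the gradients $\hat{a}_i$, and one from the quadratic Taylor term, which produces the weighted quadratic form $\sum_{l,m}(\w_l-1)(\w_m-1)\hat{g}_1^{(i)}(l)\hat{g}_1^{(j)}(m)$ paired with $\hat{a}_{ij}$. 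Each correction is $O_{P^*}(n^{-1})$ before scaling and hence contributes at order $n^{-1/2}$ to $S_n^*$.

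I would then compute the conditional cumulants of this scalar statistic and invoke the flexible univariate expansion of \citet{jing2010unified}. The conditional first cumulant is controlled by the diagonal ($l=m$) part of the quadratic Taylor term, which survives because $\E\{(\w_l-1)^2\}=1$, producing the coefficient $\tilde{A}_1\tdsigf^{-1}$; the conditional third cumulant combines the skewness of the leading linear sum with its covariance against the two quadratic corrections, yielding the three pieces of $\tilde{A}_2$ (the pure cubic-gradient term, the term with a product of two second moments, and the $\tilde{g}_2$ cross term). Dividing by $\tdsigf$ fixes the second cumulant at one, and assembling the cumulants via the univariate expansion produces the claimed polynomials, with the $O_P(\log n/(n^{2/3}\rho_n))$ error inherited from the network-noise control already developed for Theorem~\ref{thm:mbq-firstorder}.

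The main obstacle will be twofold. First, the quadratic Taylor contribution is itself a weighted quadratic form in the conditionally independent multipliers, and I must show that its off-diagonal part does not perturb the leading Gaussian beyond order $n^{-1/2}$ while its diagonal part delivers exactly $\tilde{A}_1$; this is the most delicate point, since cross-moments of the estimated projections $\hat{g}_1^{(i)}$ and $\tilde{g}_2^{(i)}$ can blow up in sparse regimes and must be bounded using the concentration estimates underlying Lemma~\ref{lemma:hoeffding-bootstrap} and Theorem~\ref{thm:mbq-firstorder}. Second, I must verify that standardizing by the empirical $\tdsigf$ rather than the population $\sigf$ leaves the expansion intact to order $n^{-1/2}$, i.e. that $\tdsigf^2/\sigf^2=1+o_P(n^{-1/2})+O_P(\,\cdot\,)$ so the ratio perturbation is absorbed into the error term --- the smooth-function analogue of the variance-estimation effect already accounted for in the non-studentized expansion of Theorem~\ref{thm:mbq-firstorder}.
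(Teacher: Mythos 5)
Your overall skeleton matches the paper's proof: a second-order Taylor expansion of $f$ at the observed counts, the bootstrap Hoeffding decomposition applied coordinatewise, reduction to a scalar statistic of the form (weighted linear sum in $\w_l-1$) plus (weighted quadratic form in $(\w_l-1)(\w_m-1)$), the diagonal of the Hessian term supplying the mean shift $\tilde{A}_1/(\sqrt{n}\,\tdsigf)$ (which the paper reinstates via Lemma~\ref{lem:shift}), and the three pieces of $\tilde{A}_2$ arising exactly as you describe. The genuine gap is at the point where you invoke the expansion of \citet{jing2010unified}: conditional on $A$ and $X$, the summands are $b_{n,l}(\w_l-1)$ and $d_{n,lm}(\w_l-1)(\w_m-1)$ with heterogeneous, data-dependent coefficients, so they are independent but \emph{not} identically distributed. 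Theorem 2.1 of \citet{jing2010unified} requires fixed kernels $\alpha(X_l)$, $\beta(X_l,X_m)$ of i.i.d.\ variables; it is the right tool for the population expansion (Proposition~\ref{prop:smooth-pop}) but does not apply to the conditional law. The paper instead applies its modified weighted-bootstrap expansion, Lemma~\ref{lem:wangjing}, and that lemma is the sole source of the claimed error through $l_{5,n}=l_{4,n}+ks_n$: the term $O_P\left(\frac{\log n}{n^{2/3}\rho_n}\right)$ cannot simply be ``inherited'' from Theorem~\ref{thm:mbq-firstorder} if the i.i.d.\ expansion is the tool being used.

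Relatedly, your appeal to ``the concentration estimates underlying Lemma~\ref{lemma:hoeffding-bootstrap} and Theorem~\ref{thm:mbq-firstorder}'' undersells the new work this theorem requires. In Theorem~\ref{thm:mbq-firstorder} the quadratic coefficients $d_{n,ij}$ involved only $\tilde{g}_2$, whereas here $d_{n,lm}$ additionally contains the Hessian products $\hat{a}_{ij}\hat{g}_1^{(i)}(l)\hat{g}_1^{(j)}(m)/\rho_n^{s_i+s_j}$. Verifying $l_{5,n}=O_P(\rho_n^{-1})$ then forces control of $s_n^2 \leq \frac{1}{n}\sum_l \alpha_l^2$, i.e.\ fourth-power averages of these weights, which the paper achieves with a new moment bound $\E\{\hat{g}_1(l)^8\}=O(\rho_n^{8s})$ (Lemma~\ref{lemma:bounding-eighth-power}), proved by a separate subset-overlap counting argument under Assumption~\ref{ass:sparse}; this step is absent from your plan and is not contained in the earlier proofs. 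Finally, your second ``obstacle'' is moot for this theorem: $S_n^*$ is standardized by the empirical $\tdsigf$, which is a constant under $P^*$, and the expansion in Eq~\ref{eq:smf-bt-ew} is stated entirely in empirical quantities ($\tilde{A}_1$, $\tilde{A}_2$, $\tdsigf$), so no comparison of $\tdsigf^2$ with $\sigf^2$ enters here --- that comparison is needed only in Proposition~\ref{prop:smf-bt-ew-close}, where the bootstrap expansion is matched to the population one.
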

In the Supplementary Material, we derive Edgeworth expansions for smooth functions of U-statistics corresponding to the non-negligible component of the count functional in Proposition~\ref{prop:smooth-pop} and show that our bootstrap version of the Edgeworth expansion is close to this expansion in Proposition~\ref{prop:smf-bt-ew-close}. To derive Edgeworth expansions for the U-statistic, we impose a non-lattice condition; however, it is likely that this assumption can be removed for count functionals if a smoothing argument used in \cite{zhang-xia-network-edgeworth} is adapted.  
\bk
\section{Simulation study}
\label{sec:exp}
Here we provide simulated and real data examples demonstrating the effectiveness of our methods.
\subsection{Simulation study}
\label{sec:sim}
We consider two graphons in our simulation study. The first graphon we consider is a Stochastic Blockmodel (SBM), introduced by \cite{holland-sbm}. The SBM is a popular model for generating networks with community structure. The SBM may be parameterized by a $K \times K$ probability matrix $B$ and a membership probability vector $\pi$ that takes values in the probability simplex in $\mathbb{R}^K$. Let $Y_1, \ldots Y_n \in \{1, \ldots, K\}$ be random variables indicating the community membership of the corresponding node, with probability given by the entries of $\pi$. Under this model, we have that $P(A_{ij}^{(n)} = 1 \ \mid \ Y_i = u, Y_j = v) = \rho_n B_{uv}$.  In our simulations, we consider a two block SBM (\sbm) with $B_{ij}=0.6$ for $i=1,j=1$ and $0.2$ for the rest, with $\pi=(0.65, 0.35)$.
%
We also do experiments on a smooth graphon model(\smg). For space considerations, additional experiments for the \sbm model and experimental results for the \smg model are deferred to Supplement Section~\ref{sec:addlexp}. 

 Define $\text{err}(F,G)$ as the maximum of $|F(x)-G(x)|$ over a grid on $[-3,3]$ with grid size $0.1$; this will serve as an approximation to the Kolmogorov distance between $F$ and $G$.  In order to study this error, we first need an estimate of the true CDF. \bk To this end, we conduct Monte Carlo simulations with $M$ samples generated from each model. Note that, since our goal is to show that the error is better than the Normal approximation, we need $M=\omega((n\rho_n)^2)$, which ensures that the error from the Monte Carlo samples is $o(1/n\rho_n)$.  To ease the computational burden, we perform simulations on small networks with $n=160$ nodes.  We generate $M=10^6$ Monte Carlo simulations, so that the higher-order correctness is not obscured by error from the simulations.   In addition to this, we also compare the coverage for different resampling methods in Figure~\ref{fig:coverage}, where we use $n=500$, since the true CDF does not have to be estimated. In this setting we estimate the true parameter of inferential interest from a 15000 node network. In the next subsection we show how to obtain higher-order correct confidence intervals.\bk

\subsection{Higher-order correct confidence intervals}


In this paper, we have studied the properties of bootstrap methods for standardized count functionals of networks.  While our Edgeworth expansions establish that the standardized bootstrap is higher-order correct in the Kolmogorov norm, in general that does not imply higher-order correct confidence intervals (see Chapter~3 page~93 of \citet{hall-theoretical-comparison-bootstrap-ci}). There are two approaches in the literature to deal with this issue, one is studentization and the other is using the bias-corrected standardized intervals.  While our theoretical results can be extended to studentized count functionals (see Remark \ref{remark:standardvsstudent}), it is well-known that for statistics such as correlation coefficients,  studentization may not be suitable because the variance estimate can be unstable ~\citep{hall-bootstrap-edgeworth}. 
Bias-corrected standardized CIs have been extensively investigated in other settings; see for example, \citet{efron-jackknife-resampling}, \citet{efron-better-ci}, and \citet{hall-theoretical-comparison-bootstrap-ci}. In this section, we show how to correct standardized intervals to attain higher-order accuracy for coverage.  \bk

We define function $\hat{L}(t)=P(\hat{T}_n^* < t \mid  A,X)$ for bootstrap samples $\{\hat{T}^*_n\}$ and for nominal coverage $\alpha$. Let $z_{\alpha}$ denote the standard normal critical point, and let $\hat{y}_\alpha=\hat{L}^{-1}(\alpha)$. 
Our bias-corrected standardized CI is given as follows (see~\cite{hall-bootstrap-edgeworth} for more details):
\begin{align}\label{eq:ci-correction}
    \mathcal{I}'_1 = \left(\hat{y}_{\frac{1-\alpha}{2}}+n^{-1}\hat{\sigma}_n\left\{\hat{p}_1(z_{\frac{1-\alpha}{2}})+\hat{q}_1(z_{\frac{1-\alpha}{2}})\right\},\  \hat{y}_{\frac{1+\alpha}{2}}+n^{-1}\hat{\sigma}_n\left\{\hat{p}_1(z_{\frac{1+\alpha}{2}})+\hat{q}_1(z_{\frac{1+\alpha}{2}})\right\} \right),
\end{align}
where for any $x \in \mathbb{R}$, $\hat{p}_1(x)$ and   $\hat{q}_1(x)$ are estimates for $p_1(x)$ and $q_1(x)$. Recall that $p_1(x)$ and $q_1(x)$ are polynomial coefficients of the second order term in the  Edgeworth expansions of the standardized and studentized statistic. 

When the statistic is a count functional like the triangle or two-star density, then we already know the form of $p_1(x)$ and $q_1(x)$. For smooth functions of count statistics like the transitivity, we derive the standardized and studentized Edgeworth expansion for the smooth function of the corresponding U statistics in the Supplement Sections~\ref{sec:supp:sub-proof-smf-ew} and~\ref{sec:supp:proof-smf-ew-stu-and-ci} respectively.  We also show that the Edgeworth expansion of the bootstrapped smooth function converges to this population version in the Supplement Section~\ref{sec:supp:sub-proof-smf-bt-close-to-ew}. 
\bk
\subsection{Competing methods}
We compare our algorithms with the following network resampling procedures. 

\textit{Empirical Graphon (\eg).} We draw $B$ size $n$ resamples $S_i^{*}$ with replacement from $1,\dots, n$.  We compute the count functional $\hat{T}_{n,i}^*$ on $A^{(n)}(S_i^*,S_i^*)$. We compute $\tnhat$ and $\hat{\sigma}_n^2$ on the whole graph. Now for triangles and two-stars we compute the CDF of $\{(\hat{T}_{n,i}^*-\tnhat)/\hat{\sigma}_n\}_{i=1}^B$. For functions of counts, we compute the function for each resampled graph, center using the function computed on the whole network.

\textit{Subsampling (\sub).} We draw $B$ size $b$ ($b=0.5n$) subsamples $S_i^{*}$ without replacement  from $1,\dots, n$ . We compute the count functional $\hat{T}_{b,i}^*$ on $A^{(n)}(S_i^*,S_i^*)$. We also compute $\tnhat$ and $\hat{\sigma}_n^2$ on the whole graph. We set $\hat{\sigma}_b^2=n/b\hat{\sigma}_n^2$. Now for triangles and two-stars we compute the CDF of $\{(\hat{T}_{b,i}^*-\tnhat)/\hat{\sigma}_b\}_{i=1}^B$. For functions of counts, we compute the function for each subsampled graph, center using the function computed on the whole network.

\newcommand{\bhx}{\hat{\boldsymbol{X}}}
\textit{Latent Space  (\ls).} We first estimate the latent variables $\hat{X}:=\{\hat{X}_1,\dots, \hat{X}_n\}$ from the given network. For \sbm, we use the true number of blocks. 
We compute $g_1(\hat{X}_i)$ for $i=1\dots n$, and then compute $T_n(\hat{X})=T_n(\hat{X}_1,\dots, \hat{X}_n)$. Now we simply use the additive variant of bootstrap $T_n(\hat{X})+\frac{r}{n}\sum_i (g_1(\hat{X}_i)-T_n(\hat{X}))$ (see~\cite{levin-levina-rdpg-bootstrap}). For triangles and two-stars, we normalize by the square root of $r^2/n\sum_i g_1(\hat{X}_i)^2$. For functions of counts, we center using the function computed on $\hat{X}$. 


We compare the performance of the resampling methods for two-stars, triangles and a variant of the transitivity coefficient defined in Example 3 of \citet{Bhattacharyya-subsample-count-features}, which is essentially an appropriately defined ratio between triangles and two-stars.

\subsection{Results}
 In Figure \ref{fig:cdf-error}, we plot the maximum of (absolute) difference of the bootstrap CDF $F_n^*$ over the $[-3,3]$ range  (\KL{$F_n$}{$F_n^*$}) for triangle density from the true CDF $F_n$ for sparsity parameter $\rho_n$ varying from $0.05$ to $1$.   We show the average of the expected difference over 30 independent runs along with error-bars. In this figure we see an interesting phenomenon. For sparse networks with $\rho_n\leq 0.2$, the linear method outperforms the higher-order correct methods. As the networks become denser, the higher-order correct methods start performing better. For $\rho_n\leq 0.2$, we also see that the empirical Edgeworth expansion performs worse than the linear bootstrap method.
 
 We also compare the bootstrap samples centered at the bootstrap mean (first row) with bootstrap samples centered at the subgraph density computed on the whole network. We see that the latest space method (LS) and approximate linear bootstrap method (\mblapx) behave differently under these two centerings, with LS performing much worse. This suggests that while both suffer from bias, LS suffers from it to a higher degree. For all parameter settings, we used $N=50\log n$. It is possible that increasing $N$ for sparser settings may lead to reduced bias of \mblapx.
 
 In Figure \ref{fig:coverage}, we show the coverage of  95\% bootstrap percentile CI with correction for triangles (left) and transitivity coefficient (right) of the \sbm model  in $\rho_n$ from $0.05$ to $1$. We simulate $200$ graphs for each $\rho_n$ from \sbm model, construct CI from bootstrap percentiles and correct the CI using Eq~\ref{eq:ci-correction}  for triangles and transitivity respectively.  For smooth functions, computing the bootstrap distribution is straightforward. One simply computes $u^*$ which is now a vector of bootstrapped triangles and two-star densities. Now a standardized bootstrap replicate is given by $\{f(u^*)-f(u)\}/\tdsigf$, where $u$ is the vector of triangles and two-star densities computed on the whole graph, and $\tdsigf$ is given by Eq~\ref{eq:sigmahatfemp}. For transitivity, $f(x,y)=3x/y$.
\begin{figure}[H]
\vspace{-5mm}
\centering
\begin{tikzpicture}[zoomboxarray,  zoomboxarray inner gap=.5cm, zoomboxarray columns=1, zoomboxarray rows=1]
    \node [image node] { \includegraphics[width=0.35\textwidth]{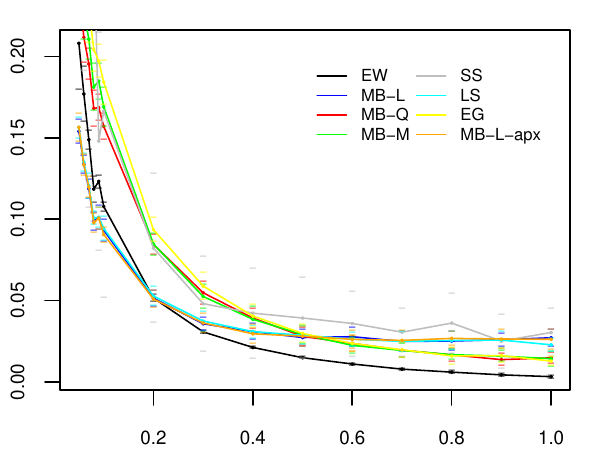}};
    \zoombox[magnification=2]{.15,.7}
\end{tikzpicture}
\begin{tikzpicture}[zoomboxarray,
zoomboxarray inner gap=.5cm, 
zoomboxarray columns=1, 
zoomboxarray rows=1]
    \node [image node] { \includegraphics[width=0.35\textwidth]{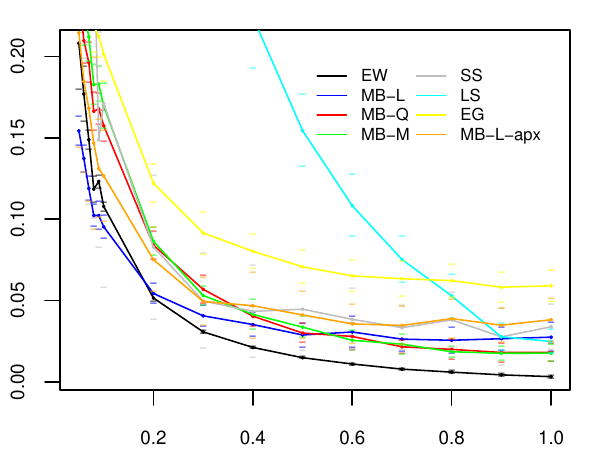} };
    \zoombox[magnification=2]{.15,.7}
\end{tikzpicture}\\
\vspace{-3mm}
\caption{We plot \KL{$F_n$}{$F_n^*$} ($Y$ axis) vs sparsity parameter $\rho_n$ ($X$ axis) for triangle density on networks generated from \sbm, where $F_n^*(t)$ corresponds to the appropriate resampling distribution. The first row is centered at bootstrap mean and normalized by variance estimation from each method $\hat{\sigma}_n$. The second row is centered by triangles density estimated on the whole graph (\mblapx is centered at approximate triangle density estimated from the whole graph) and normalized by $\hat{\sigma}_n$. 
}
\label{fig:cdf-error}
\end{figure}

 \begin{figure}[!htb]
     \includegraphics[trim=2em 2em 0em 0em, clip, width=  .5\textwidth]{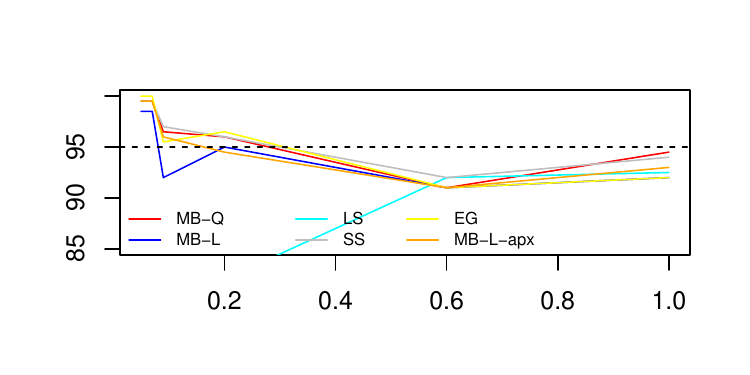}
     \includegraphics[trim=2em 2em 0em 0em, clip,width=  .5\textwidth]{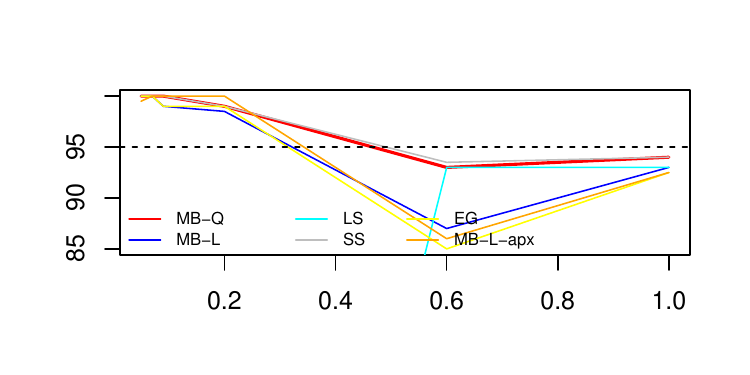}
\caption{We present coverage of  95\% bootstrap percentile CI with correction for triangles (left) and transitivity coefficient (right) of the \sbm model  with $\rho_n$ varying from $0.05$ to $1$. 
 } 
\label{fig:coverage}
\end{figure}

 \bk 

\subsection{Computation time}
In Figure \ref{fig:4cycle-timing}  we show logarithm of running time for 4-cycles count against growing $n$ for 
\sbm model.  We compare our approximate linear method \mblapx with \mblbts, which uses a randomized algorithm for approximating U-statistics proposed in Section~2.2 of~\cite{chen2019randomized}, for precomputation of the local network statistics. 
\begin{figure}[h]
\centering
    \includegraphics[width=  .4\textwidth]{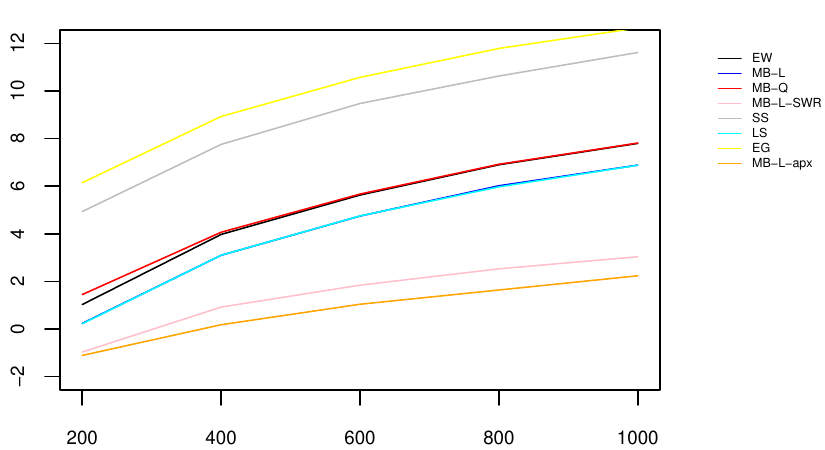} 
\rd \caption{Logarithm of running time for  four-cycles in  \sbm  against sample size $n$.}
\label{fig:4cycle-timing}
\end{figure}
We see that among higher-order correct methods, \mbq offers strong computational performance, outperforming methods such as the \eg and \sub. We see that while \eg has comparable performance to \mbq, it requires recomputation of the count statistic for every bootstrap iteration, making it about $500$ times slower than \mbq for $n=500$ for four-cycle counting. \mbm is the slowest one we do not show here as the weights make symmetric counting shortcuts not as simple to apply. \ew is the fastest among higher-order correct methods, but it cannot be readily adapted for smooth functions of count statistics and is much slower compared to additive methods \ls, \mbl, \mblbts, \mblapx. The four additive methods, i.e. \ls, \mbl, \mblbts, \mblapx are the fastest four of all methods, but they are not higher-order accurate; among them \mblapx is the fastest method in all with our proposed approximate precomputation. The procedure \mblbts draws around $N(n-1)/(r-1)$ size $r-1$ subsets with replacement, whereas we draw $N$ permutations at random and then divide each into consecutive disjoint subsets of size $r-1$.   We observe that \mblapx appears to be faster empirically even if theoretically it has similar computational complexity as \mblbts. 
 The simulation experiments are run on the Lonestar super computer (1252 Cray XC40 compute nodes, each with two 12-core Intel® Xeon® processing cores for a total of 30,048 compute cores) at the Texas Advance Computing Center.

\section{Real data application}
\label{sec:realdata}
\subsection{Real data application}
\label{sec:realdata}
To illustrate our approximated linear bootstrap algorithm on large sparse networks, we apply \mblapx (with the speedup for sparse graphs discussed in Remark~\ref{remark:sparsity}) on the Facebook page network dataset from \cite{rozemberczki2019gemsec}. Each network represents blue verified Facebook pages of different categories. Nodes represent the pages and edges are mutual likes among them. We compare the categories of athletes, politicians and artists also using \mblapx constructed $90\%$ CIs using triangles, two-stars and three-paths. Figure~\ref{fig:fb-pages} shows that both three-paths and two-stars can separate all three pairs of the networks:  athletes vs politicians, athletes vs artists, politicians vs artists, while three-paths provide more clearly separate CIs than two-stars. Three-paths and two-stars are significantly more prevalent in the network of artists compared to athletes and politicians. The sizes of the networks in this example range from around $5000$ to $50000$, whereas the edge densities range from $0.0006$ to $0.0024$. For example, for the artist network ($50515$ nodes, edge density is $0.0006$), our approximate three-path calculation takes around $14$ minutes  while brute force counting does not finish in $24$ hours. See detailed description in Supplement Section~\ref{sec:addlexp}. \bk

\begin{figure}
    \centering
        \scriptsize
    \begin{tabular}{ccc}
    \includegraphics[width=0.3\textwidth]{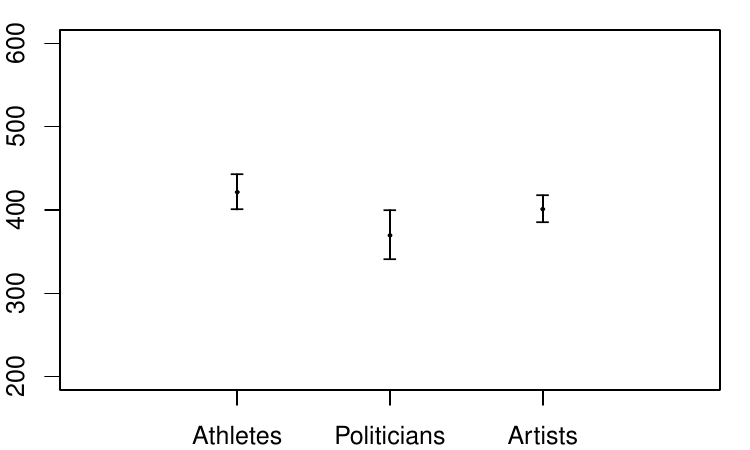}
    &\includegraphics[width=0.3\textwidth]{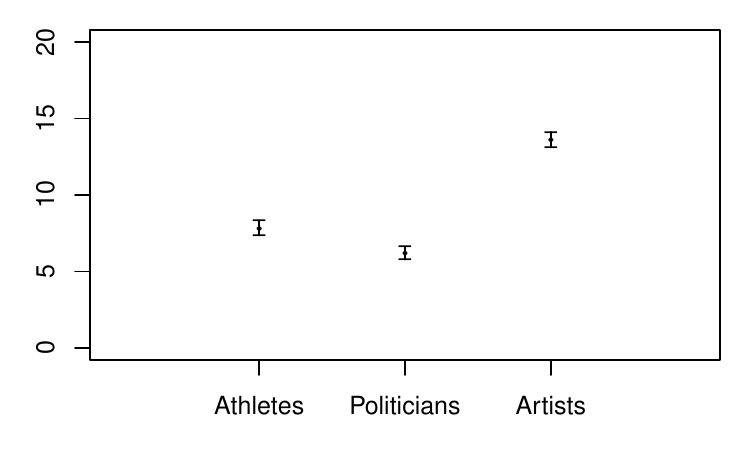}
    &\includegraphics[width=0.3\textwidth]{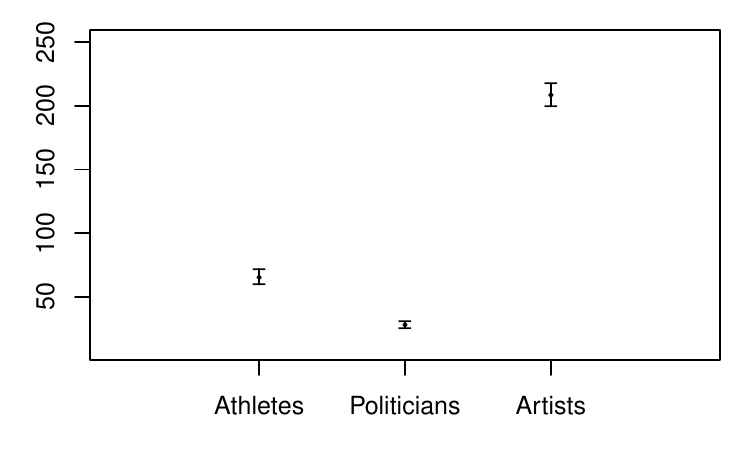}\\
    Triangles & Two-stars & Three-paths \\
    \end{tabular}
    \caption{Bonferroni-adjusted $90\%$ CI for three categories of Facebook pages with triangles, two-stars and three-path using \mblapx algorithm.}
    \label{fig:fb-pages}
\end{figure}

We also compare networks representing the voting similarities of U.S. Congress, which we defer to the Supplement Section~\ref{sec:addlexp} for space constraints.


\bk

\vspace{-5mm}
\section{Conclusion}
In this paper, we propose a family of multiplier bootstraps for network count functionals.  
For large sparse graphs, we  present and analyze an approximate linear bootstrap method that uses randomized sketching algorithms for subgraph counting.  Our theoretical results establish a phase transition phenomena in which different computational regimes lead to limiting behavior. We also present the quadratic bootstrap, which is higher-order correct for moderately sparse graphs; therefore, this method offers a computationally feasible alternative for smaller graphs where accuracy is paramount. For the first time in the literature, we also derive Edgeworth expansions for smooth functions of counts.  
Taken together, we establish the multiplier bootstrap as a user-friendly, automatic procedure that can be tailored to yield higher-order correctness or scalable and consistent inference.  
\bk




\clearpage

\appendix

\setcounter{theorem}{0}
\setcounter{remark}{0}
\setcounter{table}{0}
\setcounter{figure}{0}
\setcounter{equation}{0}
\setcounter{algocf}{0}

\renewcommand{\thetheorem}{S\arabic{theorem}}  
\renewcommand{\thelemma}{S\arabic{lemma}} 
\renewcommand{\theremark}{S\arabic{remark}} 
\renewcommand{\theproposition}{S\arabic{proposition}} 
\renewcommand{\thesection}{S\arabic{section}}   
\renewcommand{\thetable}{S\arabic{table}}   
\renewcommand{\thefigure}{S\arabic{figure}}
\renewcommand{\theequation}{S.\arabic{equation}}
\renewcommand{\thealgocf}{S\arabic{algocf}}

\bigskip
\begin{center}
{\large\bf SUPPLEMENTARY MATERIAL}
\end{center}





\begin{abstract}
We provide detailed proofs of our theoretical results in the supplementary. Section~\ref{sec:supp:lemmaone} presents the Hoeffding decomposition for our multiplier bootstrap count statistic. Section~\ref{sec:sup:proof-be} provides theoretical guarantee for approximate linear bootstrap, showing it is indeed first-order correct under appropriate sparsity conditions.  Section~\ref{sec:suppprop3} establishes an Edgeworth expansions for count statistics normalized by population standard deviation. Section~\ref{sec:suppthm23} proves that our multiplier bootstrap and quadratic bootstrap are higher-order correct under properly generated weights and our sparsity assumption.  Section~\ref{sec:supp:proof-smf-ew-and-smf-boot} provides detailed results for Edgeworth expansion of the corresponding smooth function of the U statistics and show that the Edgeworth expansion of the bootstrapped smooth function converges to its population version. Section~\ref{sec:supp:proof-smf-ew-stu-and-ci} provides examples for confidence interval correction for smooth functions of count statistics such as transitivity. Section~\ref{sec:addlexp} presents additional simulation results and real data results from U.S. House of Representatives roll call vote data. Section~\ref{sec:suppalgapx} presents detailed pseudocode for algorithms.  
\end{abstract}

\newcommand{\bcom}{ \kappa_n }
\newcommand{\lfn}{ l_{5,n} \bk}


%


\section{Proof of Lemma~\ref{lemma:hoeffding-bootstrap}}
\label{sec:supp:lemmaone}
In what follows, we will consider a projection of $T_{n,M}^*$ with respect to the random variables $ \w_1, \ldots \w_n$, conditional on $A$ and $X$. 

Recall that $\xi_i$ follows the Gaussian Product distribution. First, we may express $T_{n,M}^*$ as:
\begin{align*}
T_{n,M}^* = \frac{1}{{n \choose r}} \sum_{1 \leq i_1 < i_2 < \ldots i_r} \left(\xi_{i_1 \cdots i_r} - 1\right)  \cdot \left\{ H(A_{i_1, \ldots, i_r}^{(n)}) - \hat{T}_n \right\}
\end{align*}
where $\xi_{i_1 \cdots i_r}$ denotes the product $\xi_{i_1} \times \cdots \times \xi_{i_r}$. It turns out that applying the Hoeffding decomposition directly to $T_{n,M}^*$ leads to tedious combinatorial calculations; following \citet{Bentkus-edgeworth-symmetric}, let $\Omega_r$ denote an $r$-tuple of $\{1, \ldots, n\}$.  For each summand, we will consider a Hoeffding representation with respect to $\Omega_r$.   Note that using the Hoeffding projection (also see~\cite{Bentkus-edgeworth-symmetric} section 2.8), 
\begin{align*}
    \prod_{1\leq i\leq r}\w_i-1=\sum_{k=1}^r\sum_{1\leq i_1<\dots< i_k\leq r} h_k(\w_{i_1},\dots,\w_{i_k}),
\end{align*}
where for $\Omega_k=\{1,\dots, k\}$,
\begin{align*}
    h_k(\w_{1},\dots,\w_{k})=\sum_{B\in \Omega_k}(-1)^{k-|B|}\E\left\{\prod_{1\leq i\leq r}\w_i-1\mid B\right\}
\end{align*}
Thus the first two terms are given by:
\begin{align*}
    h_1(\w_{1})&:=(\w_1-1)\\ h_2(\w_{1},\w_{2})&:=(\w_1\w_2-1)-(\w_1-1)-(\w_2-1)=(\w_1-1)(\w_2-1)
\end{align*}
In what follows, we will also denote $A^{(n)}_{i_1,\dots,i_r}$ by $A^{(n)}_S$, where $S=\{i_1,\dots, i_r\}$.
Let 
\begin{align}\label{eq:hhij}
    &\hhij= \frac{1}{{n-2 \choose r-2}}\sum_{S \ | \ i,j\in S}H(A_S),\\
    & \hat{H}_u(i_1,\ldots,i_u)=\frac{1}{{n-u \choose r-u}}\sum_{S \ | \ i_1,\ldots,i_u \in S}H(A_S) \notag.
\end{align}
Thus $T_{n,M}^*$ can be written as follows:
\begin{align}\label{eq:hofftnm}
    T_{n,M}^* &= \frac{1}{{n \choose r}} \sum_{1 \leq i_1 < i_2 < \ldots < i_r} \left(\xi_{i_1 \cdots i_r} - 1\right)  \cdot \left\{ H(A_{i_1, \ldots, i_r}^{(n)}) - \hat{T}_n \right\}\notag\\
    &=\frac{1}{{n \choose r}} \sum_{1 \leq i_1 < i_2 < \ldots < i_r}\sum_{k=1}^r\sum_{1\leq i_1<\dots< i_k\leq i_r} h_k(\w_{i_1},\dots,\w_{i_k})\cdot \left\{ H(A_{i_1, \ldots, i_r}^{(n)}) - \hat{T}_n \right\}\notag\\
    &=\frac{1}{{n \choose r}} \sum_{k=1}^r \sum_{1\leq i_1<\dots< i_k\leq n} h_k(\w_{i_1},\dots,\w_{i_k})\cdot\sum_{S}  \left\{ H(A_S^{(n)}) - \hat{T}_n \right\}1 (i_1,\dots, i_k\in S)\notag\\
    &=\frac{1}{{n \choose r}} \sum_{k=1}^r {n-k\choose r-k}\sum_{1\leq i_1<\dots< i_k\leq n}h_k(\w_{i_1},\dots,\w_{i_k}) \frac{\sum_{S}  \left\{ H(A_S^{(n)}) - \hat{T}_n \right\}1 (i_1,\dots, i_k\in S)}{{n-k\choose r-k}}\notag\\
    &=\frac{r}{n}\sum_i (\xi_i-1)\gi+\frac{r(r-1)}{n(n-1)}\sum_{1\leq i<j\leq n} (\w_i-1)(\w_j-1)\underbrace{\{\hat{H}_2(i,j)-\tnhat\}}_{\tilde{g}_2(i,j)}+R_n
\end{align}

 Now, it remains to show that the remainder of $(T_{n,M}^* - \hat{T}_n)/\hat{\sigma}_n$ is $O(\delta(n,\rho_n,R))$, where:
 \begin{align*}
    \delta(n,\rho_n,R) =\begin{cases} 
      \frac{1}{n\rho_n} & \text{R is acyclic} \\
       \frac{1}{n\rho_n^{3/2}}& \text{R is a simple cycle} .
  \end{cases}
\end{align*}

 The residual $R_n$ is a sum of higher order Hoeffding projections, which are all uncorrelated. %
 Therefore, we see that the variance of the $u^{th}$ order term is $\dfrac{\sum_{1\leq i_1<i_2\dots<i_u}  \tilde{g}_u(i_1,\dots,i_u)^2}{\hat{\sigma}_n^2 {n\choose u}^2}$. We will now obtain expressions for $3\leq u\leq r$.

Consider any term $\tilde{g}_u(1,\dots,u)$. We will now bound $\E\{\tilde{g}_u(1,\dots,u)^2\}$.
\begin{align*}
    \E\{\tilde{g}_u(1,\dots,u)^2\}\leq 2[\var\{\hat{H}_u(1,\dots,u)\}+\underbrace{\var(\tnhat)}_{O(\rho_n^{2s}/n)}]
\end{align*}
The bound on the second term follows from~\cite{Bickel-Chen-Levina-method-of-moments} and will be smaller than that of the first term.
Let $\mathcal{S}_{r,u}$ denote all subsets of size $r-u$, not containing $1,\dots u$. For any subset $S\in \mathcal{S}_{r,u}$, also define, $S_u=S\cup \{1,\dots,u\}$.
For the first part, we have:
\begin{align*}
    \var\{\hat{H}_u(1,\dots,u)\}=\frac{\sum_{S,T\in \mathcal{S}_{r,u}}\cov\{H(A_{S_u}),H(A_{T_u})\}}{{n-u\choose r-u}^2}
\end{align*}
Note that the dominating term here will indeed be the one where $|S\cap T| =0$. The number of such terms is ${n\choose 2r-u}$. Also the covariance of those terms will be $\rho_n^{2s-E(A_{1,\dots,u})}$, where $E(A_{1,\dots,u})$ denotes the intersection of the edgeset of $A_{1,\dots,u}$ and the subgraph we are counting. This number can be at most $u-1$ for acyclic $R$ and $u$ for a simple cycle $R$. For $|S\cap T|=k$, the number of terms is ${n\choose 2r-2u-k}$ and the exponent on $\rho_n$ is at most $2s-(u+k-1)$. Thus, for an acyclic subgraph, we have,
\begin{align*}
    \var\{\hat{H}_u(1,\dots,u)\}&\leq \frac{\sum_{k=0}^r {n\choose 2r-2u-k}\rho_n^{2s-(u+k-1)}}{{n-u\choose r-u}^2}\\
    &\leq \sum_{k=0}^r n^{-k}\rho_n^{2s-(u+k-1)}=\rho_n^{2s-(u-1)}\left(1+\sum_{k>0}\frac{1}{(n\rho_n)^k}\right)
\end{align*}
The cyclic one is worse by a factor of $\rho_n$.
Thus the contribution of the $u^{th}$ element of the Hoeffding decomposition is 
\begin{align*}
    n\cdot \frac{\sum_{i_1,\dots, i_u} \tilde{g}_u(i_1,\dots,i_u)^2}{{n\choose u}^2 \hat{\tau}_n^2}&=\begin{cases}O_P\left(\frac{1}{(n\rho_n)^{u-1}}\right)&\mbox{$R$ acyclic}\\
    O_P\left(\frac{\rho_n^{-1}}{(n\rho_n)^{u-1}}\right)&\mbox{$R$ a simple cycle}\end{cases}
\end{align*}

This shows that the third term contributes the most to $R_n$ in Eq~\ref{eq:hofftnm}.  By Markov's inequality, and the definition of $O_P(.)$ notation, it is easy to see that $R_n=O_P(\delta(n,\rho_n,R))$. 

\section{Proof of Theorems~\ref{thm:clt-approx} and~\ref{thm:boot-linear-apx}}\label{sec:sup:proof-be}
To establish the central limit theorem, we will need the following intermediary result, which establishes  that the conditional variance concentrates around its expectation.  In what follows, let:
\begin{align}\label{eq:def-sigmatildeAX}
\tilde{\sigma}^2_{|A,X}:= \mathrm{Var}(\tilde{T}_n - \hat{T}_n \ | \ A, X), \ \ \ \tilde{\sigma}_n^2 := E(\tilde{\sigma}^2_{|A,X})
\end{align}

\begin{theorem}\label{thm:clt-var-conc}
Suppose that $\rho_n \rightarrow 0$, $n \rho_n \rightarrow \infty$, $R$ is acyclic or a simple cycle, and $r \geq 2$.  Then the following holds:
\begin{align*}
\frac{\tilde{\sigma}_{|A,X}^2}{\tilde{\sigma}_n^2} \xrightarrow{P} 1    
\end{align*}
\end{theorem}
\begin{proof}
By Chebychev's inequality, it suffices to show that $\var(\tilde{\sigma}_{|A,X}^2)/\tilde{\sigma}_n^4 \rightarrow 0$.  We start by lower bounding the expectation $\tilde{\sigma}_n^2$.  Observe that:  
\begin{align}
\begin{split}
E(\var( \tilde{H}_1(i) - H_1(i) \ | \ A,X)) &=  \frac{1}{N} \left\{ \frac{1}{(n-1)!}\sum_{\pi} E(H_\pi - \theta_i)^2 - E(H_1(i) - \theta_i)^2 \right\} 
\\ &= \frac{1}{N}\{ E( \var( H_\pi) \ | \ X_i) - E( \var( H_1(i) \ | \ X_i))\}
\\ &= \Theta\left( \frac{\rho_n^s}{Nn} \right) - O\left(\frac{\rho_n^{2s}}{Nn}\right)
\end{split}
\end{align}

Therefore,
\begin{align*}
E\{\var(\tilde{T}_n - \hat{T}_n \ | \ A,X)\} = \frac{1}{n^2} \sum_{i=1}^n E\{\var(\tilde{H}_1(i) - H_1(i) \ | \ A,X )\} = \Theta\left( \frac{\rho_n^s}{Nn^2} \right)
\end{align*}

Now, we analyze $\var(\tilde{\sigma}_{|A,X}^2)$. Let $\theta_n^{(i)} = \mathbb{E}[H_\pi(i) \ | \ X_i]$. We have that:
\begin{align*}
\tilde{\sigma}^2_{|A,X} &= \frac{1}{n^2 N} \sum_{i=1}^n \frac{1}{(n-1)!} \sum_{\pi} \{ H_\pi(i) - H_1(i)\}^2   
\\ &= \underbrace{\frac{1}{n^2 N} \sum_{i=1}^n \frac{1}{(n-1)!} \sum_{\pi} \{ H_\pi(i) - \cei \}^2}_{P_1} - \underbrace{\frac{1}{n^2 N} \sum_{i=1}^n \{ H_1(i)- \cei) \}^2}_{P_2} %
\end{align*}
%
%
For $P_1$, we further expand the sum into two parts, consisting of squared terms and cross terms:
\begin{align*}
P_1 =& \underbrace{\frac{(r-1)^2}{n^2(n-1)^2N} \sum_{i=1}^n \frac{1}{(n-1)!}\sum_\pi \sum_{S \in \mathbb{S}_\pi} \{H(A_{S \cup i}) - \theta_n^{(i)}\}^2}_{P_{1,1}} \\ + \ & \ \underbrace{\frac{(r-1)^2}{n^2(n-1)^2N} \sum_{i=1}^n \frac{1}{(n-1)!} \sum_\pi \sum_{S \not= T, S,T \in \mathbb{S}_\pi} \{ H(A_{S \cup i}) - \cei\} \cdot \{ H(A_{T \cup i}) - \cei\}}_{P_{1,2}}
\end{align*}


For $P_{1,1}$, observe that $H(A_{S \cup i})^2 = H(A_{S \cup i})$. Distributing the square and re-writing each term as (weighted) U-statistics, we have that, for some constants $c_1,c_2,c_3$ depending only on $r$, 
\begin{align*}
P_{1,1} = \frac{1}{n^3 N} \left[ \frac{c_1}{{n \choose r}} \sum_{|S| = r} H(A_S) -  \frac{2c_2}{{n \choose r}} \sum_{|S|=r} \left(\frac{1}{r}\sum_{i \in S} \theta_n^{(i)}\right) H(A_S) + \frac{c_3}{n} \sum_{i=1}^n (\theta_n^{(i)})^2  \right]
\end{align*}

It can readily be seen that first summand dominates; therefore, $\var(P_{1,1}) = O( \rho_n^{2s}/n^7N^2  )$.

We proceed to bounding $\var(P_{1,2})$.  For a given node i, observe that each subset $S \cup T$ appears in $O(n^2 (n-1-2(r-1))!=O(n^2 (n-2r+1)!$ 
permutations.  We may now define a new kernel corresponding with $2r-1$ nodes on the entire graph; thus, for some $c$ depending only on $r$,
\begin{align*}
P_{1,2} =   \frac{c}{Nn{n \choose 2r-1}}\sum_{|K| = 2r-1  } \frac{1}{2r-1}\sum_{S \cup T = K, i \in S \cap T }[H(A_S) - \theta_n^{(i)}] \cdot [H(A_T) -\theta_n^{(i)}]    
\end{align*}
Using Lemma~\ref{lem:p12} we see that
\begin{align*}
\var(P_{1,2})=O\left( \frac{\rho_n^{4s}}{N^2n^4} \right)+O\left( \frac{\rho_n^{4s-2}}{N^2n^5} \right) + O\left( \frac{\rho_n^{2s}}{N^2n^{2r+1}} \right)
\end{align*}

For $P_2$, we again expand the square and consider squared and cross terms:
\begin{align*}
P_2 =& \underbrace{\frac{1}{n^2N}\sum_{i=1}^n \frac{1}{{n-1 \choose r-1}^2}\sum_{S \in \mathbb{S}\{-i\}} \{H(A_{S \cup i}) -\cei\}^2}_{P_{2,1}}   \\ + & \underbrace{\frac{1}{n^2N} \sum_{i=1}^n \frac{1}{{n-1 \choose r-1}^2}\sum_{S \neq T, S,T \in \mathbb{S}\{-i\}} \{H(A_{S \cup i}) -\cei\} \cdot \{ H(A_{T \cup i}) - \cei \} }_{P_{2,2}}
\end{align*}
For $P_{2,1}$, we immediately have that, for $c_1,c_2,c_3$ depending only on $r$,
\begin{align*}
P_{2,1} = \frac{1}{ N n^r}  \left[\frac{c_1}{{n \choose r}} \sum_{|S| = r} H(A_S) - \frac{2c_2}{{n \choose r}} \sum_{|S|=r} \left(\frac{1}{r}\sum_{i \in S} \theta_n^{(i)}\right) H(A_S) + \frac{c_3}{n} \sum_{i=1}^n (\theta_n^{(i)})^2\right]   
\end{align*}
    Therefore, $\var(P_{2,1}) = O(\rho_n^{2s}/n^{2r+1}N^2)$.  

Finally using Lemma~\ref{lem:p12},
\begin{align*}
    \var(P_{2,2})=O\bb{\frac{1}{N^2}\bb{\frac{\rho_n^{2s}}{n^7}+\frac{\rho_n^{2s}}{n^{2r+1}}+\frac{\rho_n^{4s}}{n^4}+\frac{\rho_n^{4s-2}}{n^5}}}
\end{align*}
Now, putting everything together, we see that $\var( \tilde{\sigma}_{|A,X}^2) = o\left( \frac{\rho_n^{2s}}{N^2n^4} \right)$ and thus the result follows.
\end{proof}

Below, we state and prove lemmas that establish the order of the cross-terms that may be written as noisy U-statistics. Let:
\begin{align*}
\hat{U}_n^{(k)} = \frac{1}{n^k} \frac{1}{{n \choose 2r-k} }\sum_{K :|K| = 2r-k } \ \sum_{ \substack{S,T : \ S \cup T = K, \\|S| = r,\ |T| = r }} \ \sum_{i \in S \cap T} \{H(A_S) - \cei\} \cdot \{(H(A_T) - \cei \}  
\end{align*}
We have the following lemma:
\begin{lemma}
\label{lem:p12}
Suppose that $n \rho_n \rightarrow \infty$ and $S$ is acyclic or a simple cycle with $r \geq 2$. Then, for any $k \in \{1, \ldots, r-1 \}$,
\begin{align*}
\var(U_n^{(k)}) = O\left( \frac{\rho_n^{4s}}{n^4} \right)+O\left( \frac{\rho_n^{4s-2}}{n^5} \right) +  O\left( \frac{\rho_n^{2s}}{n^{2r+1}} \right)
\end{align*}

\end{lemma}

\begin{proof}
We again argue by law of iterated variance.  Observe that the conditional expectation given $X_1, \ldots, X_n$ is a (noiseless) U-statistic, with kernel of the order $\rho_n^{4s-2k+2}$.  Therefore, for $k>1$,
\begin{align*}
\var\{E(U_{n}^{(k)} | X_1, \ldots X_n) \} &= O\left(\frac{\rho_n^{4s-2k+2}}{n^{2k+1}} \right)
\\ &= O\left(\frac{\rho_n^{4s-2} }{n^5} \right)
\end{align*}
For $k=1$, we use the fact that the corresponding kernel is degenerate of order 1.  To see this, first note that the expectation is 0 since:
\begin{align*}
E[ E[\{H(A_S) - \cei \} \ | \ X_i] \cdot E[\{H(A_T) - \cei \} \ | \ X_i] = 0   
\end{align*}
 Moreover, for any $j \in K$, the conditional expectation is 0. Without loss of generality, suppose $j \in S$ and $j \neq i$; the $i=j$ calculation is analogous to the expectation one above.  Then,
 \begin{align*}
  & E[\{H(A_S) - \cei\} \cdot \{(H(A_T) - \cei \} \ | \ X_j] 
 \\ = & \  E[E[\{H(A_S) - \cei\} \cdot \{H(A_T) - \cei \} \ | \ X_i,X_j] \ | \ X_i] 
 \\ = & \  E[E[\{H(A_S) - \cei\} \ | \ X_i,X_j] \cdot E[\{H(A_T) - \cei \} \ | \ X_i] \ | \ X_i] = 0   
 \end{align*}
 Therefore, the Hayek projection is 0, and the variance satisfies:
 \begin{align*}
\var\{E(U_{n}^{(1)} | X_1, \ldots X_n) \} = O\left(\frac{\rho_n^{4s}}{n^4} \right)
 \end{align*}
 Now for the conditional expectation of the variance, we further split  the variance of a sum into variance and covariance components.  Observe that that the variance of the product $\{H(A_S) - \cei\} \cdot \{(H(A_T) - \cei \}$ conditional on $X_1, \ldots X_n$ is dominated by the variance of the term $H(A_S)H(A_T)$, which is Bernoulli(p) with $p = O( \rho_n^{2s-k+1})$.  Thus, for $k \geq 1$, the contribution of these terms is $O( \rho_n^{2s}/n^{r+1})$.  Now for the covariance, for all $k \geq 1$, by Lemma  \ref{lemma-cov-order}, the contribution of these terms is upper bounded by $O\left(\frac{\rho_n^{4s}}{n^4} \right) + O\left(\frac{\rho_n^{4s-2}}{n^5} \right)$.  The result follows.

 \end{proof}
 
 Now we state and prove a result on the order of the conditional covariance terms, which was used in the previous lemma.   
 \begin{lemma}
 \label{lemma-cov-order}
Suppose that $n \rho_n \rightarrow \infty$ and $S$ is acyclic or a simple cycle with $r \geq 2$. Then, for any $k \in \{1, \ldots, r-1 \}$, \bk
 \begin{align*}
   & \frac{1}{n^{4r}}\sum_{i,j}\sum_{\substack{S\neq T, S,T\in \mathbb{S}\{-i\}\\S'\neq T', S',T'\in \mathbb{S}\{-j\}}} E\bb{(H(A_{S\cup i})-\cei)(H(A_{T\cup i})-\cei)(H(A_{S'\cup j})-\theta^{(j)})(H(A_{T'\cup j})-\theta^{(j)})}
 \\ & = O\left(\frac{\rho_n^{4s}}{n^4} \right) + O\left(\frac{\rho_n^{4s-2}}{n^5} \right)  
\end{align*}
 \end{lemma}
 \begin{proof}

 Consider four subsets $S,T,S',T'$. There are 3 types of intersections. The first type are nodes \textit{only} in the intersection of a pair of subsets, and between six pairs. We will denote the size of  these intersections by $k_i,i\in[6]$. We will denote by $K$ the sum of $k_i's$. The second are nodes only in triple wise intersections with sizes $j_i,i\in[4]$. The total of these numbers is $J$. Finally there are nodes in the intersection between all four, whose size will be denoted by $J$. Note that for any given configuration of the sets (translating to a configuration of these numbers), there are a total of $n^{4r-(K+2L+3J)}$ nodes. This is because every node in the pairwise intersection gets counted twice. Those in three-way intersections get counted thrice and those in the four-way intersection get counted thrice. Finally note that there are a total of $K+L+J$ nodes such that each node belongs to some intersection (call this $I$). In order to find the number of edges, note that this intersection can be represented as four sets, which are $I\cap S,I\cap T$ etc. For each of these intersections, since we are allowing simple cycles or acyclic graphs,  note that we can have $|I\cup S|-1(|I\cup S|>0)$ edges. Thus the the total number of edges can be at most $|I|-1$ edges in the intersection. So the contribution of these numbers is
 $$\frac{n^{4r-(K+2L+3J)}\rho_n^{4s-(K+L+J-1)}}{n^{4r}}=\frac{\rho_n^{4s+1}}{(n\rho_n)^{K+L+J}n^{L+2J}}$$
 Now we consider some cases of configuration before summing over all $K,L,J$. We first note that all cases with $K\leq 3,L=0,J=0$ are zero. 
 For $K=4,L=0,J=0$, the only possible case that contributes to the covariance is the one with $k_i=1,i\leq 4$. This contributes $\rho_n^{4s}/n^{4}$. Similarly note that if we have $K=5$, the contribution comes from $k_1=1,k_2=1,k_3=1,k_4=2$ which has contribution $\rho_n^{4s-1}/n^5$ and so on. Thus the terms contribute $\sum_{i\geq 0}\rho_n^{4s-i}/n^{4+i}=O\bb{\rho_n^{4s}/n^4}$. 
  We also consider the case where $K= 1,L= 2,J= 1$, this will contribute $\frac{\rho_n^{4s-1}}{n^5}$.
 Now we consider the case with $K= 2,L= 0,J= 1$. These terms will contribute $ \frac{\rho_n^{4s-2}}{n^5}$. In fact for $K=2,L\geq 1,J\geq 1$, we will have a contribution of
 \begin{align*}
     \frac{\rho_n^{4s+1}}{(n\rho_n)^4n^3}=\frac{\rho_n^{4s-2}}{(n\rho_n)n^6}
 \end{align*}
 
 Now we consider the terms of the form $K\geq 3,L\geq 0, J\geq 1$. These terms contribute:
 \begin{align*}
     \frac{\rho_n^{4s+1}}{(n\rho_n)^{4}n^{2}}=\frac{\rho_n^{4s-2}}{(n\rho_n)n^{5}}
 \end{align*}
 
Consider $C_{k,\ell,j}$ the covariance contribution for sets with $K=k,L=\ell,J=j$.  Putting everything together, we have:
 \begin{align*}
     \sum_{k,\ell,j\geq 0}C_{k,\ell,j}&=
     \sum_{k< 4, \ell=j=0}C_{k,\ell,j}+\sum_{k\geq 4, \ell=j=0}C_{k,\ell,j}+\sum_{k=1,\ell=2,j=0}C_{k,\ell,j}+\sum_{k=2, \ell=0,j= 1}C_{k,\ell,j}\\
     &\qquad +\sum_{k=2, \ell\geq 1,j\geq 1}C_{k,\ell,j}+\sum_{k\geq 3, \ell\geq 0,j\geq 1}C_{k,\ell,j}\\
     &=0+O\bb{\frac{\rho_n^{4s}}{n^4}}+ O\bb{\frac{\rho_n^{4s-1}}{n^5}}+ O\bb{\frac{\rho_n^{4s-2}}{n^5}}.
 \end{align*}
 \end{proof}
 Now, we establish a CLT for the approximate count functional.  Our overall strategy is follows \citet{chen2019randomized}.  Recall from Equation~\ref{eq:def-sigmatildeAX} that $\noisesdax^2 = \var(\tilde{T}_n - \hat{T}_n \ | \ A,X)$ and $\noisesd^2 = E(\noisesdax^2)$.
 \begin{theorem}
 Suppose that $\rho_n \rightarrow 0$ and either $R$ is acyclic and $n \rho_n \rightarrow \infty$ or $R$ is a simple cycle and $n^{r-1} \rho_n^r \rightarrow \infty$.  Moreover, suppose that $N \gg 1/n^2\rho_n^s$.  Then,
 \begin{align*}
 \sup_{t \in \mathcal{R}} \left| P\left( \frac{\tilde{T}_n - \theta_n}{ \sqrt{\noisesd^2 + \sigma_n^2}} \leq t  \right) - \Phi(t)    \right| \rightarrow 0
 \end{align*}
 
 \end{theorem}
 
\begin{proof}
 Let $Z \sim N(0,1)$ be independent of $A,X$ and $t' = \sqrt{\noisesd^2 + \sigma_n^2} \cdot t$.
 We may further reduce the problem to establishing central limit theorems for the randomization and signal components separately as follows:

\begin{align*}
& \ \sup_{t \in \mathcal{R}} \left| P\left( \frac{\tilde{T}_n - \theta_n}{ \sqrt{\noisesd^2 + \sigma_n^2} } \leq t  \right) - \Phi(t)    \right|
\\ \leq & \  \sup_{t' \in \mathcal{R}} \left|P\left( \frac{\tilde{T}_n - \hat{T}_n}{\noisesdax } \leq \frac{t'-(\hat{T}_n - \theta_n)}{\noisesdax}  \right) -\Phi\left(\frac{t'-(\hat{T}_n - \theta_n)}{\noisesd} \right) \right| 
\\ + & \ \sup_{t' \in \mathcal{R}} \left| P\left(\frac{(\hat{T}_n - \theta_n) + \noisesd Z}{\sigma_n} \leq \frac{t^\prime}{\sigma_n} \right) - \Phi\left(\frac{t'}{\sqrt{\noisesd^2 + \sigma_n^2}} \right)   \right| = I +II 
\end{align*}
To show $I \rightarrow 0$, observe that:
\begin{align*}
I  \leq & \   E \left\{ \ \sup_{t' \in \mathcal{R}} \left|P\left( \frac{\tilde{T}_n - \hat{T}_n}{\noisesdax } \leq \frac{t'-(\hat{T}_n - \theta_n)}{\noisesdax} \ \biggr\rvert \  A,X  \right) -\Phi\left(\frac{t'-(\hat{T}_n - \theta_n)}{\noisesdax} \ \biggr\rvert \  A,X \right) \right| \ \right\}
 \\ + & E \left\{ \ \sup_{t' \in \mathcal{R}} \left| \Phi\left(\frac{t'-(\hat{T}_n - \theta_n)}{\noisesdax} \ \biggr\rvert \  A,X \right) - \Phi\left(\frac{t'-(\hat{T}_n - \theta_n)}{\noisesd} \ \biggr\rvert \  A,X \right)  \right| \ \right\} = I_a + I_b
\end{align*}

For $I_a$, the Berry-Esseen Theorem implies that if:
\begin{align}
\label{eq:conditional-berry-esseen-condition}
\frac{\frac{1}{n^3}\sum_{i=1}^n\E\left(|\tilde{H}_1(i) - H_1(i) |^3\ | \ A,X \right)}{\noisesdax^{3/2}} \xrightarrow{P} 0,
\end{align}
then:
\begin{align*}
\sup_{t' \in \mathcal{R}} \left|P\left( \frac{\tilde{T}_n - \hat{T}_n}{\noisesdax } \leq \frac{t'-(\hat{T}_n - \theta_n)}{\noisesdax} \ \biggr\rvert \ A, X  \right) -\Phi\left(\frac{t'-(\hat{T}_n - \theta_n)}{\noisesdax} \ \biggr\rvert \ A,X \right) \right| \xrightarrow{P} 0     
\end{align*}
By Theorem \ref{thm:clt-var-conc}, $\frac{\noisesdax^2}{\noisesd^2} \xrightarrow{P} 1$,
where $\noisesd^2 = \Theta\left( \frac{\rho_n^s}{Nn^2} \right)$.
Moreover, by Lemma \ref{lemma:third-moment-randomized}, we have that:
\begin{align*}
 \frac{1}{n^3}\sum_{i=1}^n\E\left(|\tilde{H}_1(i) - H_1(i) |^3\ | \ A,X \right)= O_P\left( \frac{\rho_n^s}{n^4N^2} \right) + O_P\left(\frac{\rho_n^{3s/2}}{N^{3/2}n^{7/2}} \right)   
\end{align*}
Therefore, under the condition $N \gg 1/n^2 \rho_n^s $, Eq \ref{eq:conditional-berry-esseen-condition} holds and thus the expectation in converges to zero since the Kolmogorov distance is bounded.  Furthermore, we have that:
\begin{align*}
\sup_{t' \in \mathcal{R}} \left|\Phi\left(\frac{t'-(\hat{T}_n - \theta_n)}{\noisesdax} \ \biggr\rvert \ A,X \right) - \Phi\left(\frac{t'-(\hat{T}_n - \theta_n)}{\noisesd} \ \biggr\rvert \ A,X \right)  \right| \xrightarrow{P} 0   
\end{align*}
Therefore by similar reasoning $I_b \rightarrow 0$.
Now for $II$, for some $Z'$ independent of $Z$, by Theorem 1 of \citet{Bickel-Chen-Levina-method-of-moments} and P{\'o}lya's theorem:
\begin{align*}
II & \leq E \left\{\sup_{t' \in \mathcal{R}} \left| P\left(\frac{(\hat{T}_n - \theta_n)}{\sigma_n} \leq \frac{t^\prime - \noisesd Z}{\sigma_n} \ \biggr\rvert \ Z=z \right) - P\left(Z^\prime \leq \frac{t^\prime - \noisesd Z}{\sigma_n} \ \biggr\rvert \ Z=z \right) \right| \right\} \rightarrow 0
\end{align*}
To conclude the proof, observe that:
\begin{align*}
P\left(Z^\prime \leq \frac{t^\prime - \noisesd Z}{\sigma_n} \right)
= P(\sigma_n Z^\prime + \noisesd Z \leq t') = \Phi\left( \frac{t'}{\sqrt{\noisesd^2 +\sigma_n^2}}  \right)
\end{align*}
\end{proof}

\bk

\bk



\begin{lemma}
\label{lemma:third-moment-randomized}
\begin{align*}
E|\tilde{H}_1(i) - H_1(i)|^3  = O\left( \frac{\rho_n^s}{n^2 N^2} \right) + O\left(\frac{\rho_n^{3s/2}}{N^{3/2}n^{3/2}} \right) 
\end{align*}
\end{lemma}

\begin{proof}
By Rosenthal's inequality, for some constant $C_1$ and $C_2$, we have 
\begin{align*}
    E\left(|\tilde{H}_1(i)-H_1(i)|^{3} \ \bigr \rvert \ A,X\right) &\leq \underbrace{\frac{C_1}{N^2} E[|H_\pi(i)-H_1(i)|^3|A,X]}_{I}\\ &+ \underbrace{\frac{C_2}{N^{3/2}}\left( E[(H_\pi(i)-H_1(i))^2|A,X]\right)^{3/2}}_{II}.
\end{align*}
Now, let $\cei = E[H(A_{S \cup i}) \ | \ X_i]$ and observe that $E|H_1(i) - \cei|^c \leq E|H(A_{S \cup i}) - \cei|^c$ for $c\geq 1$ via Jensen's inequality.  Moreover, conditionally on $X_i$, note that for $S \in  \mathbb{S}_\pi$,  $H(A_{S \cup i})$ are mutually independent since their node sets are disjoint.  Now for $I$, we use Rosenthal again to bound the third absolute moment of the Bernoulli sum:
%
\begin{align*}
    E(|H_\pi(i)-H_1(i)|^3|) &\leq \frac{Cr^3}{n^3}\frac{1}{(n-1)!} \sum_{\pi}E\left[E\left(\bigr|\sum_{S\in \mathbb{S}_\pi} H(A_{S\cup i})-\cei\bigr|^3 \biggr \rvert \ X_i\right)\right]
    \\ & \leq  \frac{C_1}{n^2} E|H(A_{S \cup i}) -\cei|^3 +  \frac{C_2}{n^{3/2}} \left\{E(H(A_{S \cup i}) -\cei)^2\right\}^{3/2}
    \\ &= O\left( \frac{\rho_n^{s}}{n^2} \right) + O\left( \frac{\rho_n^{3s/2}}{n^{3/2}} \right)
\end{align*}
Now for $II$, 
using the bound in Lemma~\ref{lem:rosenthal:V11} Eq. \ref{eq:rosenthal-var-term}, along with the fact that $\|X\|_{3/2} \leq \|X\|_{2}$, we have for some constant $C$,
\begin{align*}
E(II) \leq \frac{C \rho_n^{3/2}}{n^{3/2}N^{3/2}}.    
\end{align*}

The result follows.
\end{proof}

\begin{theorem}\label{lem:bootvarconc}
Suppose that $\rho_n \rightarrow 0$, $r \geq 2$ and either $R$ is acyclic and $n \rho_n \rightarrow \infty$ or $R$ is a simple cycle and $n^{r-1} \rho_n^{r} \rightarrow \infty$. 
\bk
Let $\kappa_n^2=\frac{1}{n^2}\sum_{i=1}^n (\tilde{H}_1(i) - \tilde{T}_n)^2$. When $N \gg 1/n^2 \rho_n^s$, we have that:
\begin{align*}
  \frac{\kappa_n^2 }{\tilde{\sigma}_n^2+\sigma_n^2/r^2} \xrightarrow{P} 1. 
\end{align*}
Furthermore, when $Nn\rho_n^s\gg 1$, $r^2(\tilde{\sigma}_n^2+\sigma_n^2/r^2)/\sigma_n^2\rightarrow 1$, whereas, when $Nn^2\rho_n^s\gg 1\gg Nn\rho_n^s$, $(\tilde{\sigma}_n^2+\sigma_n^2/r^2)/\tilde{\sigma}_n^2\rightarrow 1$. 
\end{theorem}

\begin{proof}
First, observe that the variance of the bootstrap is given by:
\begin{align}\label{eq:def-vnhat}
\bootsd^2 = \frac{c_n^2}{n^2}\sum_{i=1}^n (\tilde{H}_1(i) - \tilde{T}_n)^2  = c_n^2\bcom^2   
\end{align}

We start by lower bounding the expectation. We can decompose the expectation as:
\begin{align*}
E\left\{\frac{1}{n^2} \sum_{i=1}^n(\tilde{H}_1(i) - \tilde{T}_n)^2 \right\} &=
\frac{1}{n^2}\sum_{i=1}^n E(\tilde{H}_1(i)) - \theta_n)^2  - \frac{1}{n}E (\tilde{T}_n - \theta_n)^2
\\ &=  \frac{1-1/n}{n} \cdot \var(\tilde{H}_1(i)) - \frac{ 2}{n}\cov( \tilde{H}_1(i), \tilde{H}_1(j))  
\end{align*}
For the covariance term, by law of iterated covariance,
\begin{align*}
\cov( \tilde{H}_1(i), \tilde{H}_1(j)) = \cov( H_1(i), H_1(j)) = O\left( \frac{\rho_n^{2s}}{(n \rho_n)n }\right)     
\end{align*}

For the variance term, we have that:
\begin{align*}
\var(\tilde{H}_1(i)) &=  \var(H_1(i) - \theta_n) + E\bbb{ \var(\tilde{H}_1(i) - H_1(i)) \ | A,X \ } \\
&= \tau_n^2\bb{1+O\bb{\frac{1}{n\rho_n}}} + \Theta\left( \frac{\rho_n^s}{Nn} \right)    
\end{align*}
where the latter bound follows from:
\begin{align*}
E(\var( \tilde{H}_1(i) - H_1(i) \ | \ A,X)) &=  \frac{1}{N} \left\{ \frac{1}{(n-1)!}\sum_{\pi} E(H_\pi - \theta_i)^2 - E(H_1(i) - \theta_i)^2 \right\} 
\\ &= \frac{1}{N}\{ E( \var( H_\pi) \ | \ X_i) - E( \var( H_1(i) \ | \ X_i))\}
\\ &= \Theta\left( \frac{\rho_n^s}{Nn} \right) - O\left(\frac{\rho_n^{2s}}{Nn}\right)
\end{align*}

Therefore,
\begin{align}\label{eq:Ekn2}
E\left\{\frac{1}{n^2} \sum_{i=1}^n(\tilde{H}_1(i) - \tilde{T}_n)^2 \right\} &= \frac{\sigma_n^2}{r^2}\bb{1+O\bb{\frac{1}{n\rho_n}}} + \Theta\left(\frac{\rho_n^s}{Nn^2}\right)
\end{align}

Now we will establish concentration.  Observe that:
\begin{align*}
 \frac{1}{n} \left\{\frac{1}{n}\sum_{i=1}^n(\tilde{H}_1(i) - \tilde{T}_n)^2 \right\} &= 
 \underbrace{\frac{1}{n} \left\{ \frac{1}{n}\sum_{i=1}^n (\tilde{H}_1(i) - H_1(i))^2 - (\tilde{T}_n - \hat{T}_n)^2 \right\}}_{V_1} 
 \\ &+ \underbrace{\frac{1}{n} \left\{ \frac{1}{n} \sum_{i=1}^n (H_1(i) - \theta_n)^2 - (\hat{T}_n-\theta_n)^2 \right\} }_{V_2}
 \\ &+ \underbrace{\frac{2}{n} \left[\frac{1}{n}\sum_{i=1}^n \{\tilde{H}_1(i) - H_1(i)\} \cdot \{ H_1(i) - \theta_n\} - \{\tilde{T}_n - \hat{T}_n\} \cdot \{ \hat{T}_n - \theta_n\} \right] }_{V_3}
\end{align*}
Note that,
\begin{align}\label{eq:EV1}
    E\bbb{V_1}=\frac{1}{n}E\bbb{\var(\tilde{H}_1(i)|A,X)}\bb{1-\frac{1}{n}}=\tilde{\sigma}_n^2(1-1/n)
\end{align}
\bk
Observe that we may interpret $V_1$ and $V_2$ as sample variances and $V_3$ as a covariance term. We will first bound the variance of $V_2$ and $V_3$.

 For $V_2$, we apply results from \citet{zhang-xia-network-edgeworth}. Let:
\begin{align*}
V_2^\prime = \frac{1}{n^2}\sum_{i=1}^n \left\{ \frac{1}{{n-1 \choose r-1}}\sum_{S} h(X_{S \cup i}) - T_n  \right\}^2   
\end{align*}



\begin{align}\label{eq:varv2}
\  \  P\left( \frac{|r^2V_2 - \sigma_n^2|}{\sigma_n^2} > \epsilon/3  \right)
 \leq & \ P\left( \frac{r^2|V_2 -V_2'|}{\sigma_n^2} > \epsilon/6  \right) + P\left( \frac{|r^2V_2' - \sigma_n^2|}{\sigma_n^2} > \epsilon/6  \right)
\end{align}

By Lemma $3$c) of the above reference, the first term on the R.H.S of the above equation is $O_P(1/n\rho_n)$ and thus goes to zero.  For the second term, Lemma 3d) of the above reference implies that it is $ O_P(1/\sqrt{n})$; thus the second term goes to zero.  
 When $  1/n^2\rho_n^s \ll N \ll 1/n\rho_n^s$, we have that $\sigma_n^2/\nu_n^2 \rightarrow 0$, thus $V_2/\nu_n^2 \xrightarrow{P} 0$ and $V_1$ dominates.  \bk


 For $V_3$, observe that, $E(V_3|A,X) = 0$.  For $E\bbb{\var(V_3 |A,X)}$, rewriting the sample covariance, we have that
\begin{align}
\label{eq:v3-bound}
\var(V_3 |A,X) & \leq  \frac{2}{n^4} \sum_{i=1}^n \{(H_1(i) - \theta_n)^2 + (\hat{T}_n - \theta_n)^2\}  \cdot \var(\tilde{H}_1(i) - H_1(i) \ | \ A,X) 
\end{align}
Now define:
\begin{align*}
%
\alpha_{i,1} &= \frac{1}{n^{2r-2}} \sum_{S}\{H(A_{S \cup i}) - \theta_n\}^2 \\ 
\alpha_{i,2} &= \frac{1}{n^{2r-2}} \sum_{S,T}\{H(A_{S \cup i}) - \theta_n\}  \{H(A_{T \cup i}) - \theta_n\}  \\
\beta_{i,1} &= \frac{1}{n^{r+1}} \sum_{S'}\{H(A_{S' \cup i}) - \cei\}^2 \\
\beta_{i,2} &= \frac{1}{n^{2r-1}} \sum_{S',T' : S' \cap T' = \emptyset}\{H(A_{S' \cup i}) - \cei\} \{H(A_{T' \cup i}) - \cei\}  \\ 
\end{align*}
It follows that, for some $C>0$,
\begin{align*}
&\frac{1}{n^4N}\sum_{i=1}^n\{(H_1(i) - \theta_n)^2  \cdot \var(\tilde{H}_1(i) - H_1(i) \ | \ A,X) \\
&\leq \frac{C}{Nn^4} \sum_{i=1}^n (\alpha_{i,1}\beta_{i,1} + \alpha_{i,1}\beta_{i,2} + \alpha_{i,2}\beta_{i,1} + \alpha_{i,2}\beta_{i,2})     
\end{align*}
For $\mathbb{E}[\alpha_{i,1}\beta_{i,1}]$, the leading term is when the node sets are disjoint (other than $i$); there are $O(n^{2r-2})$ such nodes and thus the contribution is $O(\rho_n^{2s}/n^{r+1})$. For $E[\alpha_{i,1}\beta_{i,2}]$, observe that the product is nonzero only if $|S\cap S^\prime|\ge 1$ and $|S\cap T^\prime|\ge 1$. Therefore, for the dominant term, there are $O(n^{3r-5})$ terms with a total contribution $O(\rho_n^{3s-2}/n^{r+2})$.  For $E[\alpha_{i,2}\beta_{i,1}]$, the dominant term is when $S,S^\prime,$ and $T^\prime$ disjoint; in this case the contribution is $O(\rho_n^{3s}/n^2)$.  Finally for $E[\alpha_{i,2}\beta_{i,2}]$, note that $S \cup T$ must share at least two nodes with $S^\prime \cup T^\prime$; there are $O(n^{4r-6})$ such terms and thus the dominant term is $O(\rho_n^{4s-2}/n^3)$. 

Now, the second term follows similar reasoning, but we elaborate further on the bound below for completeness. Let:
\begin{align*}
\alpha_{i,1}' &= \frac{1}{n^{2r}} \sum_{S}\{H(A_{S}) - \theta_n\}^2 \\ 
\alpha_{i,2}' &= \frac{1}{n^{2r}} \sum_{S,T}\{H(A_{S}) - \theta_n\}  \{H(A_{T}) - \theta_n\} 
\end{align*}
Then, for the second term in Eq \ref{eq:v3-bound}:
\begin{align*}
\frac{1}{n^4N}\sum_{i=1}^n\{(\hat{T}_n - \theta_n)^2  \cdot \var(\tilde{H}_1(i) - H_1(i) \ | \ A,X) \leq \frac{C}{Nn^4} \sum_{i=1}^n (\alpha_{i,1}'\beta_{i,1} + \alpha_{i,1}'\beta_{i,2} + \alpha_{i,2}'\beta_{i,1} + \alpha_{i,2}'\beta_{i,2})     
\end{align*}    

By analogous reasoning, $E[\alpha_{i,1}'\beta_{i,1}] = O(\rho_n^{2s}/n^{r+2})$, $E[\alpha_{i,1}'\beta_{i,2}] = O(\rho_n^{3s-2}/n^{r+3})$, $E[\alpha_{i,2}'\beta_{i,1}] = 
O(\rho_n^{3s}/n^{3})$, $E[\alpha_{i,2}'\beta_{i,2}] = O(\rho_n^{4s-2}/n^{5})$.

Therefore,
\begin{align}
\var(V_3)=E\{\var(V_3|A,X)\} &= O\left( \frac{\rho_n^{2s}}{N n^{r+4}} \right) + O\left( \frac{\rho_n^{3s-2}}{N n^{r+5}}\right) + O\left( \frac{\rho_n^{3s}}{Nn^5} \right)+ O\left( \frac{\rho_n^{4s-2}}{Nn^6} \right)  \notag\\
&=O\bb{\frac{\rho_n^{2s}}{Nn^{r+4}}}+O\bb{\frac{\rho_n^{3s}}{Nn^5}}\label{eq:varv3}
\end{align}
\bk

Now for $V_1$, by law of iterated variance, it suffices to bound $\var( E(V_1 \ | \ A,X))$ and $E(\var(V_1 \ | \ A,X))$.  

For the former, using Theorem~\ref{thm:clt-var-conc}, we have that:
\begin{align*}
\var\bb{E(V_1 \ | \ A,X)} &= \var\bb{\frac{1-1/n}{n^2}\sum_{i=1}^n\var(\tilde{H}_1(i) \ | A,X)}  = (1-1/n)^2\var\bb{\tilde{\sigma}_{|A,X}^2}=o\bb{\frac{\rho_n^{2s}}{N^2n^4}}
\end{align*}


    Now for $E(\var(V_1 | A,X)$, let $V_{1,1} := \frac{1}{n^2} \sum_{i=1}^n\{\tilde{H}_1(i) - H_1(i))\}^2$, let $V_{1,2}:=\frac{1}{n}(\tilde{T}_n-\hat{T}_n)^2$. 



Note that,
\begin{align}\label{eq:V11-decomp}
      \var\bb{V_{1,1}|A,X}&=\frac{1}{n^4} \sum_i \var\bbb{\bb{\tilde{H}_1(i)-H_1(i)}^2|A,X}\notag\\
      &\leq \frac{1}{n^4} \sum_i E\bbb{\bb{\tilde{H}_1(i)-H_1(i)}^4|A,X}
\end{align}

Thus, using Lemma~\ref{lem:rosenthal:V11},
\begin{align*}
    E\bbb{\var\bb{\frac{1}{n^2}\sum_i \bb{\tilde{H}_1(i)-H_1(i)}^2|A,X}}\leq \frac{\rho_n^s}{n^6N^3}\bb{1+N n\rho_n^s}
\end{align*}

For $E(\var(V_{1,2})|A,X)$, note that, using Jensen's inequality,
\begin{align*}
    \var(V_{1,2}|A,X)&=\frac{1}{n^6}\var\bbb{\bb{\sum_i (\tilde{H}_1(i)-H_1(i))^2+\sum_{i\neq j}(\tilde{H}_1(i)-H_1(i))(\tilde{H}_1(j)-H_1(j))}|A,X}\\
    &\leq \frac{1}{n^6}\var\bbb{\sum_i (\tilde{H}_1(i)-H_1(i))^2|A,X}\\
    &+\frac{1}{n^6}\var\bb{\sum_{i\neq j}(\tilde{H}_1(i)-H_1(i))(\tilde{H}_1(j)-H_1(j))|A,X}
\end{align*}
We have the first part. For the second part, we see that:
\begin{align*}
    &\frac{1}{n^6}\var\bb{\sum_{i\neq j}(\tilde{H}_1(i)-H_1(i))(\tilde{H}_1(j)-H_1(j))|A,X}\\
    &=\frac{1}{n^6}\sum_{i\neq j}\var\bb{(\tilde{H}_1(i)-H_1(i))(\tilde{H}_1(j)-H_1(j))|A,X}\\
    &=\frac{1}{n^6}\sum_{i\neq j}E\bb{(\tilde{H}_1(i)-H_1(i))^2|A,X}E\bb{(\tilde{H}_1(j)-H_1(j))^2|A,X}\\
    &\leq \frac{1}{n^6} \bb{\sum_i E\bb{(\tilde{H}_1(i)-H_1(i))^2|A,X}}^2\\
    &\leq \frac{1}{n^6} \bb{\sum_i \var(\tilde{H}_1(i)|A,X)}^2
    \\&=\frac{1}{n^2}\var(\tilde{\sigma}_{|A,X}^2)+\frac{1}{n^6} \bb{\sum_i E(\tilde{H}_1(i)-H_1(i))^2}^2\\
    &\stackrel{(i)}{=}\frac{1}{n^2}o_P\bb{\frac{\rho_n^{2s}}{N^2n^4}}+\frac{1}{n^4}\bb{\frac{\rho_n^s}{nN}}^2=O\bb{\frac{\rho_n^{2s}}{n^6N^2}},
\end{align*}

where the second term in step $(i)$ follows from Lemma~\ref{lem:E-perm-to-subset}.

By Equation~\ref{eq:Ekn2}, we have the expectation of $\kappa_n^2$ is given by
$\Theta\bb{\frac{\rho_n^{s}}{Nn^2}(1+N n\rho_n^s)}$. 


For $V_1$,  we have, for some constant $C_1$:
\begin{align*}
    \var\bb{V_1} = \frac{\rho_n^s}{n^6N^3}\bb{C_1(1+N n\rho_n^s)+o(Nn^2\rho_n^{s})}.
\end{align*}
\bk

Finally  we have:
\begin{align}
    &P\bb{\frac{|V_1+V_2+V_3-(EV_1+\sigma_n^2/r^2)|}{\sigma_n^2/r^2+\tilde{\sigma}_n^2}
    \geq \epsilon}\notag\\
    &\leq P\bb{\frac{|V_1-EV_1|}{\sigma_n^2/r^2+\tilde{\sigma}_n^2}\geq \frac{\epsilon}{3}}+P\bb{\frac{|V_2-\sigma_n^2/r^2|}{\sigma_n^2/r^2+\tilde{\sigma}_n^2}}+P\bb{\frac{|V_3|}{\sigma_n^2/r^2+\tilde{\sigma}_n^2}\geq \frac{\epsilon}{3}}\notag\\
    &\leq \frac{9\var(V_1)}{(\sigma_n^2/r^2+\tilde{\sigma}_n^2)^2\epsilon^2}+P\bb{\frac{|V_2-\sigma_n^2/r^2|}{\sigma_n^2/r^2}\geq \frac{\epsilon}{3}}+\frac{9\var(V_3)}{(\sigma_n^2/r^2+\tilde{\sigma}_n^2)^2\epsilon^2}\label{eq:cheby}
\end{align}

When $Nn\rho_n^s\gg1$, $\sigma_n^2/r^2+\tilde{\sigma}_n^2$ is dominated by $\frac{\rho_n^{2s}}{n}$. When $Nn\rho_n^s\ll1\ll Nn^2\rho_n^s$, it is dominated by $\frac{\rho_n^{s}}{n^2N}$.  It is easy to check that when $Nn\rho_n^s\gg 1$, or $Nn\rho_n^s\ll 1 \ll Nn^2\rho_n^s$, we have $\var(V_1)/(E \bcom^2)^2\rightarrow 0$, which establishes consistency of the first term on the RHS of Eq~\ref{eq:cheby}. The second term was shown to go to zero in Eq~\ref{eq:varv2}. For the third term, using Eq~\ref{eq:varv3}, we see: 
\begin{align*}
    \frac{\var(V_3)}{(\sigma_n^2/r^2+\tilde{\sigma}_n^2)^2}=\begin{cases}
    O\bb{\frac{1}{Nn^{r+2}\rho_n^{2s}}}+O\bb{\frac{1}{N\rho_n^sn^3}}&\mbox{When $N\gg \frac{1}{n\rho_n^s}$}\\
    O\bb{\frac{N}{n^r}}+O\bb{\frac{N\rho_n^s}{n}}&\mbox{When $\frac{1}{n^2\rho_n^s}\ll N\ll \frac{1}{n\rho_n^s}$}\\
    \end{cases}
\end{align*}
For simple cycles and acyclic subgraphs, one can verify that the above goes to zero.

Thus, we have:
\begin{align*}
    \frac{V_1+V_2+V_3-(\sigma_n^2/r^2+\tilde{\sigma}_n^2)}{\sigma_n^2/r^2+\tilde{\sigma}_n^2}&=\frac{V_1+V_2+V_3-(\sigma_n^2/r^2+EV_1/(1-1/n))}{\sigma_n^2/r^2+\tilde{\sigma}_n^2}\\
    &=\frac{V_1+V_2+V_3-(\sigma_n^2/r^2+EV_1)}{\sigma_n^2/r^2+\tilde{\sigma}_n^2}-\frac{1}{n-1}\frac{\tilde{\sigma}_n^2(1-1/n)}{\sigma_n^2/r^2+\tilde{\sigma}_n^2}\\
    &=\frac{V_1+V_2+V_3-(\sigma_n^2/r^2+EV_1)}{\sigma_n^2/r^2+\tilde{\sigma}_n^2}-O\bb{\frac{1}{n}}\\
    &\stackrel{P}{\rightarrow} 1,
\end{align*}
where the last line follows from Eq~\ref{eq:cheby}.


 \end{proof}



\begin{lemma}\label{lem:rosenthal:V11}
Suppose that $R$ is acyclic or a simple cycle, $r \geq 2$, and $n \rho_n \rightarrow \infty$.  Then, \bk 
\begin{align*}
    E\bbb{\bb{\tilde{H}_1(i)-H_1(i)}^4}
    \leq \frac{\rho_n^s}{n^3 \bk N^3}\bb{1+N n\rho_n^s}
\end{align*}
\end{lemma}
\begin{proof}

Note that,
\begin{align}
     &E\bbb{\bb{\tilde{H}_1(i)-H_1(i)}^4|A,X}\notag\\
    &\leq \frac{1}{N^4}\bb{C_1\sum_{j=1}^N E\bbb{\bb{H_{\pi_j}(i)-H_1(i)}^4|A,X}+C_2\bb{\sum_{j=1}^N E\bbb{\bb{H_{\pi_j}(i)-H_1(i)}^2|A,X}}^2}\label{eq:varh1}
\end{align}

Since, we now take an expectation of the RHS of above, we bound the first part with Rosenthal's inequality. For a given permutation $\pi$,
\begin{align}\label{eq:varh1-fourth}
&n^4 E\bbb{\bb{H_{\pi}(i)-H_1(i)}^4}\notag\\
&=n^4 E E\bbb{\bb{H_{\pi}(i)-H_1(i)}^4|X_i}\notag\\
&\leq C_1 E \sum_{S\in \mathcal{S}_\pi}  E\bbb{\bb{H(A_{S\cup i})-H_1(i)}^4|X_i}+C_2 E\bb{\sum_{S\in \mathcal{S}_\pi}  E\bbb{\bb{H(A_{S\cup i})-H_1(i)}^2|X_i}}^2\notag\\
&\leq C_1' n\rho_n^s+ C_2 E\bb{\sum_{S\in \mathcal{S}_\pi}  E\bbb{\bb{H(A_{S\cup i})-H_1(i)}^2|X_i}}^2\notag\\
&\leq C_1' n\rho_n^s+C_2'n^2\rho_n^{2s}
\end{align}

For the the last step Eq~\ref{eq:varh1-fourth}, we used
\begin{align*}
E\bb{\sum_{S\in \mathcal{S}_\pi}  E\bbb{\bb{H(A_{S\cup i})-H_1(i)}^2|X_i}}^2\leq n \sum_{S\in \mathcal{S}_\pi} E \bbb{E\bb{H(A_{S\cup i})-H_1(i)}^2|X_i}^2\leq n^2\rho_n^{2s}
\end{align*}
As for the second term in Eq~\ref{eq:varh1}, we have that:
\begin{align}
\begin{split}
\label{eq:rosenthal-var-term}
E\bb{\sum_{j=1}^N E\bbb{\bb{H_{\pi}(i)-H_1(i)}^2|A,X}}^2&\leq C_1 N\sum_{j=1}^N E\bb{E\bbb{\bb{H_{\pi_j}(i)-H_1(i)}^2|A,X}}^2\\
&\leq C_1' N^2 (\rho_n^{2s}/n^2+\rho_n^{4s-2}/n^2)\\
&= O(N^2\rho_n^{2s}/n^2)
\end{split}
\end{align}

To obtain the last step, observe that
\begin{align*}
    &E\bb{E\bbb{\bb{H_{\pi}(i)-H_1(i)}^2|A,X}}^2= E\bb{\frac{1}{(n-1)!}\sum_\pi \bb{H_{\pi}(i)-H_1(i)}^2}^2\\ 
    & = E\bb{\frac{1}{(n-1)!}\sum_\pi (H_{\pi}(i)-\cei)^2-(H_1(i)-\cei)^2}^2\\
    &\leq 2E\bb{\frac{1}{(n-1)!}\sum_\pi \bb{H_{\pi}(i)-\cei}^2}^2+2E\bbb{\bb{H_1(i)-\cei}^4}\\
    &\stackrel{(i)}{\leq} 4E\bb{\frac{1}{(n-1)!}\sum_\pi \bb{H_{\pi}(i)-\cei}^2}^2\\
    &\leq  4E\bb{\frac{1}{n^2(n-1)!}\sum_{\pi}\bb{\sum_{S\in\mathcal{S}_\pi(i)} \bb{H(A_{S\cup i}-\cei}^2+\sum_{\stackrel{S\cap T=\phi}{S,T\in\mathcal{S}_\pi(i)}} (H(A_{S\cup i})-\cei)(H(A_{T\cup i})-\cei)}}^2\\
    &\leq 4E\bb{\frac{C H_1(i) + {\cei}^2}{n}+C'H'_1(i)}^2 = O\bb{\frac{\rho_n^{2s}}{n^2}}+O\bb{\frac{\rho_n^{4s-2}}{n^2}}
\end{align*}

Note that:
\begin{align*}
    &E(H'_1(i)^2|X_i)\\&=\frac{C}{n^{4(r-1)}}\sum_{\substack{S\cap T=\phi\\S'\cap T'=\phi}}E\bbb{H(A_{S\cup i})-\cei)H(A_{T\cup i})-\cei)H(A_{S'\cup i})-\cei)H(A_{T'\cup i})-\cei)|X_i}\\
    &=O\bb{\frac{\rho_n^{4s-2}}{n^2}} + O\left( \frac{\rho_n^{2s}}{n^{2(r-1)}} \right)
\end{align*}
The last line is true because the terms with the largest contribution are those with $|S\cap S'|=1$ and $|T\cap T'|=1$.  The latter term corresponds to the variance terms (i.e. $S=S'$, $T=T'$), which is $O\left( \frac{\rho_n^{2s}}{n^2} \right)$ for $r \geq 2$. 

Step (i) follows from Jensen's inequality:
\begin{align*}
    \bb{H_1(i)-\cei}^2\leq \frac{1}{(n-1)!}\sum_\pi\bb{H_{\pi}(i)-\cei}^2
\end{align*}

Putting everything together,
\begin{align*}
   &E \bbb{\left(\tilde{H}_1(i)-H_1(i)\right)^4}
   \leq \frac{\rho_n^s}{N^3n^3}+\frac{\rho_n^{2s}}{n^2N^3}+\frac{\rho_n^{2s}}{N^2n^2}=O\bb{\frac{\rho_n^s}{N^3n^3}(1+N n\rho_n^s)},
\end{align*}
concluding the proof. 
\end{proof}

\begin{theorem}

Suppose that $\rho_n \rightarrow 0$, $r \geq 2$ and either $R$ is acyclic and $n \rho_n \rightarrow \infty$ or $R$ is a simple cycle and $n^{r-1} \rho_n^{r} \rightarrow \infty$.
When $N\gg 1/n^2\rho_n^s$, we have 
\begin{align*}
\sup_{t \in \mathcal{R}} |P( \tilde{T}_{n,L}^* - \tilde{T}_n   \leq t) - P( \tilde{T}_n - \theta_n   \leq t)     | \xrightarrow{P} 0.    
\end{align*}

\end{theorem}
\begin{proof}
Now, we verify the Berry-Esseen condition.  Observe that, for some $C>0$, the third central absolute moment of the bootstrap conditioned on the data $\psi_i$ may be upper-bounded as:
\begin{align*}
\psi_i \leq \frac{CE|\xi_i - 1|^3}{n^3} \left( |\tilde{H}_1(i) - H_1(i)|^3 +  |H_1(i) - \theta_n|^3 + |\tilde{T}_n - \hat{T}_n|^3 + |\hat{T}_n - \theta_n|^3   \right)       
\end{align*}
By Jensen's inequality, the latter two terms are again lower order. Therefore, by Lemma \ref{lemma:third-moment-randomized} and Lemma \ref{lem:proveconv}:
\begin{align*}
\Psi_n := \sum_{i=1}^n \psi_i = \underbrace{O_P\left( \frac{\rho_n^s}{n^4 N^2} \right) + O_P\left( \frac{\rho_n^{3s/2}}{n^{7/2} N^{3/2}} \right)}_{\text{contribution of randomization}} + \underbrace{O_P\left( \frac{\rho_n^{3s}}{n^2} \right)}_{\text{contribution of signal}}   
\end{align*}
Moreover, by definition of $\bootsd^2$ in Equation~\ref{eq:def-vnhat}, and 
$\nu_n^2:=c_n^2(\sigma_n^2/r^2+\tilde{\sigma}_n^2)$
from Theorem~\ref{lem:bootvarconc}, we have that when $N \gg 1/(n^2 \rho_n^s)$,
\begin{align}
\frac{\bootsd^2}{\nu_n^2} \xrightarrow{P} 1,     
\end{align}
where by Equation~\ref{eq:Ekn2}
,%
\begin{align*}
 \nu_n^2 = \underbrace{\Theta\left( \frac{\rho_n^s}{Nn^2} \right)}_{\text{contribution of randomization}} +  \underbrace{\Theta\left( \frac{\rho_n^{2s}}{n} \right)}_{\text{contribution of signal}}    
\end{align*}
Observe that when $N \gg 1/n \rho_n^s$, the signal dominates for both $\Psi_n$ and $\nu_n$.  When $1/n^2 \rho_n^s \ll N \ll 1/n \rho_n^s$, the randomization dominates and when $N= \Theta(1/n \rho_n^s)$ the contributions are the same order; in all cases:
$\Psi_n/\bootsd^{3/2} = o_P(1)$. Thus the desired CLT holds for the bootstrap.   
\end{proof}

\begin{lemma}\label{lem:E-perm-to-subset}
\begin{align*}
\E\bbb{\mathrm{var}\{\tdh(i)\mid A,X\}}=\Theta\left(\frac{r\rho_n^{s}}{Nn}\right).
\end{align*}
\begin{proof}\normalfont
\begin{align*}
 \E\bbb{\var\{\tdh(i)\mid A,X)\}}&= \E\bbb{\var\left\{\frac{\sum_jH_{\pi_j}(i)}{N} \mid A,X\right\}} =\E\left(\frac{1}{N^2}\sum_j\var\{H_{\pi_j}(i)\mid A,X\}\right)\\
 &=\E\left(\frac{1}{N^2}\sum_j\E\{(H_{\pi_j}(i)-\theta)^2\mid A,X\}\right)\\
 &=\E\left(\frac{1}{N^2}\left[\sum_j\sum_{S\in \mathbb{S}_{\pi_j}}\frac{E\{H(A_{S\cup i}-\theta)^2\mid A,X\}}{(\frac{n-1}{r-1})^2}\right]\right)\\
 &=\frac{\var\{H(A_{S\cup i})\}}{N\frac{n-1}{r-1}}
\end{align*}


We further have,
\begin{align*}
   \var\{\hsxi\}=\var[E\{\hsxi\mid X\}]+E[\var\{\hsxi\mid X\}]=\Theta(\rho_n^s).
\end{align*}
Thus, we arrive at
\begin{align*}
\E\bbb{\var\{\tdh(i)\mid A,X)\}}=\Theta\left(\frac{r\rho_n^{s}}{Nn}\right).    
\end{align*}

\end{proof}
\end{lemma}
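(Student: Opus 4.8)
The plan is to use the conditional independence of the $N$ permutations to reduce to a single-permutation variance, and then to compute its data-expectation by separating diagonal from off-diagonal block contributions; the delicate point is that the off-diagonal contribution is individually too large and must cancel.

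Since $\pi_1,\dots,\pi_N$ are i.i.d.\ given $A,X$, averaging gives $\var\{\tdh(i)\mid A,X\}=N^{-1}\var\{H_{\pi_1}(i)\mid A,X\}$, so it suffices to establish $\E[\var\{H_{\pi_1}(i)\mid A,X\}]=\Theta(r\rho_n^s/n)$. Write $m=(n-1)/(r-1)$ for the number of disjoint blocks of a partition and abbreviate $f(S)=\hsxi$. Each block of a uniform partition is marginally a uniform $(r-1)$-subset, so $\E_\pi\{H_{\pi_1}(i)\mid A,X\}=\hoi$ and
\begin{align*}
\var\{H_{\pi_1}(i)\mid A,X\}=\E_\pi\{H_{\pi_1}(i)^2\mid A,X\}-\hoi^2 .
\end{align*}
I would take the data-expectation of both terms (Fubini for the first) and expand each as a double sum over block/subset pairs, grouped by overlap size. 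The inputs are: (i) $f(S)$ is a $\{0,1\}$ indicator, so $\E\{f(S)^2\}=\E\{f(S)\}=\theta_n=\Theta(\rho_n^s)$; and (ii) for two blocks sharing only vertex $i$ — as disjoint blocks of a partition necessarily do — conditioning on $X_i$ makes them independent and edge-disjoint, so $\E\{f(S)f(T)\}=\E_{X_i}\{g(X_i)^2\}$, where $g(x)=\E\{f(S)\mid X_i=x\}=\Theta(\rho_n^s)$ pointwise.

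With these, $\E_\pi\E_{A,X}\{H_{\pi_1}(i)^2\}=m^{-2}\bbb{m\,\theta_n+m(m-1)\E_{X_i}\{g^2\}}$. The diagonal piece $\theta_n/m=\Theta(r\rho_n^s/n)$ is already the target order. The crux — and the main obstacle — is that the off-diagonal piece $\tfrac{m-1}{m}\E_{X_i}\{g^2\}=\Theta(\rho_n^{2s})$ exceeds the target rate whenever $n\rho_n^s\gg r$, and so cannot be discarded. It is, however, cancelled by $\hoi^2$: almost all of the $\binom{n-1}{r-1}^2$ subset pairs entering $\hoi^2$ are disjoint and contribute the identical moment $\E_{X_i}\{g^2\}$, so the two $\Theta(\rho_n^{2s})$ quantities agree to leading order. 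Put differently, this is the split $\var\{\hsxi\}=\var[\E\{\hsxi\mid X\}]+\E[\var\{\hsxi\mid X\}]$ used in the displayed computation: the between-$X$ part $\var[\E\{\hsxi\mid X\}]=\Theta(\rho_n^{2s})$ is exactly the disjoint-block covariance that cancels (equivalently, centering at $\hoi$ rather than at $\theta_n$ absorbs it, since $\var_{A,X}(\hoi)=\Theta(\rho_n^{2s})$), while the within-$X$ part $\E[\var\{\hsxi\mid X\}]=\Theta(\rho_n^s)$ is what survives.

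It then remains to show the residual is $o(r\rho_n^s/n)$. The uncancelled $\E_{X_i}\{g^2\}$ discrepancy is $O(r^2\rho_n^{2s}/n)$ (negligible as $r\rho_n^s\to0$); the diagonal of $\hoi^2$ is $O(\theta_n/\binom{n-1}{r-1})$; and partially-overlapping pairs — absent when $r=2$, and otherwise carrying $\rho_n$-exponent at most $2s-1$ with $s\ge2$ for connected $R$ on $r\ge3$ vertices — contribute $O(r^2\rho_n^{2s-1}/n)=o(r\rho_n^s/n)$. Collecting, $\E[\var\{H_{\pi_1}(i)\mid A,X\}]=\E[\var\{\hsxi\mid X\}]/m\,(1+o(1))=\Theta(r\rho_n^s/n)$, and dividing by $N$ gives $\E[\var\{\tdh(i)\mid A,X\}]=\Theta(r\rho_n^s/(Nn))$.
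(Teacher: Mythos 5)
Your proposal is correct, and it in fact does more than the paper's own argument: it repairs it. The paper's display is not a chain of literal equalities — $\var\{H_{\pi_j}(i)\mid A,X\}$ centers at the conditional mean $\hoi$, not at $\theta$, and the cross terms between the $m=(n-1)/(r-1)$ blocks of a single partition do not vanish given the data — so the second and third lines of the paper's computation each carry an error of order $\Theta(\rho_n^{2s})$, which dwarfs the target $\Theta(r\rho_n^s/Nn)$ and is tolerable only because the two errors offset. Your proof makes that offset explicit: after the (shared) reductions — i.i.d.\ permutations give the $1/N$ factor, and each block is marginally a uniform $(r-1)$-subset so $\E_\pi\{H_{\pi_1}(i)\mid A,X\}=\hoi$ — you expand $\E_\pi\{H_{\pi_1}(i)^2\mid A,X\}-\hoi^2$ exactly, note that the disjoint-block covariance $\E\{g(X_i)^2\}=\var[\E\{\hsxi\mid X_i\}]+\theta_n^2=\Theta(\rho_n^{2s})$ is matched to leading order by $\E(\hoi^2)$, and bound the leftover overlap terms, so that only $\{\theta_n-\E g(X_i)^2\}/m=\E[\var\{\hsxi\mid X_i\}]/m=\Theta(r\rho_n^s/n)$ survives (here the indicator identity $f^2=f$ is used exactly as in the paper). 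Two cosmetic points. First, the quantity that survives in your computation is the within-variance conditional on $X_i$ alone, $\E[\var\{\hsxi\mid X_i\}]$, not the paper's $\E[\var\{\hsxi\mid X\}]$; since $\E[\var\{\hsxi\mid X\}]\leq \E[\var\{\hsxi\mid X_i\}]\leq \var\{\hsxi\}$ and both ends are $\Theta(\rho_n^s)$, your ``put differently'' identification is an analogy rather than an identity, but it costs nothing at the $\Theta$ level. Second, the uncancelled discrepancy is $O(r\rho_n^{2s}/n)$ rather than your $O(r^2\rho_n^{2s}/n)$ — harmless in either form — and your overlap bound (shared edges at most $s-1$ for connected $R$ on $r\geq 3$ vertices, no partial overlaps when $r=2$) indeed covers the subgraphs the paper treats under its sparsity assumptions. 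What your route buys is a proof in which every displayed step is an equality or a controlled error; what the paper's shortcut buys is brevity, at the price of two individually invalid steps whose cancellation is left implicit.
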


\section{Proof of Proposition~\ref{prop:ew}}
\label{sec:suppprop3}
\begin{proof}\normalfont
 In what follows, we prove Proposition~\ref{prop:ew} holds for Edgeworth expansion of a standardized count functional.  Our argument here is closely related to  \cite{zhang-xia-network-edgeworth}, thus we do not present the complete proof here. They have showed in Theorem 3.1 in the above reference, that under same conditions, Edgeworth Expansion for studentized $\hat{T}_n$, denote as $\tilde{G}_n(u)$ here, has the same property as Proposition~ \ref{prop:ew}. We have first derived our Edgeworth Expansion formula in eq~\ref{eq:edgeworth-gn} for standardized $\hat{T}_n$ instead of studentized $\hat{T}_n$ and we state the form of the characteristic function of $G_n(u)$ below:

\begin{proposition}\label{lm:chf_g}
We have:
\begin{align*}
    \psi_{G(n)}(t)&:= \int e^{itu}dG_n(u)\\
    &= e^{-\frac{t^2}{2}}\left(1- it^3\frac{1}{6n^{1/2}\tau_n^3}\left[E\{g_1^3(X_1)\}+3(r-1)E\{g_1(X_1)g_1(X_2)g_2(X_1,X_2)\}\right]\right).
\end{align*}
\end{proposition}

Our standardized $\hat{T}_n$, denote as $\tilde{T}_n$ can be decomposed into
\begin{equation*}
    \tilde{T}_n:=\frac{\hat{T}_n-\theta_n}{\sigma_n} = \frac{T_n-\theta_n}{\sigma_n}+ \frac{\tnhat - T_n}{\sigma_n}= T_{n,1}+T_{n,2}  + O_P\left(\frac{1}{n} \right)+ R_n,
\end{equation*}
where
\begin{align*}
    &T_{n,1}= \frac{1}{n^{1/2}\tau_n}\sum_{i=1}^n g_1(X_i),
    &T_{n,2}= \frac{r-1}{n^{1/2}(n-1)\tau_n}\sum_{i<j}g_2(X_i,X_j), \ 
    &R_{n} =  \frac{\tnhat - T_n}{\sigma_n}.
\end{align*}

 We will begin by bounding $R_n$. Similar to the theory for U-statistics, the behavior is largely determined by a linear term.

Let:
\begin{equation*}\label{eq:linear-bern}
    R_{n,1} = \text{Linear part of } \frac{\tnhat - T_n}{\sigma_n}.
\end{equation*}
where the linear part has the form:
\begin{align*}
R_{n,1} = \frac{1}{{n \choose 2 }}\sum_{i<j} c_{ij} \left\{A_{ij} - \E(A_{ij} \mid  X_i, X_j) \right\}
\end{align*}
for $c_{ij} = c_{ij}(X_i,X_j,\rho_n) \asymp \rho_n^{-1}n^{-1/2}$ defined in Section 7 of the above reference.   Theorem 3.1(b) of the above authors establishes that:     

\begin{equation*}
     R_{n}- R_{n,1}= O_P( \mathcal{M}(n,\rho_n,R)),
\end{equation*}
Under the assumed sparsity conditions, given $\mathbf{X}$, the distribution of $R_{n,1}$ permits the following (uniform) approximation by a Gaussian-distributed variable $Z_n$:
\begin{equation*}
    \sup_{u} \bigg{|}F_{R_{n,1}|X}(u)-F_{Z_n} \bigg{|}=O_P\left(\frac{1}{\rho_n^{1/2}n}\right),
\end{equation*}
where $Z_n\sim N(0,\frac{\sigma_w^2}{n\rho_n})$ and $\sigma_w^2$ is defined as the variance of Eq~\ref{eq:linear-bern}. Note that $\sigma_w \asymp 1$ when $n \xrightarrow[]{} \infty$.

Now to prove our theorem, we will show the three equations below. 
\begin{equation}\label{eq:decomp1}
    \sup_{u} \bigg{|}F_{\tnt}(u)-F_{T_{n,1}+T_{n,2}+R_n} \bigg{|}=O\left( \mathcal{M}(n,\rho_n,R)\right),
\end{equation}
\begin{equation}\label{eq:decomp2}
    \sup_{u} \bigg{|}F_{T_{n,1}+T_{n,2}+R_n}(u)-F_{T_{n,1}+T_{n,2}+ Z_n} \bigg{|}=O\left(\frac{1}{\rho_n^{1/2}n}\right),
\end{equation}
\begin{equation}\label{eq:decomp3}
    \sup_{u} \bigg{|}F_{T_{n,1}+T_{n,2}+ Z_n}-G_n(x) \bigg{|}=O\left(\frac{1}{n}\right),
\end{equation}

We prove Eq~\ref{eq:decomp3} using Esseen’s smoothing lemma from Section XVI.3 in \cite{feller-vol-2},
\begin{equation}\label{eq:essensmoothing}
\begin{split}
     &\sup_{u} \bigg{|}F_{T_{n,1}+T_{n,2}+ Z_n}(u)-G_n(x) \bigg{|} \\&\leq c_1\int_{-\gamma}^{\gamma}\frac{1}{t} \bigg{|}\psi_{F_{T_{n,1}+T_{n,2}+ Z_n}}(u) - \psi_{G_n}(t)  \bigg{|}dt + c_2 \sup_u\frac{G_n'(u)}{\gamma},
\end{split}
\end{equation}
where $\psi$ is the characteristic function. $\gamma$ is set to $n$. We omit the proof here as it is not hard to check by breaking the integral into $|t| \in (0,n^{\epsilon})$,$(n^{\epsilon},n^{1/2})$ and $(n^{1/2},n)$. Using similar arguments as Lemma 8.3 of \cite{zhang-xia-network-edgeworth}, we have Eq~\ref{eq:essensmoothing} and thus Eq~\ref{eq:decomp3} hold for our characteristic function in Proposition~\ref{lm:chf_g}.  It is also not hard to check that, using similar arguments of the above reference, under Assumption~\ref{ass:sparse}, Eq~\ref{eq:decomp1} and  Eq~\ref{eq:decomp2} hold given Eq~\ref{eq:decomp3}.
\end{proof}
\bk

 \section{Edgeworth expansion for weighted bootstrap - proofs of Theorem~\ref{thm:mbq-firstorder} and Corollary~\ref{thm:mbm-firstorder}}
 \label{sec:suppthm23}
\bk




Using Eq~(10), we express our quadratic bootstrap statistic as:
\begin{align}\label{eq:cfbqform}
    \cfbq=\frac{\sum_i (\w_i-1)\gi}{n^{1/2}\hat{\tau}_n} +  &\frac{(r-1)\sum_{1\leq i <j \leq n}(\w_i\w_j-\w_i-\w_j+1) \tilde{g}_2(i,j)}{n^{1/2}(n-1)\hat{\tau}_n}  
\end{align}
We will first prove Theorem~\ref{thm:mbq-firstorder}. However in order to prove it we state a slightly different version of Theorem 3.1 in~\cite{wang-jing-weighted-bootstrap-u-statistics}\bk. The main difference is that one condition in the original lemma is not fulfilled in our case.  In particular, Bernoulli noise with $\rho_n \rightarrow 0$ blows up some terms that are needed to bound the error associated with the Edgeworth expansion.  However, a thorough examination reveals that the argument carries through with some modifications. \bk  

Let
\begin{subequations}
\begin{align}\label{eq:alldefswangjing}
    \kn&=\frac{1}{n^{3/2}B_n^2}\sum_{1\leq i<j\leq n} b_{ni} b_{nj}  d_{nij}\E\{\y_1\y_2\psi(\y_1,\y_2)\}\\
    \lnone(x)&=\sum_{j=1}^n \left\{\E\Phi(x- b_{nj}\y_j/B_n)-\Phi(x)\right\}-\frac{1}{2}\Phi''(x)\\
    \lntwo(x)&=-\kn\Phi'''(x)\\
      E_{2n}(x)&= \Phi(x) + \lnone(x) + \lntwo(x),
\end{align}
\end{subequations}

\begin{lemma}\label{lem:wangjing}
Consider the following expression.
\begin{align}\label{eq:wangjingdecomp}
    V_n=\frac{1}{B_n}\sum_j  b_{nj} \y_j+\frac{1}{n^{3/2}} \sum_{i < j}   d_{nij} \psi(\y_i,\y_j),
\end{align}
where $B_n^2=\sum_j b_{nj}^2$. Let $\beta:=\E(|\y_1|^3)$ and $\ro=\E \{\psi^2(\y_1,\y_2)\}$, and let $\E(\y_1)=0$, $\E(\y_1^2)=1$ and  $\kappa(X_1)>0$ . Furthermore, let $\E\{\psi(\y_1,\y_2)\mid \y_t\}=0$ for all $1\leq t\leq n$.
For some constants $\ell_1$, $\ell_2$, $\ell_3$ the sequence $b_{n,i}$ satisfies 
\begin{equation}\label{eq:wb-conds}
    \begin{split}
    &\frac{1}{n}\sum_{i=1}^n b_{n,i} ^2 \geq l_1 >0, \ \ \frac{1}{n}\sum_{i=1}^n |b_{n,i}| ^3 \leq l_2 \leq \infty, \\
    \end{split}
\end{equation}

Furthermore, define $\alpha_{i}:=\frac{1}{n}\sum_{j\neq i} d_{nij}^2$.
and for sufficiently large $k$, define:
\begin{align}
    \label{eq:ddef}
    l_{4,n}=\frac{1}{n}\sum_{i=1}^n\alpha_{i}, \ \ \ \ s_n^2=\frac{1}{n}\sum_i\alpha_i^2-(l_{4,n})^2, \ \ \ \ l_{5,n} = l_{4,n} + k s_n
\end{align}

 If $\beta,\kappa(\y_1)$ and $\ro$ are bounded, then,
\begin{align*}
    \sup_x|P(V_n\leq x)-E_{2n}(x)|= O\left(\frac{ l_{5,n}\log n}{n^{2/3}}\right),
\end{align*}

\end{lemma}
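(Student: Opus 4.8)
The plan is to follow the characteristic-function route used by \citet{wang-jing-weighted-bootstrap-u-statistics} for their Theorem 3.1, tracking carefully where their bounded constants must be replaced by the (possibly growing) quantity $\ell_{5,n}$. Writing $\varphi_{V_n}(t) = \E\{e^{itV_n}\}$ and letting $\psi_{E_{2n}}(t)$ denote the Fourier--Stieltjes transform of $E_{2n}$, the first step is to invoke Esseen's smoothing lemma (Section XVI.3 of \citet{feller-vol-2}) to get
\begin{align*}
\sup_x|P(V_n \leq x) - E_{2n}(x)| \leq c_1\int_{-T}^{T} \frac{|\varphi_{V_n}(t) - \psi_{E_{2n}}(t)|}{|t|}\,dt + \frac{c_2}{T}\sup_x|E_{2n}'(x)|.
\end{align*}
I would choose the cutoff $T \asymp n^{2/3}/\log n$, so that, since $E_{2n}'$ is bounded, the boundary term is $O(\log n / n^{2/3})$, already within the claimed rate; the factor $\ell_{5,n}$ will instead emerge from the integral. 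The remaining work is to estimate the integrand, split over the usual frequency bands $|t|\in(0,n^\epsilon)$, $(n^\epsilon, c\,n^{1/2})$ and $(c\,n^{1/2}, T)$.

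In the low-frequency band I would decompose $V_n = T_{1} + T_{2}$ into its linear part $T_1 = B_n^{-1}\sum_j b_{nj}\y_j$ and its degenerate quadratic part $T_2 = n^{-3/2}\sum_{i<j} d_{nij}\psi(\y_i,\y_j)$, and expand
\begin{align*}
\varphi_{V_n}(t) = \E\{e^{itT_1}\} + it\,\E\{e^{itT_1}T_2\} - \tfrac{t^2}{2}\E\{e^{itT_1}T_2^2\} + \cdots .
\end{align*}
The first term is the characteristic function of a weighted sum of independent variables, whose standard non-identical Edgeworth expansion reproduces $\Phi + L_{1,n}$ in transform space; this is precisely the source of the compact form of $L_{1,n}$ as a sum of the individual contributions $\E\Phi(x - b_{nj}\y_j/B_n) - \Phi(x)$. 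For the cross term I would factor $e^{itT_1}$ over the independent coordinates and apply the complete degeneracy $\E\{\psi(\y_1,\y_2)\mid\y_t\} = 0$: after expanding the two exponentials attached to the arguments of $\psi$, all lower-order contributions vanish and only the term carrying one factor of $\y_i$ and one of $\y_j$ survives, reducing to $\E\{\y_1\y_2\psi(\y_1,\y_2)\}$. Collecting the coefficients $b_{ni}b_{nj}d_{nij}$ yields exactly $\kn$ and hence the correction matching $-\kn\Phi'''(x) = L_{2,n}$.

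The remainder terms must then be bounded, and this is where $\ell_{5,n}$ enters in place of an absolute constant. Since $\ro$ is assumed bounded, the second-order term obeys $\E(T_2^2) = \tfrac{\ro}{n^3}\sum_{i<j}d_{nij}^2 \asymp n^{-1}\ell_{4,n}$, which in the IID case of \citet{wang-jing-weighted-bootstrap-u-statistics} would be $O(n^{-1})$; under sparse network noise the per-index quantities $\alpha_i = n^{-1}\sum_{j\neq i}d_{nij}^2$ vary and can be large. I would control this by truncating at the level $\ell_{5,n} = \ell_{4,n} + k\,s_n$, handling the rare indices with atypically large $\alpha_i$ by a separate Chebyshev argument built from $s_n^2 = n^{-1}\sum_i \alpha_i^2 - \ell_{4,n}^2$. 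Integrating these contributions against $|t|^{-1}$ across the low band produces the $\log n$ factor and, together with the truncation level, the target $O(\ell_{5,n}\log n / n^{2/3})$.

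The genuine obstacle is the high-frequency band, where I must show $|\varphi_{V_n}(t)|$ is negligible up to $|t| = T$. For the linear part, using the smoothness of the distribution of $\y_1$ (satisfied in our applications by the continuity of the multipliers) together with $\E\y_1^2 = 1$ and the positivity condition $\kappa(\y_1)>0$, the product bound $|\E e^{itT_1}| \leq \prod_j |\E e^{itb_{nj}\y_j/B_n}|$ forces exponential decay because each factor is bounded away from $1$ on the relevant range. The degenerate quadratic part is more delicate: following \citet{wang-jing-weighted-bootstrap-u-statistics}, I would condition on a subset of the $\y_i$, bound the conditional characteristic function of the resulting conditionally linear form, and average. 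The difference from their argument is that these conditional bounds now carry factors involving $\alpha_i$, so the truncation at $\ell_{5,n}$ is again essential and is the ultimate source of the rate. Verifying that the Wang--Jing conditioning estimate survives once their implicit boundedness of the $\alpha_i$ is replaced by the explicit dependence on $\ell_{5,n}$ is the main technical point of the proof.
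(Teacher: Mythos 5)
Your proposal follows essentially the same route as the paper's proof: Esseen smoothing, frequency-band splitting with the linear/degenerate-quadratic decomposition (your Taylor expansion in $T_2$ reproduces the paper's $I_{1,n}$--$I_{3,n}$ estimates via $K_{2,n}$ and $\E(T_2^2)\asymp n^{-1}l_{4,n}$), and---crucially---the same deterministic Chebyshev truncation at level $l_{4,n}+ks_n$ (the paper's Lemma~\ref{lemma:deterministic-deviation-identity}) to control $\frac{1}{m}\sum_{i\leq m}\alpha_i$ in the high-frequency conditioning step, which is exactly where Wang--Jing's implicit boundedness of the $\alpha_i$ breaks down under sparse network noise. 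The only cosmetic differences are your smoothing cutoff $T\asymp n^{2/3}/\log n$ versus the paper's $n^{1-c}$, and that the paper additionally intersects the good set of indices $\{\alpha_i < \bar{\alpha}+ks_n\}$ with Wang--Jing's set $\Omega$ of typical $|b_{nk}|$ before relabeling (needed so the blockwise bound $|\E e^{itS_m}|\leq e^{-c_0mt^2/n}$ applies), a detail your sketch subsumes under following Wang--Jing.
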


  Intuitively, arguments for establishing rates of convergence for the Edgeworth expansions require comparing the characteristic function of the random variable of interest with the Fourier transform of the Edgeworth expansion.  To this end, the respective integrals are broken up into several pieces.  The bounds required in  (\ref{eq:wb-conds}) are used to estimate the error of the Edgeworth expansion in some of these steps, but appear as constants and are suppressed in the Big-O notation.  

On the other hand, as previously mentioned, it turns out that certain terms that appear as constants in \cite{wang-jing-weighted-bootstrap-u-statistics} blow up when perturbed by sparse network noise and appear in the rate.  In particular, the term $l_{5,n}$ arises from needing to bound $\frac{1}{m}\sum_{i=1}^m \alpha_i$ for all $m \leq M$ for some $M$ large enough.  

Since the data is fixed,  we may view $\alpha_1, \ldots, \alpha_n$ as constants. We therefore have the liberty of choosing a ``good set" in which $\alpha_i$ are well-behaved.  Without loss of generality, we may label these elements $\{\alpha_1, \ldots, \alpha_M\}$; the corresponding multiplier random variables are still independent. Even when there is no randomness, it turns out that a large proportion of $\{\alpha_1, \ldots \alpha_n\}$ must be within $k$ sample standard deviations of the sample mean $l_{4,n}$ for $k$ large enough. This observation, which we believe is novel in the bootstrap setting, allows us to establish a tight bound for $\frac{1}{m}\sum_{i=1}^m \alpha_i$ for all $m \leq M$.  We state this lemma below.   
%
\begin{lemma}
\label{lemma:deterministic-deviation-identity}
Let $x_1 ,\ldots, x_n$ be constants in $\mathbb{R}$ and let $\bar{x}_n = \frac{1}{n}\sum_{i=1}^n x_i$ and $s_n^2 = \frac{1}{n}\sum_{i=1}^n (x_i - \bar{x})^2$  Define the set:
\begin{align*}
    \Gamma_k =  \left\{x_i \geq \bar{x}_n + k s_n  \right\}
\end{align*}
Then,
\begin{align*}
|\Gamma_k| \leq \frac{n}{k^2} 
\end{align*}
\end{lemma}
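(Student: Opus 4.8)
The plan is to recognize the claim as a purely deterministic, one-sided analogue of Chebyshev's inequality and to prove it directly by bounding the total sum of squared deviations from below using only the terms indexed by $\Gamma_k$. Since $x_1, \ldots, x_n$ are fixed constants, no probabilistic machinery is required; the entire argument reduces to discarding nonnegative summands.

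First I would record the defining property of the set. For every index $i$ with $x_i \in \Gamma_k$ we have $x_i - \bar{x}_n \geq k s_n \geq 0$, and squaring this nonnegative lower bound gives $(x_i - \bar{x}_n)^2 \geq k^2 s_n^2$. Next I would chain the inequalities: because each summand $(x_i - \bar{x}_n)^2$ is nonnegative, restricting the sum to indices in $\Gamma_k$ can only decrease it, so
\[
n s_n^2 = \sum_{i=1}^n (x_i - \bar{x}_n)^2 \;\geq\; \sum_{i : x_i \in \Gamma_k} (x_i - \bar{x}_n)^2 \;\geq\; |\Gamma_k|\, k^2 s_n^2 .
\]
Dividing through by $k^2 s_n^2 > 0$ yields $|\Gamma_k| \leq n / k^2$, which is exactly the claim.

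There is essentially no genuine obstacle in this lemma; the only point that warrants care is the degenerate case $s_n = 0$, where the final division is illegitimate. In that situation every $x_i$ coincides with $\bar{x}_n$, and this case does not arise in the intended application to $\alpha_1, \ldots, \alpha_n$, where the empirical variance is strictly positive, so we may assume $s_n > 0$ throughout. The conceptual content lies not in the proof itself but in how the bound is \emph{used} in Lemma~\ref{lem:wangjing}: it guarantees that at least a fraction $1 - k^{-2}$ of the $\alpha_i$ lie within $k$ sample standard deviations of their mean, which is precisely what permits choosing the ``good set'' of well-behaved indices needed to control $\frac{1}{m}\sum_{i=1}^m \alpha_i$ uniformly over $m \leq M$.
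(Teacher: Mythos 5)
Your proof is correct and follows essentially the same argument as the paper: lower-bound $n s_n^2$ by the summands indexed by $\Gamma_k$, each of which is at least $k^2 s_n^2$, and divide. Your explicit handling of the degenerate case $s_n = 0$ is a small point of extra care the paper leaves implicit, but otherwise the two proofs coincide.
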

\begin{proof}\normalfont
Observe that:
\begin{align*}
s_n^2 &\geq \frac{1}{n}\sum_{i \in \Gamma_k} (x_i-\bar{x}_n)^2
\\ &\geq \frac{1}{n}\sum_{i \in \Gamma_k} k^2 s_n^2  \implies  |\Gamma_k | \leq \frac{n}{k^2}  
\end{align*}
\end{proof}

\begin{remark}
Our lemma is closely related to concentration of sums sampled without replacement from a finite population. In fact, it implies the without-replacement Chebychev inequality; see, for example, Corollary 1.2 of \citet{serfling-concentration-without-replacement}.
\end{remark}

We will show that $\cfbq$ can be written as Eq~\ref{eq:wangjingdecomp}, with carefully chosen $\{ b_{ni}\}$ and $\{ d_{nij}\}$'s. We now present some accompanying Lemmas to show that Eq~\ref{eq:wb-conds} is satisfied with probability tending to 1. Proofs of Lemmas~\ref{lem:wangjing},~\ref{lem:proveconv},
and~\ref{lem:thirdabsmoment} are provided in following subsections.

We present some useful results shown in \cite{zhang-xia-network-edgeworth} which we will use later in proofs of our theorems. 
\begin{lemma}\label{lem:moments-z/x}
Let $\hat{\tau}_n^2=\sum_i \gi^2/n$.  
We have, 
\begin{enumerate}
	\item For acyclic graphs, if $n\rho_n\rightarrow \infty$, and for cyclic graphs, if  $n\rho_n^r\rightarrow\infty$, from \cite{zhang-xia-network-edgeworth} Lemma 3.1 and its proof, we have:
	\begin{align}
	\frac{\hat{\tau}_n^2}{\tau_n^2}&=1+O_P\left(\frac{1}{n\rho_n}\right)+O_P\left(\frac{1}{\sqrt{n}}\right)\\
	E(|\gi-g_1(X_i)|/\rho_n^s)^2&=O(1/n\rho_n)
	\end{align}
	\item Under Assumption~\ref{ass:sparse}, we have
\begin{align}\label{eq:zxempiricalg13}
\left|\frac{\sum_j \hat{g}_1(i)^3}{n}-\E\{g_1(X_1)^3\}\right|=O_P\left(\rho_n^{3s-0.5}n^{-1/2}\right),
\end{align}
\begin{align}\label{eq:zxempiricalg1g1g2}
\left|\frac{\sum_{i<j} \gi\gj\ggij}{{n\choose 2}}-\E\{g_1(X_1)g_1(X_2)g_2(X_1,X_2)\}\right|=O_P\left(\rho_n^{3s-0.5}n^{-1/2}\right),
\end{align}
and
\begin{align}\label{eq:tauhaterr}
|\hat{\tau}_n^3-\tau_n^3|=O_P(\rho_n^{3s}/n^{1/2}).
\end{align}
\end{enumerate}

\end{lemma}

 \begin{lemma}\label{lem:proveconv-mean}
 Under the sparsity assumptions in Assumption~\ref{ass:sparse}, for large enough $C$, 
 \begin{align*}
 P\left(\frac{1}{n^2}\sum_i \sum_{j\neq i}\tilde{g}_2(i,j)^2\geq C\rho_n^{2s-1}\right)\rightarrow 1
 \end{align*}
 \end{lemma}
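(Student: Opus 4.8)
The plan is to show that $W_n := \frac{1}{n^2}\sum_i\sum_{j\neq i}\tilde{g}_2(i,j)^2$ has expectation of order $\rho_n^{2s-1}$ and concentrates around it, so that Chebyshev's inequality yields the stated high-probability lower bound. First I would use exchangeability of the pairs $(i,j)$ to reduce the mean to a single representative term: $\E(W_n) = \frac{n-1}{n}\,\E\{\tilde{g}_2(1,2)^2\}$. Since $\tilde{g}_2(1,2) = \hat{H}_2(1,2) - \tnhat$ and $\E\{\hat{H}_2(1,2)\} = \E(\tnhat) = \theta_n$, the term is centered, so $\E\{\tilde{g}_2(1,2)^2\} = \var(\hat{H}_2(1,2) - \tnhat)$. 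The task then splits into a matching lower bound on this variance (complementing the upper bound derived for $\var\{\hat{H}_2(1,2)\}$ in the proof of Lemma~\ref{lemma:hoeffding-bootstrap}) and a concentration statement for $W_n$.

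For the mean, the clean route is to isolate the influence of the single edge shared by all the subsets averaged in $\hat{H}_2(1,2)$, namely $A_{12}$. Applying the law of total variance twice gives $\var(\hat{H}_2(1,2) - \tnhat) \geq \E\{\var(\hat{H}_2(1,2) - \tnhat \mid X)\} \geq \E\{\var(\E[\hat{H}_2(1,2) - \tnhat \mid X, A_{12}] \mid X)\}$. Because $\E[\hat{H}_2(1,2) - \tnhat \mid X, A_{12}]$ is affine in the single Bernoulli variable $A_{12}$, the inner quantity equals $b(X)^2\,\var(A_{12}\mid X)$, where $b(X)$ is the conditional influence of $A_{12}$. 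Writing $H(A_S)$ as a sum over copies of $R$, the leading contribution to $b(X)$ comes from copies in which $(1,2)$ is an edge: differentiating in $A_{12}$ removes one factor $\rho_n w(X_1,X_2)$, leaving a product of $s-1$ edge factors, so $b(X) \asymp \rho_n^{s-1}c(X_1,X_2)$ with $c(X_1,X_2) > 0$; the $\tnhat$ contribution is diluted by a factor $\asymp n^{-2}$ and is negligible. Since $\var(A_{12}\mid X) \asymp \rho_n w(X_1,X_2)$, I obtain $\E\{\tilde{g}_2(1,2)^2\} \geq \E\{b(X)^2\,\var(A_{12}\mid X)\} \asymp \rho_n^{2s-1}\,\E\{c(X_1,X_2)^2 w(X_1,X_2)\}$, and the boundedness and positivity of $w$ in Assumption~\ref{ass:linear}(b) keep the limiting constant strictly positive. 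This argument uses only that $R$ has at least one edge, so it covers both the acyclic and cyclic cases (in the cyclic case the variance is in fact of the even larger order $\rho_n^{2s-2}$, which is why the bound holds for any fixed multiplicative constant); hence $\E(W_n) \geq c\,\rho_n^{2s-1}$.

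Finally, I would upgrade the mean bound to a statement in probability by showing $\var(W_n) = o(\rho_n^{2(2s-1)}) = o(\E(W_n)^2)$ and invoking Chebyshev. Expanding $\var(W_n) = \frac{1}{n^4}\sum \cov\{\tilde{g}_2(i,j)^2,\tilde{g}_2(k,l)^2\}$ and classifying index quadruples by the size of their mutual overlap, each block of correlated terms is controlled by the same covariance bookkeeping used in the proof of Lemma~\ref{lemma:hoeffding-bootstrap}, now at fourth-moment order; the sparsity conditions of Assumption~\ref{ass:sparse} (which force $n\rho_n\to\infty$, and $n\rho_n^{r}\to\infty$ in the cyclic case) guarantee that the cross-terms are suppressed relative to $\E(W_n)^2$. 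I expect this concentration step to be the main obstacle: unlike the mean computation, it requires delicate combinatorial accounting of overlapping node subsets and their shared edge sets under sparse Bernoulli noise, and it is precisely here that the sparsity assumptions are needed to keep the fluctuations of $W_n$ lower order than its mean.
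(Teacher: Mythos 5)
Your proposal proves the wrong direction of the bound, and the discrepancy traces to a typo in the lemma statement itself. As written the display says $P(W_n \geq C\rho_n^{2s-1}) \rightarrow 1$ for large enough $C$, where $W_n := n^{-2}\sum_i\sum_{j\neq i}\tilde{g}_2(i,j)^2$, but the paper's own proof ends with $P\left(W_n \geq C\rho_n^{2s-1}\right) \leq C''\rho_n^{4s-3}n^{-2}/\rho_n^{4s-2} = O\left(1/(n^2\rho_n)\right) \rightarrow 0$, i.e.\ $W_n = O_P(\rho_n^{2s-1})$, and this upper-tail version is exactly how the lemma is invoked: in the proofs of Theorem~\ref{thm:mbq-firstorder} and Theorem~\ref{thm:boot_smooth} it yields $\ell_{4,n} \leq C\rho_n^{-1}$ with probability tending to one. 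The paper's argument writes $\tilde{g}_2(i,j)^2 \leq 3\{\hat{G}_2(i,j)^2 + O(\rho_n^{2s-1})\}$ with $\hat{G}_2(i,j) = \hat{H}_2(i,j)-h_2(X_i,X_j)$, splits $\hat{G}_2(i,j)^2 = \delta_{ij} + \E\{\hat{G}_2(i,j)^2 \mid X\}$ where the conditional second moment is $O(\rho_n^{2s-1})$ a.s., and then carries out the combinatorial second-moment computation for the centered part, $\var(\delta_{ij}) = O(\rho_n^{4s-3})$ and $\cov(\delta_{ik},\delta_{j\ell}) = O(\rho_n^{4s-3}/n^2)$ — this overlap bookkeeping under Assumption~\ref{ass:sparse} is where the sparsity conditions actually enter — giving $\var\left(n^{-2}\sum_i\sum_{j\neq i}\delta_{ij}\right) = O(\rho_n^{4s-3}/n^2)$ and the claim by Chebyshev/Markov. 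Your influence calculation for the single Bernoulli edge $A_{12}$, showing $\E\{\tilde{g}_2(1,2)^2\} \geq c\,\rho_n^{2s-1}$, is a correct and clean computation, but it addresses the lower tail, which is not what the lemma is for and is never used anywhere in the paper.

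Even granting the literal lower-tail reading, your proposal has two genuine problems. First, the phrase ``for large enough $C$'' cannot be salvaged: since $\E\{\tilde{g}_2(1,2)^2\} = \Theta(\rho_n^{2s-1})$ (your lower bound matched by the paper's upper bound), the event $\{W_n \geq C\rho_n^{2s-1}\}$ has \emph{vanishing} probability for large $C$, so at best you could claim the statement for some small $c>0$. Your attempted fix — that in the cyclic case $\var(\tilde{g}_2(1,2)) \asymp \rho_n^{2s-2}$, so any fixed constant works — is false: two copies of $R$ sharing only the vertex pair $\{1,2\}$ can share at most the single edge $(1,2)$, so every covariance term in $\var(\hat{H}_2(1,2))$ is $O(\rho_n^{2s-1})$ for cyclic $R$ as well; the extra factor of $\rho_n^{-1}$ for simple cycles in Lemma~\ref{lemma:hoeffding-bootstrap} concerns projections of order $u \geq 3$, where $u$ shared vertices can support up to $u$ shared edges of a cycle, whereas with $u=2$ only one shared edge is possible. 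Second, the concentration step you defer as ``the main obstacle'' is, in the correct upper-tail formulation, the entire content of the paper's proof; as it stands your proposal establishes only a bound on a mean plus an unexecuted plan, and the executed part bounds the wrong tail.
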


\begin{lemma}\label{lem:proveconv}
Under the sparsity conditions in Assumption~\ref{ass:sparse} and for some arbitrary $\epsilon>0$,
\begin{align*}
    P\left(\frac{\sum_i |\gi/\rho_n^s|^3}{n}\leq c\right)&\rightarrow 1\\
    P\left(\frac{\sum_i |\gi/\rho_n^s|^2}{n}\geq c'\right)&\rightarrow 1,
\end{align*}
for positive constants $c,c'$ not depending on $n$.
\end{lemma}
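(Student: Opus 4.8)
The plan is to treat the two statements separately, since the lower bound on the normalized second moment is essentially an identity plus a known consistency result, whereas the upper bound on the third moment requires controlling the estimation error of $\gi$. Throughout I will use that $\gi=\hat g_1(i)$ is defined in Eq~\ref{eq:g1hat} and that $\hat\tau_n^2=\tfrac1n\sum_i \gi^2$ by Eq~\ref{eq:tauhat}.

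For the second (lower-bound) statement, the key observation is that
\[
\frac{1}{n}\sum_i \left|\gi/\rho_n^s\right|^2 = \frac{1}{n\rho_n^{2s}}\sum_i \gi^2 = \frac{\hat\tau_n^2}{\rho_n^{2s}}.
\]
Under the sparsity conditions, Lemma~\ref{lem:moments-z/x} gives $\hat\tau_n^2/\tau_n^2 = 1 + o_P(1)$, while Assumption~\ref{ass:linear}(a) gives $\tau_n^2/\rho_n^{2s}\ge c^2>0$. Hence $\hat\tau_n^2/\rho_n^{2s}\ge c^2/2$ with probability tending to one, and taking $c'=c^2/2$ finishes this part.

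For the first (upper-bound) statement, I would decompose $\gi = g_1(X_i) + \bb{\gi - g_1(X_i)}$ and use $|a+b|^3\le 4(|a|^3+|b|^3)$ to obtain
\[
\frac{1}{n}\sum_i \left|\gi/\rho_n^s\right|^3 \le \frac{4}{n}\sum_i \left|g_1(X_i)/\rho_n^s\right|^3 + \frac{4}{n}\sum_i \left|(\gi - g_1(X_i))/\rho_n^s\right|^3.
\]
The first sum is bounded deterministically: boundedness of the graphon ($w\le C$) forces each summand $h(\cdot)/\rho_n^s$ into $[0,r!C^s]$, so that $|g_1(X_i)/\rho_n^s|\le 2r!C^s=:C_0$ for every $i$ and large $n$, giving the crude bound $4C_0^3$. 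For the second sum, I would take expectations and invoke Lemma~\ref{lem:thirdabsmoment} to bound $\E|(\gi - g_1(X_i))/\rho_n^s|^3 = o(1)$ under Assumption~\ref{ass:sparse}, after which Markov's inequality shows the second sum is $o_P(1)$. Combining, with probability tending to one the whole average is at most $4C_0^3+1=:c$.

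The main obstacle is the third-absolute-moment control of the estimation error $\gi-g_1(X_i)$, which is precisely the content of Lemma~\ref{lem:thirdabsmoment}. This is delicate because $\gi$ is an incomplete-$U$-statistic-type average of Bernoulli-perturbed subgraph indicators, so the fluctuation of $\hat H_1(i)$ about $\E\{h\mid X_i\}$ is inflated relative to $\rho_n^{2s}$ in the sparse regime. One must separate the network (Bernoulli) noise from the sampling variability and track the $\rho_n$-dependence of each, going beyond the second-moment bound $\E(|\gi-g_1(X_i)|/\rho_n^s)^2=O(1/n\rho_n)$ already recorded in Lemma~\ref{lem:moments-z/x}; verifying that the corresponding third moment still vanishes under the stated sparsity conditions is where the work lies.
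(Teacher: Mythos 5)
Your treatment of the second statement is correct and in fact slightly cleaner than the paper's: you observe that $\frac{1}{n}\sum_i|\gi/\rho_n^s|^2=\hat{\tau}_n^2/\rho_n^{2s}$ exactly (Eq~\ref{eq:tauhat}) and conclude from the consistency $\hat{\tau}_n^2/\tau_n^2=1+o_P(1)$ of Lemma~\ref{lem:moments-z/x} together with Assumption~\ref{ass:linear}(a), whereas the paper re-derives the lower bound through the decomposition $\gi=g_1(X_i)+(\gi-g_1(X_i))$ and bounds the cross terms by Markov's inequality. Both routes are valid and rest on the same ingredients.

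For the first statement, however, there is a genuine gap. Lemma~\ref{lem:thirdabsmoment} asserts only that $\E|\w_1-1|^3<\infty$ for the Gaussian-product \emph{multiplier weights}; it says nothing about the estimation error $\Delta_i:=|\gi-g_1(X_i)|/\rho_n^s$, so your step ``$\E\Delta_i^3=o(1)$ by Lemma~\ref{lem:thirdabsmoment}'' invokes a lemma that does not apply, and no substitute argument is given. Nor is the claim a triviality: the only pointwise bound available is $\Delta_i=O(\rho_n^{-s})$, so the crude estimate $\E\Delta_i^3\le\|\Delta_i\|_\infty\,\E\Delta_i^2=O\bigl(\rho_n^{-s}/(n\rho_n)\bigr)$ blows up in the sparse regime for $s\ge 2$; a true per-$i$ third-moment bound would require a combinatorial moment computation of the type carried out in Lemma~\ref{lemma:bounding-eighth-power} (which shows $\E\{\hat{g}_1(l)^8\}=O(\rho_n^{8s})$ and would suffice via Jensen), which you neither cite nor reproduce. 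The paper sidesteps per-$i$ third moments entirely: from the deterministic inequality $\sum_i x_i^3\le\left(\sum_i x_i^2\right)^{3/2}$ for $x_i\ge 0$ one gets
\begin{align*}
\frac{1}{n}\sum_i\Delta_i^3\le n^{1/2}\left(\frac{1}{n}\sum_i\Delta_i^2\right)^{3/2}=O_P\left(\frac{1}{n\rho_n^{3/2}}\right),
\end{align*}
using only the second-moment bound $\E\Delta_i^2=O(1/n\rho_n)$ of Lemma~\ref{lem:moments-z/x} and Markov's inequality, and this vanishes under Assumption~\ref{ass:sparse} for both the acyclic and cyclic cases. (The paper also expands the cube trinomially and controls the cross terms $\frac{1}{n}\sum_i|g_1(X_i)/\rho_n^s|\Delta_i^2$ and $\frac{1}{n}\sum_i|g_1(X_i)/\rho_n^s|^2\Delta_i$ individually; your coarser bound $|a+b|^3\le 4(|a|^3+|b|^3)$ is fine in principle and removes the cross terms, but the remaining $\Delta_i^3$ average must then be handled by the collective argument above.) The fix is mechanical---replace the appeal to Lemma~\ref{lem:thirdabsmoment} by the $\ell_3\le\ell_2^{3/2}$ device and Markov---but as written the key step of your proof is unsupported.
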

\begin{lemma}\label{lem:thirdabsmoment}
Let $\w_1$ be generated from the Gaussian product distribution. We have $\E|\w_1-1|^3<\infty$. 
\end{lemma}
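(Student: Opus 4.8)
The plan is to compute directly from the explicit construction of the multiplier in Eq~\ref{eq:gaussproduct}. Recall that there $\w_1 = Z = XY$ with $X \sim N(1,1/2)$ and $Y \sim N(1,1/3)$ independent, so that by independence $\E(\w_1) = \E(X)\,\E(Y) = 1$ and the quantity of interest is $\E|XY - 1|^3$. The first step is to reduce this central absolute moment to ordinary (non-central) absolute moments via the elementary bound $|XY - 1| \le |XY| + 1$, which gives $\E|\w_1 - 1|^3 \le \E(|XY| + 1)^3$.

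Next I would expand the cube as $\E(|XY| + 1)^3 = \E|XY|^3 + 3\,\E|XY|^2 + 3\,\E|XY| + 1$ and use independence of $X$ and $Y$ to factorize each term, writing $\E|XY|^p = \E|X|^p\,\E|Y|^p$ for $p = 1,2,3$. Since $X$ and $Y$ are Gaussian, all their absolute moments $\E|X|^p$ and $\E|Y|^p$ are finite for every $p \ge 0$, so each factor is finite and hence the whole sum is finite. This yields $\E|\w_1 - 1|^3 < \infty$, as claimed.

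There is essentially no real obstacle here: the statement is a routine finiteness check, and the only point requiring any care is to pass through independence \emph{before} invoking finiteness of the Gaussian absolute moments, rather than attempting to compute the moments of the product $XY$ directly. I note also that any continuous multiplier satisfying the moment-matching conditions of Section~\ref{subsec:proposed-bootstrap} with a finite third absolute moment would serve equally well in the main theorems; the Gaussian product is simply a convenient explicit choice for which this verification is immediate.
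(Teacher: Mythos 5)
Your proof is correct and follows essentially the same route as the paper's: both bound $\E|XY-1|^3$ via the triangle inequality, expand the resulting cube, factorize each term using the independence of $X$ and $Y$, and conclude from the finiteness of all absolute moments of a Gaussian. The only cosmetic difference is in how the cross terms of the expansion are written (your $3\E|XY|^2+3\E|XY|$ versus the paper's $3\E(X^2|Y|)+3\E(|X|Y^2)$), which does not change the argument.
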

Now we are ready to provide the proof.
\begin{proof}\normalfont[ of Theorem~\ref{thm:mbq-firstorder}]
It is easy to see from Eq~\ref{eq:cfbqform} that $\cfbq$ can be expressed as:

\begin{equation*}
\begin{split}
     \cfbq& = \frac{\sum_i b_{n,i}\y_i }{B_n} + \frac{1}{n^{1/2}(n-1)}\sum_{1\leq i <j \leq n} \psi(\y_i,\y_j)d_{n,ij},
\end{split}
\end{equation*}
where we have:
\begin{subequations}\label{eq:wangjingbd}
\begin{align}
&\y_i=\w_i-1 ,\\
    &b_{n,i}= \frac{\gi}{\rho_n^s}, \\
     &B_n^2= \sum_{i=1}^n b_{n,i}^2,\\
    &d_{n,ij}= \frac{r-1}{\hat{\tau}_n}\tilde{g}_2(i,j) \times \frac{n}{n-1},\\
    &\psi(\w_i,\w_j)=\w_i\w_j-\w_i-\w_j+1.
\end{align}
\end{subequations}
Note that since $\hat{\tau}_n^2=\sum_i \gi^2/n$. Thus we use $B_n^2=n\hat{\tau}_n^2/\rho_n^{2s}$. Thus, $B_n^2=\sum_i b_{n,i}^2$. Furthermore, Lemma~\ref{lem:thirdabsmoment} shows that our $\w_i-1$ random variables have finite $\E\{|\xi_i-1|^3\}$.

Lemma~\ref{lem:proveconv} shows that the conditions in Eq~\ref{eq:wb-conds} are satisfied on a high probability set of $A,X$. 

Using Lemma~\ref{lem:proveconv}, we see that the first two conditions in Eq~\ref{eq:wb-conds} are satisfied with probability tending to one under Assumption~\ref{ass:sparse}. Since $B_n^2/n=\sum_i b_{n,i}^2/n$ converges to a positive constant (see Lemma~\ref{lem:moments-z/x})
, the first condition holds. 
 Now, we need to bound $\ell_{4,n}$ and $s_n$ as defined Eq~\ref{eq:ddef}. First, 
 let $\beta_{n,i}:=\sum_{j\neq i} \tilde{g}_2(i,j)^2/n$ and $\bar{\beta}_n=\sum_i\beta_{n,i}/n$. Also let $\gamma_n=\sum_i \beta_{n,i}^2/n-\bar{\beta}_n^2$. 
Then $\ell_{4,n}=C\beta_{n,i}/\hat{\tau}_n^2$.
Note that using Lemma~\ref{lem:proveconv-mean}, we have, with probability tending to one, $\ell_{4,n}\leq C\rho_n^{-1}$. From  Lemma~\ref{lem:moments-z/x}, we have $\hat{\tau}_n$ is $\Theta(\rho_n^s)$.
 Furthermore, let $\hat{G}_2(i,j):=\hhij -h_2(X_i,X_j)$. 

We have
\begin{align*}
    \hat{G}_2(i,j)^2=\underbrace{\hat{G}_2(i,j)^2-\E\{\hat{G}_2(i,j)^2\mid X\}}_{\delta_{ij}}+\underbrace{\E\{\hat{G}_2(i,j)^2\mid X\}}_{O(\rho_n^{2s-1})}
\end{align*}
We now will establish the $O(\rho_n^{2s-1})$ bound stated above for the second term. 
Let $\mathcal{S}^{ij}_r$ denote all subsets of size $r$ not containing $i,j$.
\begin{align*}
   \E\{\hat{G}_2(i,j)^2\mid X\}= \frac{\sum_{S,T\in \mathcal{S}^{ij}_r} \E\{H(A_{ij\cup S})H(A_{ij\cup T})\mid X\}}{{n-2\choose r-2}^2}
\end{align*}
In the above sum the terms with $|S\cap T|=0$ dominate, and for each of them the conditional expectation is bounded a.s. by $O(\rho_n^{2s-1})$ because of the boundedness of the graphon.
Now note that:
\begin{align*}
   \tilde{g}_2(i,j)^2&\leq 3\left[\{\hhij -h_2(X_i,X_j)\}^2+\{h_2(X_i,X_j)-\theta_n\}^2+(\tnhat-\theta_n)^2\right]\\
   &\leq 3\{\hhij -h_2(X_i,X_j)\}^2+O(\rho_n^{2s-1})\\
\beta_{n,i}&\leq \frac{1}{n}\sum_{j\neq i}\delta_{ij}+O(\rho^{2s-1})\\
\gamma_n^2&\leq \frac{1}{n}\sum_i \beta_{n,i}^2\leq \frac{1}{n}\sum_i \left\{\frac{1}{n}\sum_{j\neq i}\delta_{ij}+O(\rho^{2s-1})\right\}^2\\
&\leq O(\rho^{4s-2})+\underbrace{\frac{1}{n}\sum_i \frac{1}{n}\sum_{j\neq i}\delta_{ij}^2}_{A}
\end{align*}

Now note that, $\E(\delta_{ij})=\E\{\E(\delta_{ij}\mid X)\}=0$. Thus, for all $i$,
\begin{align*}
\E(A)=\frac{1}{n}\sum_i\E \left\{\frac{1}{n}\sum_{j\neq i}\delta_{ij}\right\}^2=\frac{1}{n}\sum_i\var\left(\frac{1}{n}\sum_{j\neq i}\delta_{ij}\right)=O(\rho_n^{4s-3}/n)
\end{align*}

Thus, we have, for a large enough $C$,
\begin{align*}
    P\bb{\gamma_n^2\geq C \rho_n^{4s-2}}\leq P\bb{A\geq C'\rho_n^{4s-2}}\leq O\bb{\frac{\E(A)}{\rho_n^{4s-2}}}=O\bb{\frac{1}{n\rho_n}}
\end{align*}

Therefore, since $s_n^2=\frac{(r-1)^2n^2}{(n-1)^2}\gamma_n/\hat{\tau}_n^2$, we have with probability tending to one,
$l_{4,n}+ks_n=O(\rho_n^{-1})$.

\bk

Since the first two conditions in eq~\ref{eq:wb-conds} are satisfied with probability tending to one, from \cite{wang-jing-weighted-bootstrap-u-statistics}  Theorem 3.1, we have,
\begin{equation*}
   \sup_u  \bigg{|}L_{1n}(u)+\frac{E(
   \xi_i-1)^3}{6B_n^3}\sum_{i=1}^nb_{n,i}^3\Phi'''(u) \bigg{|} = o_P(n^{-1/2}), 
\end{equation*}

Now we see that,
using the definitions of $L_{1n}$, $L_{2n}$ in Eq~\ref{eq:alldefswangjing}, plugging in definitions of $ b_{ni}$ and $ d_{nij}$'s from Eq~\ref{eq:wangjingbd}, and using the fact that $\E[\y_i\y_j\psi(\y_i,\y_j)]=\E[(\w_i-1)^2(\w_j-1)^2]=1$,
$$\sup_u |E_{2n}(u)-\gnhat(u)|=o_P(n^{-1/2}).$$

Therefore, putting all the pieces together we see that

\begin{equation}\label{eq:tnasbserror}
    \sup_{u}\bigg{|}P^*\left(\frac{\cfbq-\tnhat}{\hat{\sigma}_n}\leq u\right) -\gnhat(u)\bigg{|} = o_p\left(n^{-1/2}\right)+ O_P\left(\frac{\log n}{n^{2/3}\rho_n}\right)
\end{equation}

\end{proof}

Now we are ready to finish the proof of Corollary~\ref{thm:mbm-firstorder}.
\begin{proof}\normalfont[of Corollary~\ref{thm:mbm-firstorder}]
Here we take care of the error term in the Hoeffding projection in Eq~\ref{eq:boot-hoeff}. 
Set $X=\frac{\cfbm-\tnhat}{\shat}$, $Y=\cfbq$. From Eq~\ref{eq:boot-hoeff}, we see that $X=Y+R_n$, where $R_n =O_P(\mathcal{\delta}(n,\rho_n,R))$. Using Eq~\ref{eq:tnasbserror}, we see that on a high probability set,
\begin{align*}
    F_{Y}(u+a)-F_{Y}(u)&\leq |F_Y(u+a)-\gnhat(u+a)|+|\gnhat(u+a)-\gnhat(u)|+|\gnhat(u)-F_Y(u)| \\
    &\leq Ca+O\left(\frac{\log n}{n^{2/3}\rho_n}\right)
\end{align*}
Therefore, using Lemma 8.2 in~\cite{zhang-xia-network-edgeworth},
$$\sup_u\left|P^*\left(\frac{\cfbm-\tnhat}{\shat}\leq u\right)-\gnhat(u)\right|=o_P(n^{-1/2}) +  O_P\left(\frac{\log n}{n^{2/3}\rho_n}\right).$$

\end{proof}

\begin{proof}\normalfont[ of Lemma~\ref{lem:convEW}]

If we can establish Eq~\ref{eq:zxempiricalg1g1g2} and Eq~\ref{eq:tauhaterr} from Lemma~\ref{lem:moments-z/x} \bk for our empirical moments, we will get the desired result.
Note that our empirical moments involve the first term as well as a slight variation of the second term, which is given below.
\begin{align*}
   \widehat{E}_n\{ g_1(i)g_1(j)g_2(i,j) \}&=\frac{\sum_{i<j}\gi\gj\tilde{g}_2(i,j)}{{n\choose 2}}
\end{align*}
We will show that this follows from Eq~\ref{eq:zxempiricalg1g1g2}.
\begin{align*}
    &\frac{\sum_{i<j} \gi\gj\ggij}{{n\choose 2}}\\
    &=\frac{\sum_{i<j} \gi\gj(\tilde{g}_2(i,j)-\gi-\gj)}{{n\choose 2}}\\
    &=\frac{\sum_{i<j} \gi\gj\tilde{g}_2(i,j)}{{n\choose 2}}-\frac{\sum_{i\neq j}\gi\gj(\gi+\gj)}{2{n\choose 2}}\\
    &=\frac{\sum_{i<j} \gi\gj\tilde{g}_2(i,j)}{{n\choose 2}}-\frac{\sum_{i\neq j}\gi^2\gj}{{n\choose 2}}\\
    &=\frac{\sum_{i<j} \gi\gj\tilde{g}_2(i,j)}{{n\choose 2}}-\frac{(\sum_{i}\gi^2)(\sum_j\gj)-\sum_i \gi^3}{{n\choose 2}}\\
    &\stackrel{(i)}{=}\frac{\sum_{i<j} \gi\gj\tilde{g}_2(i,j)}{{n\choose 2}}+\frac{\sum_i \gi^3}{{n\choose 2}}\\
    &\stackrel{(ii)}{=}\frac{\sum_{i<j} \gi\gj\tilde{g}_2(i,j)}{{n\choose 2}}+O_P\left(\frac{\rho_n^{3s}}{n}\right)
\end{align*}
(i) uses the fact that $\sum_i\gi=0$. (ii) uses the fact that $\E\{g_1(X_1)^3\}=O(\rho_n^{3s})$ along with Eq~\ref{eq:zxempiricalg13}.
Hence from Eq~\ref{eq:zxempiricalg1g1g2} we have:
\begin{align*}
    &\left|\frac{\sum_{i<j} \gi\gj\tilde{g}_2(i,j)}{{n\choose 2}}-\E\{g_1(X_1)g_1(X_2)g_2(X_1,X_2)\}\right|\\
    &=\max\left\{O_P\left(\frac{\rho_n^{3s}}{n}\right), O_P\left(\rho_n^{3s-\frac{1}{2}}n^{-1/2}\right) 
    \right\}\\
    &=O_P\left(\rho_n^{3s-\frac{1}{2}}n^{-1/2}\right).
\end{align*}
This, along with Eqs~\ref{eq:zxempiricalg13} and~\ref{eq:tauhaterr} yields the result.
\end{proof}

\subsection{Proof of Lemma~\ref{lem:proveconv-mean}}
\begin{proof}\normalfont
Recall the definition of $\hhij$ from Eq~\ref{eq:hhij}.
\begin{align*}
    \tilde{g}_2(i,j)&=\hhij-\tnhat=\{\hhij -h_2(X_i,X_j)\}+\{h_2(X_i,X_j\}-\theta_n)-  (\tnhat-\theta_n)\notag\\
    \tilde{g}_2(i,j)^2 
    &\leq 3\left[\{\hhij -h_2(X_i,X_j)\}^2+\{h_2(X_i,X_j)-\theta_n\}^2+(\tnhat-\theta_n)^2\right].
\end{align*}
Since $\var(\tnhat)=O(\rho_n^{2s}/n)$ and the second term is bounded a.s. due to our boundedness assumption. We will just prove that
$\sum_{j\neq i} \{\hhij -h_2(X_i,X_j)\}^2/(n-1)\rho_n^{2s}$ is bounded with high probability. It is not hard to check 
that
$$\E\{(\hhij -h_2(X_i,X_j))^2/\rho_n^{2s}\}=O(1/\rho_n)$$
Therefore,
\begin{align*}
    \sum_{j\neq i}\E\{\ggij^2/(n\rho_n^{2s})&=O(1/\rho_n)\}
\end{align*}
Furthermore, let $\hat{G}_2(i,j):=\hhij -h_2(X_i,X_j)$. 
We have
\begin{align*}
    \hat{G}_2(i,j)^2=\underbrace{\hat{G}_2(i,j)^2-\E\{\hat{G}_2(i,j)^2\mid X\}}_{\delta_{ij}}+\underbrace{\E\{\hat{G}_2(i,j)^2\mid X\}}_{O(\rho_n^{2s-1})}
\end{align*}
 We now will establish the $O(\rho_n^{2s-1})$ bound stated above for the second term.  
Let $\mathcal{S}^{ij}_r$ denote all subsets of size $r$ not containing $i,j$.
\begin{align*}
   \E\{\hat{G}_2(i,j)^2\mid X\}= \frac{\sum_{S,T\in \mathcal{S}^{ij}_r} \E\{H(A_{ij\cup S})H(A_{ij\cup T})\mid X\}}{{n-2\choose r-2}^2}
\end{align*}
In the above sum the terms with $|S\cap T|=0$ dominate, and for each of them the conditional expectation is bounded a.s. by $O(\rho_n^{2s-1})$ because of the boundedness of the graphon.

We will analyze $\sum_i\sum_{j\neq i}\delta_{ij}$. Note that $\E(\delta_{ij}\mid X)=0$.
\begin{align}\label{eq:vardel2}
    \var\left(\frac{1}{n^2}\sum_i\sum_j \delta_{ij}\mid X\right)=\frac{\sum_i\sum_j \var(\delta_{ij}\mid X)+\sum_{i,k,k\neq i}\sum_{j,\ell,j\neq \ell} \cov(\delta_{ik},\delta_{j\ell}\mid X)}{n^4}
\end{align}

\begin{align*}
    \delta_{ij}=\frac{1}{{n-2\choose r-2}^2}\sum_{S,T\in \mathcal{S}^{ij}_r} \underbrace{H(A_{ij\cup S})H(A_{ij\cup T})-\E\{H(A_{ij\cup S})H(A_{ij\cup T})\mid X\}}_{H_{ij}'(S,T)}
\end{align*}
For variance, we have:
\begin{align*}
\var(\delta_{ij})&=\E\{\var(\delta_{ij}\mid X)\}\\
&=\frac{ \sum\limits_{S_1\neq T_1, S_2\neq T_2\in \mathcal{S}^{ij}_r}\E\{\cov(H_{ij}'(S_1,T_1),H_{ij}'(S_2,T_2)\mid X)\}}{{n-2\choose r-2}^4}\\
\end{align*}

The dominant term in the above sum is the one with $S_1, S_2, T_1, T_2$ all disjoint. Consider any other term in the above sum where any pair of the subsets have $p$ nodes, $d$ edges in common and the rest are disjoint. In this case there are $2(r-2-p)+2(r-2)+p=4(r-2)-p$ choices of nodes and the number of edges are lower bounded by $4(s-1)+1-d=4s-3-d$ (since all pairs have $\{i,j\}$ in common). When $p\geq 1$, for acyclic graphs, $d\leq p-1$ and for general subgraphs with a cycle, $d\leq {p\choose 2}$. Thus, for $p\geq 0$, we have:
\begin{align*}
\frac{O\left(n^{4(r-2)-p}\rho_n^{4s-3-d}\right)}{{n-2\choose r-2}^4}=O(\rho_n^{4s-3})\times O\left(\frac{1}{n^p \rho_n^d}\right)
\end{align*}
Note that for acyclic graphs, it is easy to see that under our sparsity conditions the above is dominated by $p=0$. For general cyclic graphs, since $\rho_n= \omega(n^{-1/r})$, note that, since $p\leq r$,
\begin{align*}
n^p\rho_n^d\geq n^p\rho_n^{p(p-1)/2}\geq n^{p\left(1-\frac{p-1}{2r}\right)}\geq n^{\frac{p(r+1)}{2r}} \rightarrow \infty
\end{align*}
So, $\var(\delta_{ij})=O(\rho_n^{4s-3})$.

For covariance, for $i\neq j\neq k \neq \ell$, we have:
\begin{align*}
\cov(\delta_{ik},\delta_{j\ell})&=\E\{\cov(\delta_{ik},\delta_{j\ell}\mid X)\}\\
&=\frac{\sum\limits_{S_1\neq T_1 \in  \mathcal{S}^{ik}_r,S_2\neq T_2\in \mathcal{S}^{j\ell}_r} \E\{\cov (H_{ij}'(S_1,T_1),H_{ik}'(S_2,T_2)\mid X)\}}{{n-2\choose r-2}^4}
\end{align*}
Consider any two pairs of subsets with $p$ nodes and $d$ edges in common. First note that $p\geq 2$ in order to have a nonzero covariance. In this case there will be $4(r-2)-p$ choices for nodes, and $(2s-A_{ik})+(2s-A_{j\ell})-d\geq 4s-3-d$ edges.
\begin{align*}
=\frac{O\left(n^{4(r-2)-p}\rho_n^{4s-2-d}\right)}{{n-2\choose r-2}^4}=O\left(\frac{\rho_n^{4s-3}}{n^2}\right)\times O\left(\frac{1}{n^{p-2}\rho_n^{d-1}}\right).
\end{align*}
Note that, for acyclic graphs $d\leq p-1$ and hence the above is maximized at $p=2,d=1$ as long as $n\rho_n\rightarrow \infty$. 

For general cyclic subgraphs, $d\leq {p\choose 2}$.  Furthermore, since $p+2\leq r$, and $\rho_n=\omega(n^{-1/r})$, we have, for $p> 2$:
\begin{align*}
n^{p-2}\rho_n^{d-1}&=n^{p-2-\frac{1}{r}\left(\frac{p(p-1)}{2}-1\right)}\\
&=n^{p-2-\frac{(p-2)(p+1)}{2r}}=n^{(p-2)\left(1-\frac{p+1}{2r}\right)}\geq n^{(p-2)\frac{r+1}{2r}}\rightarrow \infty
\end{align*}
Thus under the conditions of Assumption~\ref{ass:sparse}, we have:
\begin{align*}
\cov(\delta_{ik},\delta_{j\ell})=O(\rho_n^{4s-3}/n^2)
\end{align*}
\bk
Step (i) is true, because conditioned on $X$, there needs to be at lease two nodes $u_1$, $u_2$ in common between $\{i,k\cup S_1\cup T_1\}$ and $\{j,\ell \cup S_2\cup T_2\}$ to have a nonzero covariance. This leads to only $4(r-2)-2$ choices, which dominates the sum. This along with Eq~\ref{eq:vardel2} gives us:
$$\var\left(\frac{1}{n^2}\sum_i\sum_{j \neq i} \delta_{ij}\right)=\E\left\{\var\left(\sum_i\sum_{j \neq i} \delta_{ij}/n^2\mid X\right)\right\}=O(\rho_n^{4s-3}/n^2).$$

Thus we have for large enough $C$, we have

\begin{align*}
   P\left( \frac{1}{n^2}\sum_i\sum_{j\neq i}\tilde{g}_2(i,j)^2\geq C\rho_n^{2s-1}\right)
   &\leq P\left( \bigg{|}\sum_i\sum_{j\neq i}\delta_{ij}/n^2+O(\rho_n^{2s-1})\bigg{|}\geq C\rho_n^{2s-1}\right)\\
    &\leq P\left( \sum_i\bigg{|}\sum_{j\neq i}\delta_{ij}/n^2\bigg{|}\geq C'\rho_n^{2s-1}\right)\\
    &\leq C''\frac{\rho_n^{4s-3}/n^2}{\rho_n^{4s-2}}=O\left(\frac{1}{n^2\rho_n}\right).
 \end{align*}

\end{proof}
\bk

\subsection{Proof of Lemma~\ref{lem:proveconv}}
\begin{proof}\normalfont
Let $\Delta_i:=|\gi-g_1(X_i)|/\rho_n^s$. We have:
\begin{align}\label{eq:b12}
    &\dfrac{\sum_{i}|\gi/\rho_n^s|^3}{n}\notag\\
    &\leq \dfrac{\sum_i \Delta_i^3}{n} + 3\dfrac{\sum_i |g_1(X_i)/\rho_n^s|\Delta_i^2}{n}+3\dfrac{\sum_i |g_1(X_i)/\rho_n^s|^2\Delta_i}{n}+\dfrac{\sum_i |g_1(X_i)/\rho_n^s|^3}{n}\notag\\
    &=B_1+B_2+B_3+B_4
\end{align}
First note that using the boundedness condition on the graphon, $|g_1(X_i)/\rho_n^s|$ is bounded. Hence $B_4\leq c$ a.s.
Using Lemma~\ref{lem:moments-z/x}\bk, we know that $E(\Delta_i)^2=O(1/n\rho_n)$.
Since $\sum_i \Delta_i\leq n^{1/2}{\sum_j \Delta_j^2}$, for the second term we have, for some $C > 0$ \bk:
\begin{align}\label{eq:b2}
    P(B_2\geq \epsilon)\leq \frac{n^{1/2} E\sum_i\Delta_i^2}{n\epsilon^2}\leq \frac{ C \bk }{n^{1/2}\rho_n\epsilon^2}
\end{align}
Furthermore,
\begin{align}\label{eq:b3}
    P(B_3\geq \epsilon)\leq \frac{ E\sum_i\Delta_i^2}{n\epsilon^2}\leq \frac{ C \bk}{n\rho_n\epsilon^2}.
\end{align}
By repeated application of Cauchy-Schwarz inequality, we have $(\sum_i x_i^3)^2\leq \sum_i x_i^2\sum_i x_i^4\leq (\sum_i x_i^2)^3$, we also have:
\begin{align*}
    P(B_1\geq \epsilon)\leq \frac{ E\sum_i\Delta_i^3}{n\epsilon^2}\leq \frac{(\sum_i E\Delta_i^2)^{3/2}}{n\epsilon^2}
    \leq \frac{C \bk }{n\rho_n^{3/2}\epsilon^2}
\end{align*}
Therefore, using the sparsity conditions in Assumption~\ref{ass:sparse}, we see that the first equation in the lemma statement is proved.

For the second, we use:
\begin{align*}
    \dfrac{\sum_{i}|\gi/\rho_n^s|^2}{n}&\geq \dfrac{\sum_i |g_1(X_i)/\rho_n^s|^2}{n} +\dfrac{\sum_i \Delta_i^2}{n} - 2\dfrac{\sum_i |g_1(X_i)/\rho_n^s|\Delta_i}{n}\\
    &=C_1+\alpha B_2-\beta B_3,
\end{align*}
where $\alpha,\beta$ are positive constants, and $B_2,B_3$ were defined in Eq~\ref{eq:b12}.
Using Assumption 2 part 1, \bk we see $C_1>0$, a.s. Also, now for a small enough constant $\epsilon$, using Eqs~\ref{eq:b2} and~\ref{eq:b3}, we see that the second equation in the lemma statement is proven.

\end{proof}

\subsection{Proof of Lemma~\ref{lem:wangjing}}
\begin{proof}\normalfont
Define the following quantities.
\begin{align*}
    \gjt&=\E\{\exp(it b_{nj}\y_j/B_n)\}
    \end{align*}
    Also define $\phi_{1,n}$ and $\phi_{2,n}$ as:
    \begin{align*}
    \phi_{1,n}(t)&=e^{-t^2/2}\left[1+\sum_j\{\gjt-1\}+\frac{t^2}{2}\right]\\
    \phi_{2,n}(t)&=-t^2\kn \et.
    \end{align*}
    Finally define,
    \begin{align*}
    S_n&=\frac{1}{B_n}\sum_j  b_{nj} \y_j,\qquad
    \del_{n,m}=\frac{1}{n^{3/2}}\sum_{i=1}^{m-1}\sum_{j=i+1}^n d_{nij}\psi(\y_i,\y_j)\\
\end{align*}
As in the original proof, we define:
\begin{subequations}
\label{eq:ltophi}
\begin{align}
    &\int_{-\infty}^\infty e^{itx} d\{\Phi(x)+\lnone\}dx=\phi_{1,n}(t)\\
    &\int_{-\infty}^\infty e^{itx} d \lntwo dx=it\phi_{2,n}(t)\\
    &\int_{-\infty}^\infty e^{itx}E_{2n}(x)= \phi_{1,n}(t)+it\phi_{2,n}(t)
\end{align}
\end{subequations}

Now, for some $c>0$ to be chosen later, from Esseen's smoothing lemma~\cite{petrov2012sums} and Eq~\ref{eq:ltophi} we have:
\begin{align}\label{eq:esseen}
    &\sup_x\bigg{|}P(V_n\leq x)-E_{2n}(x)\bigg{|}\notag\\
    &\leq \int_{|t|\leq n^{1-c}}|t|^{-1}|\E (e^{itV_n})-\phi_{1,n}(t)-it\phi_{2,n}(t)|dt+Cn^{c-1}\sup_x\left|\dfrac{d E_{2,n}(x)}{dx}\right|\notag\\
    &\leq \int_{|t|\leq n^{1-c}}|t|^{-1}|\E (e^{itV_n})-\phi_{1,n}(t)-it\phi_{2,n}(t)|dt+\frac{C_1(|\kn|+\beta)}{n^{1-c}}
\end{align}

The last line is true due to the following argument. Note that, for some $v_j$ in the $| b_{nj}\y_j/B_n|$ ball in the neighborhood of $x$, for $j\in\{1,\dots, n\}$,
\begin{align*}
    \frac{d\lnone(x)}{dx}&=\sum_{j=1}^n \left[\E\{\phi(x- b_{nj}\y_j/B_n)\}-\phi(x)\right]-\frac{1}{2}\Phi'''(x)\\
    &=\sum_{j=1}^n \E\left\{- b_{nj}\y_j/B_n\phi'(x)+ b_{nj}^2\y_j^2/2B_n^2\phi''(x)- b_{nj}^3\y_j^3/6B_n^3\phi'''(v_j)\right\}-\frac{1}{2}\Phi'''(x).
    \end{align*}
    Thus, we have:
    \begin{align*}
    \sup_x\left|\frac{d\lnone(x)}{dx}\right|&\leq \sum_{j=1}^n c_1\left\{ b_{nj}^2/B_n^2 |\phi''(x)|+ c_2| b_{nj}/B_n|^3\E(|\y_j^3|)|\phi'''(v_j)|\right\}+\frac{1}{2}|\Phi'''(x)| \\
    &\leq C+\E(|X_1|^3) \left(\sum_j | b_{nj}/B_n|^3\right)+C'\\
    &\leq C+\beta/n^{1/2}\leq C\beta\qquad\text{Since $\beta\geq 1$}
\end{align*}

Also note that, for any $\epsilon >0$, for $n$ large enough,
\begin{align*}
    \int_{|t|>n^\epsilon} |\phi_{1,n}(t)/t|dt&=O(1/n^{1-c})\\
    \int_{|t|>n^\epsilon} |\phi_{2,n}(t)/t|dt&=O(|\kn|/n^{1-c})
\end{align*}

Thus the main idea is that $E(e^{itV_n})$ behaves like $E(itS_n)+itE(itS_n\del_{n,n})$.
\begin{align*}
    &\int_{|t|\leq n^{1-c}}|t|^{-1}|\E (e^{itV_n})-\phi_{1,n}(t)-\phi_{2,n}(t)|dt\leq \sum_{j=1}^4I_{j,n}
\end{align*}

Going back to Eq~\ref{eq:esseen}, we break up the first part of the RHS into four parts, and the remainder gets absorbed into $O(|\kn+\beta|/n^{1-c})$ term in Eq~\ref{eq:esseen}. 

\begin{align*}
    |I_{1,n}|&=\int_{|t|<n^\epsilon}|t|^{-1}\left|\E (e^{it V_n})-E(itS_n)-itE(itS_n\del_{n,n})\right|dt\\
    |I_{2,n}|&=\int_{|t|<n^\epsilon}|t|^{-1}\left|\E (e^{it S_n})-\phi_{1,n}(t)\right|dt\\
    |I_{3,n}|&=\int_{|t|<n^\epsilon}\left|\E (\del_{n,n} e^{itS_n}) -\phi_{2,n}(t)\right|dt\\
    |I_{4,n}|&=\int_{n^\epsilon \leq |t|<n^{1-c}}|t|^{-1}\left|\E (e^{itV_n})\right|dt\\
\end{align*}
First we will bound some terms which will be used frequently. Since $ab\leq (a^2+b^2)/2$.
\begin{align}
    |\kn|&\leq \frac{C}{n^{3/2}B_n^2}\sum_{1\leq i<j\leq n} ( b_{ni}^2 b_{nj}^2+ d_{nij}^2)(1+\ro)\notag\\
    & \leq \frac{C(1+\ro)}{n^{3/2}B_n^2} \left\{\left(\sum_j b_{nj}^2\right)^2+\sum_{i<j} d_{nij}^2\right\}\notag\\
    &\leq \frac{C(1+\ro)}{n^{3/2}B_n^2}\left(B_n^4+l_{4,n} n^2\right)\notag\\
    &\leq \frac{C'(1+\ro)l_{4,n}}{n^{1/2}}\label{eq:knbound}
\end{align}

As for $\del_{n,n}$, we have:
\begin{align*}
    E\del_{n,n}^2=\frac{\ro}{n^3}\sum_{i=1}^{n-1}\sum_{j=i+1}^n  d_{nij}^2=\frac{\ro l_{4,n}}{n}
\end{align*}
\bk
Furthermore we will use:
\begin{align}\label{eq:rem}
    R(z):=e^{iz}-1-iz\qquad |R(z)|\leq |z|^\alpha \ \ \text{for all } \alpha\in [1,2]
\end{align}
We will first bound $I_{1,n}$. Using Taylor expansion, for some $|\eta|\leq 1$,

\begin{align*}
    |\ione|&\leq \int_{|t|<n^\epsilon}|t|^{-1}t^2/2  |\E (\del_{n,n}^2 e^{itS_n}e^{it\del_{n,n}\eta})|dt\notag\\
    &\leq 1/2\int_{|t|<n^\epsilon}|t|  \E(\del_{n,n}^2)dt
    \leq C\frac{(1+\ro)l_{4,n}}{n^{1-2\epsilon}}
\end{align*}

Next we bound $\itwo$. 
Using a similar argument in the proof of the original version of this theorem, we have:
\begin{align*}
    |\itwo|&\leq \frac{C_1}{B_n^4}\sum_j  b_{nj}^4+C_2\left(\frac{1}{B_n^3}\sum_j | b_{nj}|^3E(|X_1|^3)\right)^2\\
    &\leq C_1n^{-2/3}+C_2\ro^2/n
\end{align*}
Now we do $\ithr$. Denote $Z_j= b_{nj}\y_j/B_n$ and $\psi_{ij}= d_{nij}\psi(\y_i,\y_j)$.
First note that
\begin{align}\label{eq:ltwomain}
    \E\{\psi_{ij}e^{it(Z_i+Z_j)}\}=-t^2\ell_{ij}+\theta_{1ij}(t),
\end{align}
where we have:
\begin{align}\label{eq:lij}
    \ell_{ij}&=\E\left(|\psi_{ij}Z_iZ_j |\right)\leq  | b_{ni} b_{nj} d_{nij}|/B_n^2|\E\{\y_i\y_j\psi(\y_i,\y_j)\}|\leq \ro^{1/2}( b_{ni}^2 b_{nj}^2+ d_{nij}^2)/B_n^2.
\end{align}
Using Eq~\ref{eq:rem} and the fact that $\E\{\psi(\y_i,\y_j)\}=0$ and $\E\{\psi(\y_i,\y_j)\mid \y_i\}=0$,
\begin{align*}
    \tij&=\E(\psi_{ij}[it \{Z_i R(tZ_j)+Z_j R(tZ_i))+R(tZ_i)R(tZ_j)\}])\notag\\
    &\leq C|t|^{2.5}\E\left(|\psi_{ij}Z_iZ_j^{1.5}|+|\psi_{ij}Z_jZ_i^{1.5}|\right)\notag\\
    &\leq C|t|^{2.5}\E\{|\y_1\y_2^{1.5}\psi(\y_i,\y_j)|\}\left(| d_{nij} b_{ni} b_{nj}^{1.5}/B_n^{2.5}|+| d_{nij} b_{ni}^{1.5} b_{nj}/B_n^{2.5}|\right)\notag\\
    &\leq C|t|^{2.5}(\ro\beta)^{1/2}\left( d_{nij}^2+ b_{ni}^2| b_{nj}|^3+| b_{ni}|^3 b_{nj}^2\right)n^{-5/4}
\end{align*}
Using Eq~\ref{eq:ltwomain}, and setting $\prod_{k\neq i,j}\gamma_k(t)=e^{-t^2/2}+\ttij$ we see:
\begin{align*}
    E(\del_{n,n}e^{itS_n})&=n^{-3/2}\sum_{i<j}\E(\psi_{ij}e^{it S_n})
    =n^{-3/2}\sum_{i<j}\E\{\psi_{ij}e^{it(Z_i+Z_j)}\}\prod_{k\neq i,j}\gamma_k(t)\\
    &=n^{-3/2}\sum_{i<j}\{-t^2\lij+\tij(t)\}\left(e^{-t^2/2}+\ttij\right)\\
    &=n^{-3/2}\sum_{i<j}\{-t^2\lij e^{-t^2/2}+\tttij(t)\}=\underbrace{-\kn t^2 e^{-t^2/2}}_{\phi_{2,n}(t)}+\underbrace{n^{-3/2}\sum_{i<j} \tttij}_{\rnf},
\end{align*}

where using Lemma A.4 in~\cite{wang-jing-weighted-bootstrap-u-statistics}, for $|t|<n^\epsilon << n^{1/6}$
\begin{align*}
    |\ttij|&\leq \frac{C}{n^{1/2}}\left(\beta+\frac{ b_{nj}^2+ b_{ni}^2}{n^{1/2}}\right)(t^2+t^4)e^{-t^2/8}
\end{align*}
Furthermore, using Lemma A.4 and $\sum_i | b_{ni}|^3\leq \ell_2 n$
\begin{align*}
    |\tttij|&\leq t^2|\lij\ttij|+|\tij|\prod_{k\neq i,j}\gamma_k(t)\leq t^2|\lij\ttij|+4|\tij|e^{-t^2/8}\notag\\
    |\lij\ttij|&\leq \frac{C}{n^{1/2}}|\lij|\left(\beta+\frac{ b_{nj}^2+ b_{ni}^2}{n^{1/2}}\right)(t^2+t^4)e^{-t^2/8}
    \end{align*}
    Summing the above expression over $i<j$, we also have,
    \begin{align}\label{eq:tttij}
    &\sum_{i<j}|\lij\ttij|\notag\\
    &\leq C\lambda^{1/2}\left(\beta \frac{B_n^4+l_{4,n} n^2}{n^{3/2}}+\frac{\sum_{i<j}| d_{nij}|(| b_{ni}^3 b_{nj}|+| b_{ni} b_{nj}^3|)}{n^2}\right)(t^2+t^4)e^{-t^2/8}\notag\\
    & \leq C\lambda^{1/2}\left[\beta \frac{B_n^4+l_{4,n} n^2}{n^{3/2}}+\frac{c}{n^2}
    \underbrace{\left\{\left(\sum_{i<j} d_{nij}^2\right)\left(\sum_{i<j} b_{ni}^6 b_{nj}^2\right)\right\}^{1/2}}_{A}\right](t^2+t^4)e^{-t^2/8}
\end{align}
To bound (A) we see:
\begin{align*}
    (A)&\leq 
    \left\{(n^2l_{4,n})(n\ell_2)\sum_{i<j} b_{ni}^3 b_{nj}^2\right\}^{1/2}\leq n^{3/2}(\ell_2l_{4,n})^{1/2}\left\{\left(\sum_{i} b_{ni}^3\right)\left(\sum_j b_{nj}^2\right)\right\}^{1/2}\\
&\leq c' n^{5/2}l_{4,n}^{1/2}\ell_2
\end{align*}
Plugging this back in Eq~\ref{eq:tttij}, and assuming WLOG $l_{4,n}\geq 1$,
\begin{align*}
    \sum_{i<j}|\lij\ttij|&\leq C'\lambda^{1/2}\left(\beta \frac{B_n^4+l_{4,n} n^2}{n^{3/2}}+\frac{1}{n^2} n^{5/2}\ell_2l_{4,n}^{1/2}\right)(t^2+t^4)e^{-t^2/8}\notag\\
    &\leq C'l_{4,n}\ro^{1/2}\beta n^{1/2}(t^2+t^4)e^{-t^2/8}
\end{align*}
Finally, we also have:
\begin{align*}
    \sum_{i<j}|\tij|&\leq C|t|^{2.5}(\ro\beta)^{1/2}
    \left(l_{4,n} n^2+2\ell_2 n B_n^2 \right)n^{-5/4}\leq C|t|^{2.5}(\ro\beta)^{1/2}l_{4,n}n^{1/4}
\end{align*}
Finally we have, since $t^4\leq |t|+|t|^6$, and $|t|^{2.5}\leq |t|+|t|^6$,
\begin{align*}
    \rnf&\leq n^{-3/2}\sum_{i<j}|\tttij|\leq n^{-3/2}\left(\sum_{i<j}t^2|\lij\ttij|+4\sum_{i<j}|\tij|e^{-t^2/8}e^{-t^2/8}\right)\\
    &\leq \left(t^2l_{4,n}\ro^{1/2}(t^2+t^4)e^{-t^2/8}n^{-1}+|t|^{2.5}(\ro\beta)^{1/2}l_{4,n}n^{-3/4}e^{-t^2/8}\right)\\
    &\leq l_{4,n}\left(\ro^{1/2}\beta n^{-1}+(\ro\beta)^{1/2}n^{-3/4}\right)(|t|+|t|^6)e^{-t^2/8}\\
    &\leq C'l_{4,n}\left(\beta^2 n^{-1}+(\ro+\beta)n^{-3/4}\right)(|t|+|t|^6)e^{-t^2/8}\\
\end{align*}
Finally, for $\ithr$, we have:
\begin{align*}
    |\ithr|&\leq \int_{|t|\leq n^\epsilon} |t|^{-1}\rnf dt\\
    &\leq  C'l_{4,n}\left(\beta^2 n^{-1}+(\ro+\beta)n^{-3/4}\right)\int_{|t|\leq n^\epsilon} (1+|t|^5)e^{-t^2/8} dt\\
    &\leq C'1'l_{4,n} \left(\beta^2 n^{-1}+(\ro+\beta)n^{-3/4}\right)
\end{align*}

Now we will bound $\ifour$. 


Define $\Omega:=\{k:\min(1/2,\ell_2/\ell_1^{3/2})\leq n^{1/2}b_{n,k}/B_n\leq 2\ell_2/\ell_1^{3/2}\}$. Using Lemma A.5 in~\cite{wang-jing-weighted-bootstrap-u-statistics}\bk, we see that $|\Omega|\geq c_0 n$, for some $c_0\in (0,1)$.

Now, let $\Gamma:=\{i \ | \ \alpha_i\geq \bar{\alpha}+ks_n\}$. Applying Lemma \ref{lemma:deterministic-deviation-identity} and setting $k=\sqrt{2/c_0}$, we see that $|\Gamma^c|\geq n (1-c_0/2)$. Therefore, $|\Gamma^c \cap \Omega|\geq n c_0/2$. Let $k_0=\lfloor c_0/2\rfloor.$

WLOG assume $b_{n,1}\dots b_{n,k_0n}\in \Omega\cap\Gamma^c$ and $\ell_2/\ell_1^{3/2}\geq 1/2$. Now for $m\in [2,k_0 n]$, we have:

\begin{align*}
    S_m=\frac{1}{B_n}\sum_{k=1}^m b_{nk} \y_k\qquad S_m^{i,j}:=\frac{1}{B_n}\sum_{k\neq i,j}b_{nk} X_k
\end{align*}

For $1,\dots, m \leq k_0n$, we have:
\begin{align*}
    \frac{1}{mn}\sum_{i=1}\sum_{j=1,j\neq i} d_{nij}^2=\frac{1}{m}\sum_{i=1}^m\alpha_i\leq l_{4,n}+ks_n=:\ell_{5,n}
\end{align*}

As for $\del_{n,m}$, we have:
\begin{align}\label{eq:delnm}
    E(\del_{n,m}^2)=\frac{\ro}{n^3}\sum_{i=1}^{m-1}\sum_{j=i+1}^n  d_{nij}^2\leq \ro l_{5,n}\frac{m}{n^2}
\end{align}

Now we use the decomposition in~\cite{bickel1986edgeworth} (17)-(22).
\begin{align*}
    \E(e^{it V_n})&=\E\{e^{it (V_n-\del_{n,m})}e^{it \del_{n,m}}\}\\
    &=\E\{e^{it (V_n-\del_{n,m})}(1+it \del_{n,m})\}+R_{n,5}\\
    &=\E\{e^{it (V_n-\del_{n,m})}(1+it \del_{n,m})\}+Ct^2 \lambda  \lfn m/n^2\\
    &=\E\{e^{it (V_n-\del_{n,m})}\}+\frac{it}{n^{3/2}}\sum_{i=1}^m\sum_{j=i+1}^n 
    \underbrace{\E\{e^{it (V_n-\del_{n,m})}\psi_{ij}\}}_{D_{ij}}+Ct^2\ro \lfn \frac{m}{n^2}
\end{align*}
where the last line is obtained using Eqs~\ref{eq:rem} and~\ref{eq:delnm}, as follows:
\begin{align*}
    R_{n,5}\leq |\E\{e^{it (V_n-\del_{n,m})}t^2\del_{n,m}^2\}|\leq Ct^2 \lambda   l_{5,n} \bk m/n^2
\end{align*}
Note that $V_n-\del_{n,m}$ can be written as $S_{m-1}+Y_{m,n}$, where $Y_{m,n}$ does not depend on $\y_1,\dots,\y_{m-1}$. 
 So we will write:
 \begin{align*}
     \left|\sum_{i=1}^{m-1}\sum_{j=i+1}^n(D_{ij}) \right| &= \left|\sum_{i=1}^{m-1}\sum_{j=i+1}^n\E\{e^{it (S_{m-1}^{ij}+\psi_{ij}+Y_{m,n})}\psi_{ij}\} \right|\\
     &=\left|\sum_{i=1}^{m-1}\sum_{j=i+1}^n\E\{e^{it S_{m-1}^{ij}}\}\E\{e^{(\psi_{ij}+Y_{m,n})}\psi_{ij}\} \right| \\
     &\leq \sup_{i<j}|\E(e^{it S_{m-1}^{ij}})|\sum_{i=1}^{m-1}\sum_{j=i+1}^n\E(|\psi_{ij}|)\\
     &\leq \sup_{i<j}|\E(e^{it S_{m-1}^{ij}})|\sum_{i=1}^{m-1}\sum_{j=i+1}^n| d_{nij} | \E\{|\psi(\y_i,
     \y_j)|\}\\
     &\leq \ro^{1/2}\sup_{i<j}|\E(e^{it S_{m-1}^{ij}})|\sqrt{mn\sum_{i=1}^{m-1}\sum_{j=i+1}^n d_{nij}^2}\\
     &\leq \sqrt{\lambda  \lfn}\sup_{i<j}|\E(e^{it S_{m-1}^{ij}})| mn\\
 \end{align*}
 Plugging it back, we have:
 \begin{align}\label{eq:vncharac}
   |\E(e^{it V_n})|\leq   |\E(e^{it S_{m-1}})|+\frac{|t|}{n^{3/2}}\sqrt{\lambda  \lfn}\sup_{i<j}|\E(e^{it S_{m-1}^{ij}})| mn+Ct^2\lambda  \lfn \frac{m}{n^2}
 \end{align}
 Now, we have for $|t|\leq 1/4n^{1/2}/\E(|\y_1|^3)$
 \begin{align*}
     |\E(e^{itS_{m}})|\leq e^{-c_0 mt^2/n}\qquad |\E(e^{itS_{m}^{ij}})|\leq e^{-c_0(m-2)t^2/n}
 \end{align*}
 Taking $m=[6n\log n/c_0 t^2]+1$ (for a large enough $\epsilon$, this is still smaller than $k_0n$), from Eq~\ref{eq:vncharac} we have:
 \begin{align*}
     &\int_{n^\epsilon \leq |t|<1/4n^{1/2}/\E(|\y_1|^3)}|t|^{-1}\left|\E (e^{itV_n})\right|dt\\
     &\leq \int_{n^\epsilon \leq |t|<1/4n^{1/2}/\E(|\y_1|^3)} \left(\frac{e^{-c_0mt^2/n}}{|t|}+\frac{m}{n^{1/2}}\sqrt{\lambda  \lfn}e^{-c (m-2)t^2/n}+C|t|\lambda \lfn \frac{m}{n^2}\right)dt\\
     &\leq C'\lambda \lfn \frac{\log^2 n}{n}
 \end{align*}
 Now we will deal with the range $1/4n^{1/2}/\E(|\y_1|^3)\leq |t|\leq n^{1-c}$. Since $\kappa(\y_1)>0$, and hence for large enough $n$, 
 \begin{align*}
 |\gamma_k(t)|&\leq 1- \kappa(\y_1) \\
     |\E(e^{itS_{m}})|&\leq e^{-m\kappa(\y_1)}\\
     \E(e^{itS_{m}^{ij}})|&\leq e^{-(m-2)\kappa(\y_1)} 
 \end{align*}
 Using this in conjunction with Eq~\ref{eq:vncharac}, and setting $m=[4\log n/\kappa(\y_1)]+2$,
 \begin{align*}
      &\int_{1/4n^{1/2}/\E(|\y_1|^3)\leq |t|\leq n^{1-c}}|t|^{-1}\left|\E (e^{itV_n})\right|dt\\
      &\int_{1/4n^{1/2}/\E(|\y_1|^3)\leq |t|\leq n^{1-c}}\left(\frac{e^{-\kappa(\y_1)m}}{|t|}+\frac{m}{n^{1/2}}\sqrt{\lambda  \lfn}e^{-\kappa(\y_1)(m-2)}+C|t|\lambda \lfn \frac{m}{n^2}\right)\\
      &\leq C'\frac{\rho_n \lfn}{\kappa(\y_1)}\frac{\log n}{n^2}n^{2(1-c)}=C'\frac{\lambda  \lfn}{\kappa(\y_1)}\frac{\log n}{n^{2c}}
 \end{align*}
 
 Thus, using the bounds on $\ione,\itwo,\ithr$ and $\ifour$ along with Eq~\ref{eq:esseen}, Eq~\ref{eq:knbound} we get:
 \begin{align*}
     &\sup_x|P(V_n\leq x)-E_{2n}(x)|\\
     &\leq \sum_{i=1}^4 I_{n,i}+C'\frac{\beta+(1+\ro)l_{4,n}/n^{1/2}}{n^{1-c}}\\
     &\leq C\left(\frac{(1+\ro)l_{4,n}}{n^{1-2\epsilon}}+n^{-2/3}+l_{4,n}(\ro+\beta+\beta^2)n^{-3/4}+\frac{\lambda  \lfn}{\kappa(\y_1)}\frac{\log n}{n^{2c}}\right)+C'\frac{\beta+(1+\ro)l_{4,n}/n^{1/2}}{n^{1-c}}\\
     &\leq (l_{4,n}+ks_n)\frac{\log n}{n^{2/3}}
 \end{align*}
 The last line assumes $\beta,\ro$ and $\kappa(\y_1)$ are all bounded.
\end{proof}

\subsection{Proof of Lemma~\ref{lem:thirdabsmoment}}
\begin{proof}\normalfont
Let $X\sim N(1,c_1^2)$ and $Y\sim N(1,c_2^2)$ be two independent random variables. We have $\w_1=XY$.
\begin{align*}
    \E(|XY-1|^3) &\leq  \E(|XY|^3)+1+3\E(X^2|Y|)+3\E(|X|Y^2)\\
    &=\E(|X|^3)\E(|Y|^3)+3\E(X^2)\E(|Y|)+3\E(|X|)\E(Y^2)\\
    &<\infty
\end{align*}
The last step is true because both $\E(|X|^3)$ and $\E(|Y|^3)$ are bounded for bounded $c_1,c_2$.
\end{proof}

\section{Detailed results for smooth functions of counts}\label{sec:supp:proof-smf-ew-and-smf-boot}
 In this section, we establish Edgeworth expansions for smooth functions of counts for the bootstrap and show that they are close to Edgeworth expansions of the conditional expectation of the count statistic, which is a U-statistic.  To our knowledge, Edgeworth expansions for smooth functions are not explicitly stated in the literature even for U-statistics. It turns out that the non-negligible terms arising from a Taylor approximation of the smooth functional are of a form where a flexible Edgeworth expansion result of \citet{jing2010unified} may be invoked. Edgeworth expansions for smooth functionals are also considered in \citet{hall-bootstrap-edgeworth}, but the argument provided there requires multivariate Edgeworth expansions and depends heavily on the properties of cumulants of independent random variables, complicating extensions even to U-statistics.   

Since the Edgeworth expansion of the  conditional expectation requires a non-lattice condition, it is assumed below.  However, it is likely that this condition can be removed if one derives an Edgeworth expnasion for the count functional directly and uses a proof strategy similar to \cite{zhang-xia-network-edgeworth} that exploits the smoothing nature of Bernoulli noise.  
\bk

\subsection{Edgeworth expansion for smooth functions of counts}\label{sec:supp:sub-proof-smf-ew}
 In what follows, let $f: \mathbb{R}^d \mapsto \mathbb{R}$ denote the smooth function of interest, $\bu$ denote a $d$-dimensional vector of conditional expectations corresponding to scaled count functionals  $\{\frac{\tnhat^{(1)}}{\rho_n^{s_1}},\ldots, \frac{\tnhat^{(d)}}{\rho_n^{s_d}}\}$ given $X$, and $\bmu=E(\bu)$.  In this section, we consider Edgeworth expansions for the statistic:
$$S_n=n^{1/2}(f(\bu)-f(\bmu))/\sigf,$$ 
where $\sigf^2$ is the asymptotic variance of $S_n$.  The standard Delta Method involves a first-order Taylor expansion, resulting in a Normal approximation with rate $O(1/\sqrt{n})$ when the gradient is not equal to $\boldsymbol{0}$ at $\bmu$.  To attain higher-order correctness, we need to consider a second-order expansion.  Recall the derivatives of interest $a_i, 1\leq i\leq d$ and $a_{ij},1\leq i,j\leq d$ defined in Eq~\ref{eq:ai}.
Furthermore, define the following analog the moments of the linear component of the U-statistic:
\begin{align*}
& \lm_{i_1,\ldots,i_j}=E\left\{\left(\frac{r_{i_1}g_1^{(i_1)}(X_l)}{\rho_n^{s_{i_1}}}\right)\ldots\left(\frac{r_{i_d}g_1^{(i_d)}(X_l)}{\rho_n^{s_{i_d}}}\right)\right\}.
\end{align*}

In the proposition below, we state the form of the Edgeworth for an appropriately smooth function $f$.

\bk
\begin{proposition}\label{prop:smooth-pop}

Suppose that $\sigf>0$, the function $f$ has three continuous derivatives in a neighbourhood of $\bmu$, and $\sum_{i=1}^d a_i g_1^{(i)}(X_l)$ is non-lattice. Then, 
\bk 
\begin{align*}
    P(S_n \leq x) = \Phi(x) + n^{-1/2}p_1(x)\phi(x) + o\left(\frac{1}{n^{1/2}}\right),
\end{align*}
\begin{align*}
    p_1(x)=-\{A_1\sigf^{-1}+\frac{1}{6}A_2\sigf^{-3}(x^2-1)\},
\end{align*}
where $\sigf^2$, $A_1$ and $A_2$ are given by:
\begin{align*}
\sigf^2&=\sum_{i=1}^d\sum_{j=1}^da_ia_j\lm_{ij},\\
A_1&=\frac{1}{2}\sum_{i=1}^d\sum_{j=1}^da_{ij}\lm_{ij},\\
    A_2&=\sum_{i=1}^d\sum_{j=1}^d\sum_{k=1}^d a_i a_ja_k \lm_{ijk}+3\sum_{i=1}^d\sum_{j=1}^d\sum_{k=1}^d\sum_{t=1}^{d}a_ia_ja_{kt}\lm_{ik}\lm_{jt} \\ &+ 3\sum_{i=1}^d\sum_{j=1}^d\sum_{k=1}^d a_ia_ja_k E\left(\frac{r_ig_1^{(i)}(X_{i_1})}{\rho_n^{s_i}}\frac{r_jg_1^{(j)}(X_{i_2})}{\rho_n^{s_j}}\frac{r_k(r_k-1)g_2^{(k)}(X_{i_1},X_{i_2})}{\rho_n^{s_k}}\right) .
\end{align*}
\end{proposition}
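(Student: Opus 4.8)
The plan is to combine a second-order Taylor expansion of $f$ with the Hoeffding decomposition of each U-statistic coordinate, reduce $S_n$ to a canonical stochastic expansion, and then invoke the Edgeworth expansion of \citet{jing2010unified}. Since $f$ has three continuous derivatives near $\bmu$ and each coordinate of $\bu-\bmu$ is a centered U-statistic with $\norm{\bu-\bmu}=O_P(n^{-1/2})$, I would first write
$$f(\bu)-f(\bmu)=\sum_{i=1}^d a_i(\bu^{(i)}-\mu^{(i)})+\frac12\sum_{i=1}^d\sum_{j=1}^d a_{ij}(\bu^{(i)}-\mu^{(i)})(\bu^{(j)}-\mu^{(j)})+\mathrm{Rem},$$
where the cubic remainder, once multiplied by $n^{1/2}/\sigf$, is $o_P(n^{-1/2})$. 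This uses boundedness of the third derivatives on a neighborhood of $\bmu$ together with a truncation/moment argument to upgrade the stochastic bound to the deterministic $o(n^{-1/2})$ claimed in the statement.

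Next I would substitute into each factor its Hoeffding decomposition,
$$\bu^{(i)}-\mu^{(i)}=\frac{r_i}{n}\sum_{l=1}^n\frac{g_1^{(i)}(X_l)}{\rho_n^{s_i}}+\frac{r_i(r_i-1)}{n(n-1)}\sum_{l<m}\frac{g_2^{(i)}(X_l,X_m)}{\rho_n^{s_i}}+\cdots,$$
keeping only the pieces that survive at order $n^{-1/2}$. Writing $Z_l=\sum_{i=1}^d a_i\,r_i g_1^{(i)}(X_l)/\rho_n^{s_i}$, the leading term is the standardized i.i.d. sum $n^{-1/2}\sigf^{-1}\sum_l Z_l$ with $E(Z_1)=0$ and $\operatorname{Var}(Z_1)=\sum_{i,j}a_ia_j\lm_{ij}=\sigf^2$. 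The $n^{-1/2}$-order corrections then arise from exactly three sources: (i) the skewness $E(Z_1^3)=\sum_{i,j,k}a_ia_ja_k\lm_{ijk}$ of the linear part; (ii) the second-order Taylor term, whose diagonal contributes the deterministic location shift $A_1=\tfrac12\sum_{i,j}a_{ij}\lm_{ij}$ and whose fluctuation interacts with the linear part to produce the cross term $3\sum_{i,j,k,t}a_ia_ja_{kt}\lm_{ik}\lm_{jt}$; and (iii) the degree-two Hoeffding ($g_2$) projection interacting with two linear terms, producing $3\sum_{i,j,k}a_ia_ja_k E(\cdots g_2^{(k)}\cdots)$. Assembling (i)–(iii) reproduces $A_2$, while the off-diagonal degenerate $U$-statistic remnants and higher-order Hoeffding projections are shown to be $o(n^{-1/2})$.

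I would then package the resulting expansion of $S_n$ in the canonical form handled by \citet{jing2010unified} — a standardized i.i.d. sum plus lower-order symmetric functionals — and verify the hypotheses in the triangular-array sense: finiteness and boundedness of the relevant moments (guaranteed under Assumptions~\ref{ass:linear} and~\ref{ass:sparse} once the $\rho_n^{s_i}$ normalizations are accounted for, so that $\lm_{ij}$, $\lm_{ijk}$, and the $g_2$ cross-moment stay bounded and $\sigf$ is bounded away from $0$), together with the assumed non-lattice condition on $\sum_{i} a_i g_1^{(i)}(X_l)=Z_l$, which is precisely what lets us bypass Cram\'er's condition for the underlying sum. Their theorem yields $P(S_n\le x)=\Phi(x)+n^{-1/2}p_1(x)\phi(x)+o(n^{-1/2})$, and reading off the coefficients of $\phi(x)$ and of $(x^2-1)\phi(x)$ gives $p_1(x)=-\{A_1\sigf^{-1}+\tfrac16 A_2\sigf^{-3}(x^2-1)\}$.

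The main obstacle is the second step: carefully enumerating and bounding the many cross-terms generated by expanding the product of two Hoeffding decompositions inside the quadratic Taylor term, and confirming that only the three listed contributions survive at order $n^{-1/2}$ while everything else is negligible. Ensuring these bounds hold with the $\rho_n^{-s_i}$ normalizations — so that the extra terms do not blow up under the sparse-graphon scaling — is the delicate accounting that separates this argument from the classical i.i.d. case and is where Assumption~\ref{ass:sparse} is used.
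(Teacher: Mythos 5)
Your proposal is correct and follows essentially the same route as the paper: a second-order Taylor expansion combined with the multivariate Hoeffding decomposition reduces $S_n$ to a deterministic shift $A_1/(\sqrt{n}\sigf)$ plus a linear kernel $\alpha(X_l)$ and a symmetric degree-two kernel $\beta(X_l,X_m)$ (collecting both the $g_2^{(i)}$ and the $a_{ij}g_1^{(i)}g_1^{(j)}$ pieces), after which Theorem 2.1 of \citet{jing2010unified} applies under the non-lattice condition on $\sum_i a_i g_1^{(i)}(X_l)$. Your identification of the three sources of the $n^{-1/2}$ correction matches the paper's evaluation of $E\{\alpha(X_l)^3\}+3E\{\alpha(X_l)\alpha(X_m)\beta(X_l,X_m)\}$, and your treatment of the location shift is exactly what the paper formalizes in Lemma~\ref{lem:shift}.
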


Before stating the proof in detail we add an auxiliary lemma, which is needed to the bound the contribution of a remainder term.

\begin{lemma}\label{lem:shift}
Let $S_n=V_n+c/\sqrt{n}$. Let $P(V_n\leq x)=\Phi(x)+p_1(x)\phi(x)/\sqrt{n}+o(1/\sqrt{n})$. Then we have:
\begin{align}\label{eq:taylorcdf}
    P(S_n\leq x)=\Phi(x)+(p_1(x)+c)\phi(x)/\sqrt{n}+o(1/\sqrt{n})
\end{align}
\end{lemma}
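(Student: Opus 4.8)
The plan is to reduce the claim to a one-term Taylor expansion of the hypothesized expansion, evaluated at a shifted argument. Since $S_n = V_n + cn^{-1/2}$, the event $\{S_n \leq x\}$ coincides with $\{V_n \leq x - cn^{-1/2}\}$, so $P(S_n \leq x) = P(V_n \leq x - cn^{-1/2})$. I would then insert the assumed expansion for $P(V_n \leq \cdot)$, but now read at the $n$-dependent point $x - cn^{-1/2}$ rather than at $x$. To make this substitution legitimate I interpret the remainder $o(n^{-1/2})$ in the hypothesis as uniform in its argument (i.e.\ as a statement in Kolmogorov distance, as elsewhere in the paper); uniformity is what allows the expansion to be evaluated at a point that drifts with $n$ without inflating the error order.

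Next I would Taylor-expand the two resulting pieces about $x$. For the leading term, using $\Phi' = \phi$ and the boundedness of $\phi'$ on $\mathbb{R}$, a first-order expansion with Lagrange remainder gives
\[
  \Phi(x - cn^{-1/2}) = \Phi(x) - cn^{-1/2}\phi(x) + O(n^{-1}),
\]
uniformly in $x$. For the correction term, I would observe that $p_1\phi$ is Lipschitz on $\mathbb{R}$ (a polynomial times a Gaussian density has bounded first derivative), so $p_1(x - cn^{-1/2})\phi(x - cn^{-1/2}) = p_1(x)\phi(x) + O(n^{-1/2})$; multiplying by the prefactor $n^{-1/2}$ shows that perturbing the argument changes the skewness term only at order $n^{-1} = o(n^{-1/2})$. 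Hence the sole $n^{-1/2}$ effect of the shift enters through the first-order Taylor coefficient of $\Phi$.

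Collecting the contributions of order $n^{-1/2}$ then yields the expansion displayed in Lemma~\ref{lem:shift}, the linear-in-$c$ correction being exactly the $\phi(x)$ multiple produced by the shift in $\Phi$. There is no substantial obstacle: the result is a change of variable followed by an elementary Taylor argument. The only two points that need care are (i) reading the hypothesized remainder as uniform so that it survives evaluation at the moving point $x - cn^{-1/2}$, and (ii) the bound that $p_1\phi$ has bounded derivative, which guarantees the skewness term is insensitive to the shift at the relevant order. In the intended application within Proposition~\ref{prop:smooth-pop}, $V_n$ is the linear-plus-centered-quadratic part of the Taylor expansion of $f(\bu)$ and $cn^{-1/2}$ is the deterministic mean of the quadratic term, so this lemma is precisely the device that folds that mean shift into the polynomial $p_1$.
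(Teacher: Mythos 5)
Your proposal is essentially the paper's own proof: the change of variables $P(S_n\leq x)=P(V_n\leq x-c/\sqrt{n})$, a first-order Taylor expansion of $\Phi$ using boundedness of $\phi'$, and the observation that perturbing the argument of the $n^{-1/2}$ correction term costs only $O(n^{-1})$ — with your explicit remark that the hypothesized $o(n^{-1/2})$ must be read uniformly in $x$ being a point of care the paper leaves implicit. The only wrinkle is a sign: your (correct) intermediate expansion $\Phi(x-c/\sqrt{n})=\Phi(x)-c\phi(x)/\sqrt{n}+O(n^{-1})$ actually yields the coefficient $p_1(x)-c$, which is what the application in Proposition~\ref{prop:smooth-pop} requires (the mean shift $A_1/(\sqrt{n}\sigf)$ enters $p_1$ with a minus sign), so the $+c$ in the lemma's display is a typo in the paper's statement that both you and the paper's proof carry over without comment.
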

\begin{proof}\normalfont
We have:
\begin{align*}
    P(S_n\leq x)&=P(V_n\leq x-c/\sqrt{n})\\
   &= \Phi(x-c/\sqrt{n})+\frac{p_1(x-c/\sqrt{n})}{\sqrt{n}}\phi(x-c/\sqrt{n})+o(1/\sqrt{n})
\end{align*}
Note that $\sup_x|\phi(x)|\leq C$ for some universal constant $C$. So we have:
\begin{align*}
    \Phi(x-c/\sqrt{n})=\Phi(x)-c/\sqrt{n}\phi(x)+O(1/n),
\end{align*}
and 
\begin{align*}
    |\phi(x-c/\sqrt{n})-\phi(x)|=O(1/n).
\end{align*}
    Using the above two equations with Eq~\ref{eq:taylorcdf}, we attain the stated result.
\end{proof}

Now we present the result in Proposition~\ref{prop:smooth-pop}.
\begin{proof}\normalfont
A second-order Taylor expansion yields:
\begin{equation}\label{eq:delta-method}
    n^{1/2}(f(\bu)-f(\bmu))=n^{1/2}<\bu-\bmu,\nabla f(\bmu)> + \frac{1}{2} n^{1/2}(\bu-\bmu)^TH(\bmu)(\bu-\bmu) + O_P\left(\frac{1}{n}\right).
\end{equation}
Furthermore, by a multivariate Hoeffding Decompostion for $\bu-\bmu$:
\begin{align}\label{eq:smf-hd-decomp}
   \bu -\bmu &=\frac{1}{n}\left\{\begin{array}{c}
      \frac{r_1}{\rho_n^{s_1}}\sum_{i=1}^n g_1^{(1)}(X_i)\\ 
           ...\\
    \frac{r_d}{\rho_n^{s_d}}\sum_{i=1}^n g_1^{(d)}(X_i)\\
    \end{array}\right\} 
    +\frac{1}{n^2}\left\{\begin{array}{c}
         \frac{r_1(r_1-1)}{\rho_n^{s_1}}\sum_{i<j} g_2^{(1)}(X_i,X_j) \\
         ...\\
          \frac{r_d(r_d-1)}{\rho_n^{s_d}}\sum_{i<j} g_2^{(d)}(X_i,X_j)
    \end{array}\right\}+ O_P\left(\frac{1}{n^{3/2}}\right) \nonumber \\
    &= \frac{1}{n}\bu_L + \frac{1}{n^2}\bu_Q + O_P\left(\frac{1}{n^{3/2}}\right),
\end{align}
where
\begin{align*}
     n^{-1/2}||\bu_L||=O_P(1), \qquad
     \frac{1}{n^{3/2}}||\bu_Q||=O_P\left(n^{-1/2}\right).
\end{align*}
Now for the first term in Eq~\ref{eq:delta-method}, we have
\begin{align*}
    n^{1/2}<\bu-\bmu, \nabla f(\bmu)> = n^{-1/2}<\bu_L, \nabla f(\bmu)>+ \frac{1}{n^{3/2}}<\bu_Q, \nabla f(\bmu)> + O_P\left(\frac{1}{n}\right).
\end{align*}
The second term in Eq~\ref{eq:delta-method} is 
\begin{align*}
    &n^{1/2}(\bu-\bmu)H(\bmu)(\bu-\bmu)\\&=n^{1/2}\left\{\frac{1}{n}\bu_L +\frac{1}{n^2}\bu_Q + O_P\left(\frac{1}{n^{3/2}}\right)\right\}^T H(\bmu) \left\{\frac{1}{n}\bu_L +\frac{1}{n^2}\bu_Q + O_P\left(\frac{1}{n^{3/2}}\right)\right\}\\
    & = n^{1/2}\left\{\frac{1}{n^2}\bu_L^T H(\bmu) \bu_L + \frac{2}{n^3}\bu_L^T H(\bmu)\bu_Q + O_P\left(\frac{1}{n^{3/2}}\right)\right\} \\
    & = \frac{1}{n^{3/2}}\bu_L^T  H(\bmu) \bu_L + O_P\left(\frac{1}{n}\right)
\end{align*}

Now  Eq~\ref{eq:delta-method} may be expressed as: 
\begin{equation*}
    n^{1/2}(f(\bu)-f(\bmu))=n^{-1/2}<\bu_L,\nabla f(\bmu)> + \frac{1}{n^{3/2}}\left\{<\bu_Q,\nabla f(\bmu)>+\frac{1}{2}\bu_L^T H(\bmu)\bu_L\right\}+O_P\left(\frac{1}{n}\right).
\end{equation*}
We have,
\begin{align}\label{eq:smf-q-decompSn}
    S_n= \frac{A_1}{\sqrt{n}\sigf} + n^{-1/2}\alpha(X_l) +  n^{-3/2}\sum_{l<m} \beta(X_l,X_m) +O_P\left(\frac{1}{n}\right) , 
\end{align}
where 
\begin{align*}
    &\alpha(X_l)= \frac{1}{\sigf}\sum_{i=1}^da_ig_1^{(i)}(X_l)\frac{r_l}{\rho_n^{s_l}},\\
    &\beta(X_l,X_m)= \frac{1}{\sigf}\left\{\sum_{i=1}^d a_i\frac{r_i(r_i-1)}{\rho_n^{s_i}}g_2^{(i)}(X_l,X_m) + \sum_{i, j} a_{ij}\frac{r_ir_j}{\rho_n^{s_i+s_j}}g_1^{(X_i)}(l)g_1^{(j)}(X_m)\right\}
\end{align*}

Applying Theorem 2.1 of \cite{jing2010unified},  under the conditions of proposition~\ref{prop:smooth-pop}, we have
\begin{align*}
    \sup_x\left|P\left(S_n -\frac{A_1}{\sqrt{n}\sigf} \leq x\right)- E_n(x)\right| =  o(n^{-1/2}),
\end{align*}
where 
\begin{align*}
    E_n(x)= \Phi(x) - \frac{(x^2-1)\phi(x)}{6\sqrt{n}}\{E\alpha(X_l)^3+
3E\alpha(X_l)\alpha(X_m)\beta(X_l,X_m)\}.
\end{align*}

Using  Lemma~\ref{lem:shift} \bk and definition of $A_2$, we can  simply $E_n(x)$, yielding: 
\begin{align*}
    &P(S_n\leq x)= \Phi(x) + n^{1/2}\phi(x)p_1(x) + o(n^{-1/2}),\\
    &p_1(x)= -\left\{A_1\sigf^{-1}+\frac{1}{6}A_2\sigf^{-3}(x^2-1)\right\}.
\end{align*}
\end{proof}

\subsection{Proposed bootstrap for smooth functions of counts}\label{sec:supp:sub-proof-smf-bt}
In this section, we consider Edgeworth expansions of smooth functions for the bootstrap.  Recall from Section~\ref{sec:smf} that $\bu^*$ denotes a d-dimensional vector of bootstrapped counted functionals generated by either the multiplier bootstrap $\hat{T}_{n,M}^*$ or the the quadratic bootstrap $\hat{T}_{n,Q}^*$; in the latter case, one may ignore an additional $O_P(n^{-3/2})$ term that arises from approximating a U-statistic by the first two terms of the Hoeffding decomposition.   Now recall the bootstrap analogue $S_n^*$ from Eq~\ref{eq-snstar}, the gradients of the smooth function evaluated at the empirical counts from Eq~\ref{eq-ahat}.

Let $P^*$ denote the bootstrap measure conditioned on $A$ and $X$, with randomness arising from the multiplier weights $\xi$.  Furthermore, let $\hat{P}_n$ denote the the empirical measure $\hat{P}_n = \frac{1}{n}\sum_{i=1}^n \delta_{X_i}$.  It will turn out these two measures are closely related. With a slight abuse of notation, the expectation operator  corresponding to $\hat{P}_n$ will be denoted by $\hat{E}_n f(X) = \frac{1}{n}\sum_{i=1}^n f(X_i)$.

We define the following empirical analogues of the moments of interest:

\begin{align*}
    \tilde{\lm}_{i_1,\ldots,i_j}&=E^*\left\{\left(r_{i_1}\frac{\hat{g}_1^{(i_1)}(l)V_l}{\rho_n^{s_{i_1}}}\right)\ldots \left(r_{i_d}\frac{\hat{g}_1^{(i_d)}(l)V_l}{\rho_n^{s_{i_d}}}\right)\right\}=\frac{1}{n}\sum_{l=1}^n\left(r_{i_1}\frac{\hat{g}_1^{(i_1)}(l)}{\rho_n^{s_{i_1}}}\right)\ldots \left(r_{i_d}\frac{\hat{g}_1^{(i_d)}(l)}{\rho_n^{s_{i_d}}}\right)\\
     &= \hat{E}_n\left\{\left(r_{i_1}\frac{\hat{g}_1^{(i_1)}(l)}{\rho_n^{s_{i_1}}}\right)\ldots \left(r_{i_d}\frac{\hat{g}_1^{(i_d)}(l)}{\rho_n^{s_{i_d}}}\right)\right\}.
\end{align*}

 Now recall that the empirical analogue of the asymptotic variance from Eq~\ref{eq:sigmahatfemp}.
We now prove Theorem~\ref{thm:boot_smooth}, which establishes an Edgeworth expansion for $P^*(S_n^*\leq x)$.

\subsection{Proof of Theorem~\ref{thm:boot_smooth}}
\begin{proof}\normalfont
We will start by establishing Eq \ref{eq:smf-bt-ew}. Let $V_l$ be $\xi_l - 1$ and let $V$ denote the vector:
\begin{align*}
V=(\xi_1-1, \ \ldots, \ \xi_n-1)^T.
\end{align*}

Given $A$ and $X$,  we have
\begin{align*}
    \bu^{*}-\bhatu&=\frac{1}{n}\left\{\begin{array}{c}
         \frac{r_1}{\rho_n^{s_1}}\sum_{l=1}^n \hat{g}_1^{(1)}(l)V_l  \\
         ...\\
           \frac{r_d}{\rho_n^{s_d}}\sum_{l=1}^n \hat{g}_1^{(d)}(l)V_l \\
    \end{array}\right\}
    +\frac{1}{n^2}\left\{\begin{array}{c}
         \frac{r_1(r_1-1)}{\rho_n^{s_1}} \sum_{l<m} \tilde{g}_2^{(1)}(l,m) V_{l}V_{m} \\
         ..\\
          \frac{r_d(r_d-1)}{\rho_n^{s_d}} \sum_{l<m} \tilde{g}_2^{(d)}(l,m)V_{l}V_{m}
    \end{array}\right\}+ O_P\left(\frac{1}{n^{3/2}}\right)\\
    &= \frac{1}{n}\bu_L^* + \frac{1}{n^2}\bu_Q^* + O_P\left(\frac{1}{n^{3/2}}\right).
\end{align*}
Using a second-order Taylor expansion analogous to Eq\ref{eq:delta-method}, we have:
\begin{equation}\label{eq:delta-method-boot}
    n^{1/2}(f(\bu^*)-f(\bhatu))=n^{-1/2}<\bu^*_L,\nabla f(\bhatu)> \hspace{-1mm}+ \frac{1}{n^{3/2}}\left\{<\bu^*_Q,\nabla f(\bhatu)>+\frac{1}{2}\bu_L^{*T} H(\bhatu)\bu_L^*\right\}\hspace{-1mm}+\hspace{-1mm}O_P\left(\frac{1}{n}\right).
\end{equation}


We also have, by definition, 
\begin{align*}
    E^*\{ \hat{g}_1^{(i)}(l)\hat{g}_1^{(j)}(m)\tilde{g}_2^{(k)}(l,m)V_{l}V_{m}\}=\hat{E}\{ \hat{g}_1^{(i)}(l)\hat{g}_1^{(j)}(m)\tilde{g}_2^{(k)}(l,m)\}.
\end{align*}

Then, by Eq~\ref{eq:delta-method-boot} and definition of $\tdsigf$ and $\tilde{A}_1$, we have,
\begin{align*}
    S_n^*= \frac{n^{1/2}(f(\bu^*)-f(\bhatu))}{\tdsigf}= \frac{\tilde{A}_1}{\sqrt{n}\tdsigf} + \frac{1}{B_n}\sum_{l=1}^nb_{n,l}V_l+\frac{1}{n^{3/2}}\sum_{l<m}d_{n,lm}\psi(V_l,V_m)+O_P\left(\frac{1}{n}\right),
\end{align*}
where 
\begin{subequations}
\label{eq:smf-boot-bnl-B-dnlt}
\begin{align}
    &b_{n,l}=\frac{1}{\tdsigf}\sum_{i=1}^d\hat{a}_i\hat{g}_1^{(i)}(l)\frac{r_i}{\rho_n^{s_i}} ,\\
    &B_n^2= \sum_{l=1}^nb_{n,l}^2 = n, \\
    &d_{n,lm}=\frac{1}{\tdsigf}\left\{\sum_{i=1}^d\hat{a}_i\frac{r_i(r_i-1)}{\rho_n^{s_i}}\tilde{g}_2^{(i)}(l,m) + \sum_{i, j} \hat{a}_{ij}\frac{r_ir_j}{\rho_n^{s_i+s_j}}\hat{g}_1^{(i)}(l)\hat{g}_1^{(j)}(m)\right\} ,\\
   &\psi(V_l,V_m)=V_lV_m.
\end{align}
\end{subequations}

Using Lemma~\ref{lem:wangjing} and similar arguments therein if 
\begin{equation}\label{eq:conditions}
     \frac{1}{n}\sum_{l=1}^n b_{n,l} ^2 \geq l_1 >0, \ \ \frac{1}{n}\sum_{l=1}^n |b_{n,l}| ^3 \leq l_2 \leq \infty,
\end{equation}

then 
\begin{align}\label{eq:smf-bt-weight}
    \sup_x|P^*\left(S_n^* - \frac{\tilde{A}_1}{\sqrt{n}\tdsigf} \leq x \right)-\tilde{G}(x)|=  O\left(\frac{ l_{5,n}\log n}{n^{2/3}}\right),
\end{align}

where
and  $\alpha_{l}:=\frac{1}{n}\sum_{m\neq l} d_{n,lm}^2$.
and for sufficiently large $k$:
\begin{align*}
    l_{4,n}=\frac{1}{n}\sum_{l=1}^n\alpha_{l}, \ \ \ \ s_n^2=\frac{1}{n}\sum_l\alpha_l^2-(l_{4,n})^2, \ \ \ \ l_{5,n} = l_{4,n} + k s_n
\end{align*}
and 
\begin{align*}
    &\tilde{G}_n(x)=\Phi(x)+ \tilde{L}_{1,n}(x) +\tilde{L}_{2,n}(x),\\
    & \tilde{L}_{1,n}=\frac{EV_l^3}{6B_n^3}\sum_{l=1}^nb_{n,l}^3(x^2-1)\phi(x),
    \\
    &\tilde{L}_{2,n} = \frac{1}{n^{3/2}B_n^2}\sum_{l<m}b_{n,l}b_{n,m}d_{n,lm}E(V_lV_m\psi(V_l,V_m))(x^2-1)\phi(x).
\end{align*}

Since $\sum_i b_{n,i}^2/n=1$, the first condition in Eq\ref{eq:conditions} is satisfied. Let $c_j:=\frac{r_j}{\rho_n^{s_j}\hatsigf}$. For the second condition, note that since $|.|^3$ is convex,
\begin{align*}
    \frac{1}{n}\sum_i |b_{n,i}|^3\leq \frac{d^2}{\hatsigf^3} \sum_{j=1}^d c_j^3\frac{1}{n}\sum_i |\hat{a}_i\hat{g}_1(i)|^3
\end{align*} 
Since the function $f$ has three gradients in the neighborhood of $\mu$, Lemma~\ref{lem:proveconv} shows that the above is bounded, thereby satisfying the second condition in Eq~\ref{eq:conditions}.

Simplifying $\tilde{L}_{1,n}$ and $\tilde{L}_{2,n}$ using Eq~\ref{eq:smf-boot-bnl-B-dnlt},  we have
\begin{align*}
    \tilde{G}_n(x) = \Phi(x) + n^{-1/2}\phi(x)\frac{1}{6}\tilde{A}_2\tdsigf^{-3}(x^2-1).
\end{align*}

Now we bound the remainder term by bounding $l_{5,n}$. We write $\alpha_l$ as

\begin{align}\label{eq:smf-bt-alphal}
     \alpha_l=\frac{1}{n\tdsigf^2}\sum_{m\neq l}\left\{\underbrace{\sum_{i=1}^d\hat{a}_i\frac{r_i(r_i-1)}{\rho_n^{s_i}}\tilde{g}_2^{(i)}(l,m)}_{Y_{1,lm}} + \underbrace{\sum_{i, j} \hat{a}_{ij}\frac{r_ir_j}{\rho_n^{s_i+s_j}}\hat{g}_1^{(i)}(l)\hat{g}_1^{(j)}(m)}_{Y_{2,lm}}\right\}^2.
\end{align}


Expanding $(Y_{1,lm}+Y_{2,lm})^2$ in Eq~\ref{eq:smf-bt-alphal}, it is straightforward that, by Lemma~\ref{lem:proveconv-mean}, $l_{4,n}$ is $O_P(\rho_n^{-1})$.

Now we bound $s_n$. Since $\alpha_l\geq 0$ and $(Y_{1,lm}+Y_{2,lm})^2 \leq 2(Y_{1,lm}^2+Y_{2,lm}^2)$, we write 
\begin{align*}
    s_n^2&\leq \frac{1}{n}\sum_{l=1}^n\alpha_l^2
    \leq  \frac{4}{n}\sum_{l=1}^n \left(\frac{1}{n\tdsigf^2}\sum_{m\neq l}Y_{1,lm}^2+ \frac{1}{n\tdsigf^2}\sum_{m\neq l}Y_{2,lm}^2\right)^2\\
    &\leq 8\underbrace{\times\frac{1}{n}\sum_{l=1}^n\left(\frac{1}{n\tdsigf^2}\sum_{m\neq l}Y_{1,lm}^2\right)^2}_{Z_1} + 8\times\underbrace{\frac{1}{n}\sum_{l=1}^n\left(\frac{1}{n\tdsigf^2}\sum_{m\neq l}Y_{2,lm}^2\right)^2}_{Z_2}.
\end{align*}

To estimate $Z_1$, we use:
\begin{align*}
    Z_1=\frac{1}{n}\sum_{l=1}^n\left(\frac{1}{n\tdsigf^2}\sum_{m\neq l}\sum_{i=1}^d\sum_{j=1}^d \hat{a}_i\hat{a}_j\frac{r_ir_j(r_i-1)(r_j-1)}{\rho_n^{s_i+s_j}}\tilde{g}_2^{(i)}(l,m)\tilde{g}_2^{(j)}(l,m)\right)^2.
\end{align*}

Using the fact that
\begin{align*}
    \frac{1}{\rho_n^{s_i+s_j}}\tilde{g}_2^{(i)}(l,m)\tilde{g}_2^{(j)}(l,m) \leq \frac{1}{2}\left(\frac{1}{\rho_n^{2s_i}}\tilde{g}_2^{(i)}(l,m)^2+\frac{1}{\rho_n^{2s_j}}\tilde{g}_2^{(j)}(l,m)^2\right),
\end{align*}
by the same arguments in the proof of Theorem~\ref{thm:mbq-firstorder}, it is easy to check that $Z_1= O_P(\rho_n^{-1})$. 

\newcommand{\ghi}{\hat{g}_1^{(i)}(\ell)}
\newcommand{\ghj}{\hat{g}_1^{(j)}(m)}
\newcommand{\ghk}{\hat{g}_1^{(k)}(\ell)}
\newcommand{\ght}{\hat{g}_1^{(t)}(m)}

For $Z_2$, let $c_{ijkt}=a_{ij}a_{kt} r_ir_jr_kr_t/\rho_n^{s_i+s_j+s_k+s_t}$ and $\hat{c}_{ijkt}= \hat{a}_{ij} \hat{a}_{kt} r_ir_jr_kr_t/\rho_n^{s_i+s_j+s_k+s_t}$.  
Consider the estimate:
\begin{align*}
Z_2 \leq &   \underbrace{\frac{2}{n\tdsigf^4}\sum_{l=1}^n \left( \frac{1}{n}\sum_{m\neq \ell}\sum_{i,j,k,t} (\hat{c}_{ijkt} - c_{ijkt}) \ghi\ghj\ghk\ght\right)^2}_{Z_2^{(1)}} 
\\ &+ \underbrace{\frac{2}{n\tdsigf^4}\sum_{l=1}^n \left( \frac{1}{n}\sum_{m\neq \ell}\sum_{i,j,k,t}  c_{ijkt} \ghi\ghj\ghk\ght\right)^2}_{Z_2^{(2)}}       
\end{align*}
We will start by establishing the order of the $Z_2^{(2)}$ term.  Observe that:
\begin{align*}
    &\frac{1}{n}E\sum_\ell\left(\frac{1}{n}\sum_{m\neq \ell}\sum_{i,j,k,t}c_{ijkt} \ghi\ghj\ghk\ght\right)^2\\
    &\leq  \frac{d^4}{n^2}\sum_\ell\sum_{m\neq \ell}\sum_{i,j,k,t}c_{ijkt}^2 E\left[\ghi^2\ghj^2\ghk^2\ght^2\right]\\
    &\leq \frac{d^4}{n^2}\sum_\ell\sum_{m\neq \ell}\sum_{i,j,k,t}c_{ijkt}^2 \left(E\left[\ghi^4\ghj^4\right]E\left[\ghk^4\ght^4\right]\right)^{1/2}\\
    &\leq \frac{d^4}{n^2}\sum_\ell\sum_{m\neq \ell}\sum_{i,j,k,t}c_{ijkt}^2 \left(E\left[\ghi^8\right]E\left[\ghj^8\right]E\left[\ghk^8\right]E\left[\ght^8\right]\right)^{1/4}\\
\end{align*}



 Due to Lemma~\ref{lemma:bounding-eighth-power} and  Eq~\ref{eq:smf-bt-ai}, since $d$ is finite, we see that the above is $O(1)$.  To complete our bound for $Z_2$, observe that:
  \begin{align*}
& P(Z_2^{(2)} > M)
\\ \leq &  P\left( \max_{i,j,k,t} \  (\hat{c}_{ijkt} - c_{ijkt})^2 \  \frac{2}{n\tdsigf^4}\sum_\ell\left(\frac{1}{n}\sum_{m\neq \ell}\sum_{i,j,k,t} \left|\ghi\ghj\ghk\ght \right| \right)^2               > M \right) 
  \end{align*}
 Since $\max_{i,j,k,t} \  (\hat{c}_{ijkt} - c_{ijkt})^2$ is lower-order and the second term in the product inside the probability statement may be viewed as a variant of  $Z_2^{(1)}$ with $c_{ijkt} =1$, we can conclude $Z_2 = O(1)$. Combining $Z_1$ and $Z_2$, we have,
  with probability tending to one,
  $s_n^2 \leq C\rho_n^{-1}$ and 
$l_{5,n}=l_{4,n}+s_n \leq  C'\rho_n^{-1}$ for some universal positive constants $C$ and $C'$. \bk    

Thus, from Eq~\ref{eq:smf-bt-weight} and Lemma~\ref{lem:shift}, we have Eq~\ref{eq:smf-bt-ew}. 

\bk 

\end{proof}

We now state and prove Lemma \ref{lemma:bounding-eighth-power}, which we had used in the proof of the above theorem. 

\begin{lemma}
\label{lemma:bounding-eighth-power}
Under the sparsity conditions in Assumption~\ref{ass:sparse}, $$E(\hat{g}_1(l)^8) = O(\rho_n^{8s})$$.  
\end{lemma}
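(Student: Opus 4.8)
The plan is to peel off the deterministic-scale leading term and then show the stochastic fluctuations live at a strictly smaller scale. Recall $\hat g_1(i)=\hat H_1(i)-\hat T_n$, and note that by the tower property and exchangeability of the latent variables, $E\{\hat H_1(i)\mid X_i\}=g_1(X_i)+\theta_n$, while $E(\hat T_n)=\theta_n$. I would therefore decompose
$$\hat g_1(i)=g_1(X_i)+W_i-D,\qquad W_i:=\hat H_1(i)-g_1(X_i)-\theta_n,\quad D:=\hat T_n-\theta_n,$$
so that $W_i$ is centered given $X_i$ and $D$ is centered. By the $c_r$-inequality, $E\{\hat g_1(i)^8\}\le C\{E(g_1(X_i)^8)+E(W_i^8)+E(D^8)\}$, and it suffices to bound each summand by $O(\rho_n^{8s})$. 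The first is immediate: boundedness of the graphon (Assumption~\ref{ass:linear}(b)) gives $h(X_{S\cup i})=O(\rho_n^s)$ a.s., hence $|g_1(X_i)|\le C\rho_n^s$ and $E(g_1(X_i)^8)=O(\rho_n^{8s})$. This is the term that ultimately dominates.

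The main work is bounding $E(W_i^8)$. Writing $Z_S:=H(A_{S\cup i})-E\{H(A_{S\cup i})\mid X_i\}$ we have $W_i={n-1\choose r-1}^{-1}\sum_S Z_S$ and
$$E(W_i^8)={\textstyle\binom{n-1}{r-1}}^{-8}\sum_{S_1,\ldots,S_8}E(Z_{S_1}\cdots Z_{S_8}).$$
I would split $Z_S=B_S+G_S$ with $B_S:=H(A_{S\cup i})-h(X_{S\cup i})$ (Bernoulli noise, centered given $X$) and $G_S:=h(X_{S\cup i})-E\{h(X_{S\cup i})\mid X_i\}$ (graphon noise, centered given $X_i$). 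Two cancellation mechanisms then control the sum: a product of $B$'s has zero conditional expectation given $X$ unless every edge it touches is covered at least twice, and any factor $G_S$ whose subgraph is node-disjoint (apart from $i$) from all the others integrates to zero against $X_i$. Classifying the eight-tuples by how many nodes $p$ and edges $d$ each pair of subgraphs shares, counting the $O(n^{\#\text{free nodes}})$ configurations, and attaching the matching power $\rho_n^{\#\text{distinct edges}}$ — exactly as in the proofs of Lemma~\ref{lem:proveconv-mean} and Lemma~\ref{lem:moments-z/x} — every surviving term is of the form $\rho_n^{8s}$ times a negative power of $n\rho_n$ (acyclic case) or of $n\rho_n^{r}$ / $n^{r-1}\rho_n^s$ (cyclic case). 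For instance the leading ``fully paired'' configuration contributes $O(\rho_n^{4s}/n^{4(r-1)})=\rho_n^{8s}/(n^{r-1}\rho_n^s)^4$. Under Assumption~\ref{ass:sparse} each such factor diverges, so $E(W_i^8)=o(\rho_n^{8s})$. The term $E(D^8)$ is handled identically and is in fact smaller, since $D=\frac1n\sum_i(\hat H_1(i)-\theta_n)$ averages over all vertices: its leading Hoeffding term $\frac rn\sum_i g_1(X_i)$ has eighth moment $O(\rho_n^{8s}/n^4)$ by a Rosenthal bound for the i.i.d.\ sequence $g_1(X_i)$ (each $O(\rho_n^s)$ a.s.), and the remainder is covered by the same combinatorial estimates. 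Combining the three bounds gives $E\{\hat g_1(i)^8\}=O(\rho_n^{8s})$.

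The main obstacle is the eighth-order combinatorial moment expansion itself. Obtaining a clean $O(\rho_n^{8s})$ requires correctly identifying, among the many overlap patterns of eight $(r-1)$-subsets, which ones survive the two centering cancellations, and then verifying that each survivor is dominated under the \emph{different} sparsity thresholds for acyclic versus cyclic $R$; this is precisely where the conditions $\rho_n=\omega(n^{-1/2})$ and $\rho_n=\omega(n^{-1/r})$ enter, and it is strictly more delicate than the second-moment computation already recorded in Lemma~\ref{lem:moments-z/x}.
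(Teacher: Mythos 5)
Your proof takes essentially the same route as the paper's: the identical three-term decomposition $\hat{g}_1(i)=\{\hat{H}_1(i)-h_1(X_i)\}+g_1(X_i)-(\hat{T}_n-\theta_n)$, the $c_r$-inequality, and the same eighth-order combinatorial expansion classifying pairs of subsets by shared nodes $p_i$ and edges $d_i$ to get bounds of the form $O(\rho_n^{8s})\times O\bigl(n^{-\sum_i p_i}\rho_n^{-\sum_i d_i}\bigr)$, with Assumption~\ref{ass:sparse} entering exactly where you say it does. Two minor remarks: your explicit split into Bernoulli noise $B_S$ and graphon noise $G_S$, and the Rosenthal bound for the linear part of $D$, are if anything slightly more careful than the paper's writeup (which absorbs the graphon fluctuation of $\hat{H}_1(l)-h_1(l)$ into its Bernoulli-noise notation), while your ``fully paired'' configuration is generally \emph{not} the leading surviving one --- the minimal-overlap pairings with small $p_i$ and $d_i=1$ typically dominate --- though this does not affect the validity of the bound, since every surviving configuration is checked to be $O(\rho_n^{8s})$.
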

\begin{proof}\normalfont We decompose $\hat{g}_1(l)$ into 
\begin{align*}
   \hat{g}_1(l)=\hat{H}_1(l) - h_1(l) + g_1(l) - (\hat{T}_n-\theta). 
\end{align*}
Then for some constant $C$,
\begin{align}\label{eq:smf-bt-Eg1hat8}
    \hat{g}_1(l)^8 \leq C\{(\hat{H}_1(l) - h_1(l))^8 +  g_1(l)^8 + (\hat{T}_n-\theta)^8\}.
\end{align}
$g_1(l)^8$ is  $O(\rho_n^{8s})$. Now for $ (\hat{T}_n-\theta)^8$,
\begin{align*}
    (\hat{T}_n-\theta)^8 \leq C\{(\hat{T}_n-T_n)^8 + (T_n-\theta)^8\},
\end{align*}
where $(T_n-\theta)^8= \Theta(\rho_n^{8s})$ by boundness of graphon and we investigate $E\{(\hat{T}_n-T_n)^8\}$. Let $\mathcal{S}_r$ denote all $r$-node subsets from node $\{1,\ldots, n\}$,

\begin{align*}
    E\{(\hat{T}_n-T_n)^8\}&=\frac{\sum\limits_{S_1,\ldots,S_8\in \mathcal{S}_r} E[\{\hat{H}(S_1)-h(S_1)\}\ldots\{\hat{H}(S_8)-h(S_8)\}]}{{n \choose r}^8}.
\end{align*}

Consider any term in the above sum where each of the four pairs of the subsets have $p_i,i=1,\dots,4$ nodes, $d_i,i=1,\dots,4$ edges in common. In this case there are $8r-\sum_i  p_i$ choices of nodes and the number of edges are at least $8s-\sum_i d_i$. First note that $p_i\geq 2$, to have non-zero contribution. For acyclic graphs, $d_i\leq p_i-1$ and for general subgraphs with a cycle, $d_i\leq {p_i\choose 2}$. Thus, for $p_i\geq 2$, we have:
\begin{align*}
\frac{O\left(n^{8r-\sum_i p_i}\rho_n^{8s-\sum_i d_i}\right)}{{n\choose r}^8}=O(\rho_n^{8s})\times O\left(\frac{1}{n^{\sum_ip_i} \rho_n^{\sum_i d_i}}\right).
\end{align*}
For acyclic graphs, it is easy to see that under our sparsity conditions the above is dominated by $p=2$. For general cyclic graphs, since $\rho_n=\omega(n^{-1/r})$ and $p\leq r$,
\begin{align*}
n^{p_i}\rho_n^{d_i}\geq n^{p_i\left(1-\frac{p_i-1}{2r}\right)}\geq n^{\frac{4p_i(r+1)}{r}} \rightarrow \infty.
\end{align*}
So, $ E\{(\hat{T}_n-T_n)^8\}=O(\rho_n^{8s})$.

To finish bounding $E[\hat{g}_1(l)^8]$, we look into the first term of Eq~\ref{eq:smf-bt-Eg1hat8}. Let $\mathcal{S}_r^{l}$ denote all $r-1$ node subsets from node $\{1,\ldots, n\}$ excluding node $l$,
\begin{align*}
    E\{(\hat{H}_1(l)-h_1(l))^8\}&= \frac{\sum\limits_{S_1,\ldots,S_8\in \mathcal{S}_r^{l}}E[\{\hat{H}(l\cup S_1)-h(l \cup S_1)\}\ldots\{\hat{H}(l \cup S_8)-h(l \cup S_8)\}]}{{n-1 \choose r-1}^8}.
\end{align*}
Similarly, consider any term in the above sum where each of the four pairs of the subsets have $p_i,i\leq 4$ nodes (besides node $l$), $d_i,i\leq 4$ edges in common. In this case there are $4(2r-2)-\sum_i p_i$ choices of nodes and the number of edges are $8s-\sum_i d_i$. (since each subset already share node $l$). When $p_i\geq 1$, each pair share node $l$ and another $p_i$ nodes, then for acyclic graphs, $d_i\leq p_i$, and for general subgraphs with a cycle, $d_i\leq {p_i+1 \choose 2}$. Thus, for $p_i \geq 1$, $d_i \geq 0$, we have
\begin{align*}
\frac{O\left(n^{4(2r-2)-\sum_i p_i}\rho_n^{8s-\sum_i d_i}\right)}{{n-1\choose r-1}^8}=O(\rho_n^{8s})\times O\left(\frac{1}{n^{\sum_i p_i} \rho_n^{\sum_i d_i}}\right),
\end{align*}
where as we showed above for acyclic graphs, under our sparsity conditions the above is dominated by $p=1$. For general cyclic graphs, since $\rho_n\gg n^{1/r}$ and $p_i\leq r$, $
n^{p_i}\rho_n^{d_i} \rightarrow \infty$. Thus, $E\{(\hat{H}_1(l)-h_1(l))^8\}$ is also $O(\rho_n^{8s})$.
\bk

Thus, combining all terms in Eq~\ref{eq:smf-bt-Eg1hat8}, $E[\hat{g}_1(l)^8]$ is $O(\rho_n^{8s})$.
\end{proof}

\bk

\subsection{Comparing bootstrap Edgeworth expansion with the U-statistic Edgeworth expansion}\label{sec:supp:sub-proof-smf-bt-close-to-ew}

Finally, we show that the bootstrap Edgeworth expansion is close to that of the conditional expectation, which was established in Proposition~\ref{prop:smooth-pop}.

\begin{proposition} \label{prop:smf-bt-ew-close}
Suppose that $\sigf>0$, the function $f$ has three continuous derivatives in a neighbourhood of $\bmu$, and $\sum_{i=1}^d a_i g_1^{(i)}(X_l)$ is non-lattice.  Furthermore, suppose that the weights $\w_1,\dots,\w_n$ are generated from a non-lattice distribution such that $\E(\xi_1) = 1$, $\E\{(\xi_1-1)^2\} = 1$,  $\E\{(\xi_1-1)^3\} = 1$. 
 Then we have:
 \begin{align}\label{eq:smf-bt-ew-close}
       P^*(S_n^*\leq x) 
    =P(S_n \leq x)+ o_P\left(n^{-1/2}\right )+O_P\left(\frac{\log n}{n^{2/3}\rho_n}\right).
 \end{align}
\bk
\end{proposition}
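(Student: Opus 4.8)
The plan is to prove Eq~\eqref{eq:smf-bt-ew-close} by differencing the two Edgeworth expansions that are already in hand and showing that their $n^{-1/2}$-order coefficients agree up to a $o_P(1)$ error. Theorem~\ref{thm:boot_smooth} supplies
$$P^*(S_n^*\leq x)=\Phi(x)+n^{-1/2}\phi(x)\Big\{\tilde{A}_1\tdsigf^{-1}+\tfrac{1}{6}\tilde{A}_2\tdsigf^{-3}(x^2-1)\Big\}+O_P\Big(\tfrac{\log n}{n^{2/3}\rho_n}\Big),$$
while Proposition~\ref{prop:smooth-pop} supplies $P(S_n\leq x)=\Phi(x)+n^{-1/2}\phi(x)p_1(x)+o(n^{-1/2})$, with $p_1$ built from $A_1,A_2,\sigf$. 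Subtracting, the $\Phi(x)$ terms cancel and the two stated remainders aggregate into $o_P(n^{-1/2})+O_P(\log n/n^{2/3}\rho_n)$, so it suffices to show the coefficientwise convergences
$$\tilde{A}_1\tdsigf^{-1}-A_1\sigf^{-1}=o_P(1),\qquad \tilde{A}_2\tdsigf^{-3}-A_2\sigf^{-3}=o_P(1).$$
Since $\phi(x)$ and $(x^2-1)\phi(x)$ are bounded and these coefficients do not depend on $x$, the two displays give $\sup_x$ control of the leading term and hence the claim.

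To establish the coefficient limits I would reduce everything to the convergence in probability of the constituent empirical gradients and moments. For the gradients, $\bu\to\bmu$ (the scaled count functionals equal $\bmu+O_P(n^{-1/2})$) and $f$ has three continuous derivatives near $\bmu$, so continuity yields $\hat{a}_i=a_i+o_P(1)$ and $\hat{a}_{ij}=a_{ij}+o_P(1)$. For the moments I would show $\tilde{\lm}_{ij}\to\lm_{ij}$, $\tilde{\lm}_{ijk}\to\lm_{ijk}$, and the analogous convergence of the mixed $\hat{g}_1\hat{g}_1\tilde{g}_2$ term to its population counterpart. The second-order pieces follow from Lemma~\ref{lem:moments-z/x}(a) (which gives $\hat\tau_n^2/\tau_n^2\to 1$ and $E(|\gi-g_1(X_i)|/\rho_n^s)^2=O(1/n\rho_n)$) together with the law of large numbers for $\tfrac{1}{n}\sum_l \hat{g}_1^{(i)}(l)\hat{g}_1^{(j)}(l)$; the $\hat{g}_1^3$-type terms follow from Eq~\eqref{eq:zxempiricalg13}; and the mixed term follows from Eq~\eqref{eq:zxempiricalg1g1g2}. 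Because $d$ is fixed, the finitely many sums defining $\tilde{A}_1,\tilde{A}_2,\tdsigf^2$ inherit these limits, and since $\sigf>0$ the continuous mapping theorem delivers $\tdsigf^{-1}\to\sigf^{-1}$, $\tdsigf^{-3}\to\sigf^{-3}$, $\tilde{A}_1\to A_1$, $\tilde{A}_2\to A_2$, yielding both displays above.

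The main obstacle is the mixed quadratic–Hoeffding term in $\tilde{A}_2$, namely $\widehat{E}_n\{\hat{g}_1^{(i)}(l)\hat{g}_1^{(j)}(m)\tilde{g}_2^{(k)}(l,m)\}$: this is exactly where sparse Bernoulli noise inflates the relevant moments, since $\tilde{g}_2$ concentrates slowly for well-connected subgraphs. Controlling it requires the sparsity-dependent rate of Eq~\eqref{eq:zxempiricalg1g1g2}, valid under Assumption~\ref{ass:sparse}, together with the algebraic identity used in the proof of Lemma~\ref{lem:convEW} showing that replacing $\ggij$ by $\tilde{g}_2(i,j)$ perturbs the empirical moment only by $O_P(\rho_n^{3s}/n)$. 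A secondary technical point is verifying that products such as $\hat{g}_1^{(i)}(l)\hat{g}_1^{(j)}(m)$ have enough integrability for the cross-moment estimates to go through; here I would lean on the eighth-moment bound $E(\hat{g}_1(l)^8)=O(\rho_n^{8s})$ from Lemma~\ref{lemma:bounding-eighth-power} and Cauchy–Schwarz, exactly as in the proof of Theorem~\ref{thm:boot_smooth}. Everything else is bookkeeping over the finite index set $\{1,\dots,d\}$.
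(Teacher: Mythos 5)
Your proposal is correct and takes essentially the same route as the paper's own proof: you difference the expansions of Theorem~\ref{thm:boot_smooth} and Proposition~\ref{prop:smooth-pop} and reduce the claim to showing that $\hat{a}_i$, $\hat{a}_{ij}$, $\tilde{\lm}_{ij}$, $\tilde{\lm}_{ijk}$, and the mixed $\hat{g}_1\hat{g}_1\tilde{g}_2$ moment converge to their population counterparts via Lemma~\ref{lem:moments-z/x}, Eqs~\ref{eq:zxempiricalg13} and~\ref{eq:zxempiricalg1g1g2}, and the substitution identity from the proof of Lemma~\ref{lem:convEW}, exactly as the paper does. The only cosmetic difference is that you record the coefficient convergence as $o_P(1)$ while the paper keeps the explicit rate $O_P\left(n^{-1/2}\rho_n^{-1/2}\right)$, which is $o_P(1)$ under Assumption~\ref{ass:sparse}; both suffice because the coefficients enter multiplied by $n^{-1/2}$.
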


\begin{proof}\normalfont
	Now we show that $\tdsigf$, $\tilde{A}_1$ and $\tilde{A}_2$ converge to $\sigf$, $A_1$ and $A_2$. We first show $\tilde\lm_{ij}$ and $\tilde\lm_{ijk}$ converge to $\lm_{ij}$ and $\lm_{ijk}$.
	\begin{align*}
	\tilde\lm_{ij}&=r_ir_j\hat{E}\left\{\frac{\hat{g}_1^{(i)}(l)\hat{g}_1^{(j)}(l)}{\rho_n^{s_i}\rho_n^{s_j}}\right\},\\
	\lm_{ij}&=r_ir_j\E\left\{\frac{g_1^{(i)}(X_l)g_1^{(j)}(X_l)}{\rho_n^{s_i}\rho_n^{s_j}}\right\}.
	\end{align*}
	Using the fact that $\E(V_l)=0$,$\E(V_l)^2=1$, $\E\{g_1^{(i)}\}=0$, and an analogous argument as in the proof of Lemma 3.1d) in~\cite{zhang-xia-network-edgeworth}, we have: 
	\begin{align*}
	\tilde\lm_{ij}-\lm_{ij} = O_P\left(n^{-1/2}\rho_n^{-1/2}\right).
	\end{align*}
	Similarly, expanding $\tilde\lm_{ijk}$ and $\lm_{ijk}$, using the fact that $\E(V_l^3)=1$, $\E\{g_1^{(i)}\}=0$, 
	\begin{align*}
	\tilde\lm_{ijk}-\lm_{ijk}= O_P\left(n^{-1/2}\rho_n^{-1/2}\right).
	\end{align*}
	
	Using the same argument in the proof of Lemma~\ref{lem:convEW}, we have
	\begin{align*}
	\hat{E}\{ \hat{g}_1^{(i)}(l)\hat{g}_1^{(j)}(m)\tilde{g}_2^{(k)}(l,m)\} - E\{g_1^{(i)}(X_{l})g_1^{(j)}(X_{m})g_2^{(k)}(X_{l},X_{m})\}= O_p(n^{-1/2}\rho_n^{-1/2}).
	\end{align*}
	Furthermore, under the assumption that $f$ has three continuous derivatives in the neighbourhood of $\bmu$, we know that 
	\begin{align}\label{eq:smf-bt-ai}
	\hat{a}_i= a_i + O_P\left(n^{-1/2}\rho_n^{-1/2}\right), \ \hat{a}_{ij}= a_{ij} + O_P\left(n^{-1/2}\rho_n^{-1/2}\right).
	\end{align}
	Thus, together with Eq~\ref{eq:smf-bt-ai}, we have,
	\begin{align*}
	\tdsigf^{2}-\sigf^2= O_P\left(n^{-1/2}\rho_n^{-1/2}\right), \ \tilde A_1-A_1=O_P\left(n^{-1/2}\rho_n^{-1/2}\right), \ \tilde A_2-A_2= O_P\left(n^{-1/2}\rho_n^{-1/2}\right).
	\end{align*}
	Finally we have,
	\begin{align*}
	\tilde p_1(x)=-\{\tilde{A}_1\tdsigf^{-1}+\frac{1}{6}\tilde{A}_2\tdsigf^{-3}(x^2-1)\}=p_1(x)+O_P\left(n^{-1/2}\rho_n^{-1/2}\right).
	\end{align*}
	Therefore, under the same condition of Proposition~\ref{prop:smooth-pop}, from Eq~\ref{eq:smf-bt-ew}, we have, 
	\begin{align*}
	P^*(S_n^*\leq x)&= \Phi(x) + n^{-1/2}\tilde{p}_1(x)\phi(x) + O_P\left(\frac{\log n}{n^{2/3}\rho_n}\right)\\ &
	=P(S_n \leq x) 
	+o_P\left(n^{-1/2}\right)+O_P\left(\frac{\log n}{n^{2/3}\rho_n}\right).
	\end{align*}
	\end{proof}
\bk

\section{Detailed results of confidence interval bias correction for smooth functions of counts }\label{sec:supp:proof-smf-ew-stu-and-ci}
\newcommand{\crm}{\frac{r_m}{\rho_n^{s_m}}}
\newcommand{\gom}{g_1^{(m)}(X_i)}
\subsection{Edgeworth expansion for studentized smooth function of counts}
\label{sec:supp:sub-proof-smf-ew-stu}


In order to write $\sigf^2$ as a function of $\bmu$ and $\hatsigf^2$ as function of $\bu$, we have to expand the vector of $\bu$ by including terms such that the variance can be written as a function of the expectation. For example, for simple mean, one needs to add $(x_1,x_2)=(x,x^2)$ for data point $x$, since the variance is then $x_2-x_1^2$. For i.i.d random variables, this is simple, but for U statistics, the dependence makes this more nuanced. We expand the vector of $\bu$ into $\cbu$.  Given $X$, the uncentered $\cbu$ is
\begin{equation}\label{eq:longu}
\begin{split}
   \cbu=\bigg{\{}\underbrace{\frac{\tnhat^{(1)}}{\rho_n^{s_1}},\ldots, \frac{\tnhat^{(d)}}{\rho_n^{s_d}}}_{d \text{ terms}},\underbrace{\frac{r_1r_2\sum_{i=1}^n\hat{h}_1^{(1)}(X_i)\hat{h}_1^{(2)}(X_i)}{n\rho_n^{s_1}\rho_n^{s_2}},\ldots,\frac{r_{d-1}r_d\sum_{i=1}^n\hat{h}_1^{(d-1)}(X_i)\hat{h}_1^{(d)}(X_i)}{n\rho_n^{s_1}\rho_n^{s_2}}}_{{d \choose 2} \text{ terms}},\\
   \underbrace{\frac{r_1^2\sum_{i=1}^n\hat{h}_1^{(1)}(X_i)^2}{n\rho_n^{2s_1}},\ldots,\frac{r_d^2\sum_{i=1}^n\hat{h}_1^{(d)}(X_i)^2}{n\rho_n^{2s_d}}}_{d \text{ terms}}\bigg{\}}, 
\end{split}
\end{equation}
where 
\begin{align*}
    \hat{h}_1(X_i)=\frac{1}{{n-1 \choose r-1}}\sum_{1\leq i_1<\ldots< i_r\leq n, i_1,\ldots,i_r \neq i } \hat{h}(X_i,X_{i_1},\ldots,X_{i_r}).
\end{align*}

Denote $\cbmu=E\cbu$, and $\pbu=\cbu-\cbmu$. Define $h(\bmu)={\sigf}^2$,  $h(\cbu)=\hatsigf^2$ and $c_i=\nabla h(\cbmu)^{(i)}$.

\begin{proposition}\label{prop:smooth-pop-q}
Define $S'_n=n^{1/2}(f(\bu)-f(\bmu))/\hatsigf$.
Under the condition that the function $f$ has three continuous derivatives in a neighbourhood of $\bmu$, and $\sum_{i=1}^d a_i g_1^{(i)}(X_1)$, is non-lattice, we have:
\begin{align*}
    P(S'_n \leq x) = \Phi(x) + n^{-1/2}q_1(x)\phi(x) + o\left(\frac{1}{n^{1/2}}\right),
\end{align*}
\begin{align*}
    q_1(x)=-\{B_1+\frac{1}{6}B_2(x^2-1)\},
\end{align*}
where $B_1$ and $B_2$ are
\begin{align*}
B_1&=  A_1\sigf^{-1}-\frac{1}{2}\sigf^{-3}n\sum_{i=1}^{d'}\sum_{j=1}^{d'}a_ic_j\E\{\pbu^{(i)}\pbu^{(j)}\} ,\\
B_2&= 6B_1-6A_1 + \frac{A_2}{\sigf^3}.
\end{align*}
$A_1$ and $A_2$ are defined in Proposition ~\ref{prop:smooth-pop}. The regularity conditions are to ensure the remainders in the stated order uniformly in $x$.
\end{proposition}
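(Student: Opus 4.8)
The plan is to reduce the studentized expansion to the already-established standardized expansion of Proposition~\ref{prop:smooth-pop}, exploiting the fact that the two statistics share the same numerator. First I would make precise the role of the augmented mean vector $\cbu$ from Eq~\ref{eq:longu}: it is constructed so that the empirical asymptotic variance $\hatsigf^2$ and its population target $\sigf^2$ can \emph{both} be written as one fixed smooth function $h$ of a sample-mean-type object, namely $\hatsigf^2 = h(\cbu)$ and $\sigf^2 = h(\cbmu)$ with $\cbmu = \E\cbu$. This is the standard device (see \cite{hall-bootstrap-edgeworth}) of appending the relevant second-moment coordinates so that a variance becomes a smooth function of an expectation; the $\hat h_1$ terms in Eq~\ref{eq:longu} are exactly the linear (H\'ajek) projections needed to make this identity hold for a U-statistic rather than an i.i.d. mean. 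With $c_j = \nabla h(\cbmu)^{(j)}$ and $\pbu = \cbu - \cbmu$, a first-order Taylor expansion of $h^{-1/2}$ gives the key relation $\sigf/\hatsigf = 1 - \tfrac{1}{2}\sigf^{-2}\sum_j c_j\pbu^{(j)} + O_P(n^{-1})$, so that, since $S'_n = S_n\,(\sigf/\hatsigf)$,
\begin{align*}
S'_n = S_n - \frac{1}{2\sigf^2}\,S_n\sum_j c_j\pbu^{(j)} + O_P(n^{-1}).
\end{align*}

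Second, I would regard $n^{1/2}\Psi(\cbu)$, with $\Psi(\cbu) := \{f(\bu)-f(\bmu)\}/h(\cbu)^{1/2}$, as a smooth function of the mean vector $\cbu$ that vanishes at $\cbmu$ and has nonvanishing gradient there. The delta-method Edgeworth expansion of \cite{jing2010unified} used in Proposition~\ref{prop:smooth-pop} then applies verbatim, once the non-lattice and non-degeneracy conditions for the leading linear combination of $\cbu$ are checked (these follow from the non-lattice hypothesis on $\sum_i a_i g_1^{(i)}(X_1)$ together with $\sigf>0$). This produces $P(S'_n\le x) = \Phi(x) + n^{-1/2}q_1(x)\phi(x) + o(n^{-1/2})$ with $q_1(x) = -\{B_1 + \tfrac16 B_2(x^2-1)\}$, where $B_1$ is the $\sqrt n$-scaled first cumulant and $B_2$ the third-cumulant coefficient of $n^{1/2}\Psi(\cbu)$; the deterministic shift $B_1/\sqrt n$ is peeled off and re-inserted using Lemma~\ref{lem:shift}, exactly as in the proof of Proposition~\ref{prop:smooth-pop}.

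Third, I would compute $B_1$ directly as $\sqrt n\,\E(S'_n)$ to leading order. Taking expectations in the display above, the term $\E(S_n)$ reproduces $A_1\sigf^{-1}$ exactly as in Proposition~\ref{prop:smooth-pop}, while the cross term $-\tfrac{1}{2}\sigf^{-2}\E\{S_n\sum_j c_j\pbu^{(j)}\}$ reduces, using $S_n \approx \sigf^{-1}n^{1/2}\sum_{i\le d}a_i\pbu^{(i)}$ and $\E\{\pbu^{(i)}\pbu^{(j)}\}=O(1/n)$, to $-\tfrac{1}{2}\sigf^{-3}\,n^{-1/2}\,n\sum_{i,j}a_i c_j\E\{\pbu^{(i)}\pbu^{(j)}\}$. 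This yields precisely the stated $B_1 = A_1\sigf^{-1} - \tfrac{1}{2}\sigf^{-3} n\sum_{i,j}a_i c_j\E\{\pbu^{(i)}\pbu^{(j)}\}$. Finally, rather than grinding out the full third cumulant, I would pin down $B_2$ through the \emph{exact} identity $q_1(0)=p_1(0)$: since $\hatsigf/\sigf>0$ we have $\{S'_n\le 0\} = \{S_n\le 0\}$, hence $P(S'_n\le 0)=P(S_n\le 0)$ identically, and matching the two Edgeworth expansions at $x=0$ and solving gives the stated relation expressing $B_2$ in terms of $B_1$, $A_1$, $A_2$ and $\sigf$.

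The main obstacle is the careful evaluation of the cross-covariance term in $B_1$: one must propagate the U-statistic dependence through the H\'ajek projections $\hat h_1$ that define the augmented coordinates of $\cbu$, and verify that the $\rho_n^{s}$ normalizations cancel so that $n\sum_{i,j}a_i c_j\E\{\pbu^{(i)}\pbu^{(j)}\}$ tends to a finite limit rather than blowing up. At the same time the $O_P(n^{-1})$ remainders from the Taylor expansions of both $f$ and $h^{-1/2}$ must be shown to be $o(n^{-1/2})$ \emph{uniformly in $x$}, which requires an anti-concentration (smoothing) control of the dominant linear part, again mirroring the treatment in Proposition~\ref{prop:smooth-pop}.
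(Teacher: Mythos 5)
Your proposal is correct and takes essentially the same route as the paper's proof: augmenting the mean vector so that $\hatsigf^2=h(\cbu)$ and $\sigf^2=h(\cbmu)$, Taylor-expanding $h^{-1/2}$ to produce the quadratic $a_ic_j$ correction, reducing the centered products $\pbu^{(i)}\pbu^{(j)}$ (through the Hoeffding decomposition of the $\hat h_1$ coordinates) to a mean shift plus a degenerate second-order kernel so that Theorem 2.1 of \cite{jing2010unified} together with Lemma~\ref{lem:shift} yields the expansion with the stated $B_1$, and then obtaining $B_2$ from the identity $p_1(0)=q_1(0)$. Your explicit justification of that identity --- $\{S_n'\le 0\}=\{S_n\le 0\}$ since $\hatsigf/\sigf>0$ --- is in fact the reasoning the paper relies on but states without proof.
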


\begin{proof}\normalfont

Now, we define $A(\cbu)=A(\bu)=f(\bu)-f(\bmu)$, $B(\cbu)=(f(\bu)-f(\bmu))/h(\cbu)$. Then by Taylor Expansion we have,
\begin{align*}
    B(\cbu)&=A(\cbu)/h(\cbu)^{1/2}=A(\cbu)*h(\cbu)^{-1/2}\\
    &=A(\cbu)\hspace{-1mm}\left\{h(\cbmu)^{-1/2}+(\cbu-\cbmu)^T\nabla (h(\cbmu)^{-1/2})+(\cbu-\cbmu)^T\frac{H(h(\cbmu)^{-1/2})}{2}(\cbu-\cbmu)+o_P\left(\frac{1}{n}\right)\hspace{-1mm}\right\}\\
    &= A(\cbu)/h(\cbmu)^{-1/2} - \frac{1}{2}A(\cbu)h(\cbmu)^{-3/2}(\nabla h(\cbmu))^T(\cbu-\cbmu) + O_P\left(\frac{1}{n^{3/2}}\right)\\
    &= A(\cbu)/\sigf -\frac{1}{2}(\cbu-\cbmu)^T \sigf^{-3}\underbrace{\nabla f(\cbmu)(\nabla h(\cbmu))^T}_{D}(\cbu-\cbmu)+O_P\left(\frac{1}{n^{3/2}}\right)\\
    &= A(\bu)/\sigf - \frac{1}{2} (\cbu-\cbmu)^T\sigf^{-3}\underbrace{\left[\begin{array}{ccc}
        a_1c_1 &... &a_1c_d  \\
         ... &... &...\\
         a_dc_1 &... &a_dc_d
    \end{array}\right]}_{D}(\cbu-\cbmu)+O_P\left(\frac{1}{n^{3/2}}\right),
\end{align*}
where 
\begin{align*}
    &(\cbu-\cbmu)^{T}D(\cbu-\cbmu) = \sum_{i=1}^{d'} \sum_{i=1}^{d'}a_ic_j\pbu^{(i)}\pbu^{(j)},\\
    & a_{d+1}=a_{d+2}=\ldots=a_{d'}=0.
\end{align*}

We have $S_n'=n^{1/2}\frac{f(\bu)-f(\bmu)}{h(\cbu)}=n^{1/2}B(\cbu)$. Thus we can write $S_n'$ into
\begin{align}\label{eq:smf-q-snprime}
    S_n'=n^{1/2}A(\bu)/\sigf -\frac{1}{2}\sigf^{-3}n^{1/2}\sum_{i=1}^{d'} \sum_{i=1}^{d'}a_ic_j\pbu^{(i)}\pbu^{(j)} + O_P\left(\frac{1}{n}\right).
\end{align}

Since $a_i=0$ for $i>d$, we only discuss here $\pbu^{(i)}\pbu^{(j)}$ for $i \leq d, j \leq d$ and $\pbu^{(i)}\pbu^{(j)}$ for $i \leq d, d < j \leq d'$. We first prove that 
\begin{align}\label{eq:smf-q-uiuj}
    \pbu^{(i)}\pbu^{(j)} =\E\{\pbu^{(i)}\pbu^{(j)}\} + n^{-2} \sum_{l<m}\gamma(X_l,X_m) +O_P\left(\frac{1}{n^{3/2}}\right)
\end{align}
holds for both cases, where $\gamma$ is some symmetric function of $X_l$ and $X_m$.

For $i \leq d, j \leq d$,  since $ \pbu^{(i)} =\frac{\bu_L^{(i)}}{n} + \frac{\bu_Q^{(i)}}{n^2} + O_P\left(\frac{1}{n^{3/2}}\right)$, $ \pbu^{(j)} =\frac{\bu_L^{(j)}}{n} + \frac{\bu_Q^{(j)}}{n^2} + O_P\left(\frac{1}{n^{3/2}}\right)$,
\begin{align*}
     \pbu^{(i)}\pbu^{(j)}&= \frac{\bu_L^{(i)}\bu_L^{(j)}}{n^2} + O_P\left(\frac{1}{n^{3/2}}\right)\\& = \frac{r_ir_j}{n^2\rho_n^{s_i}\rho_n^{s_j}}\sum_{l=1}^ng_1^{(i)}(X_l)g_1^{(j)}(X_l) + \frac{2r_ir_j}{n^2\rho_n^{s_i}\rho_n^{s_j}}\sum_{l<m}g_1^{(i)}(X_l)g_1^{(j)}(X_m) + O_P\left(\frac{1}{n^{3/2}}\right).
\end{align*}
Thus,
\begin{align*}
     E\pbu^{(i)}\pbu^{(j)}=\frac{r_ir_j}{n^2\rho_n^{s_i}\rho_n^{s_j}}\sum_{l=1}^ng_1^{(i)}(X_l)g_1^{(j)}(X_l),
\end{align*}
and Eq~\ref{eq:smf-q-uiuj} follows.

For $i \leq d,  d < j \leq d'$, $ \pbu^{(i)} =\frac{\bu_L^{(i)}}{n} + \frac{\bu_Q^{(i)}}{n^2} + O_P\left(\frac{1}{n^{3/2}}\right)$, while $$ \pbu^{(j)} =\frac{r_kr_t}{n\rho_n^{s_k}\rho_n{s_t}}\sum_{l=1}^n\hat{h}_1^{(k)}(X_l)\hat{h}_1^{(t)}(X_l)- E\left\{\frac{r_kr_t}{\rho_n^{s_k}\rho_n^{s_t}n}\sum_{l=1}^n\hat{h}_1^{(t)}(X_l)\hat{h}_1^{(k)}(X_l)\right\},$$ for some $k, t \in \{1,\ldots,d\}$. Denote $\E\{\hat{h}^{k}(X_l)\}=\theta^{(k)}$, Hoeffding decomposition of $\hat{h}^{k}(X_l)$ yields, 
\begin{align}\label{eq:smf-q-h1-hoef}
    \frac{\hat{h}^{(k)}_1(X_l)-\theta_n^{(k)}}{\rho_n^{s_k}} &= h^{(k)}_1(X_l)-\theta_n^{(k)} + \frac{r-1}{n-1} \sum_{s \neq l,1\leq s \leq n} \{g^{(k)}_2(X_l,X_s) + g_1^{(k)}(X_s)\}+  O_P\left(\frac{1}{n}\right)\bk \nonumber \\
    &= g^{(k)}_1(X_1) +\frac{r-1}{n-1} \sum_{s \neq l,1\leq s \leq n} \{g^{(k)}_2(X_l,X_s) + g_1^{(k)}(X_s)\}  +  O_P\left(\frac{1}{n}\right)\bk.
\end{align}

Denote $\bm{U}^{(i)}= \frac{\sum_{l=1}^n g_1^{(i)}(X_l)}{n}$, then
\begin{align*}
     \pbu^{(i)}\pbu^{(j)}& = \frac{r_kr_ir_t}{\rho_n^{s_k+s_i+s_t}}\bigg{\{}\theta^{(k)}\bm{U}^{(i)}\bm{U}^{(t)} + \theta^{(t)}\bm{U}^{(i)}\bm{U}^{(k)}  
     +\frac{1}{n^2}\sum_{l=1}^ng_1^{(i)}(X_l)g_1^{(k)}(X_l)g_1^{(t)}(X_l) \\ & + \frac{2}{n^2}\sum_{l<m}g_1^{(i)}(X_l)g_1^{(k)}(X_m)g_1^{(t)}(X_m) +
     \frac{2}{n^2}\sum_{l<m}g_1^{(i)}(X_l)g_1^{(k)}(X_m)g_2^{(t)}(X_l,X_m)   \\
     &+
     \frac{2}{n^2}\sum_{l<m}g_1^{(i)}(X_l)g_1^{(t)}(X_m)g_2^{(k)}(X_l,X_m)  \bigg{\}} +
     O_P\left(\frac{1}{n^{3/2}}\right).
\end{align*}


Taking Expectation, Eq~\ref{eq:smf-q-uiuj} easily follows.

Now that Eq~\ref{eq:smf-q-uiuj} holds, using Eq~\ref{eq:smf-q-decompSn} and Eq~\ref{eq:smf-q-snprime}, we have
\begin{align}\label{eq:smf-q-snprime-decomp}
    S_n'&=  \frac{A_1}{\sqrt{n}\sigf} -\frac{1}{2}\sigf^{-3}n\sum_{i=1}^{d'}\sum_{j=1}^{d'}a_ic_j\E\{\pbu^{(i)}\pbu^{(j)}\} \nonumber \\
    &+  n^{-1/2}\sum_{l=1}^n\alpha(X_l) +  n^{-3/2}\sum_{l<m} \{\beta(X_l,X_m)+\gamma(X_l,X_m)\} +O_P\left(\frac{1}{n}\right) 
\end{align}

Therefore, using Lemma~\ref{lem:shift}, we know that \begin{equation}\label{eq:smf-q-b1}
    B_1=  A_1\sigf^{-1}-\frac{1}{2}\sigf^{-3}n\sum_{i=1}^{d'}\sum_{j=1}^{d'}a_ic_j\E\{\pbu^{(i)}\pbu^{(j)}\}.
\end{equation}

From Eq~\ref{eq:smf-q-snprime-decomp}, we also have that Theorem 2.1 of \cite{jing2010unified} applies to $S_n'$ under the same conditions of Proposition~\ref{prop:smooth-pop}. For the simplicity of calculation, we note that $B_2$ can be estimated using the identity $p_1(0)=q_1(0)$ and the forms of $A_1,A_2,$ and $B_1$, which gives us $B_2= 6B_1-6A_1 + \frac{A_2}{\sigf^3}$.

Thus, under same conditions of Proposition~\ref{prop:smooth-pop}, we have
\begin{align*}
    P(S_n' \leq x) = \Phi(x) + n^{-1/2}q_1(x)\phi(x) + o\left(n^{-1/2}\right),
\end{align*}
where $B_1$ and $B_2$ are defined above, $q_1(x)$ is as 
\begin{align*}
    q_1(x)=-\{B_1+\frac{1}{6}B_2(x^2-1)\}.
\end{align*}

\end{proof}

\subsection{Estimating confidence interval corretion for smooth function of counts}\label{sec:supp:sub-ci-correction}


In order correct the confidence intervals arising from the standardized bootstrap, we need to estimate $p_1(x)$ and $q_1(x)$. This requires the calculation of  $\sigf^2$, $A_1$ and $A_2$ are straightforward. In this section, we show how to compute $\hat{q}_1(x)$ for transitivity. 

While we only show in detail the calculations of transitivity ($d=2$), they can be easily used as building blocks to extend to other smooth functions of counts with $d \geq 2$.

In the case of transitivity, the original $\bu$ used for estimation of $p_1(x)$ is of length $d=2$. Recall that for estimating $q_1(x)$ we need to expand this vector so that the variance is a function of this vector. This expanded vector (see Eq~\ref{eq:longu}) is of length $d'=5$. 
Denote $\mu_{ij}=n\times E\pbu^{(i)}\pbu^{(j)}$. We also have for $T$ and $V$, $r_1=r_2=r=3$, and $s_1=3$ and $s_2=2$.

To estimate $B_1$ and $B_2$, we first use the fact that $c_k$ for k in $1\leq k \leq d'$ follows  \cite{hall-bootstrap-edgeworth} Section 3.10.6 as follows:
\begin{align}\label{eq:smf-c}
    c_k = 2\sum_{i=1}^d\sum_{j=1}^da_{ik}a_j\mu_{ij} - 2a_k\sum_{i=1}^d a_i\bmu^{(i)}+ \sum_{i=1,j=1,(k)}^{d,d} a_i a_j,
\end{align}
where $\sum^{d,d}_{i=1,j=1,(k)}$ denotes the pair $(i,j)$ in $\cbu^{(1),\ldots,(d)}$ that $\cbu^{(i)}\cbu^{(j)}=\cbu^{(k)}$. For example, in transitivity, $\cbu^{(3)}=\cbu^{(1)}\cbu^{(2)}$.

Now we simplify $E\pbu^{(i)}\pbu^{(j)}$ for $1\leq i \leq d$, $1\leq j \leq d'$ for the purpose of estimating $B_1$ and $c_k$ in Eq~\ref{eq:smf-q-b1} and Eq~\ref{eq:smf-c}. We do not consider the case where $i>d$ since $a_i$ for $i>d$ is $0$.   By the definition of $\bu'$, using Hoeffding Decompostion of $\hat{h}^{(i)}_1(X_1)$ ($i \in \{1,2\})$ showed in Eq~\ref{eq:smf-q-h1-hoef}, 
simple algebra yields, 
\begin{align*}
    &\E\{\pbu^{(i)}\pbu^{(j)}\}=
    \frac{r^2}{n\rho_n^{s_1}\rho_n^{s_2}} \E\{g_1^{(i)}(X_1)g_1^{(j)}(X_1)\} +  O\left(\frac{1}{n^{3/2}}\right), \bk \ 1\leq i, j \leq 2,
\end{align*}
\vspace{-3mm}
\begin{align*}
   \E\{\pbu^{(1)}\pbu^{(4)}\}=&\frac{r^3}{n\rho_n^{s_1^3}} \bigg{[}\E\{g_1^{(1)}(X_1)g_1^{(1)}(X_1)^2\}\\
   &+ 2(r-1)\E\{g_1^{(1)}(X_1)g_1^{(1)}(X_2)g_2^{(1)}(X_1,X_2)\} \bigg{]} + \frac{2r^4}{n\rho_n^{2s_1}}\bmu_1E\{g_1^{(1)}(X_1)g_1^{(1)}X_1\} \\
   &+  O\left(\frac{1}{n^{3/2}}\right) \bk, 
\end{align*}
\begin{align*}
   \E\{\pbu^{(2)}\pbu^{(5)}\}=&\frac{r^3}{n\rho_n^{3s_2}} \bigg{[}\E\{g_1^{(2)}(X_1)g_1^{(2)}(X_1)^2\}\\
   &+ 2(r-1)\E\{g_1^{(2)}(X_1)g_1^{(2)}(X_2)g_2^{(2)}(X_1,X_2)\} \bigg{]} + \frac{2r^4}{n\rho_n^{2s_2}}\bmu_2E\{g_1^{(1)}(X_1)g_1^{(2)}X_1\} \\
   &+  O\left(\frac{1}{n^{3/2}}\right)\bk,
\end{align*}

\begin{align*}
   \E\{\pbu^{(1)}\pbu^{(5)}\}=&\frac{r^3}{n\rho_n^{s_1}\rho_n^{2s_2}} \bigg{[}\E\{g_1^{(1)}(X_1)g_1^{(2)}(X_1)^2\}\\
   &+ 2(r-1)\E\{g_1^{(1)}(X_1)g_1^{(2)}(X_2)g_2^{(2)}(X_1,X_2)\} \bigg{]} + \frac{2r^4}{n\rho_n^{s_1}\rho_n^{s_2}}\bmu_2E\{g_1^{(1)}(X_1)g_1^{(2)}X_1\} \\
   &+  O\left(\frac{1}{n^{3/2}}\right)\bk, 
\end{align*}
\begin{align*}
   \E\{\pbu^{(2)}\pbu^{(4)}\}=&\frac{r^3}{n\rho_n^{2s_1}\rho_n^{s_2}} \bigg{[}\E\{g_1^{(2)}(X_1)g_1^{(1)}(X_1)^2\}\\
   &+ 2(r-1)\E\{g_1^{(2)}(X_1)g_1^{(1)}(X_2)g_2^{(1)}(X_1,X_2)\} \bigg{]} + \frac{2r^4}{n\rho_n^{s_1}\rho_n^{s_2}}\bmu_1E\{g_1^{(1)}(X_1)g_1^{(2)}X_1\} \\
   &+ O\left(\frac{1}{n^{3/2}}\right)\bk, 
\end{align*}

For the other two cases of $i=1,2$ and $j=3$, applying the same technique, we have
\begin{align*}
   \E\{\pbu^{(1)}\pbu^{(3)}\}=&\frac{r^3}{n\rho_n^{2s_1}\rho_n^{s_2}} \bigg{[}\E\{g_1^{(1)}(X_1)^2g_1^{(2)}(X_1)\}\\
   &+ (r-1)\E\{g_1^{(1)}(X_1)g_1^{(1)}(X_2)g_2^{(2)}(X_1,X_2)\} \\ &+ (r-1)\E\{g_1^{(1)}(X_1)g_1^{(2)}(X_2)g_2^{(1)}(X_1,X_2)  \bigg{]} + \frac{r^4}{n\rho_n^{s_1}\rho_n^{s_2}}\bmu_1E\{g_1^{(1)}(X_1)g_1^{(2)}X_1\} \\
   &+ \frac{r^4}{n\rho_n^{2s_1}}\bmu_2E\{g_1^{(1)}(X_1)^2\} +  O\left(\frac{1}{n^{3/2}}\right)\bk, 
\end{align*}

\begin{align*}
   \E\{\pbu^{(2)}\pbu^{(3)}\}=&\frac{r^3}{n\rho_n^{s_1}\rho_n^{2s_2}} \bigg{[}\E\{g_1^{(2)}(X_1)^2g_1^{(1)}(X_1)\}\\
   &+ (r-1)\E\{g_1^{(2)}(X_1)g_1^{(1)}(X_2)g_2^{(2)}(X_1,X_2)\} \\ &+ (r-1)\E\{g_1^{(2)}(X_1)g_1^{(2)}(X_2)g_2^{(1)}(X_1,X_2)  \bigg{]} + \frac{r^4}{n\rho_n^{s_1}\rho_n^{s_2}}\bmu_2E\{g_1^{(1)}(X_1)g_1^{(2)}X_1\} \\
   &+ \frac{r^4}{n\rho_n^{2s_1}}\bmu_1E\{g_1^{(2)}(X_1)^2\} +  O\left(\frac{1}{n^{3/2}}\right)\bk.
\end{align*}

Now we can estimate $B_1$ from Eq~\ref{eq:smf-q-b1} and $c_{i},1\leq i\leq 5$ from Eq~\ref{eq:smf-c} by estimating $E\{\pbu^{(i)}\pbu^{(j)}\}$ above using $\hat{g}^{(i)}_1(X_1)$ and $\hat{g}^{(i)}_2(X_1,X_2)$, for $i\in\{1,2\}$.  Using the fact of $p_1(0)=q_1(0)$, we can estimate $B_2$ by 
\begin{align*}
    \hat{B}_2= 6\hat{B}_1 -6\hat{A}_1\hatsigf^{-1} +  \hat{A}_2 \hatsigf^{-3}.
\end{align*}
Then we have the estimated $\hat{q}_1(x)$,
\begin{align*}
    \hat{q}_1(x)=-\{\hat{B}_1+\frac{1}{6}\hat{B}_2(x^2-1)\}.
\end{align*}

Now we show the studentized edgeworth expansion of some statistics $f(T,V)$ using same $\cbu$ as transitivity, including $T$, $3T+5V$, $TV$, $3T/V$(transitivity) and $T^2V^2$. The $q_1(x)$ of the Edgeworth expansion of the studentized version of these statistics $f(T,V)$ share the same $E\{\pbu^{(i)}\pbu^{(j)}\}$ ($i \in \{1,2\}$, j in \{1,\ldots,5\}. The the only difference lies in evaluating different derivatives $\bold{a}$ of $f$ and thus having different $\hat{c}_k$, $\hat{B}_1$ and $\hat{B}_2$.  

Recall that $\text{err}(F,G)$ is defined in Section~\ref{sec:exp} as the maximum of $|F(x)-G(x)|$ over the range $[-3,3]$, over a grid size $0.1$.  In the following two tables,we show this distance between the true CDF and our empirical edgeworth expansion and the normal approximation for five different smooth functions. Tables~\ref{tb:smf-p-superror} and~\ref{tb:smf-q-superror} show  these for the standardized and studentized statistics. The empirical edgeworth expansion is estimated using a random graph with $n=160$, $\rho_n=1$, generated from two graphons \sbm and \smg with the same parameters as in Section~\ref{sec:exp}. The true CDF is estimated by $10^6$ size $160$ graphs generated by the same graphons with same model parameters.

\begin{table}[H]
	\centering
	{\setlength{\extrarowheight}{4pt}
	\begin{tabular}{|l|l|l|l|l|}
		\hline
		& \multicolumn{2}{l|}{SBM}                            & \multicolumn{2}{l|}{SM-G}                           \\ \hline Studentized
		& \small{$\text{err}(\hat{F}(S_n),F)$} & \small{$\text{err}(\Phi,F)$} & \small{$\text{err}(\hat{F}(S_n),F)$} & \small{$\text{err}(\Phi,F)$} \\ \hline
		T                                        & \textbf{0.002}                           & 0.018             & \textbf{0.004}                           & 0.030             \\ \hline
		3T+5V                                    & \textbf{0.003}                           & 0.011             & \textbf{0.002}                           & 0.018             \\ \hline
		TV                                       & \textbf{0.006}                           & 0.023             & \textbf{0.016}                           & 0.042             \\ \hline
		3T/V                                     & \textbf{0.005}                           & 0.027             & \textbf{0.006}                           & 0.051             \\ \hline
		$T^2V^2$ & \textbf{0.036}                           & 0.078             & \textbf{0.092}                           & 0.142            \\ \hline
	\end{tabular}}
	\caption{Standardized EW Sup CDF error compared to N(0,1)}
	\label{tb:smf-p-superror}
\end{table}
Table~\ref{tb:smf-p-superror} shows the standardized sup error $\sup_x |\hat{F}(S_n\leq x) - F(x)|$, where $S_n$ is the standardized statistic, $\hat{F}(S_n\leq x)=\Phi(x) + n^{-1/2}\hat{p}_1(x)\phi(x)$ and $F(x)$ is the true distribution of the standardized statistic.  In Table~\ref{tb:smf-q-superror}, we show $\sup_x |\hat{F}(S_n\leq x) - F'(x)|$, where $\hat{F}(S_n'\leq x)=\Phi(x) + n^{-1/2}\hat{q}_1(x)\phi(x)$ and $F'(x)$ indicates the true CDF of the studentized statistic. 

\begin{table}[H]
	\centering
		{\setlength{\extrarowheight}{4pt}
\begin{tabular}{|l|l|l|l|l|}
\hline
                                         & \multicolumn{2}{l|}{SBM}                            & \multicolumn{2}{l|}{SM-G}                           \\ \hline Studentized
                                         & \small{$\text{err}(\hat{F}(S_n'),F')$} & \small{$\text{err}(\Phi,F')$} & \small{$\text{err}(\hat{F}(S'_n),F')$} & \small{$\text{err}(\Phi,F')$} \\ \hline
T                                        & \textbf{0.004}                           & 0.021             & \textbf{0.008}                           & 0.043             \\ \hline
3T+5V                                    & \textbf{0.002}                           & 0.012             & \textbf{0.005}                           & 0.026             \\ \hline
TV                                       & \textbf{0.006}                           & 0.029             & \textbf{0.015}                           & 0.054             \\ \hline
3T/V                                     & \textbf{0.012}                           & 0.031             & \textbf{0.007}                           & 0.052             \\ \hline
$T^2V^2$ & \textbf{0.022}                           & 0.058             & \textbf{0.045}                           & 0.106             \\ \hline
\end{tabular}}
\caption{Studentized EW Sup CDF error compared to N(0,1)}
\label{tb:smf-q-superror}
\end{table}

We see that for both graphons the empirical edgeworth expansion has much lower error than the Gaussian approximation. Also, the linear combinations of the statistics typically have lower error than those which need the estimation of first and second derivatives.



\bk

\section{Additional experiments}
\label{sec:addlexp}
In this section we provide additional experiment results that were left out from the main text for space concerns.

\subsection{Simulation study for \smg and additional results for \sbm}
Here we first present experimental results on the smooth graphon model from~\cite{zhang2017smooth} (\smg)  with  $w(u,v)=(u ^2+v^2)/3\times cos(1/(u^2+v^2))+0.15$.  This graphon is continuous and high rank in contrast to the first graphon, which is piece-wise constant and low rank.  
\begin{figure}[h]
\vspace{-5mm}
\centering
\begin{tikzpicture}[zoomboxarray,
zoomboxarray inner gap=.5cm, 
zoomboxarray columns=1, 
zoomboxarray rows=1]
    \node [image node] { \includegraphics[width=0.35\textwidth]{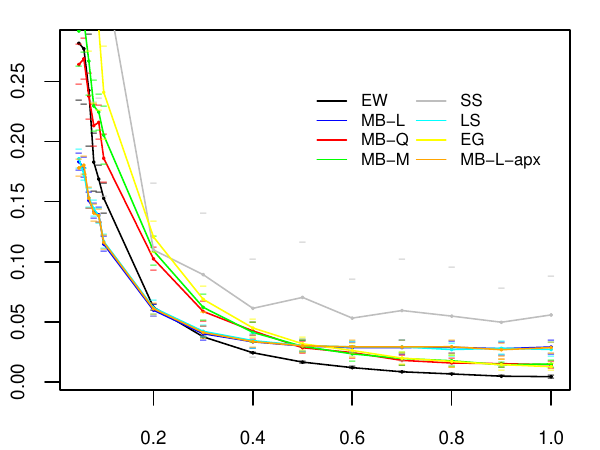} };
    \zoombox[magnification=2]{.15,.7}
\end{tikzpicture}\\
\begin{tikzpicture}[zoomboxarray,
zoomboxarray inner gap=.5cm, 
zoomboxarray columns=1, 
zoomboxarray rows=1]
    \node [image node] { \includegraphics[width=0.35\textwidth]{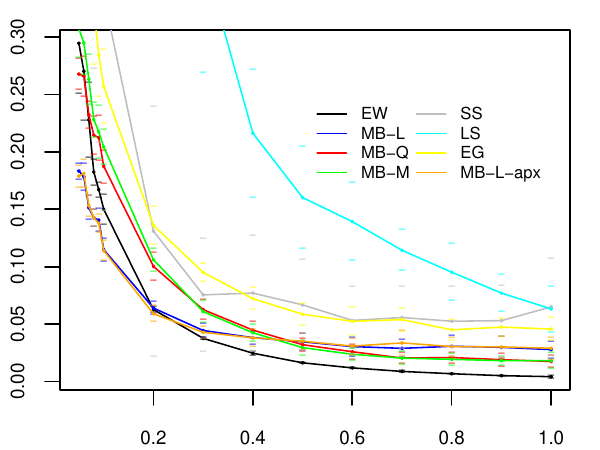} };
    \zoombox[magnification=2]{.15,.7}
\end{tikzpicture}
\vspace{-3mm}
\caption{We plot \KL{$F_n$}{$F_n^*$} \bk for triangle density for all methods on the $Y$ axis, where $F_n^*(t)$ corresponds to the appropriate resampling distribution. We vary the sparsity parameter $\rho_n$ on the $X$ axis. The networks are simulated from \smg. The first row are centered at bootstrap mean and normalized by variance estimation from each method $\hat{\sigma}_n$  The second row is centered by triangles density estimated on the whole graph (\mblapx is centered at approximate triangle density estimated from the whole graph) and normalized by $\hat{\sigma}_n$.}
\label{plot:cdf-error-smg}
\end{figure}

\begin{figure}[!htb]
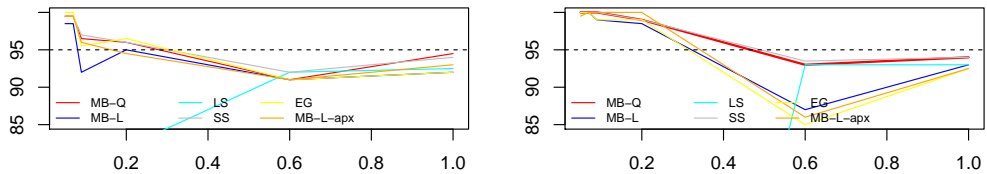

     \includegraphics[trim=2em 2em 0em 0em, clip, width=  .5\textwidth]{plots2/corrected-triangles-sbm.pdf}
     \includegraphics[trim=2em 2em 0em 0em, clip,width=  .5\textwidth]{plots2/corrected-trans-sbm-new.pdf}
\caption{ We present coverage of  95\% Bootstrap Percentile CI with correction for triangles (left) and transitivity coefficient (right) of the \smg model  with $\rho_n$ varying from $0.05$ to $1$.}
\label{plot:coverage-smg}
\end{figure}

 Figure~\ref{plot:cdf-error-smg} and Figure~\ref{plot:coverage-smg} show simulation results of triangles and transitivity for \smg. We show in Figure~\ref{plot:cdf-error-smg}  the maximum of (absolute) difference of the bootstrap CDF $F_n^*$ over the $[-3,3]$ range  (\KL{$F_n$}{$F_n^*$}) for triangles density from the true CDF $F_n$ for sparsity parameter $\rho_n$ varying from $0.05$ to $1$. We show the average of the expected difference over 30 independent runs along with error-bars. In Figure~\ref{plot:coverage-smg}, we show the $95\%$ CI coverage for triangles and transitivity. 
 
 Figure~\ref{plot:cdf-vstar-error} and Figure~\ref{plot:coverage-vs} show additional simulation results for two-stars for both \sbm and \smg. The results of two-stars are similar to those of triangles in the main paper. 

\begin{figure}[H]
\begin{tabular}{c}
    \includegraphics[width=  0.45\textwidth]{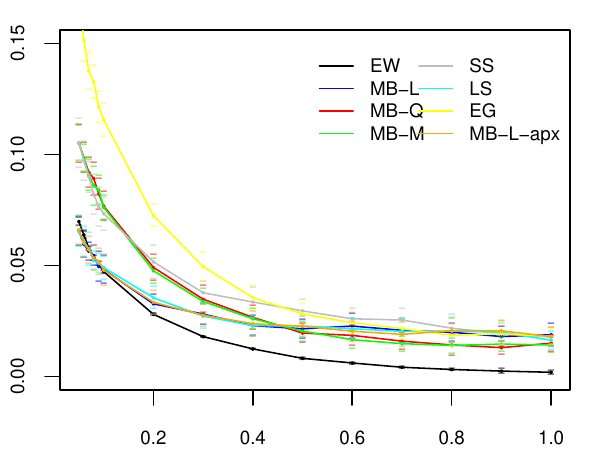} 
    \includegraphics[width=  0.45\textwidth]{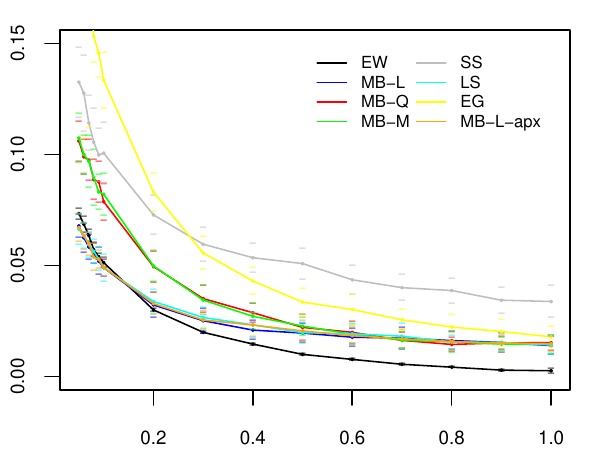}\\
    \includegraphics[width=  0.45\textwidth]{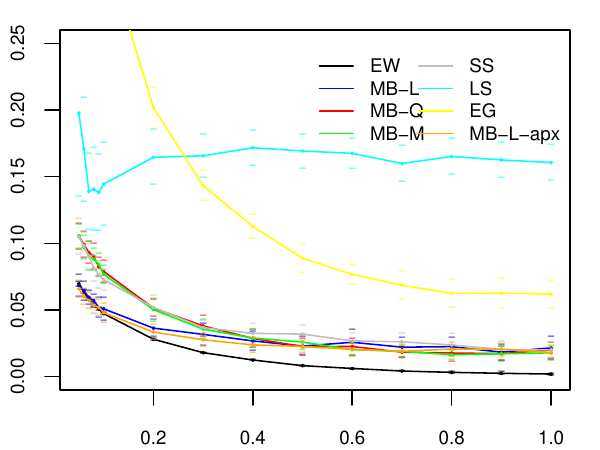}\includegraphics[width=  0.45\textwidth]{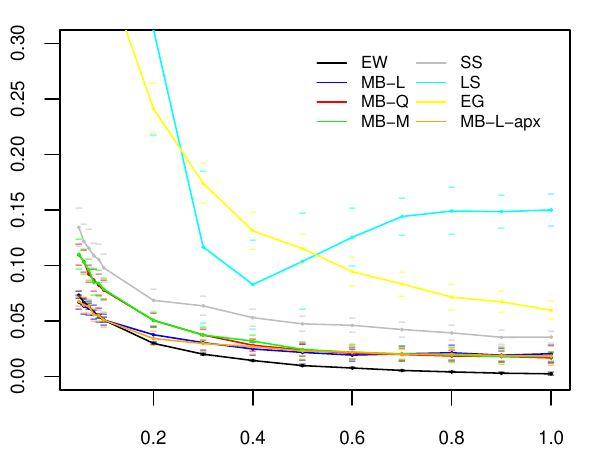}
    
\end{tabular}
 \caption{We plot  \KL{$F_n$}{$F_n^*$}  for two-star density for all methods on the $Y$ axis, where $F_n^*(t)$ corresponds to the appropriate resampling distribution. We vary the sparsity parameter $\rho_n$ on the $X$ axis. The networks in the left column are simulated from \sbm and those in the right column are simulated from \smg. The first row is centered at bootstrap mean and normalized by variance estimation from each method $\hat{\sigma}_n$  The second row is centered by triangles density estimated on the whole graph (\mblapx is centered at approximate triangle density estimated from the whole graph) and normalized by $\hat{\sigma}_n$. }
\label{plot:cdf-vstar-error}
\end{figure}

\begin{figure}[H]
\begin{tabular}{cc}
     \includegraphics[trim=2.5em 2.5em 0em 0em, clip, width=  0.45\textwidth]{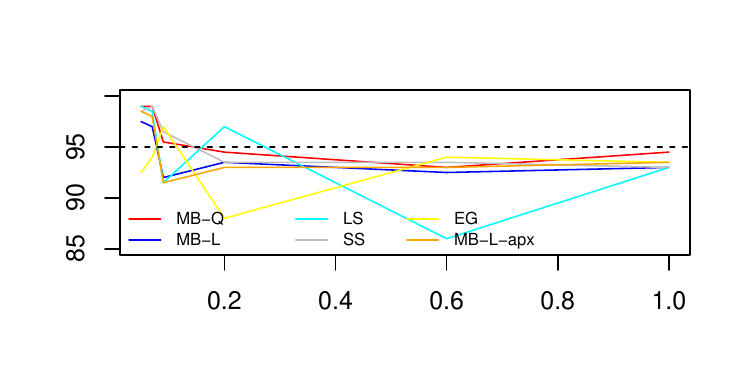}&
     \includegraphics[trim=2.5em 2.5em 0em 0em, clip,width=  0.45\textwidth]{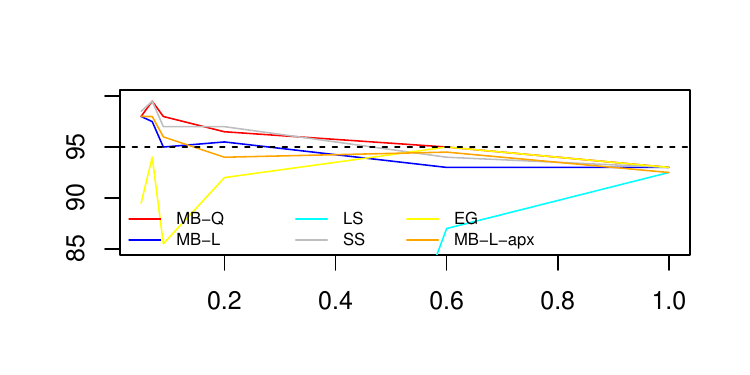}
\end{tabular}
\caption{We present coverage of  95\% Bootstrap Percentile CI with correction for two-stars of the \sbm (left) and \smg (right) models  in $\rho_n$ from $0.05$ to $1$. } \label{plot:coverage-vs}
\end{figure}

\paragraph{Timing results for \smg.}
In Figure~\ref{fig:4cycle-timing-ng2} we show logarithm of running time for four-cycles count against growing $n$ for \smg model. 

\begin{figure}[H]
\centering
    \includegraphics[width=  .5\textwidth]{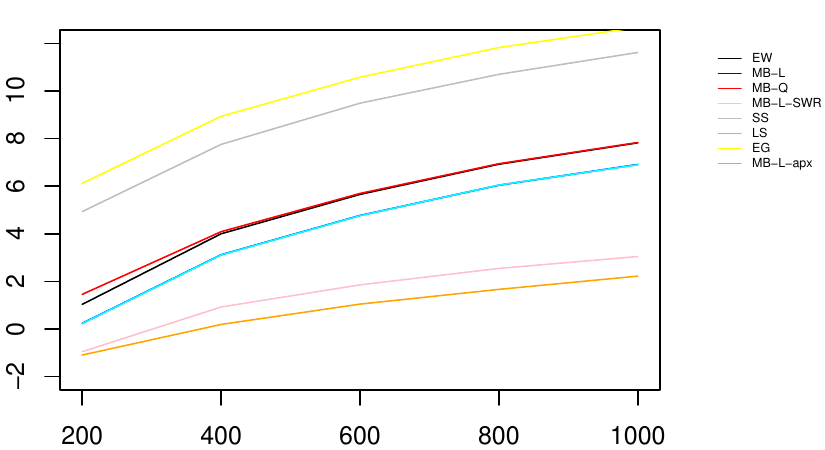} 
\caption{Logarithm of running time for  four-cycles in  \smg  against sample size $n$.}
\label{fig:4cycle-timing-ng2}
\end{figure}

\subsection{Real data experiments on voting similarities of U.S. Congress}
In this section, we apply our algorithms to compare networks representing the voting similarities of U.S. Congress.
\begin{figure}[H]
	\centering
	\includegraphics[width= \textwidth]{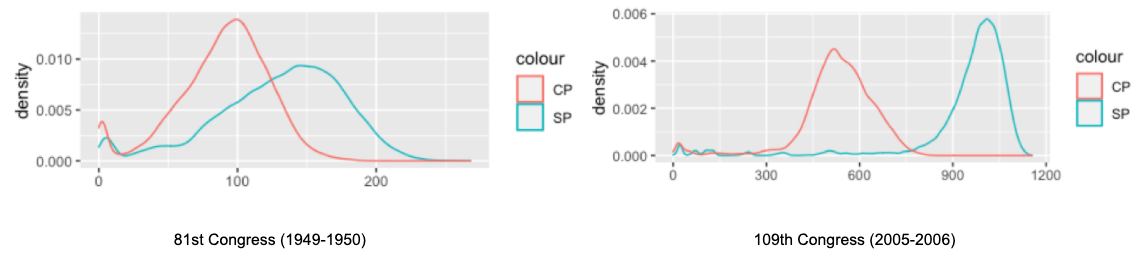}
	\caption{Threshold for forming edges between congress members calculated from histograms of same-party (SP) and cross-party (CP) agreements, illustrated with example of 81st Congress and 109th Congress}
	\label{fig:votethreshold}
\end{figure}
We use roll call vote data from the U.S. House of Representatives~\citep{voteview} from 1949 (commencement of the 81$^{st}$ Congress) to 2012 (adjournment of 112$^{nd}$ Congress). Each Congress forms a network of representatives (nodes).  An edge between a node pair is formed when the number of agreements, i.e. number of times they both vote \textit{yay} or \textit{nay}  exceeds a threshold computed by~\citep{andris2015votethreshold} of this congress. The threshold is computed by constructing histograms of same-party pairs’ number of agreements and cross-party pair’s number of agreements and using the intersection point of the two histograms as the threshold. We will denote same-party by \texttt{SP} and cross-party by \texttt{CP}. For example, the threshold value is $124$ for $81$st Congress and $766$ for $109^{th}$ Congress as illustrated in Figure~\ref{fig:votethreshold}.
For each network, we calculate the normalized cross-party edge density and cross-party triangle density, and perform our bootstrap methods on these quantities.  We construct 95\%  second-order corrected Confidence Intervals from  the MB-Q method and present the CIs over $81$st to $112$nd Congress.  Note that the CIs are adjusted by Bonferonni Correction where  $\alpha=1-0.05/32$ for $32$ experiment congresses. 

In Figure~\ref{fig:voteCI}, we can see a significant decrease in cross party edge densities and cross party triangle densities over the years,  suggesting a trend of decreasing bipartisan agreement.
\begin{figure}[H]
    \centering
    \includegraphics[width=0.45\textwidth]{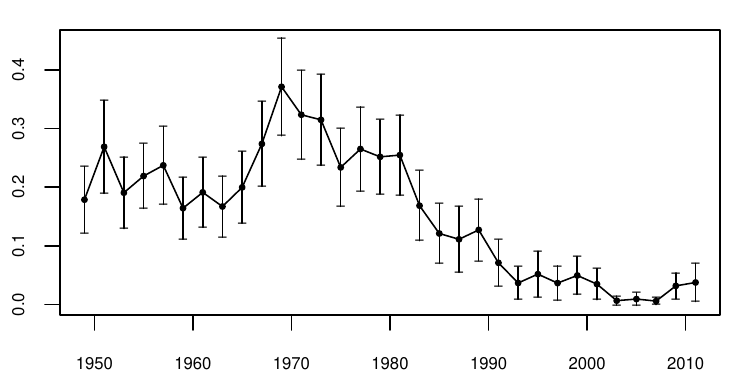}
    \includegraphics[width=0.45\textwidth]{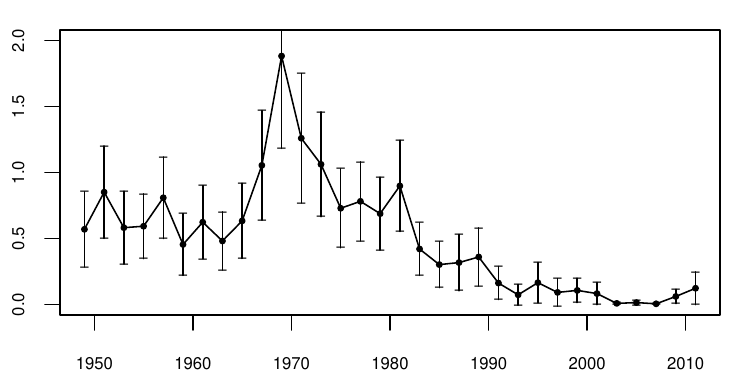}
    \caption{Bonferroni-adjusted 95\% second-order corrected CI for cross party edge density (left) and cross party triangle density (right) from 1949 (commencement of the 81st Congress) to 2012 (adjournment of 112nd Congress).}
    \label{fig:voteCI}
\end{figure}

\subsection{Detailed network statistics and timing for Facebook page networks}
Table~\ref{tab:fb-pages-summary} summarizes the size, edge density, and time cost for different count functionals using our approximate linear bootstraps in the example of Facebook page categories.

\begin{table}[h]
\centering
\scriptsize
\begin{tabular}{|c|c|c|c|c|c|}
\hline
            & Size  & Edge Density & Time Cost (Triangles) & Time Cost (two-stars) & Time Cost (Three-paths) \\ \hline
Athletes    & 13866 & 0.0009       & 11.47s                & 18.40s              & 65.51s                  \\ \hline
Politicians & 5908  & 0.0024       & 2.99s                 & 5.28s               & 47.54s                  \\ \hline
Artists     & 50515 & 0.0006       & 163.15s               & 278.03s             & 843.27s                 \\ \hline
\end{tabular}
\caption{Summaries of size, edge density and time cost for each network and count statistics in the Facebook page dataset.}
\label{tab:fb-pages-summary}
\end{table}

\section{Detailed pseudocode for algorithms}\label{sec:suppalgapx}

In this section, we provide detailed algorithms of our proposed procedures. Algorithm~\ref{alg:apx} presents explicit algorithms of our linear bootstrap and approximated linear bootstrap procedure. Algorithm~\ref{alg:ho} provided explicit algorithms for our proposed higher order correct procedures including quadratic and multiplicative bootstrap procedures. 

 
 Algorithm~\ref{alg:apx_pretri} and Algorithm~\ref{alg:apx_prevs} explicitly provide fast precomputation algorithms to calculate $\tdu$ and $\tdh(i)$ in Eq~\ref{eq:define-h1-tilde} for triangles, and two-stars as described in Section~\ref{subsec:fast-linear-boot} and implemented in real data examples in Section~\ref{sec:realdata}.
   Note that the corresponding algorithm for three-paths is similar to Algorithm~\ref{alg:apx_prevs}, except that each bucket of the permutation sequence has three nodes. Hence one needs to sample twice the degree of the root node to emulate the buckets which have a neighbor of the root node. We omit the details.

\begin{algorithm}	
\SetAlgoLined
\textbf{Input:} Network $A$, motif $R$, number of resamples $B$,  $\texttt{approximate}\in \{True,False\}$, parameter $u$\\
\textbf{If} $\texttt{approximate}=True$\\
\hspace{1em}Compute $\{\tdh(i)\}_{i=1}^n$,   $\tilde{T}_n$ (Eq~\ref{eq:define-h1-tilde}) and $\tilde{\tau}_n$ (Eq~\ref{eq:define-taun-tilde})\\
\textbf{Else}\\
\vskip 0.5em
\hspace{1em}Compute $\hat{T}_n$, (Eq~\ref{eq:cf}), $\{\hat{g}_1(i)\}_{i=1}^n$    (Eq~\ref{eq:g1hat}) and $\hat{\tau}_n$ (Eq~\ref{eq:tauhat})\\
\textbf{End}\\
 \For{$j\in\{1,\dots, B\}$}{	
  Generate $n$ weights $\boldsymbol{\xi}^{(j)}=\{\w^{(j)}_i,i=1,\dots, n\}_{j=1}^B$ using Eq~\ref{eq:gaussproduct}\\
  \textbf{If} $\texttt{approximate}=True$\\
  \text{\hspace{1em} $T_n^*(j)\leftarrow  \ttnl$ (using Eq~\ref{eq:define-tnl-tilde}.)}\\
  \textbf{Else}\\
   \text{\hspace{1em} $T_n^*(j)\leftarrow  \cfbl$ (using Eq~\ref{eq:cfbl}.)}\\
 \textbf{End}\\
}
 Return $\dfrac{1}{B}\sum_j \mathbbm{1}\left(\dfrac{T_n^*(j)-\tilde{T}_n}{\frac{r}{n^{1/2}}\tilde{\tau}_n}\leq u\right)$
 \caption{\label{alg:apx}Construction of linear or approximated linear bootstrap estimate of CDF}
\end{algorithm}

\begin{algorithm}[H]
\SetAlgoLined
\vskip 0.1em
\textbf{Input:} Network $A$, motif $R$, number of resamples $B$, choice of bootstrap procedure $a\in\{M,Q\}$, parameter $u$\\
\vskip 0.5em
Compute $\cfhat$ (Eq~\ref{eq:cf}), $\{\hat{g}_1(i)\}_{i=1}^n$ (Eq~\ref{eq:g1hat}), $\{\hat{g}_2(i,j)\}_{i=1}^n$ (Eq~\ref{eq:g2hat}) and $\hat{\tau}_n$ (Eq~\ref{eq:tauhat})
\vskip 0.5em
 \For{$j\in\{1,\dots, B\}$}{
  Generate $n$ weights $\boldsymbol{\xi}^{(j)}=\{\w^{(j)}_i,i=1,\dots, n\}_{j=1}^B$ using Eq~\ref{eq:gaussproduct}\\
   \textbf{If} $a=M$\\
  \text{\hspace{1em} $T_n^*(j)\leftarrow  \cfbm$ (using Eq~\ref{eq:multiplicative-bootstrap}.)}\\
  \textbf{Else}\\
  \text{\hspace{1em} $T_n^*(j)\leftarrow  \cfbq$ (using Eq~\ref{eq:cfbq}.)}\\
 \textbf{End}\\
}
 Return $\dfrac{1}{B}\sum_j \mathbbm{1}\left(\dfrac{T_n^*(j)-\cfhat}{\frac{r}{n^{1/2}}\hat{\tau}_n}\leq u\right)$
 
 \caption{\label{alg:ho}Construction of quadratic or multiplier bootstrap estimate of CDF}
\end{algorithm}

\rd
\rd
\begin{algorithm}	
\SetAlgoLined
\textbf{Input:} Network A, size of network n,  number of samples N, r=3;\\
 \For{$i\in\{1,\dots, n\}$}{	
  $\texttt{nbs}\_{\texttt{i}}\leftarrow$ neighbors of node i; 
  $\texttt{deg}\_i\leftarrow$degree of i; 
  count=0;\\
    \For{$j\in\{1,\dots, N\}$}{
   \texttt{pos}$\leftarrow$ randomly sampled $\texttt{deg}\_i$ indices without replacement from $\{1\dots n-1\}$; \\ 
   (\texttt{indices}, \texttt{sorted})$\leftarrow$ sort(\texttt{pos}); \\
   ii=0; \\
  \While{ii $<$ \texttt{deg\_i}}{
  \texttt{curr} = \texttt{indices}(ii);
  \texttt{next} = \texttt{indices}(ii+1);\\
  \texttt{currnb}= \texttt{nbs}\_{\texttt{i}}(\texttt{sorted}(ii));
  \texttt{nextnb}=\texttt{nbs}\_{\texttt{i}}(\texttt{sorted}(ii+1));\\
 \textbf{if} (mod(\texttt{curr}, r-1) ==1 and \texttt{next}==\texttt{curr}+1):\\
       $\qquad$ \texttt{count}=\texttt{count}+A(\texttt{currnb},\texttt{nextnb}); ii=ii+1;\\
        \textbf{Else}:\\
 $\qquad$ continue;\\
     ii=ii+1\\
  \textbf{End}\\
  }\vspace{4mm}
 \textbf{End}\\
} \vspace{4mm}
  M(i) $\leftarrow$ (r-1)$\times$\texttt{count}/(N(n-1));\\
   \textbf{End}\\
}
 Return M where $\tdh(i)=M(i)/\rho_n^s$ and $\tdu=\frac{1}{n}\sum_{i=1}^n\tdh(i)$
\caption{\label{alg:apx_pretri} Fast precomputation of approximating $\tdu$ and $\tdh(i)$ for triangles}
\end{algorithm}

\rd
\begin{algorithm}[htb]
\SetAlgoLined
\textbf{Input:} Network $A$, size of network $n$, number of samples $N$, r=3;\\
 \For{$i\in\{1,\dots, n\}$}{	
   $\texttt{nbs}\_{\texttt{i}}\leftarrow$ neighbors of node i; $\texttt{deg}\_{\texttt{i}}\leftarrow$degree of i;  \texttt{count}=0;\\
    \For{$j\in\{1,\dots, N\}$}{
  \texttt{pos}$\leftarrow$ randomly sampled $\texttt{deg}\_{\texttt{i}}$ indices without replacement from $\{1:n-1\}$; \\ 
  \texttt{posConc} $\leftarrow$ randomly sample $\min(\texttt{deg}\_{\texttt{i}},n-1-\texttt{deg}\_{\texttt{i}})$ indices without replacement from $\{1:n\}\setminus\{i \cup \texttt{nbs}\_{\texttt{i}}\}$;\\
    (\texttt{indices}, \texttt{sorted})$\leftarrow$ sort(\texttt{pos}); ConcurrentID=0; ii=0; \\
  \While{$ii < \texttt{deg}\_{\texttt{i}}$}{
    t=0; \texttt{curr} = \texttt{indices}(ii);  \texttt{next}=0; \texttt{nextnb}=0;\\
    \textbf{If} ($ii+1 > \texttt{deg}\_{\texttt{i}}-1$): \texttt{next} = -1;\\
    \textbf{Else}: \texttt{next} = \texttt{indices}(ii+1);
            \texttt{nextnb}=\texttt{nbs}\_{\texttt{i}}(\texttt{sorted}(ii+1));\\
    \texttt{currnb}= \texttt{nbs}\_{\texttt{i}}(\texttt{sorted}(ii));\\     
    \textbf{If} (\texttt{next}!=-1 and (mod(\texttt{curr},2)==1 and \texttt{next}==\texttt{curr}+1)):\\
            $\qquad$  \textbf{If}(A(\texttt{currnb},\texttt{nextnb})==0): t=1;\\
             $\qquad$ ii=ii+1;\\
    \textbf{Else}:\\
    \vspace{-3mm}
           $\qquad$    \texttt{concurrent} = \texttt{posConc}(ConcurrentID);\\
           $\qquad$    \textbf{If} (A(\texttt{currnb},\texttt{concurrent})==1): t=1;\\
           $\qquad$    ConcurrentID=ConcurrentID+1;\\
    \texttt{count}=\texttt{count}+t; \  ii=ii+1;\\
  \textbf{End}\\
  }\vspace{4mm}
 \textbf{End}\\
} \vspace{4mm}
  M(i)=(r-1)$\times$\texttt{count}/(N(n-1));\\
   \textbf{End}\\
}
\vspace{2mm}
 Return $M$ where $\tdh(i)=M(i)/\rho_n^s$ and $\tdu=\frac{1}{n}\sum_{i=1}^n\tdh(i)$
 \caption{\label{alg:apx_prevs} Fast precomputation of approximating $\tdu$ and $\tdh(i)$ for two-stars}
\end{algorithm}

\bk

\clearpage
\bibliography{main}

\begin{thebibliography}{49}
\providecommand{\natexlab}[1]{#1}
\providecommand{\url}[1]{\texttt{#1}}
\expandafter\ifx\csname urlstyle\endcsname\relax
  \providecommand{\doi}[1]{doi: #1}\else
  \providecommand{\doi}{doi: \begingroup \urlstyle{rm}\Url}\fi

\bibitem[Airoldi et~al.(2008)Airoldi, Blei, Fienberg, and Xing]{airoldi-mmsb}
E.~M. Airoldi, D.~M. Blei, S.~E. Fienberg, and E.~P. Xing.
\newblock Mixed membership stochastic blockmodels.
\newblock \emph{Journal of Machine Learning Research}, 9:\penalty0 1981--2014,
  2008.

\bibitem[Aldous(1981)]{aldous-representation-array}
D.~J. Aldous.
\newblock Representations for partially exchangeable arrays of random
  variables.
\newblock \emph{Journal of Multivariate Analysis}, 11:\penalty0 581--598, 1981.

\bibitem[Andris et~al.(2015)Andris, Lee, Hamilton, Martino, Gunning, and
  Selden]{andris2015votethreshold}
C.~Andris, D.~Lee, M.~J. Hamilton, M.~Martino, C.~E. Gunning, and J.~A. Selden.
\newblock The rise of partisanship and super-cooperators in the {US House of
  Representatives}.
\newblock \emph{PloS one}, 10\penalty0 (4):\penalty0 e0123507, 2015.

\bibitem[Athreya et~al.(2018)Athreya, Fishkind, Tang, Priebe, Park, Vogelstein,
  Levin, Lyzinskiand, Qin, and Sussman]{athreya-rdpg-survey}
A.~Athreya, D.~E. Fishkind, M.~Tang, C.~E. Priebe, Y.~Park, J.~T. Vogelstein,
  K.~Levin, V.~Lyzinskiand, Y.~Qin, and D.~L. Sussman.
\newblock Statistical inference on random dot product graphs: a survey.
\newblock \emph{Journal of Machine Learning Research}, 18\penalty0
  (226):\penalty0 1--92, 2018.

\bibitem[Bai and Zhao(1986)]{bai-zhao-independent-edgeworth}
Z.~Bai and L.~Zhao.
\newblock Edgeworth expansions of distribution functions of independent random
  variables.
\newblock \emph{Science in China Series A-Mathematics, Physics, Astronomy and
  Technological Science}, 29\penalty0 (1):\penalty0 851--896, 1986.

\bibitem[Bentkus et~al.(1997)Bentkus, G{\"o}tze, and van
  Zwet]{Bentkus-edgeworth-symmetric}
V.~Bentkus, F.~G{\"o}tze, and W.~van Zwet.
\newblock An edgeworth expansion for symmetric statistics.
\newblock \emph{Annals of Statistics}, 25\penalty0 (2):\penalty0 851--896,
  1997.

\bibitem[Bhattacharyya and
  Bickel(2015)]{Bhattacharyya-subsample-count-features}
S.~Bhattacharyya and P.~J. Bickel.
\newblock Subsampling bootstrap of count features of networks.
\newblock \emph{Annals of Statistics}, 43:\penalty0 2384--2411, 2015.

\bibitem[Bickel et~al.(1986)Bickel, G{\"o}tze, and
  Van~Zwet]{bickel1986edgeworth}
P.~Bickel, F.~G{\"o}tze, and W.~Van~Zwet.
\newblock The edgeworth expansion for u-statistics of degree two.
\newblock \emph{The Annals of Statistics}, pages 1463--1484, 1986.

\bibitem[Bickel and Chen(2009)]{Bickel-Chen-on-modularity}
P.~J. Bickel and A.~Chen.
\newblock A nonparametric view of network models and {Newman}-{Girvan} and
  other modularities.
\newblock \emph{PNAS}, 106:\penalty0 21068--21073, 2009.

\bibitem[Bickel et~al.(2011)Bickel, Chen, and
  Levina]{Bickel-Chen-Levina-method-of-moments}
P.~J. Bickel, A.~Chen, and E.~Levina.
\newblock The method of moments and degree distributions for network models.
\newblock \emph{Annals of Statistics}, 39:\penalty0 38--59, 2011.

\bibitem[Bollobas et~al.(2007)Bollobas, Janson, and
  Riordan]{bollobas-inhomogenous-graphs}
B.~Bollobas, S.~Janson, and O.~Riordan.
\newblock The phase transition in inhomogeneous random graphs.
\newblock \emph{Random Structures and Algorithms}, 31:\penalty0 3--122, 2007.

\bibitem[Borgs et~al.(2019)Borgs, Chayes, Cohn, and Zhao]{borgs-lp-part-one}
C.~Borgs, J.~T. Chayes, H.~Cohn, and Y.~Zhao.
\newblock An {$L^p$} theory of sparse graph convergence {I}: Limits, sparse
  random graph models, and power law distributions.
\newblock \emph{Transactions of the American Mathematical Society},
  372\penalty0 (5):\penalty0 3019--3062, 2019.

\bibitem[Bose and Chatterjee(2018)]{bose-chatterjee-u-stats-resampling}
A.~Bose and S.~Chatterjee.
\newblock \emph{U-Statistics, $M_m$-Estimators, and Resampling}.
\newblock Springer Verlag, New York, 2018.

\bibitem[Chen and Yuan(2006)]{chen-yuan-protein-protein-interaction-network}
J.~Chen and B.~Yuan.
\newblock Detecting functional modules in the yeast protein protein interaction
  network.
\newblock \emph{Bioinformatics}, 22\penalty0 (8):\penalty0 2283--2290, 2006.

\bibitem[Chen and Kato(2019)]{chen2019randomized}
X.~Chen and K.~Kato.
\newblock Randomized incomplete {U}-statistics in high dimensions.
\newblock \emph{Annals of Statistics}, 47\penalty0 (6):\penalty0 3127--3156,
  2019.

\bibitem[Daudin et~al.(2008)Daudin, Koskas, Schbath, and
  Robin]{daudin-exceptionality-motifs}
J.~Daudin, M.~Koskas, S.~Schbath, and S.~Robin.
\newblock Assessing the exceptionality of network motifs.
\newblock \emph{Journal of Computational Biology}, 15\penalty0 (1):\penalty0
  1--20, 2008.

\bibitem[Efron(1980)]{efron-jackknife-resampling}
B.~Efron.
\newblock The {Jackknife}, the {Bootstrap}, and other resampling plans.
\newblock Technical report, Stanford University University, December 1980.

\bibitem[Efron(1987)]{efron-better-ci}
B.~Efron.
\newblock Better bootstrap confidence intervals.
\newblock \emph{Journal of the American statistical Association}, 82\penalty0
  (397):\penalty0 171--185, 1987.

\bibitem[Feller(1971)]{feller-vol-2}
W.~Feller.
\newblock \emph{An introduction to probability theory and its applications.
  {V}ol. {II}.}
\newblock Second edition. John Wiley \& Sons Inc., New York, 1971.

\bibitem[Gao and Ma(2019)]{gao-ma-minimax-network-analysis}
C.~Gao and Z.~Ma.
\newblock Minimax rates in network analysis: Graphon estimation, community
  detection and hypothesis testing.
\newblock 2019.
\newblock URL \url{https://arxiv.org/pdf/1811.06055.pdf}.

\bibitem[Green and Shalizi(2017)]{green-shalizi-network-bootstrap}
A.~Green and C.~Shalizi.
\newblock Bootstrapping exchangeable random graphs.
\newblock \emph{ArXiv e-prints}, 2017.

\bibitem[Hall(1988)]{hall-theoretical-comparison-bootstrap-ci}
P.~Hall.
\newblock {Theoretical Comparison of Bootstrap Confidence Intervals}.
\newblock \emph{The Annals of Statistics}, 16\penalty0 (3):\penalty0 927 --
  953, 1988.
\newblock \doi{10.1214/aos/1176350933}.
\newblock URL \url{https://doi.org/10.1214/aos/1176350933}.

\bibitem[Hall(2013)]{hall-bootstrap-edgeworth}
P.~Hall.
\newblock \emph{The bootstrap and Edgeworth expansion}.
\newblock Springer, NY, 2013.

\bibitem[Hoeffding(1948)]{hoeffding1948}
W.~Hoeffding.
\newblock A class of statistics with asymptotically {Normal} distribution.
\newblock \emph{Annals of Mathematical Statistics}, 19\penalty0 (3):\penalty0
  293--325, 09 1948.

\bibitem[Hoff et~al.(2002)Hoff, Raftery, and
  Handcock]{hoff-raftery-handcock-latent-space-model}
P.~D. Hoff, A.~E. Raftery, and M.~S. Handcock.
\newblock Latent space approaches to social network analysis.
\newblock \emph{Journal of the American Statistical Association}, 97\penalty0
  (460):\penalty0 1090--1098, 2002.
\newblock URL \url{https://doi.org/10.1198/016214502388618906}.

\bibitem[Holland et~al.(1983)Holland, Laskey, and Leinhardt]{holland-sbm}
P.~W. Holland, K.~Laskey, and S.~Leinhardt.
\newblock Stochastic blockmodels: First steps.
\newblock \emph{Social Networks}, 5\penalty0 (2):\penalty0 109 -- 137, 1983.

\bibitem[Hoover(1979)]{hoover-exchangeability}
D.~N. Hoover.
\newblock \emph{Relations on probability spaces arrays of random variables}.
\newblock Institute for Advanced Study,RI, 1979.

\bibitem[Jeffrey~B. et~al.(2020)Jeffrey~B., Poole, Rosenthal, Boche, Rudkin,
  and Sonnet]{voteview}
L.~Jeffrey~B., K.~Poole, H.~Rosenthal, A.~Boche, A.~Rudkin, and L.~Sonnet.
\newblock {Voteview: Congressional Roll-Call Votes Database}.
\newblock {https://voteview.com/}, 2020.
\newblock Online; accessed Nov 29, 2020.

\bibitem[Jing and Wang(2010)]{jing2010unified}
B.-Y. Jing and Q.~Wang.
\newblock A unified approach to edgeworth expansions for a general class of
  statistics.
\newblock \emph{Statistica Sinica}, pages 613--636, 2010.

\bibitem[Karrer and Newman(2011)]{karrer-newman-dcsbm}
B.~Karrer and M.~E.~J. Newman.
\newblock Stochastic blockmodels and community structure in networks.
\newblock \emph{Physical Review E}, 83\penalty0 (016107):\penalty0 1--24, 2011.

\bibitem[Kim et~al.(2014)Kim, Wozniak, Mueller, Shen, and Pan]{kim2014}
J.~Kim, J.~R. Wozniak, B.~A. Mueller, X.~Shen, and W.~Pan.
\newblock Comparison of statistical tests for group differences in brain
  functional networks.
\newblock \emph{NeuroImage}, 101:\penalty0 681--694, 11 2014.
\newblock URL \url{https://www.ncbi.nlm.nih.gov/pubmed/25086298}.

\bibitem[Levin and Levina(2019)]{levin-levina-rdpg-bootstrap}
K.~Levin and E.~Levina.
\newblock Bootstrapping networks with latent space structure.
\newblock \emph{ArXiv e-prints}, 2019.

\bibitem[Lin et~al.(2020)Lin, Lunde, and Sarkar]{network-jackknife-theory}
Q.~Lin, R.~Lunde, and P.~Sarkar.
\newblock On the theoretical properties of the network jackknife.
\newblock \emph{ArXiv e-prints}, 2020.

\bibitem[Liu(1988)]{liu-non-iid-bootstrap}
R.~Liu.
\newblock Bootstrap procedures under some non-i.i.d. models.
\newblock \emph{Annals of Statistics}, 16\penalty0 (4):\penalty0 1696--1708,
  1988.

\bibitem[Lunde and Sarkar(2019)]{subsampling-sparse-graphons}
R.~Lunde and P.~Sarkar.
\newblock Subsampling sparse graphons under minimal assumptions.
\newblock \emph{ArXiv e-prints}, 2019.

\bibitem[Milo et~al.(2002)Milo, Shen-Orr, Itzkovitz, Kashtan, Chklovskii, and
  Alon]{milo-network-motifs}
R.~Milo, S.~Shen-Orr, S.~Itzkovitz, N.~Kashtan, D.~Chklovskii, and U.~Alon.
\newblock Network motifs: Simple building blocks of complex networks.
\newblock \emph{Science}, 298\penalty0 (5594):\penalty0 824--827, 2002.

\bibitem[Myers et~al.(2014)Myers, Sharma, Gupta, and
  Lin]{meyers-social-network-twitter}
S.~A. Myers, A.~Sharma, P.~Gupta, and J.~Lin.
\newblock Information network or social network? the structure of the twitter
  follow graph.
\newblock In \emph{WWW'14}, pages 1--6, 2014.

\bibitem[Newman(2001)]{newman-collaboration-network}
M.~E. Newman.
\newblock The structure of scientific collaboration networks.
\newblock \emph{PNAS}, 98\penalty0 (2):\penalty0 404--409, 2001.

\bibitem[Newman(2003)]{newman-structure-networks}
M.~E. Newman.
\newblock The structure and function of complex networks.
\newblock \emph{SIAM Review}, 45\penalty0 (2):\penalty0 167–256, 2003.

\bibitem[Petrov(2012)]{petrov2012sums}
V.~V. Petrov.
\newblock \emph{Sums of independent random variables}, volume~82.
\newblock Springer Science \& Business Media, 2012.

\bibitem[Politis and
  Romano(1994)]{politis-romano-subsampling-minimal-assumptions}
D.~N. Politis and J.~P. Romano.
\newblock Large sample confidence regions based on subsamples under minimal
  assumptions.
\newblock \emph{Annals of Statistics}, 22\penalty0 (4):\penalty0 2031--2050,
  1994.

\bibitem[Rozemberczki et~al.(2019)Rozemberczki, Davies, Sarkar, and
  Sutton]{rozemberczki2019gemsec}
B.~Rozemberczki, R.~Davies, R.~Sarkar, and C.~Sutton.
\newblock Gemsec: Graph embedding with self clustering.
\newblock In \emph{Proceedings of the 2019 ASONAM}, pages 65--72. ACM, 2019.

\bibitem[Rubin-Delanchy et~al.(2018)Rubin-Delanchy, Priebe, Tang, and
  Cape]{generalized-rdpg}
P.~Rubin-Delanchy, C.~E. Priebe, M.~Tang, and J.~Cape.
\newblock A statistical interpretation of spectral embedding: the generalised
  random dot product graph.
\newblock \emph{ArXiv e-prints}, 2018.

\bibitem[Serfling(1974)]{serfling-concentration-without-replacement}
R.~J. Serfling.
\newblock Probability inequalities for the sum in sampling without replacement.
\newblock \emph{The Annals of Statistics}, 2\penalty0 (1):\penalty0 39--48, 03
  1974.

\bibitem[Ugander et~al.(2011)Ugander, Karrer, Backstrom, and
  Marlow]{ugander-facebook-graph}
J.~Ugander, B.~Karrer, L.~Backstrom, and C.~Marlow.
\newblock The anatomy of the facebook social graph.
\newblock \emph{ArXiv e-prints}, 2011.

\bibitem[Wang and Jing(2004)]{wang-jing-weighted-bootstrap-u-statistics}
Q.~Wang and B.-Y. Jing.
\newblock Weighted bootstrap for {U}-statistics.
\newblock \emph{Journal of Multivariate Analysis}, 91\penalty0 (2):\penalty0
  614, 2004.

\bibitem[Young and Scheinerman(2007)]{young-schneiderman-rdpg}
S.~J. Young and E.~R. Scheinerman.
\newblock Random dot product graph models for social networks.
\newblock In \emph{Proceedings of WAW 2007}, pages 138--149, 2007.

\bibitem[Zhang and Xia(2020)]{zhang-xia-network-edgeworth}
Y.~Zhang and D.~Xia.
\newblock Edgeworth expansions for network moments.
\newblock \emph{ArXiv e-prints}, 2020.

\bibitem[Zhang et~al.(2017)Zhang, Levina, and Zhu]{zhang2017smooth}
Y.~Zhang, E.~Levina, and J.~Zhu.
\newblock {Estimating network edge probabilities by neighbourhood smoothing}.
\newblock \emph{Biometrika}, 104\penalty0 (4):\penalty0 771--783, 09 2017.
\newblock ISSN 0006-3444.
\newblock \doi{10.1093/biomet/asx042}.
\newblock URL \url{https://doi.org/10.1093/biomet/asx042}.

\end{thebibliography}

\end{document}